\newcommand*{\rom}[1]{\expandafter\@slowromancap\romannumeral #1@}
\newcommand{\bxi}{{\bm \xi}}
\newcommand{\bphi}{{\bm \phi}}
\newcommand{\bpsi}{{\bm \psi}}
\newcommand{\bet}{{\bm \eta}}
\newcommand{\brho}{{\bm \rho}}
\newcommand{\bsi}{{\bm \sigma}}
\newcommand{\bzeta}{{\bm \zeta}}
\newcommand{\bome}{{\bm \omega}}
\newcommand{\R}{{\mathbb R}}
\newcommand{\ten}{\otimes}
\newcommand{\pl}{\hspace{.1cm}}
\newcommand{\ran}{\rangle}
\newcommand{\lan}{\langle}
\newcommand{\al}{\alpha}
\newcommand{\si}{\sigma}
\newcommand{\la}{\lambda}
\newcommand{\eps}{\varepsilon}
\newcommand{\id}{\iota_{\infty,2}^n}
\newcommand{\M}{{\mathcal M}}
\newcommand{\N}{{\mathcal N}}
\newcommand{\norm}[2]{\left\Vert  #1  \right\Vert_{#2}}
\newcommand{\tr}{\operatorname{tr}}
\newtheorem{lemma}{Lemma}[section]
\newtheorem{prop}[lemma]{Proposition}
\newtheorem{theorem}[lemma]{Theorem}
\newtheorem{cor}[lemma]{Corollary}
\newtheorem{rem}[lemma]{Remark}
\newcommand{\re}{\begin{rem}\rm}
\newcommand{\mar}{\end{rem}}
\newtheorem{exam}[lemma]{Example}
\newcommand{\bra}[1]{\langle{#1}|}
\newcommand{\ket}[1]{|{#1}\rangle}
\newtheorem{definition}[lemma]{Definition}
\newcommand{\qd}{\end{proof}\vspace{0.5ex}}
\newcommand{\prf}{\begin{proof}[\bf Proof:]}
\newcommand{\xspace}{\hbox{\kern-2.5pt}}
\renewcommand{\R}{\mathbb{R}}
\renewcommand{\id}{\operatorname{id}}
\begin{document}

\title{Recoverability for optimized quantum $f$-divergences}
\date{}
%\author{Li Gao}
%\address{Department of Mathematics\\
%Texas A\&M University, College Station, TX 77840, USA} \email[Li Gao]{ligao@tamu.math.edu}
%\author{Mark M. Wilde}
%\thanks{MMW acknowledges support from the National Science Foundation under Grant No.~1714215,
%from Stanford QFARM, and from AFOSR under grant number FA9550-19-1-03.}
%\address{Hearne Institute for Theoretical Physics, Department of Physics and Astronomy,
%and Center for Computation and Technology, Louisiana State University, Baton Rouge, Louisiana 70803, USA}
%\address{Stanford Institute for Theoretical Physics, Stanford University, Stanford, California 94305, USA}
%\email[Mark M.~Wilde]{mwilde@lsu.edu; mwilde@stanford.edu}
%\renewcommand{\abstractname}{{\bf Abstract}}

\author{Li Gao\thanks{Department of Mathematics,
Texas A\&M University, College Station, Texas 77840, USA; Email: ligao@math.tamu.edu} \and Mark M.~Wilde\thanks{Hearne Institute for Theoretical Physics, Department of Physics and Astronomy,
and Center for Computation and Technology, Louisiana State University, Baton Rouge, Louisiana 70803, USA; Email: mwilde@lsu.edu} \thanks{Stanford Institute for Theoretical Physics, Stanford University, Stanford, California 94305, USA}}

\maketitle

\begin{abstract}
The optimized quantum $f$-divergences form a family of distinguishability measures that includes  the quantum relative entropy and the sandwiched R\'enyi relative quasi-entropy as special cases. In this paper, we establish physically meaningful refinements of the data-processing inequality for the optimized $f$-divergence. In particular, the refinements state that the absolute difference between the optimized $f$-divergence and its channel-processed version is an upper bound on how well one can recover a quantum state acted upon by a quantum channel, whenever the recovery channel is taken to be a rotated Petz recovery channel. Not only do these results lead to physically meaningful refinements of the data-processing inequality for the sandwiched R\'enyi relative entropy, but they also have implications for perfect reversibility (i.e., quantum sufficiency) of the optimized $f$-divergences. Along the way, we improve upon previous physically meaningful refinements of the data-processing inequality for the standard $f$-divergence, as established in recent work of Carlen and Vershynina [arXiv:1710.02409, arXiv:1710.08080]. Finally, we extend the definition of the optimized $f$-divergence, its data-processing inequality, and all of our recoverability results to the general von Neumann algebraic setting, so that all of our results can be employed in physical settings beyond those confined to the most common finite-dimensional setting of interest in quantum information theory.
\end{abstract}

\tableofcontents

\section{Introduction}

The quantum relative entropy is a fundamental measure in quantum
information theory. It was first introduced by Umegaki \cite{Umegaki} as a noncommutative generalization of the classical relative entropy (the latter is also called
Kullback--Leibler divergence \cite{KL}). For two quantum states described by density operators $\rho$ and $\si$, the relative entropy of $\rho$ with respect to $\si$ is defined as
\[
D(\rho\|\si) \coloneqq \tr(\rho\log \rho-\rho\log \si)\pl,
\]
where $\tr$ denotes the matrix trace.  The relative entropy $D(\rho\|\si)$ measures how well the quantum state $\rho$ can be distinguished from $\si$
in an asymptotic setting of quantum hypothesis testing \cite{HP91,OgawaNagaoka00}.
One of its most important properties  is the data-processing inequality \cite{Lin75, Uhlmann77}: for all quantum channels $\Phi$ and states $\rho$ and $\si$, the following inequality holds
\begin{equation}
\label{eq:DP-q-rel-ent}
D(\rho\|\si)\ge D(\Phi(\rho)\|\Phi(\si)).
\end{equation}
As the quantum relative entropy is a distinguishability measure,
 the data-processing inequality asserts that two quantum states cannot become more distinguishable after applying the same quantum channel
to them. The data-processing inequality is a key principle underlying the widespread applications of quantum relative entropy in quantum information~\cite{Vedral02,Wilde17}.

The wide interest in relative entropy has sparked researchers to study other entropy-type measures that also satisfy the data-processing inequality. Important generalizations in classical information theory are the R\'enyi relative entropy \cite{renyi} and the more general notion of $f$-divergence \cite{Csi67, AS66, Mor63}. For two probability distributions
$\{p(x)\}_x$  and $\{q(x)\}_x$ and a convex function $f$, the classical $f$-divergence \cite{Csi67, AS66, Mor63} is defined as
\[
S_f(p\|q)\coloneqq \sum_{x}p(x)f\!\left(\frac{q(x)}{p(x)}\right)\pl,
\]
and it satisfies the data-processing inequality for classical channels.
In \cite{petz85,petz86}, Petz introduced a quantum version of the $f$-divergence and proved that the quantum $f$-divergence satisfies the data-processing inequality whenever the underlying function~$f$ is \emph{operator convex}. One notable example is the Petz--R\'enyi relative quasi-entropy \cite{petz85,petz86}, which corresponds to $f(t)=t^s$ for $s\in (-1,0)\cup(0,1)$, i.e., the power function. From this quantity, the Petz--R\'enyi relative entropy can be defined, and it has an operational interpretation in quantum hypothesis testing \cite{Nag06, Hay07}.

In recent years, the sandwiched R\'enyi relative entropy \cite{Muller13,WWY} was introduced as another quantum generalization of R\'enyi relative entropy and has found extensive application in establishing
strong converse results for communication tasks \cite{WWY, GW15, TWW17, CMW16, DW18, WTB17}. It also has a direct operational meaning in quantum hypothesis testing in terms of the strong converse exponent \cite{MO13}. While Petz's definition of quantum $f$-divergence from \cite{petz85,petz86} is often called the standard $f$-divergence, it was not clear how to express the sandwiched R\'enyi relative entropy in terms of a standard $f$-divergence. This problem was solved in~\cite{wilde} with the introduction of a different type of quantum $f$-divergence called the optimized $f$-divergence.
It was also proved in \cite{wilde} that the optimized $f$-divergence satisfies the data-processing inequality for an operator anti-monotone function $f$.

Over decades, the data-processing inequality of the quantum relative entropy has been  refined in various ways. Petz proved that the data-processing inequality in \eqref{eq:DP-q-rel-ent} is saturated, i.e., $D(\rho\|\si)=D(\Phi(\rho)\|\Phi(\si))$, if and only if there exists a quantum recovery channel~$R$ satisfying $(R\circ \Phi)(\rho)=\rho$ and $(R\circ\Phi)(\si)=\si$ \cite{petz86suff,petz88}. The latter condition is also called ``quantum sufficiency'' \cite{petz86suff} because it indicates that the pair $(\Phi(\rho),\Phi(\si))$ is just as good as the pair $(\rho,\si)$ in a distinguishability experiment.
Moreover, there is a canonical choice of the recovery channel $R$, now called the Petz recovery map, which is given by \begin{align}R_{\Phi,\si}(x)
\coloneqq
\si^{1/2}\Phi^\dag(\Phi(\si)^{-1/2}x \Phi(\si)^{-1/2})\si^{1/2}\label{petz},\end{align}
where $\Phi^\dag$ is the adjoint of $\Phi$ with respect to the Hilbert--Schmidt inner product.

More recently, much progress has been made on the  case of approximate recovery. The idea is that when the data-processing inequality is nearly saturated, then the states $(\rho,\si)$ can be approximately recovered from $(\Phi(\rho),\Phi(\si))$ by the action of some quantum channel~$R$. The first precise quantitative result of approximate recovery was obtained in \cite{FR} for the special case of $\Phi$ being a partial trace and $\si$ being a marginal of $\rho$ (this specialized setting is relevant for an information measure called conditional mutual information). The result of \cite{FR} has been generalized in \cite{Wilde15,STH16,universal,SBT17}. In particular, it was proved in \cite{universal} that the following inequality holds for a universal recovery map $R$:
\begin{equation}
D(\rho\|\si) \geq D(\Phi(\rho)\|\Phi(\si))
-\log F(\rho, (R \circ \Phi)(\rho)), \label{eq:recov-univ}
\end{equation}
while the equality $(R \circ \Phi)(\sigma) = \sigma$ holds also. In \eqref{eq:recov-univ} above, $F$ denotes the Uhlmann fidelity~\cite{Uhlmann76} (defined later in \eqref{eq:uhlmann-fidelity}) and
the recovery map $R$ is explicitly given as follows:
\begin{equation}
R \coloneqq \int_{\mathbb{\R}}R^\frac{t}{2}_{\Phi,\si}\ d\beta(t)\pl,
\qquad R^t_{\Phi,\si}(x)
\coloneqq
\si^{-it}R_{\Phi,\si}(\Phi(\si)^{it}x\Phi(\si)^{-it})\si^{it}  \pl,
\label{eq:recov-map-univ}
\end{equation}
where $R_{\Phi,\si}$ is the original Petz map in \eqref{petz}, $R^t_{\Phi,\si}(x)$ is called a rotated Petz map \cite{Wilde15}, and $R$ is the expectation of $R_t$ with respect to the following probability density function:
\[
d\beta(t)=\frac{\pi}{2}(\cosh(\pi t)+1)^{-1}dt.
\]
The recovery map $R$ in \eqref{eq:recov-map-univ} is said to be ``universal'' because it does not depend on the $\rho$ state; this property is useful in a variety of physical applications such as quantum error correction~\cite{universal}. Note that a slightly stronger inequality than the one in \eqref{eq:recov-univ} is available in~\cite{universal}.

Most recently, the main result of \cite{universal} has been extended to the von Neumann algebraic setting \cite{FHSW20}, and Refs.~\cite{CV18,Vershynina19} established an approximate recovery estimate for the original Petz map. The method of \cite{CV18,Vershynina19} is based on the integral representation of operator convex functions and further applies to the case of approximate recoverability for standard $f$-divergences, as well as to the case of Petz--R\'enyi relative entropies. A similar method has been employed to understand refinements of the data-processing inequality for the maximal $f$-divergences \cite{BC20}.

\section{Summary of results}

\label{sec:summary-results}

In this paper, we study approximate recoverability for optimized $f$-divergences and contribute the following findings:

\begin{enumerate}

\item  We prove that the difference of optimized $f$-divergences before and after the action of a quantum channel is an upper bound on the recoverability error for rotated Petz recovery maps (see Lemma~\ref{fd3}). Since the sandwiched R\'enyi relative quasi-entropy is a special kind of optimized $f$-divergence~\cite{wilde}, our result gives the first quantitative estimate for approximate recoverability with respect to the sandwiched R\'enyi relative (quasi-)entropies (see Theorem~\ref{fd4} and Corollary~\ref{renyi}).
The method that we employ here is inspired by \cite{CV17,CV18,Vershynina19}.

\item As a corollary, we find the following reversibility result:  if the optimized $f$-divergence is preserved under the action of a quantum channel, then every rotated Petz map is a perfect recovery map (see Corollary~\ref{cor:many-equivs-f-div}). This extends previous reversibility results found for the sandwiched R\'enyi relative entropy  \cite{Jenvcova17,HM17} (see also \cite{LRD17,chehade20,zhang20} for related conditions regarding the saturation of the data-processing inequality for the sandwiched R\'enyi relative entropy).

\item We also improve the results of \cite{CV17,CV18} for the quantum and Petz--R\'enyi relative entropies and further generalize these prior results to rotated Petz maps (see Theorems~\ref{re} and \ref{thm:quasi-renyi-recoverability}, Corollary~\ref{cor:petz-renyi-recoverability},  Theorems~\ref{rsi} and \ref{thm:other-petz-map-renyi-rec}, and Corollary~\ref{cor:petz-renyi-recoverability-2}). One advantage of these new bounds over the previous ones from \cite{CV17,CV18} is that the remainder term involves the Petz--R\'enyi relative entropy of order two, rather than the operator norm of the relative modular operator. As such, these bounds are non-trivial for the important class of bosonic Gaussian states \cite{S17}, whereas the previous bounds from \cite{CV17,CV18} do not apply for this class of states.

\item Motivated by the recent works on quantum $f$-divergences in general von Neumann algebras \cite{Hiai18,Hiai19},
we extend the definition of optimized $f$-divergence, its data-processing inequality, and our recoverability results to the general context of von Neumann algebras (see Definition~\ref{def:optimized-f-vNa}, Theorems~\ref{data} and \ref{thm:recoverability-vNa}, and Corollary~\ref{cor:reversibility-vNa}). Our results also provide a new way for understanding the sandwiched R\'enyi relative entropy in the von Neumann algebraic setting. Note that the sandwiched R\'enyi relative entropy was previously defined and analyzed in the von Neumann algebraic setting \cite{BVT18, Jenvcova18, Jenvcova2}. Later on, it was analyzed under a different approach \cite{GZZ19} and studied in the context of conformal field theory \cite{Lashkari19}.
\end{enumerate}

The rest of our paper is organized as follows. Section~\ref{sec:prelims} reviews the basic definitions of operator monotone and operator convex functions, quantum (optimized) $f$-divergences, and (rotated) Petz recovery maps. In Section~\ref{sec:finite}, we discuss our main recoverability results in the finite-dimensional setting, while focusing on quantum channels that act as restrictions to a subalgebra. This is the core case, and the  argument here avoids technicalities that occur in infinite dimensions. We prove that the recoverability error for a rotated Petz recovery map can be bounded from above by a difference of (optimized) $f$-divergences.
Section~\ref{sec:opt-f-div} is devoted to the optimized $f$-divergence in general von Neumann algebras. We prove the data-processing inequality and extend our recoverability results to a general quantum channel in this setting.

\section{Preliminaries}

\label{sec:prelims}

\subsection{Operator convex functions and operator monotone functions}

We briefly review the integral representation of operator monotone and operator convex functions. We refer to \cite{bhatia} %and \cite{hiai}
for more information on this topic.

Let $B(H)$ denote the set of bounded operators acting on a  Hilbert space $H$. An operator $A\in B(H)$ is positive if $\bra{v} A \ket{v} \geq 0$ for all $\ket{v} \in H$. Let $B(H)^+$ denote the set of positive operators.
A function $f:(0,\infty)\to \R$ is operator monotone if the following inequality holds for all invertible positive operators $A,B\in B(H)$ satisfying~$A\le B$:
\[ f(A)\le f(B) \pl. \]
We say that $f$ is operator convex if the following inequality holds for all invertible positive operators $A,B$ and $\la \in [0,1]$:
\[f(\la A+(1-\la)B)\le \la f(A)+(1-\la)f(B) \pl.\]
We say that $f$ is operator anti-monotone (resp.~concave) if $-f$ is operator monotone (resp.~convex).
It is known that  $f:(0,\infty)\to \R$ is operator concave if it is operator monotone.
A function $f:(0,\infty)\to \R$ is operator convex if and only if for all invertible positive operators $A\in B(H)^+$ and Hilbert-space isometries $V:K\to H$, the following inequality holds
\[ V^*f(A)V\ge f(V^*AV)\pl.\]
This inequality is known as the operator Jensen inequality and also extends to positive  $A$ (see \cite[Appendix]{petz85}, as well as \cite{HP03}).

By the L\"owner Theorem (c.f. \cite[p.~144]{bhatia}), an operator monotone function $f :(0,\infty)\to \R$ admits the following integral representation:
\begin{align}
\label{om}
f(t)=a+bt+
\int_{0}^\infty\left(\frac{\la}{\la^2+1}-\frac{1}{\la+t}\right) d\nu(\la),
\end{align}
where $a\in \R$, $b\ge 0$, and $\nu$ is a positive measure on $[0,\infty)$ such that
$\int_{0}^\infty \frac{\la}{\la^2+1}d\nu(\la)<\infty$.

\begin{exam}{\rm
Below we list several important examples of  functions $f:(0,\infty)\to \mathbb{R}$ that are either operator monotone or operator anti-monotone.
\begin{enumerate}
\item $f(t)=(\la+t)^{-1}$ is operator anti-monotone and operator convex for $\lambda \geq 0$. This corresponds to $a=b=0$ and $\mu$ being the point measure at $\la$.

\item Let $0< r< 1$. The power function $t\mapsto t^r$ is operator monotone and operator concave by the following integral representation:
\[
t^r = \frac{\sin(r\pi) }{\pi}\int_0^\infty \left(\frac{ 1}{\la}- \frac{ 1}{\la+t}\right) \la^{r} \, d \la\pl,
\]
where $d\la$ is the Lebesgue measure on $\R$. On the other hand,
$t\mapsto t^{-r}$ is operator anti-monotone  because  it is a composition of $t\mapsto t^{-1}$ and $t\mapsto t^{r}$, with the former being operator anti-monotone and the latter operator monotone. It is thus also operator convex. The integral representation of $t^{-r}$ is
\[
t^{-r} = \frac{\sin(r\pi) }{\pi} \int_0^\infty \la^{-r}\frac{ 1}{\la+t} \, d \la \pl.
\]

\item The logarithm function $f(t)=\log t$ is operator monotone and operator concave. These statements are a consequence of the following integral representation of $\log t$:
\[\log t=-\int_0^\infty \left( \frac{ 1}{\la+t}-\frac{\la}{\la^2+1} \right) d \la \pl.\]
\end{enumerate}
}
\end{exam}
%\begin{lemma}\label{ratio}
%Let $f:(0,\infty)\to \R$ be %an operator monotone %function. Then for any $r>1$,
%\[f(r t)\le rf(t)\pl,\]
%\end{lemma}
%\begin{proof}
%By integral representation,
 %it suffices to consider %$f(t)=\frac{\la}{\la^2+1}-\frac{1}{\la+t}$ %for some $\la>0$.
%Indeed,
%\begin{align*} %f(rt)& =\frac{\la}{\la^2+1}-\frac{1}{\la+rt}\le %\frac{\la}{\la^2+1}-\frac{1}{r\la+rt}=\frac{\la}{\la^2+1}-\frac{1}{r}\frac{1}{\la+t}
%\frac{1}{r}(r\frac{\la}{\la^2+1}-)
%\pl.\qedhere\end{align*}
%\end{proof}

Following \cite{Vershynina19}, we say that an operator monotone or operator anti-monotone function~$f$ is regular if the measure $\nu(\la)$ in its integral representation is absolutely continuous with respect to the Lebesgue measure $d\la$ and for all $0<a<b<\infty$, there is a constant $C_{a,b}$ such that $d \la \le C_{a,b}\, d\nu(\la)$ on $(a,b)$. All examples b), c), and d) given above are regular operator monotone (or anti-monotone) functions.

\subsection{Standard and optimized $f$-divergences}

Let $\M$ be a finite-dimensional von Neumann algebra equipped with a faithful trace $\tau$. For the standard quantum information setup, one can take $\M=B(H)$ for a finite-dimensional Hilbert space $H$ and $\tr$  the matrix trace. For $ p \in [1, \infty)$, the $L_p$-norm for $a\in \M$ is defined as $\norm{a}{p} \coloneqq \tau(|a|^p)^{1/p}$. We use the same notation and definition for $p\in(0,1)$, when $\norm{a}{p}$ is a quasi-norm.
We identify $L_1(\M)\cong \M_*$ and $L_\infty(\M) \cong \M$.
A state $\rho$ is given by a density operator $\rho\in L_1(\M)$ with $\rho\ge 0$ and $\tau(\rho)=1$.
We denote the state space of $\M$ by $D(\M) \coloneqq \{\rho \in L_1(\M)\,|\, \rho\ge 0, \tau(\rho)=1 \}$  and the set of invertible density operators by $D_+(\M)$.
The $L_2$-space $L_2(\M)$ is a Hilbert space with the following trace inner product:
\[\lan x,y\ran\coloneqq \tau(x^*y)\pl.\]
Let $\ket{x}$ denote the vector in the GNS space $L_2(\mathcal{M})$ corresponding to the element $x$. The vector $\ket{1}$ corresponding to the identity operator is an analog of the (unnormalized) maximally entangled state.
The GNS representation $\pi:\M\to B(L_2(\M))$ is given by
\begin{align*} \pi(a)\ket{x}=\ket{ax}.
\end{align*}
We often omit $\pi$ and write $a\ket{x}\coloneqq\pi(a)\ket{x}$.
A state $\rho$ admits a vector representation by $\ket{\rho^{1/2}}$ (also called a purification) that satisfies
\[\rho(x)=\tau(\rho x)=\bra{\rho^{1/2}}x\ket{\rho^{1/2}}\pl.\]
For simplicity of notation, we sometimes write $\ket{\brho}=\ket{\rho^{1/2}}$.
Let $\rho$ and $\si$ be two states. Let  $s(\rho)$ and $s(\si)$ denote the support projections onto the supports of $\rho$ and $\si$, respectively, and let $\rho^{-1}$ denote the inverse of $\rho$ on its support. The relative modular operator is defined as
\[\Delta(\si,\rho)\ket{x} \coloneqq \ket{\si x\rho^{-1}}\pl,\]
which for faithful $\rho$ and $\si$ is always a positive and invertible operator on $L_2(\M)$.

Let $f:(0,\infty)\to \R$ be an operator anti-monotone function. Given two states $\rho$ and $\si$ with $s(\rho)\le s(\si)$,
the standard $f$-divergence is defined as \cite{petz85,petz86} (see also \cite{HMDB11})
\begin{align*}                                      Q_f(\rho\|\si)\coloneqq                                       \bra{\brho}f(\Delta(\si,\rho))\ket{\brho}\pl,
\end{align*}
where $f(\Delta(\si,\rho))$ makes use of the functional calculus, as applied to  $\Delta(\si,\rho)$. The optimized $f$-divergence is defined as \cite{wilde}
\begin{align}
\label{eq:opt-f-div-def}                                      \widetilde{Q}_f(\rho\|\si)
\coloneqq
\sup_{\omega\in D_+(\M) }\bra{\brho}f(\Delta(\si,\omega))\ket{\brho}.
 \end{align}

We review some important examples of relative entropies defined through standard  or optimized $f$-divergences.

\begin{enumerate}
\item Umegaki relative entropy \cite{Umegaki}:
\begin{equation}
\label{eq:umegaki-rel-ent-def}
D(\rho\|\si)\coloneqq-\bra{\brho}\log \Delta(\si,\rho)\ket{\brho}=\tau(\rho\log \rho-\rho\log \si)=Q_{-\log x}(\rho\|\si).
\end{equation}

\item Petz--R\'enyi relative entropy \cite{petz85,petz86}:
\begin{align*}D_\al(\rho\|\si)
& \coloneqq \frac{1}{\al-1}\log\tau(\rho^{\al} \si^{1-\al})
\\ & =
\frac{1}{\al-1}\log\bra{\brho} \Delta(\si,\rho)^{1-\al}\ket{\brho}=
\frac{1}{\al-1}\log Q_{x^{1-\al}}(\rho\|\si)\pl.\end{align*}
This quantity is defined for $\al\in (0,1)\cup (1,\infty)$. A special case of interest for some of the remainder terms in the entropy inequalities in Section~\ref{sec:finite} occurs when $\alpha = 2$ or $\alpha = -1$, for which $Q_{x^{-1}}(\rho\Vert\si) = \tau(\rho^{2}\sigma^{-1})$ or $Q_{x^{2}}(\rho\Vert\si) = \tau(\rho^{-1}\sigma^{2})$, respectively.
\smallskip

\item Sandwiched R\'enyi relative entropy \cite{Muller13,WWY}: for $\al >1$,
\[\widetilde{D}_\al(\rho\|\si) \coloneqq \frac{\al}{\al-1}\log \norm{\si^{\frac{1-\al}{2\al}}\rho \si^{\frac{1-\al}{2\al}}}{\al}
=
\frac{\al}{\al-1}\log \widetilde{Q}_{x^{\frac{1-\al}{\al}}}(\rho\|\si).
\]
where the last equality was identified in \cite{wilde}.
Also, for $0<\al<1$
\[
\widetilde{D}_\al(\rho\|\si)=
\frac{\al}{\al-1}\log \!\left(- \widetilde{Q}_{-x^{\frac{1-\al}{\al}}}(\rho\|\si)\right).
\]
 The sandwiched R\'enyi relative entropy  is defined for $\al\in (0,1)\cup (1,\infty)$.
\smallskip

\item Holevo fidelity \cite{Kholevo1972}:
\[
F_H(\rho,\si) \coloneqq \tau(\rho^{1/2}\si^{1/2})^2 \coloneqq \bra{\rho^{1/2}} \Delta(\si,\rho)^{1/2}\ket{\rho^{1/2}}^2=Q_{x^{1/2}}(\rho\|\si)^2\pl.\]
    Observe that $D_{1/2}(\rho\|\si)=-\log F_H(\rho,\si)$.
    \smallskip

\item Uhlmann fidelity \cite{Uhlmann76}:
\begin{equation}
F(\rho,\si) \coloneqq \tau(|
\sqrt{\rho}\sqrt{\si}|)^2=\norm{\rho^{1/2}\si \rho^{1/2}}{1/2}=\inf_{\omega \in \mathcal{D}(\mathcal{M}_+)} \tau(\rho^{1/2}\si \rho^{1/2}\omega^{-1})=\widetilde{Q}_{x}(\rho\|\si)^2.
\label{eq:uhlmann-fidelity}
\end{equation}
Observe that $\widetilde{D}_{1/2}(\rho\|\si)=-\log F(\rho,\si)$.
\end{enumerate}

\subsection{Petz map and rotated Petz map}

One of the key properties of a quantum divergence is its monotonicity under quantum channels, which is also called the data-processing inequality. Recall that a quantum channel $\Phi:L_1(\M_1)\to L_1(\M_2)$  is a completely positive, trace-preserving (CPTP) map. The data-processing inequality is as follows \cite{petz85}:
 \begin{align}
 \label{1}
 Q_f(\rho\|\si)\ge Q_f(\Phi(\rho)\|\Phi(\si))\pl,
 \end{align}
  holding for all quantum channels $\Phi$ and  states $\rho, \si$.
It was proved (see \cite{petz86,petz88} and \cite[Theorem 6.19]{hiai2021quantum} for the general case) that for a regular operator convex function $f$, the equality $Q_f(\rho\|\si)=  Q_f(\Phi(\rho)\|\Phi(\si))$ holds if and only if $R_{\Phi,\rho}(\Phi(\si))=\si$,
 where $R_{\Phi,\rho}:L_1(\M_2)\to L_1(\M_1)$ is the Petz recovery map, defined as
 \[R_{\Phi,\rho}(x) \coloneqq \rho^{1/2}\Phi^\dag( \Phi(\rho)^{-1/2}x \Phi(\rho)^{-1/2})\rho^{1/2}\pl.\]
The data-processing inequality for the optimized $f$-divergence in the finite-dimensional setting, i.e.,
 \begin{align*}
 \widetilde{Q}_f(\rho\|\si)\ge  \widetilde{Q}_f(\Phi(\rho)\|\Phi(\si))\pl,
 \end{align*}
was proved in \cite{wilde}.

Throughout this paper, we mostly discuss recoverability results for the special case when $\Phi$ is the restriction to a subalgebra, as was done in \cite{petz86}. For example, a partial-trace map $\id\ten \tr:B(H_A\ten H_B)\to B(H_A)$ is a restriction from a tensor-product system $AB$ to the subsystem $A$. By the Stinespring dilation theorem \cite{Stinespring55}, this is the core step in the data-processing inequality. Indeed, recoverability results for general quantum channels (CPTP maps) follow from the subalgebra case.\\

Let $\N\subset\M$ be a subalgebra, and let $E:\M\to \N$ be the unital, trace-preserving conditional expectation, defined through
\[
\tau(xy) = \tau(xE(y))\pl, \quad \forall\pl x\in \N,y\in \M.
\]
For $1\le p\le \infty$, the space $L_p(\N)$ is a subspace of $L_p(\M)$, and the conditional expectation
$E:L_p(\M)\to L_p(\N)$ extends to a projection. The quantum channel corresponding to restriction to a subalgebra is then given by $E:L_1(\M)\to L_1(\N)$.
For an invertible density operator $\rho\in L_1(\M)$, we write $\rho_\N:=E(\rho)$ for the reduced density operator of $\rho$ on~$\N$. The $\rho$-preserving conditional expectation $E_\rho:\M\to \N$ is a unital, completely positive (UCP) map:
\[  E_\rho(x) \coloneqq \rho_\N^{-\frac{1}{2}}E(\rho^{\frac{1}{2}}x\rho^{\frac{1}{2}})\rho_\N^{-\frac{1}{2}}.\]
The Petz recovery map $R_\rho:L_1(\N)\to L_1(\M)$ is the adjoint of $E_\rho$ and is a completely positive trace-preserving (CPTP) map:
\begin{equation}
\label{eq:Petz-rec-marginal}
R_\rho(x)=\rho^{\frac{1}{2}}(\rho_\N^{-\frac{1}{2}}x\rho_\N^{-\frac{1}{2}})\rho^{\frac{1}{2}}.
\end{equation}
Let us also define the rotated Petz map $R_\rho^t$ \cite{Wilde15} and the universal Petz map $R_\rho^u$ \cite{universal} as follows:
\begin{align}
R_\rho^t(x) & \coloneqq \rho^{\frac{1}{2}-it}\left(\rho_\N^{-\frac{1}{2}+it}x\rho_\N^{-\frac{1}{2}-it}\right)\rho^{\frac{1}{2}+it}\quad \pl, \pl \forall  t\in \R,
\label{eq:rotated-petz-map-def}
\\
R_\rho^u(x) & \coloneqq \int_{\R}R_{\rho}^{t/2}(x) \ d\beta(t),
\end{align}
where $d\beta(t)\coloneqq\frac{\pi}{2}(\cosh(\pi t)+1)^{-1}dt$ is a probability measure on $\R$. Both $R_\rho^t$ and $R_\rho^{u}$ are CPTP maps satisfying $R_\rho^t(\rho_\N)=R_\rho^u(\rho_\N)=\rho$ because $\rho^{it}\in \M$ (resp.~$\rho_\N^{-it}\in \N$) is a unitary operator and commutes with $\rho$ (resp.~$\rho_\N$).

\section{Recoverability for $f$-divergences via rotated Petz maps}

\label{sec:finite}

\subsection{Recoverability via a rotated Petz map $R_\rho^t$}

In this section, we discuss recoverability for the $f$-divergence in the finite-dimensional setting.
We start with an improvement of the argument in \cite{CV17}.

Let $\M$ be a finite-dimensional von Neumann algebra, and let $\N\subset \M$ be a subalgebra. Let $\rho,\si\in D_+(\M)$ be two faithful states for $\M$, and let $\rho_\N,\si_\N$ be their restrictions on $\N$, respectively. We use the following shorthand notation for the relative modular operators:
\[\Delta_\M\equiv \Delta(\si,\rho)\in B(L_2(\M))\pl, \qquad \Delta_\N \equiv \Delta(\si_\N,\rho_\N)\pl\in B(L_2(\N)).\]

Let $f:(0,\infty)\to \R$ be an operator anti-monotone function with the following integral representation:
\begin{align}
\label{om-second-time}
f(x)=a+bx+\int_{0}^\infty\left(\frac{1}{\la+x}-\frac{\la}{\la^2+1}\right)d\nu(\la),
\end{align}
where $d\nu(\la)$ is the corresponding measure on $\mathbb{R}$. For $\la\ge 0$, let
\begin{equation}
\label{eq:q-lambda-def}
Q_\la(\rho\|\si) \coloneqq \lan \rho^{1/2}| (\la+\Delta(\si,\rho))^{-1} |\rho^{1/2}\ran
\end{equation}
denote the standard $f$-divergence for $f(x)=(\la+x)^{-1}$.
It follows from the integral representation that
\begin{align}
Q_f(\rho\|\si)-Q_f(\rho_\N\|\si_\N)=\int_0^\infty \left[Q_\la(\rho\|\si)-Q_\la(\rho_\N\|\si_\N)\right]\ d\nu(\la)\pl. \label{dif}
\end{align}
By the data-processing inequality and inspection of \eqref{eq:q-lambda-def}, the following function
\[
F(\la) \coloneqq Q_\la(\rho\|\si)-Q_\la(\rho_\N\|\si_\N)\pl, \quad \la\in [0,\infty)\]
is a continuous non-negative function of all faithful states $\rho$ and $\si$.

We now recall \cite[Lemma~2.1]{CV17}:

\begin{lemma}[Lemma 2.1 of \cite{CV17}]\label{key}
Let $U:K\to H$ be a Hilbert-space isometry, and let $A$ be an invertible positive operator on $H$. Then for all $h\in K$, the following identity holds
\[
\lan h|U^*A^{-1}U| h\ran- \lan h|(U^*AU)^{-1}| h\ran= \lan v|A| v\ran\ge 0 ,
\]
where $|v\ran \coloneqq  A^{-1}U |h\ran - U(U^*AU)^{-1}|h\ran$.
\end{lemma}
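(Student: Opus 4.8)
The plan is to prove the claimed identity by a direct algebraic expansion of the quadratic form $\lan v|A|v\ran$, after which the inequality $\lan v|A|v\ran\ge 0$ follows immediately from positivity of $A$. First I would set $B\coloneqq U^*AU$, which is a positive operator on $K$. Since $A$ is invertible and positive, we have $A\ge cI_H$ for some $c>0$, and hence $B=U^*AU\ge cU^*U=cI_K$ using the isometry property $U^*U=I_K$; in particular $B$ is invertible, so that $B^{-1}=(U^*AU)^{-1}$ appearing in the statement is well defined. This is really the only point requiring a moment's care; the rest is bookkeeping.

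Next I would compute $A|v\ran$ explicitly. From $|v\ran=A^{-1}U|h\ran-UB^{-1}|h\ran$ and $AA^{-1}=I_H$, we obtain $A|v\ran=U|h\ran-AUB^{-1}|h\ran$. Pairing this with the bra $\lan v|=\lan h|U^*A^{-1}-\lan h|B^{-1}U^*$ --- where I use that $A^{-1}$ and $B^{-1}$ are self-adjoint --- expands $\lan v|A|v\ran$ into four terms. The two ``diagonal'' terms are $\lan h|U^*A^{-1}U|h\ran$ and $\lan h|B^{-1}U^*AUB^{-1}|h\ran$, while the two ``cross'' terms are $-\lan h|U^*A^{-1}AUB^{-1}|h\ran$ and $-\lan h|B^{-1}U^*U|h\ran$.

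The key simplifications then come from the two identities $U^*U=I_K$ and $U^*AU=B$. The first cross term collapses to $-\lan h|U^*UB^{-1}|h\ran=-\lan h|B^{-1}|h\ran$, and the second likewise to $-\lan h|B^{-1}|h\ran$; the second diagonal term becomes $\lan h|B^{-1}BB^{-1}|h\ran=\lan h|B^{-1}|h\ran$. Adding the four contributions, two copies of $-\lan h|B^{-1}|h\ran$ and one copy of $+\lan h|B^{-1}|h\ran$ combine into a single $-\lan h|B^{-1}|h\ran$, leaving exactly $\lan h|U^*A^{-1}U|h\ran-\lan h|B^{-1}|h\ran$, which is the left-hand side of the claimed identity. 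Finally, since $A$ is positive, $\lan v|A|v\ran\ge 0$, which gives the asserted inequality. I expect no substantive obstacle beyond the invertibility check for $B$: the statement is a short, self-contained operator identity rather than something requiring the integral-representation machinery developed earlier, and the only ``trick'' is recognizing that $|v\ran$ is precisely the $A$-orthogonal residual of $A^{-1}U|h\ran$ against the range of $U$, which is what forces the cross terms to telescope.
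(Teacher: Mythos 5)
Your proof is correct. The paper itself does not prove this lemma --- it simply recalls it from \cite{CV17} --- and your direct expansion of $\lan v|A|v\ran$ into four terms, telescoped via $U^*U=I_K$ and $U^*AU=B$, is exactly the standard verification; the preliminary observation that $B=U^*AU\ge cI_K$ is invertible is the right (and only) point needing care.
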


Define the isometry $V_\rho:L_2(\N)\to L_2(\M)$ as
\[ V_\rho\ket{x}= \ket{x\rho_\N^{-\frac{1}{2}}\rho^{\frac{1}{2}}}\pl, \quad \forall x\in \N\pl.\]
The adjoint is $V_\rho^*(x)=E(x\rho^{1/2})\rho_\N^{-1/2}$.
Since $V_\rho^*\Delta_{\M}V_\rho=\Delta_{\N}$ (we require the assumption of faithfulness of $\rho$ here) and by Lemma~\ref{key} above, we find that
\begin{align}
F(\la)&=Q_\la(\rho\|\si)-Q_\la(\rho_\N\|\si_\N)
 \\&= \lan \rho^{1/2}| (\Delta_{\M}+\la)^{-1}|\rho^{1/2}\ran-\lan \rho_\N^{1/2}| (\Delta_{\N}+\la)^{-1}|\rho_\N^{1/2}\ran
 \\&=  \lan \rho_\N^{1/2}|V_\rho^* (\Delta_{\M}+\la)^{-1}V_\rho|\rho_\N^{1/2}\ran-\lan \rho_\N^{1/2}| (V_\rho^*(\Delta_{\M}+\la)V_\rho)^{-1}|\rho_\N^{1/2}\ran
 \\ & = \lan w_\la|(\Delta_{\M}+\la) |w_\la\ran \\
 & = \norm{\Delta_{\M}^{1/2}| w_\la\ran}{2}^2+\la \norm{|w_\la\ran }{2}^2,
 \label{eq:f_lam_norm_sum}
\end{align}
where
\begin{equation}
\label{eq:def-w_lam}
\ket{w_\la}\coloneqq (\Delta_{\M}+\la)^{-1}\ket{\rho^{1/2}}-V_\rho(\Delta_\N+\la)^{-1}\ket{\rho_\N^{\frac12}}
\end{equation}
is a vector in $L_2(\M)$.

\begin{lemma}
\label{3.2}
Let $t\in \R$ and set
\begin{align}
\label{eq:def-w_t}
\ket{w_t}:=\Delta_\M^{\frac{1}{2}+it}\ket{\rho^{1/2}}-V_\rho\Delta_\N^{\frac{1}{2}+it}\ket{\rho_\N^{1/2}}
=\ket{\si^{\frac{1}{2}+it}\rho^{-it}}- \ket{\si_\N^{\frac{1}{2}+it}\rho_\N^{-\frac{1}{2}-it}\rho^{\frac{1}{2}}}.
\end{align}
Then the following equality holds
\begin{equation}
\label{eq:w_lam-to-w_t}
\ket{w_t}=-\frac{\cosh(\pi t)}{\pi}\left(\int^{\infty}_{0}\la^{\frac{1}{2}+it}\ket{w_\la}\ d\la\right)
\end{equation}
and the following inequality holds
\begin{equation}
\label{eq:recovery-error-to-w_t}
\norm{\si-R_\rho^{t}(\si)}{1}\le 2\norm{\ket{w_t}}{2}.
\end{equation}
\end{lemma}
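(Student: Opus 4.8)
The plan is to treat the two assertions separately. For the integral identity \eqref{eq:w_lam-to-w_t}, I would introduce the $L_2(\M)$-valued functions
\[
G(\beta):=\int_0^\infty \la^{\beta}\ket{w_\la}\,d\la,\qquad W(\beta):=\Delta_\M^{\beta}\ket{\rho^{1/2}}-V_\rho\Delta_\N^{\beta}\ket{\rho_\N^{1/2}},
\]
so that the right-hand side of \eqref{eq:w_lam-to-w_t} equals $-\tfrac{\cosh(\pi t)}{\pi}G(\tfrac12+it)$ while $\ket{w_t}=W(\tfrac12+it)$ by \eqref{eq:def-w_t}. First I would verify that $G$ is a well-defined holomorphic function on the strip $-1<\mathrm{Re}\,\beta<1$: near $\la=0$ the integrand $\ket{w_\la}$ is bounded, so integrability requires only $\mathrm{Re}\,\beta>-1$, and near $\la=\infty$ the decisive point is that the leading $\la^{-1}$ terms of the two resolvents cancel because $V_\rho\ket{\rho_\N^{1/2}}=\ket{\rho^{1/2}}$; this leaves $\ket{w_\la}=O(\la^{-2})$ and hence integrability for $\mathrm{Re}\,\beta<1$. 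Since $\M$ is finite-dimensional, $W$ is entire.

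Next I would evaluate $G$ on the narrower strip $-1<\mathrm{Re}\,\beta<0$, where the two resolvent integrals converge individually. Feeding the scalar identity $\int_0^\infty \la^{\beta}(x+\la)^{-1}\,d\la=\frac{\pi}{\sin(\pi(\beta+1))}\,x^{\beta}$ (the analytic continuation of the integral representation of $t^{-r}$ recorded among the examples above) through the functional calculus for $\Delta_\M$ and $\Delta_\N$ yields $G(\beta)=\frac{\pi}{\sin(\pi(\beta+1))}\,W(\beta)$ there. On the full strip $-1<\mathrm{Re}\,\beta<1$ the prefactor $\frac{\pi}{\sin(\pi(\beta+1))}$ has a single simple pole, at $\beta=0$, but $W(0)=\ket{\rho^{1/2}}-V_\rho\ket{\rho_\N^{1/2}}=0$ cancels it, so $\frac{\pi}{\sin(\pi(\beta+1))}W(\beta)$ extends holomorphically across $\beta=0$. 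As $G$ and this product are both holomorphic on the whole strip and agree on $-1<\mathrm{Re}\,\beta<0$, the identity theorem forces agreement throughout; evaluating at $\beta=\tfrac12+it$ and using $\sin(\pi(\tfrac32+it))=-\cosh(\pi t)$ gives \eqref{eq:w_lam-to-w_t}. I expect this transport step to be the main obstacle: the integral $\int_0^\infty\la^{1/2+it}(\Delta_\M+\la)^{-1}\,d\la$ diverges term by term, and the whole argument hinges on the single cancellation $V_\rho\ket{\rho_\N^{1/2}}=\ket{\rho^{1/2}}$, which simultaneously makes $G$ converge and removes the pole of the prefactor.

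For the recovery estimate \eqref{eq:recovery-error-to-w_t}, I would read off from the second form in \eqref{eq:def-w_t} that $\ket{w_t}=\ket{a}-\ket{b}$ with $a:=\si^{\frac12+it}\rho^{-it}$ and $b:=\si_\N^{\frac12+it}\rho_\N^{-\frac12-it}\rho^{\frac12}$. A direct computation gives $a^*a=\rho^{it}\si\rho^{-it}$ and $b^*b=\rho^{\frac12}\rho_\N^{-\frac12+it}\si_\N\rho_\N^{-\frac12-it}\rho^{\frac12}$, so that $\si=\rho^{-it}(a^*a)\rho^{it}$ and $R_\rho^{t}(\si_\N)=\rho^{-it}(b^*b)\rho^{it}$, where $\si_\N=E(\si)$ is the state actually fed to the recovery map. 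Hence $\si-R_\rho^{t}(\si_\N)=\rho^{-it}(a^*a-b^*b)\rho^{it}$, and unitary invariance of the trace norm gives $\norm{\si-R_\rho^{t}(\si_\N)}{1}=\norm{a^*a-b^*b}{1}$. I would then telescope $a^*a-b^*b=a^*(a-b)+(a-b)^*b$ and apply the triangle inequality together with the H\"older bound $\norm{xy}{1}\le\norm{x}{2}\norm{y}{2}$ to obtain $\norm{a^*a-b^*b}{1}\le\norm{a-b}{2}\,(\norm{a}{2}+\norm{b}{2})$. Since $\norm{a}{2}^2=\tau(a^*a)=\tau(\si)=1$ and $\norm{b}{2}^2=\tau(b^*b)=\tau(\si_\N)=1$ (the latter using the trace property $\tau(\rho\,y)=\tau(\rho_\N\,y)$ for $y\in\N$), while $\norm{a-b}{2}=\norm{\ket{w_t}}{2}$, this is exactly the claimed bound $2\norm{\ket{w_t}}{2}$.
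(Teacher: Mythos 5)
Your proof is correct. For the trace-norm estimate \eqref{eq:recovery-error-to-w_t} you follow essentially the paper's route: the paper applies the inequality $\norm{x^*x-y^*y}{1}\le 2\norm{x-y}{2}$ from \cite[Lemma 2.2]{CV17} to $x=\si^{1/2+it}$ and $y=\si_\N^{1/2+it}\rho_\N^{-1/2-it}\rho^{1/2+it}$ and then strips the unitary $\rho^{-it}$ inside the $L_2$-norm, whereas you conjugate by $\rho^{-it}$ first and re-derive that inequality inline via the telescoping $a^*a-b^*b=a^*(a-b)+(a-b)^*b$ and H\"older; these are the same computation in a different order, and you correctly read $R_\rho^t(\si)$ in the statement as $R_\rho^t(\si_\N)$, which is what is meant (compare Corollary~\ref{cor:bnd-standard-f-div-1}). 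For the integral identity \eqref{eq:w_lam-to-w_t} your argument is genuinely different from the paper's. The paper invokes the regularized operator integral $A^{1/2+it}=\frac{\cosh(\pi t)}{\pi}\int_0^\infty\la^{1/2+it}\bigl(\frac{1}{\la}-\frac{1}{\la+A}\bigr)d\la$ from \cite{Komatsu}, applies it to $\Delta_\M$ and $\Delta_\N$ separately (each integral converging absolutely because of the built-in $\frac{1}{\la}$ counterterm), and cancels the two $\frac{1}{\la}$ contributions using $V_\rho\ket{\rho_\N^{1/2}}=\ket{\rho^{1/2}}$. You instead work with the unregularized integral of $\ket{w_\la}$, establish holomorphy of $G(\beta)$ on the strip $-1<\mathrm{Re}\,\beta<1$, match it against $\frac{\pi}{\sin(\pi(\beta+1))}W(\beta)$ on $-1<\mathrm{Re}\,\beta<0$ where termwise convergence holds, and continue analytically past the pole at $\beta=0$, which is removable precisely because $W(0)=0$ --- the same cancellation, used at a different point of the argument. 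The paper's route is shorter because the Komatsu formula does the regularization for you; yours has the mild advantage of making explicit where the integral $\int_0^\infty\la^{1/2+it}\ket{w_\la}\,d\la$ converges and why (the $O(\la^{-2})$ decay of $\ket{w_\la}$), which the paper leaves implicit. Both proofs are valid in the finite-dimensional, faithful-state setting of this section.
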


\begin{proof}
Recall the operator integral from \cite{Komatsu}, which states that the following integral formula holds for the imaginary power of a positive operator $A$:
\begin{align*}A^{\frac{1}{2}+it}&=\frac{\sin(\pi(\frac{1}{2}+it))}{\pi}\int^{\infty}_{0} \la^{\frac{1}{2}+it}\left(\frac{1}{\la}-\frac{1}{\la +A}\right)d\la\\
&=\frac{\cosh(\pi t)}{\pi}\int^{\infty}_{0} \la^{\frac{1}{2}+it}\left(\frac{1}{\la}-\frac{1}{\la +A}\right)d\la.
\end{align*}
Then
\begin{align*}
&\ket{w_t}\notag \\
& = \Delta_\M^{\frac{1}{2}+it}\ket{\rho^{1/2}}-V_\rho\Delta_\N^{\frac{1}{2}+it}\ket{\rho_\N^{1/2}}
\\ & =\frac{\cosh(\pi t)}{\pi}\left(\int^{\infty}_{0}\la^{1/2+it}\left(\frac{1}{\la}-\frac{1}{\Delta_\M+\la}\right)\ket{\rho^{1/2}}\, d\la
-V_\rho\int^{\infty}_{0}\la^{1/2+it}\left(\frac{1}{\la}-\frac{1}{\Delta_\N+\la}\right)\ket{\rho_\N^{1/2}}\, d\la\right)
\\ & = -\frac{\cosh(\pi t)}{\pi}\left(\int^{\infty}_{0}\la^{1/2+it}\left(\frac{1}{\Delta_\M+\la}\ket{\rho^{1/2}}
-V_\rho\frac{1}{\Delta_\N+\la}\ket{\rho_\N^{1/2}}\right)d\la\right)
\\ & = -\frac{\cosh(\pi t)}{\pi}\left(\int^{\infty}_{0}\la^{1/2+it}\ket{w_\la}\, d\la\right).
\end{align*}
This establishes \eqref{eq:w_lam-to-w_t}.

We now prove \eqref{eq:recovery-error-to-w_t}. Note that $(\si^{\frac{1}{2}+it})^*\si^{\frac{1}{2}+it}=\si$, and
\begin{align*} &(\si_\N^{\frac{1}{2}+it}\rho_\N^{-\frac{1}{2}-it}\rho^{\frac{1}{2}+it})^*(\si_\N^{\frac{1}{2}+it}\rho_\N^{-1/2-it}\rho^{\frac{1}{2}+it})= \rho^{\frac{1}{2}-it}\rho_\N^{-\frac{1}{2}+it}\si_\N\rho_\N^{-\frac{1}{2}-it}\rho^{\frac{1}{2}+it}=R_\rho^t(\si)\pl,
\end{align*}
where $R_\rho^t(\si)$ is defined in \eqref{eq:rotated-petz-map-def}.
Recall the following inequality from \cite[Lemma 2.2]{CV17}:
\begin{align}
\norm{x^*x-y^*y}{1}\le 2 \norm{x-y}{2}\pl,  \label{2}
\end{align}
which holds for $x$ and $y$ satisfying $\norm{x}{2}=\norm{y}{2}=1$.
Then \eqref{eq:recovery-error-to-w_t} follows because
\begin{align*}
\norm{\si-R_\rho^t(\si)}{1}&\le 2\norm{\si^{1/2+it}-\si_\N^{1/2+it}\rho_\N^{-1/2-it}\rho^{\frac{1}{2}+it}}{2}
\\  &=2\norm{\si^{1/2+it}\rho^{-it}-\si_\N^{1/2+it}\rho_\N^{-1/2-it}\rho^{\frac{1}{2}}}{2}=2\norm{\ket{w_t}}{2}.\qedhere
\end{align*}
\end{proof}

\bigskip

 For a regular operator anti-monotone function $f$, we have the following estimate of~$\norm{\ket{w_t}}{2}$:

\begin{lemma}
\label{fd}
Let $f:(0,\infty)\to \R$ be a regular operator anti-monotone function, and let $d\nu$ be the measure in its integral representation. Suppose $Q_{x^2}(\rho\|\si)=\bra{\rho^{1/2}}\Delta_\M^2\ket{\rho^{1/2}}<\infty$.
Suppose for some $S$ and $T$, satisfying $0\le S<T\le \infty$, that there exists $c(S,T)>0$ such that on the interval $(S,T)$,
\[d\la\le c(S,T)\, d\nu(\la).\]
Then, for $\ket{w_t}$ as defined in \eqref{eq:def-w_t}, the following inequality holds
\begin{multline*}
\norm{\ket{w_t}}{2}\le \\
\frac{\cosh(\pi t)}{\pi}\left(4S^{1/2}+[c(S,T)(T-S)]^{1/2}(Q_f(\rho\|\si)-Q_f(\rho_\N\|\si_\N))^{1/2}+4T^{-1/2}Q_{x^2}(\rho\|\si)^{1/2}
\right).
\end{multline*}
\end{lemma}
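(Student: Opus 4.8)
The plan is to start from the integral representation of $\ket{w_t}$ established in Lemma~\ref{3.2} and reduce the estimate to a weighted $L_2$-bound on the family $\{\ket{w_\la}\}_{\la\ge 0}$. Taking norms in \eqref{eq:w_lam-to-w_t}, using $|\la^{1/2+it}|=\la^{1/2}$ for $\la>0$, and applying the triangle inequality for the (Bochner) integral, one obtains
\[
\norm{\ket{w_t}}{2}\le \frac{\cosh(\pi t)}{\pi}\int_0^\infty \la^{1/2}\norm{\ket{w_\la}}{2}\,d\la.
\]
Everything then reduces to bounding $\int_0^\infty \la^{1/2}\norm{\ket{w_\la}}{2}\,d\la$, which I would split as $\int_0^S+\int_S^T+\int_T^\infty$, matching the three summands on the right-hand side of the claimed inequality.

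For the small-$\la$ piece $(0,S)$ I would use only the crude bound coming from the definition \eqref{eq:def-w_lam}: since $\Delta_\M,\Delta_\N\ge 0$ we have the operator-norm bound $\norm{(\Delta_\M+\la)^{-1}}{}\le \la^{-1}$ (likewise on $\N$), and $V_\rho$ is an isometry, so the triangle inequality together with $\norm{\ket{\rho^{1/2}}}{2}=\norm{\ket{\rho_\N^{1/2}}}{2}=1$ gives $\norm{\ket{w_\la}}{2}\le 2\la^{-1}$. Integrating, $\int_0^S \la^{1/2}\cdot 2\la^{-1}\,d\la=4S^{1/2}$, the first term. For the middle piece $(S,T)$ I would apply Cauchy--Schwarz to the pairing of $1$ against $\la^{1/2}\norm{\ket{w_\la}}{2}$, producing the factor $(T-S)^{1/2}$ times $\big(\int_S^T \la\norm{\ket{w_\la}}{2}^2\,d\la\big)^{1/2}$. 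Here the identity \eqref{eq:f_lam_norm_sum}, namely $F(\la)=\norm{\Delta_\M^{1/2}\ket{w_\la}}{2}^2+\la\norm{\ket{w_\la}}{2}^2$, supplies $\la\norm{\ket{w_\la}}{2}^2\le F(\la)$. Using the hypothesis $d\la\le c(S,T)\,d\nu(\la)$ on $(S,T)$, non-negativity of $F$, and the representation \eqref{dif} that $\int_0^\infty F(\la)\,d\nu(\la)=Q_f(\rho\|\si)-Q_f(\rho_\N\|\si_\N)$, this piece is bounded by $[c(S,T)(T-S)]^{1/2}(Q_f(\rho\|\si)-Q_f(\rho_\N\|\si_\N))^{1/2}$, the middle term.

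The large-$\la$ piece $(T,\infty)$ is where I expect the main obstacle to lie, since the crude $\la^{-1}$ bound is \emph{not} integrable against $\la^{1/2}$ at infinity: one genuinely needs the leading order to cancel. The key observation is $V_\rho\ket{\rho_\N^{1/2}}=\ket{\rho^{1/2}}$, so that applying the resolvent identity $(\Delta+\la)^{-1}=\la^{-1}-\la^{-1}(\Delta+\la)^{-1}\Delta$ in both $\M$ and $\N$ makes the two $\la^{-1}\ket{\rho^{1/2}}$ terms in \eqref{eq:def-w_lam} cancel exactly, leaving
\[
\ket{w_\la}=-\la^{-1}(\Delta_\M+\la)^{-1}\Delta_\M\ket{\rho^{1/2}}+\la^{-1}V_\rho(\Delta_\N+\la)^{-1}\Delta_\N\ket{\rho_\N^{1/2}}.
\]
Bounding each resolvent by $\la^{-1}$ and recognizing $\norm{\Delta_\M\ket{\rho^{1/2}}}{2}^2=Q_{x^2}(\rho\|\si)$ and $\norm{\Delta_\N\ket{\rho_\N^{1/2}}}{2}^2=Q_{x^2}(\rho_\N\|\si_\N)$, together with the data-processing inequality $Q_{x^2}(\rho_\N\|\si_\N)\le Q_{x^2}(\rho\|\si)$ (valid since $x^2$ is operator convex), yields $\norm{\ket{w_\la}}{2}\le 2\la^{-2}Q_{x^2}(\rho\|\si)^{1/2}$. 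Then $\int_T^\infty \la^{1/2}\cdot 2\la^{-2}Q_{x^2}(\rho\|\si)^{1/2}\,d\la=4T^{-1/2}Q_{x^2}(\rho\|\si)^{1/2}$, the third term. Summing the three pieces gives the asserted bound. The delicate points are precisely that the cancellation at infinity be exact (so the tail decays like $\la^{-2}$ rather than $\la^{-1}$) and that the hypothesis $Q_{x^2}(\rho\|\si)<\infty$ is exactly what makes this tail finite; the small-$\la$ and middle estimates are then routine.
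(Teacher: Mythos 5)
Your proposal is correct and follows essentially the same route as the paper: the same decomposition of the $\la$-integral at $S$ and $T$, and the identical Cauchy--Schwarz argument combined with $\la\norm{\ket{w_\la}}{2}^2\le F(\la)$, the regularity hypothesis, and \eqref{dif} for the middle piece. The only difference is cosmetic, in the head and tail: the paper keeps those $\la$-integrals as spectral functions $h_S^t(\Delta)$ and $g_T^t(\Delta)$ and bounds them via $\arctan(x)\le x$, whereas you pull the norm inside the integral and realize the necessary cancellation at infinity through the resolvent identity together with $V_\rho\ket{\rho_\N^{1/2}}=\ket{\rho^{1/2}}$; both yield the same constants $4S^{1/2}$ and $4T^{-1/2}Q_{x^2}(\rho\|\si)^{1/2}$.
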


\begin{proof}
To estimate the norm of $\ket{w_t}$, we break $\ket{w_t}$ into three separate terms after applying~\eqref{eq:w_lam-to-w_t}:
\begin{align*}
-\ket{w_t}&= \frac{\cosh(\pi t)}{\pi}\left(\int^{\infty}_{0}\la^{1/2+it}\ket{w_\la}\, d\la\right)
\\ &= \frac{\cosh(\pi t)}{\pi}\left(\int^{S}_{0}\la^{1/2+it}\ket{w_\la}\, d\la+\int^{S}_{T}\la^{1/2+it}\ket{w_\la}\, d\la+\int^{\infty}_{T}\la^{1/2+it}\ket{w_\la}\, d\la\right)
\\ &
=: \frac{\cosh(\pi t)}{\pi}\left(\operatorname{I}+\operatorname{II}+\operatorname{III}\right),
\end{align*}
where $\ket{w_\la}$ is defined in \eqref{eq:def-w_lam}.
For the first term $\operatorname{I}$, we define the function
\[h_{S}^t(x) \coloneqq \int_0^S \la^{\frac{1}{2}+it}\left(\frac{1}{\la}-\frac{1}{\la+x}\right)d\la .
\]
Thus
\begin{align*}
\int_0^S \la^{\frac{1}{2}+it}\ket{w_\la}\, d\la= V_\rho h_S^t(\Delta_\N)\ket{\rho_\N^{1/2}}-h_S^t(\Delta_\M)\ket{\rho^{1/2}}.
\end{align*}
For $x\ge 0$,
\begin{align*}|h_{S}^t(x)|
& \le \int_0^S \left|\la^{\frac{1}{2}+it}\left(\frac{1}{\la}-\frac{1}{\la+x}\right)\right| d\la
\le \int_0^S \la^{\frac{1}{2}}\left(\frac{1}{\la}-\frac{1}{\la+x}\right) d\la
\\ & = 2x^{1/2}\arctan \!\left(\frac{\sqrt{S}}{\sqrt{x}}\right) =h_S^0(x).
\end{align*}
Note that $h_S^0$ is the function $h_S^t$ with $t=0$.
So we conclude that
\begin{align*}
\norm{h_S^t(\Delta_\M)\ket{\rho^{1/2}}}{2}^2
&=\bra{\rho^{1/2}}h^t_S(\Delta_\M)^*h^t_S(\Delta_\M)\ket{\rho^{1/2}}
\\ & \le \bra{\rho^{1/2}}4\Delta_\M\arctan^2 \!\left(\frac{\sqrt{S}}{\sqrt{\Delta_\M}}\right)\ket{\rho^{1/2}}
\\ & = \int_{0}^\infty 4s\arctan^2 \!\left(\frac{\sqrt{S}}{\sqrt{s}}\right)d\mu(s)
\\ & \le  \int_{0}^\infty 4s\left(\frac{S}{s}\right)d\mu(s)\le 4S.
\end{align*}
Here $d\mu(s)$ is the probability measure $d\mu(s)=d\bra{\rho^{1/2}}E_{[0,s]}(\Delta_\M)\ket{\rho^{1/2}}$ from the spectral decomposition, and the second inequality uses the fact that $\arctan(x)\le x$ for $x\ge 0$.
 Similarly,
\[ \norm{V_\rho h_S^t(\Delta_\N)\ket{\rho_\N^{1/2}}}{2}^2 \le\bra{\rho_\N^{1/2}}4\Delta_\N\arctan^2 \!\left(\frac{\sqrt{S}}{\sqrt{\Delta_\N}}\right)\ket{\rho_\N^{1/2}}\le 4S\pl. \]
Thus we have
\begin{equation}
\norm{\operatorname{I}}{2}\le \norm{h_S^t(\Delta_\M)\ket{\rho^{1/2}}}{2}+\norm{V_\rho h_S^t(\Delta_\N)\ket{\rho_\N^{1/2}}}{2}\le 2S^{1/2}+2S^{1/2}=4S^{1/2}. \label{eq:estimate-1-reg-op-mon-lemma}
\end{equation}

For the second term, consider that
\begin{align}
\norm{\operatorname{II}}{2}
&=
\norm{\int^{T}_{S}\la^{1/2+it}\ket{w_\la}d\la}{2} \\
& \le \int^{T}_{S}\la^{1/2}\norm{\ket{w_\la}}{2}d\la
\\
&\le \left(\int^{T}_{S} 1 \, d\la\right)^{1/2} \left(\int^{T}_{S}\la\norm{\ket{w_\la}}{2}^2d\la \right)^{1/2}
\\
&\le (T-S)^{1/2} \left(\int^{T}_{S}F(\la)\, d\la \right)^{1/2}
\\ &\le (T-S)^{1/2} \left(c(S,T)\int^{T}_{S}F(\la)\ d\nu(\la) \right)^{1/2}
\\ &\le \left[c(S,T)(T-S)(Q_f(\rho\|\si)-Q_f(\rho_\N\|\si_\N))\right]^{1/2}.
\label{eq:estimate-2-reg-op-mon-lemma}
\end{align}
The first inequality follows from the triangle inequality, the second from Cauchy--Schwarz, the third  from \eqref{eq:f_lam_norm_sum}, the fourth from the assumption of a regular operator anti-monotone function~$f$, and the fifth from \eqref{dif}.

For the third term, consider that
\begin{align*}
\operatorname{III}& =\int_{T}^{\infty}\la^{1/2+it}\ket{w_\la}\, d\la
\\
& = - \int_{T}^{\infty}\la^{1/2+it}\left(\frac{1}{\la}-\frac{1}{\la+\Delta_\M}\right) \, \ket{\rho^{1/2}}d\la
+ V_\rho\int_{T}^{\infty}\la^{1/2+it}\left(\frac{1}{\la}-\frac{1}{\la+\Delta_\N}\right) \, \ket{\rho_\N^{1/2}} d\la .
\end{align*}
Let us consider the integral
\begin{align*}
\int_{T}^{\infty}\la^{1/2+it}\left(\frac{1}{\la}-\frac{1}{\la+x}\right)d\la
=x^{1/2+it}\int_{\frac{T}{x}}^{\infty}\la^{1/2+it}\left(\frac{1}{\la}-\frac{1}{\la+1}\right)d\la.
\end{align*}
Note that the function $\la\mapsto \la^{1/2+it}\left(\frac{1}{\la}-\frac{1}{\la+1}\right)$ is bounded and integrable on $(0,\infty)$. We define the following continuous function:
\[
g_T^t(x):=\int_{T}^{\infty}\la^{1/2+it}\left(\frac{1}{\la}-\frac{1}{\la+x}\right)d\la.
\]
Then
\[
\int_{T}^{\infty}\la^{1/2+it}\left(\frac{1}{\la}-\frac{1}{\la+\Delta_\M}\right)d\la=g_T^t\!\left(\Delta_\M\right)\pl.
\]
For $x\ge 0$,
\begin{equation}
\label{eq:g_func-upp-bnd}
|g_T^t(x)|^2\le x \left(\int_{\frac{T}{x}}^{\infty}\la^{1/2}\left(\frac{1}{\la}-\frac{1}{\la+1}\right)d\la\right)^2=4x\arctan^2\!\left(\frac{\sqrt{x}}{\sqrt{T}}\right)\pl.
\end{equation}
Therefore
\begin{align*}
\norm{\int_{T}^{\infty}\la^{1/2+it}\left(\frac{1}{\la}-\frac{1}{\la+\Delta_\M}\right)d\la \ket{\rho^{1/2}}}{2}^2
& =
\bra{\rho^{1/2}}  |g_T^t(\Delta_\M)|^2   \ket{\rho^{1/2}}
\\
& \le \bra{\rho^{1/2}} 4\Delta_\M\arctan^2\!\left(\frac{\sqrt{\Delta_\M}}{\sqrt{T}}\right)   \ket{\rho^{1/2}}
\\ & \le \frac{4}{T}\bra{\rho^{1/2}} \Delta_\M^2  \ket{\rho^{1/2}}
\\ & = \frac{4}{T}Q_{x^2}(\rho\|\si).
\end{align*}
where we used \eqref{eq:g_func-upp-bnd} and the bound $\arctan(x)\le x$, holding for $x \geq 0$. Similarly,
\begin{align*}
\norm{V_\rho\int_{T}^{\infty}\la^{1/2+it}\left(\frac{1}{\la}-\frac{1}{\la+\Delta_\N}\right) \ket{\rho_\N^{1/2}}d\la }{2}^2
& \le
\frac{4}{T}\bra{\rho_\N^{1/2}} \Delta_\N^2  \ket{\rho_\N^{1/2}} \\
& = \frac{4}{T}Q_{x^2}(\rho_\N\|\si_\N) \\
& \le \frac{4}{T}Q_{x^2}(\rho\|\si).
\end{align*}
The final inequality follows from the data-processing inequality for the Petz--R\'enyi relative quasi-entropy $Q_{x^2}$.
So we conclude from the analysis above and the triangle inequality that
\begin{equation}
\label{eq:estimate-3-reg-op-mon-lemma}
\norm{\operatorname{III}}{2} \leq
4 T^{-1/2}Q_{x^2}(\rho\|\si)^{1/2}.
\end{equation}

 Putting the estimates from \eqref{eq:estimate-1-reg-op-mon-lemma}, \eqref{eq:estimate-2-reg-op-mon-lemma}, and \eqref{eq:estimate-3-reg-op-mon-lemma} together, we find that
\begin{align*}
\norm{\ket{w_t}}{2}
&\le \frac{\cosh(\pi t)}{\pi}\Big(\norm{\operatorname{I}}{2}+\norm{\operatorname{II}}{2}+\norm{\operatorname{III}}{2}\Big)
\\ & \le \frac{\cosh(\pi t)}{\pi}\Big(4S^{1/2}+c(S,T)^{1/2}(T-S)^{1/2}(Q_f(\rho\|\si)-Q_f(\rho_\N\|\si_\N))^{1/2}\\ &\qquad\qquad\qquad\qquad +4T^{-1/2}Q_{x^2}(\rho\|\si)^{1/2}\Big).
\end{align*}
This completes the proof.
\end{proof}

A direct consequence of Lemmas~\ref{3.2} and \ref{fd} is the following general bound on the recoverability error in terms of a standard $f$-divergence:
\begin{cor}
\label{cor:bnd-standard-f-div-1}
Considering the same hypotheses of Lemma~\ref{fd}, the following inequality holds
\begin{multline*}
\norm{\si-R_\rho^t(\si_\N)}{1}\le \frac{2 \cosh(\pi t)}{\pi} \times \\
\left(4S^{1/2}+[c(S,T)(T-S)]^{1/2}(Q_f(\rho\|\si)-Q_f(\rho_\N\|\si_\N))^{1/2}+4T^{-1/2}Q_{x^2}(\rho\|\si)^{1/2}
\right).
\end{multline*}
\end{cor}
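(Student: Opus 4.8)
The plan is simply to chain the two preceding lemmas: the corollary is a one-line composition, since both Lemma~\ref{3.2} and Lemma~\ref{fd} are statements about the very same vector $\ket{w_t}$ defined in \eqref{eq:def-w_t}, and transitivity of the two inequalities yields the claim once the hypotheses are seen to match.

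First I would invoke Lemma~\ref{3.2}, whose inequality \eqref{eq:recovery-error-to-w_t} controls the recoverability error by the $L_2$-norm of $\ket{w_t}$:
\[
\norm{\si-R_\rho^{t}(\si_\N)}{1}\le 2\norm{\ket{w_t}}{2}.
\]
Here I note that the quantity on the left is the rotated Petz map $R_\rho^t$ applied to the restricted state $\si_\N\in L_1(\N)$, consistent with the computation in the proof of Lemma~\ref{3.2}, where one has $(\si_\N^{\frac{1}{2}+it}\rho_\N^{-\frac{1}{2}-it}\rho^{\frac{1}{2}+it})^*(\si_\N^{\frac{1}{2}+it}\rho_\N^{-\frac{1}{2}-it}\rho^{\frac{1}{2}+it})=R_\rho^t(\si_\N)$.

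Next I would apply Lemma~\ref{fd}. Since the corollary is stated under the same hypotheses as that lemma---namely, that $f$ is a regular operator anti-monotone function with integral-representation measure $d\nu$, that $Q_{x^2}(\rho\|\si)<\infty$, and that there exist $S,T$ with $0\le S<T\le \infty$ and a constant $c(S,T)>0$ satisfying $d\la\le c(S,T)\,d\nu(\la)$ on $(S,T)$---Lemma~\ref{fd} directly supplies
\[
\norm{\ket{w_t}}{2}\le \frac{\cosh(\pi t)}{\pi}\left(4S^{1/2}+[c(S,T)(T-S)]^{1/2}(Q_f(\rho\|\si)-Q_f(\rho_\N\|\si_\N))^{1/2}+4T^{-1/2}Q_{x^2}(\rho\|\si)^{1/2}\right).
\]

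Finally I would substitute this estimate into the bound from Lemma~\ref{3.2}, which produces the prefactor $\tfrac{2\cosh(\pi t)}{\pi}$ and the claimed three-term bracket, giving exactly the statement of the corollary. I do not expect any genuine obstacle here, as all of the analytic content resides in the two lemmas and the corollary is a pure concatenation. The single point requiring care is the bookkeeping check that the vector $\ket{w_t}$ appearing in Lemma~\ref{3.2} and the one appearing in Lemma~\ref{fd} are literally identical---they are, both being the object defined in \eqref{eq:def-w_t}---so that the two inequalities may be composed without any further argument.
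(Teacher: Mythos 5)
Your proposal is correct and is exactly the paper's argument: the corollary is stated there as a ``direct consequence'' of Lemmas~\ref{3.2} and~\ref{fd}, obtained by composing the inequality $\norm{\si-R_\rho^{t}(\si_\N)}{1}\le 2\norm{\ket{w_t}}{2}$ with the three-term estimate on $\norm{\ket{w_t}}{2}$. Your bookkeeping remark that the vector $\ket{w_t}$ is the same object in both lemmas (and that the recovered state is $R_\rho^t(\si_\N)$) is the only point of care, and you handle it correctly.
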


\bigskip

For a particular choice of $f$, the estimate from Corollary~\ref{cor:bnd-standard-f-div-1} simplifies, depending on the measure~$\nu$. In the next section, we  consider some important examples.

\subsubsection{Recoverability for quantum relative entropy}

We begin with the quantum relative entropy $D(\rho\|\si)$, as defined in \eqref{eq:umegaki-rel-ent-def}.

\begin{theorem}
\label{re}
Let $\M$ be a finite-dimensional von Neumann algebra, and let $\N\subset \M$ be a subalgebra. Then for all faithful states $\rho$ and $\si$, the following inequalities hold
\begin{align}
D(\rho\|\si)-D(\rho_\N\|\si_\N) & \ge \left(\frac{\pi}{8} \right)^4 Q_{x^2}(\rho\|\si)^{-1}\norm{\si-R_\rho(\si_\N)}{1}^4 \label{eq1}, \\
D(\rho\|\si)-D(\rho_\N\|\si_\N) & \ge \left(\frac{\pi}{8 \cosh(\pi t)} \right)^4 Q_{x^2}(\rho\|\si)^{-1}\norm{\si-R_\rho^t(\si_\N)}{1}^4 \label{eq2} , \\
D(\rho\|\si)-D(\rho_\N\|\si_\N) & \ge \frac{1}{256} \, Q_{x^2}(\rho\|\si)^{-1}\norm{\si-R_\rho^u(\si_\N)}{1}^4.
\label{eq3}
\end{align}
Here $Q_{x^2}(\rho\|\si)=\lan\rho^{1/2}| \Delta(\si,\rho)^2|\rho^{1/2}\ran = \tau(\rho^{-1}\si^2)$.
\end{theorem}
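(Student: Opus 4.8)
The plan is to recognize Theorem~\ref{re} as the special case $f(x)=-\log x$ of the general bound already assembled in Corollary~\ref{cor:bnd-standard-f-div-1}, and then to optimize the free parameters. First I would record that $-\log x$ is a regular operator anti-monotone function whose integral representation \eqref{om} has $a=b=0$ and $d\nu(\la)=d\la$ (the Lebesgue measure), as shown in the logarithm example. Consequently the comparison hypothesis $d\la\le c(S,T)\,d\nu(\la)$ holds on \emph{every} interval with the optimal constant $c(S,T)=1$, so Corollary~\ref{cor:bnd-standard-f-div-1} applies with $Q_f=D$ and yields, for all $0\le S<T\le\infty$,
\[
\norm{\si-R_\rho^t(\si_\N)}{1}\le \frac{2\cosh(\pi t)}{\pi}\left(4S^{1/2}+(T-S)^{1/2}\delta^{1/2}+4T^{-1/2}Q_{x^2}(\rho\|\si)^{1/2}\right),
\]
where I abbreviate $\delta:=D(\rho\|\si)-D(\rho_\N\|\si_\N)\ge 0$.

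Next I would eliminate the parameters. Choosing $S=0$ kills the first term (the function $h_S^t$ degenerates to zero there), leaving $T^{1/2}\delta^{1/2}+4T^{-1/2}Q_{x^2}(\rho\|\si)^{1/2}$ to minimize over $T>0$. Elementary calculus gives the optimizer $T^\ast=4\,(Q_{x^2}(\rho\|\si)/\delta)^{1/2}$, at which the two terms balance and sum to $4\,(Q_{x^2}(\rho\|\si)\,\delta)^{1/4}$. This produces the master estimate
\[
\norm{\si-R_\rho^t(\si_\N)}{1}\le \frac{8\cosh(\pi t)}{\pi}\,(Q_{x^2}(\rho\|\si)\,\delta)^{1/4}.
\]
Raising both sides to the fourth power and solving for $\delta$ gives \eqref{eq2} directly, and setting $t=0$ (so $\cosh(\pi t)=1$ and $R_\rho^0=R_\rho$) gives \eqref{eq1}.

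For the universal-map bound \eqref{eq3}, I would pass the master estimate through the defining average $R_\rho^u=\int_\R R_\rho^{t/2}\,d\beta(t)$. Since $d\beta$ is a probability measure, $\si-R_\rho^u(\si_\N)=\int_\R(\si-R_\rho^{t/2}(\si_\N))\,d\beta(t)$, so the integral triangle inequality for the trace norm combined with the master estimate at parameter $t/2$ yields
\[
\norm{\si-R_\rho^u(\si_\N)}{1}\le \frac{8}{\pi}\,(Q_{x^2}(\rho\|\si)\,\delta)^{1/4}\int_\R\cosh(\pi t/2)\,d\beta(t).
\]
The crux is then the exact evaluation of this weighted integral: using $\cosh(\pi t)+1=2\cosh^2(\pi t/2)$ one finds $\int_\R\cosh(\pi t/2)\,d\beta(t)=\tfrac{\pi}{4}\int_\R\frac{dt}{\cosh(\pi t/2)}=\tfrac{\pi}{2}$, where the last integral equals $2$ because $\int_\R (\cosh u)^{-1}\,du=\pi$. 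This collapses the prefactor to $4\,(Q_{x^2}(\rho\|\si)\,\delta)^{1/4}$, and solving for $\delta$ gives exactly the constant $4^{-4}=1/256$ in \eqref{eq3}. Finally, the stated identity $Q_{x^2}(\rho\|\si)=\lan\rho^{1/2}|\Delta(\si,\rho)^2|\rho^{1/2}\ran=\tau(\rho^{-1}\si^2)$ is a one-line computation from $\Delta(\si,\rho)\ket{x}=\ket{\si x\rho^{-1}}$ together with cyclicity of the trace.

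The only genuinely delicate step is the last one: threading the single-$t$ bound through the $\beta$-average and evaluating $\int_\R\cosh(\pi t/2)\,d\beta(t)$ in closed form, which is precisely what converts the $t$-dependent constant $(\pi/8\cosh(\pi t))^4$ into the clean universal constant $1/256$. The parameter optimization over $S$ and $T$ is routine, and the identification of $c(S,T)=1$ for the logarithm is what makes the whole estimate sharp in this case.
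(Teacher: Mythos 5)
Your proposal is correct and follows essentially the same route as the paper: both apply Corollary~\ref{cor:bnd-standard-f-div-1} with $f(x)=-\log x$ and $c(S,T)=1$, choose $S=0$ and $T=4\,Q_{x^2}(\rho\|\si)^{1/2}\delta^{-1/2}$ to obtain the master estimate $\norm{\si-R_\rho^t(\si_\N)}{1}\le \frac{8\cosh(\pi t)}{\pi}(Q_{x^2}(\rho\|\si)\,\delta)^{1/4}$, and then average over $d\beta(t)$ using $\int_{\R}\frac{\cosh(\pi t/2)}{\cosh(\pi t)+1}\,dt=1$ to get the constant $1/256$. Your closed-form evaluation of the weighted integral agrees with the paper's.
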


\begin{proof}
Consider from \eqref{eq:umegaki-rel-ent-def} that $f(x)=-\log x$, for which we have the following integral representation:
\[
-\log x=\int_0^\infty \left(\frac{1}{\la+x}-\frac{\la}{1+\la^2}\right) d\la \pl,
\]
where $d\la$ is the Lebesgue measure. Thus $c(S,T)=1$ for $0\le S\le T\le \infty$. Choose $S=0$ and
\[
T= 4 \, Q_{x^2}(\rho\|\si)^{1/2}(D(\rho\|\si)-D(\rho_\N\|\si_\N))^{-1/2}\pl.
\]
Applying Corollary~\ref{cor:bnd-standard-f-div-1}, we obtain the following:
\begin{align}
\norm{\si-R_\rho^t(\si_\N)}{1} & \le
\nonumber
 \frac{2\cosh(\pi t)}{\pi}\left[T^{1/2}(D(\rho\|\si)-D(\rho_\N\|\si_\N))^{1/2}+4T^{-1/2}Q_{x^2}(\rho\|\si)^{1/2}\right]
\nonumber \\
& = \frac{8\cosh(\pi t)}{\pi}(D(\rho\|\si)-D(\rho_\N\|\si_\N))^{1/4}Q_{x^2}(\rho\|\si)^{1/4} \label{est}.
\end{align}

 Eq.~\eqref{eq2} follows from rewriting \eqref{est}, and \eqref{eq3} follows from integrating \eqref{est} with respect to the measure $d\beta(t)=\frac{\pi}{2}\left(\cosh(\pi t)+1\right)^{-1}dt$, from convexity of the trace norm, the fact that $\int_{\mathbb{R}} \frac{\cosh(\pi t/2)}{  (\cosh(\pi t)+1)} dt =1$, and by the integral expression  $R_\rho^u=\int_{\mathbb{R}}R_\rho^{\frac{t}{2}}\,d\beta(t)$. Eq.~\eqref{eq1} is a special case of \eqref{eq2} at $t=0$.
\end{proof}

\begin{rem}{\rm As mentioned in Section~\ref{sec:summary-results}, an advantage of the bounds in Theorem~\ref{re} over previous bounds from \cite{CV17,CV18} is that the remainder term features the quantity $Q_{x^2}(\rho\|\si)$ rather than the operator norm of the relative modular operator. As such, these bounds are non-trivial for the important class of bosonic Gaussian states \cite{S17}, whereas the previous bounds from \cite{CV17,CV18} are trivial for this class of states. Moreover, explicit formulas are available for evaluating the Petz-- and sandwiched-R\'enyi relative entropies of bosonic Gaussian states (see \cite{SLW18} and references therein). This remark applies not only to Theorem~\ref{re}, but also to Theorem~\ref{thm:quasi-renyi-recoverability}, Corollary~\ref{cor:petz-renyi-recoverability}, Theorems~\ref{rsi} and \ref{thm:other-petz-map-renyi-rec}, and Corollary~\ref{cor:petz-renyi-recoverability-2}.}
\end{rem}

\subsubsection{Recoverability for Petz--R\'enyi relative (quasi-)entropy}

For $\al\in (0,1)\cup (1,2)$, the Petz--R\'enyi relative entropy  is given by
\begin{equation}
\label{eq:petz-renyi-def-recall}
D_\al(\rho\|\si)=\frac{1}{\al-1}\log \tau(\rho^{\al}\si^{1-\al})=\frac{1}{\al-1}\log Q_{x^{1-\al}}(\rho\|\si)\pl.
\end{equation}
In some of the statements that follow, we also adopt a different parameterization by setting \begin{equation}
s \coloneqq 1-\alpha \in (-1,0)\cup(0,1),
\label{eq:s-for-quasi-renyi}
\end{equation}
 so that
\begin{equation}
\label{eq:petz-renyi-def-recall-1}
D_\al(\rho\|\si)=D_{1-s}(\rho\|\si) =
-\frac{1}{s}\log \tau(\rho^{1-s}\si^{s})= - \frac{1}{s}\log Q_{x^{s}}(\rho\|\si)\pl,
\end{equation}
and we also adopt the abbreviation
\begin{equation}
Q_{s}(\rho\|\si) \equiv Q_{x^{s}}(\rho\|\si) =
\tau(\rho^{1-s}\si^{s}).
\label{eq:quasi-renyi-with-s}
\end{equation}
We begin by focusing on the Petz--R\'enyi relative quasi-entropy in Theorem~\ref{thm:quasi-renyi-recoverability}, and then we extend the result to the Petz--R\'enyi relative entropy in Corollary~\ref{cor:petz-renyi-recoverability}.

\begin{theorem}
\label{thm:quasi-renyi-recoverability}
Let $\M$ be a finite-dimensional von Neumann algebra, and let $\N\subset \M$ be a subalgebra.
Then for all faithful states $\rho$ and $\si$, the following inequalities hold for the Petz--R\'enyi relative quasi-entropy $Q_{s}(\rho\|\si) = \tau(\rho^{1-s}\si^{s})$ for $s \in (-1,0)\cup (0,1)$:
\begin{align}
|Q_{s}(\rho\|\si)-Q_{s}(\rho_\N\|\si_\N)| & \ge K(s ,Q_{x^2}(\rho\|\si))\left(\frac{\pi}{4+2|s|}\norm{\si-R_\rho(\si_\N)}{1}\right)^{4+2|s|},
\label{petz-renyi-quasi-recov-no-t}
\\
|Q_{s}(\rho\|\si)-Q_{s}(\rho_\N\|\si_\N)| & \ge K(s ,Q_{x^2}(\rho\|\si)) \left(\frac{\pi}{(4+2|s|)\cosh{\pi t}}\norm{\si-R_\rho^t(\si_\N)}{1}\right)^{4+2|s|},  \label{petz-renyi-quasi-recov-with-t}
\\
|Q_{s}(\rho\|\si)-Q_{s}(\rho_\N\|\si_\N)| & \ge K(s ,Q_{x^2}(\rho\|\si))\left(\frac{1}{(2+|s|)}\norm{\si-R_\rho^u(\si_\N)}{1}\right)^{4+2|s|}.
\label{petz-renyi-quasi-recov-universal-1}
\end{align}
where $Q_{x^2}(\rho\|\si) = \tau(\rho^{-1}\si^{2})$ and
\begin{align}
\label{constant}
K(s ,Q_{x^2}(\rho\|\si))
\coloneqq
\begin{dcases}
\frac{\sin(\pi|s|)}{\pi}
\left(\frac{(|s|+1)^2}{16 Q_{x^2}(\rho\|\si)}\right)^{|s|+1} & \text{ for } {-1} < s < 0
\\
\frac{\sin(\pi s)}{\pi Q_{x^2}(\rho\|\si)}
\frac{s^{2s}}{16^{s+1}}& \text{ for } 0<s<1
\end{dcases}.
\end{align}
\end{theorem}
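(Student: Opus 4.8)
The plan is to deduce the three inequalities from Corollary~\ref{cor:bnd-standard-f-div-1}, applied to a suitable regular operator anti-monotone function, after which everything reduces to optimizing the free parameters $S$ and $T$. The first thing I would do is record the correct function and its representing measure. For $-1<s<0$ the power $f(x)=x^{s}$ is itself operator anti-monotone, with representation $x^{s}=\frac{\sin(\pi|s|)}{\pi}\int_0^\infty \la^{s}(\la+x)^{-1}\,d\la$, so that $Q_f=Q_s$. For $0<s<1$ the power $x^{s}$ is operator monotone, so I would instead take $f(x)=-x^{s}$, which is operator anti-monotone with the same representing measure up to an $x$-independent constant, and here $Q_f=-Q_s$. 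In either regime the representing measure is $d\nu(\la)=\frac{\sin(\pi|s|)}{\pi}\la^{s}\,d\la$, the function is regular, and the data-processing inequality yields $Q_f(\rho\|\si)-Q_f(\rho_\N\|\si_\N)=|Q_s(\rho\|\si)-Q_s(\rho_\N\|\si_\N)|\ge0$, a quantity I abbreviate by $\delta$; I also write $M\coloneqq Q_{x^2}(\rho\|\si)$.

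Next I would compute the comparison constant $c(S,T)$ from Lemma~\ref{fd}. The requirement $d\la\le c(S,T)\,d\nu(\la)$ on $(S,T)$ amounts to $c(S,T)\ge\frac{\pi}{\sin(\pi|s|)}\la^{-s}$ there, so the optimal constant is $c(S,T)=\frac{\pi}{\sin(\pi|s|)}\sup_{\la\in(S,T)}\la^{-s}$. For $-1<s<0$ the factor $\la^{-s}=\la^{|s|}$ is increasing, so the supremum sits at $\la=T$ and one may safely set $S=0$; for $0<s<1$ the factor $\la^{-s}$ is decreasing, so the supremum sits at $\la=S$ and one is forced to keep $S>0$. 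This is the reason the two cases behave differently.

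With these ingredients Corollary~\ref{cor:bnd-standard-f-div-1} reads $\norm{\si-R_\rho^t(\si_\N)}{1}\le\frac{2\cosh(\pi t)}{\pi}\,\Xi(S,T)$ with $\Xi(S,T)=4S^{1/2}+[c(S,T)(T-S)]^{1/2}\delta^{1/2}+4T^{-1/2}M^{1/2}$, and I would minimize $\Xi$ over the admissible parameters. For $-1<s<0$ I set $S=0$, giving $c(0,T)\,T=\frac{\pi}{\sin(\pi|s|)}T^{|s|+1}$ and a one-variable problem $\Xi=(\frac{\pi}{\sin(\pi|s|)})^{1/2}T^{(|s|+1)/2}\delta^{1/2}+4M^{1/2}T^{-1/2}$; balancing the two terms gives $\Xi\propto\delta^{1/(4+2|s|)}M^{(|s|+1)/(4+2|s|)}$, and raising to the power $4+2|s|$ collapses the prefactor to exactly $K(s,M)$. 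For $0<s<1$ I would first bound $T-S\le T$ (which only enlarges the upper bound), minimize in $T$ first so that the last two terms of $\Xi$ coincide, reducing matters to minimizing $4S^{1/2}+4P^{1/2}M^{1/4}S^{-s/4}$ with $P=(\frac{\pi}{\sin(\pi s)})^{1/2}\delta^{1/2}$, and then minimize in $S$; the outer optimum again produces precisely the stated $K(s,M)$. Raising $\norm{\si-R_\rho^t(\si_\N)}{1}$ to the power $4+2|s|$ and rearranging yields \eqref{petz-renyi-quasi-recov-with-t}, and setting $t=0$ gives \eqref{petz-renyi-quasi-recov-no-t}.

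Finally, the universal bound \eqref{petz-renyi-quasi-recov-universal-1} would follow from the rotated-map estimate exactly as in the proof of Theorem~\ref{re}: replace $t$ by $t/2$, integrate against $d\beta(t)$, use convexity of the trace norm together with $R_\rho^u=\int_\R R_\rho^{t/2}\,d\beta(t)$, and evaluate $\int_\R\cosh(\pi t/2)\,d\beta(t)=\pi/2$. The hard part will be the two-variable optimization in the regime $0<s<1$, where all three terms of $\Xi$ are genuinely active and the exponent $4+2|s|$ together with the exact constant $K(s,M)$ emerges only after carrying out the nested minimization and tracking every numerical factor; by contrast the case $-1<s<0$ is comparatively painless, since the choice $S=0$ eliminates one term at the outset.
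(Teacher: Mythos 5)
Your proposal is correct and follows essentially the same route as the paper: apply Corollary~\ref{cor:bnd-standard-f-div-1} with the integral representation of $x^{s}$, take $c(S,T)$ exactly as you describe, and optimize over $S$ and $T$ (with $S=0$ forced only in the case $-1<s<0$). Your two cosmetic deviations --- working with $f=-x^{s}$ for $0<s<1$ so as to stay literally within the operator anti-monotone hypothesis, and enlarging $(T-S)^{1/2}\le T^{1/2}$ followed by a nested one-variable minimization rather than the paper's enlargement $T^{-1/2}\le (T-S)^{-1/2}$ followed by a joint minimization --- yield the identical constant $K(s,Q_{x^2}(\rho\|\si))$, since the two optimization problems coincide under the substitution $U=T-S$.
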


\begin{proof}
For $0<s<1$, the function $f(x)=x^s$ is operator monotone and operator concave. An integral representation for it is
\[x^s=\frac{\sin (\pi s)}{\pi}\int_0^\infty \la^s\left(\frac{1}{\la}-\frac{1}{\la+x}\right)d\la\pl.\]
So for $0<S<T<0$, the constant $c(S,T)\le \frac{\pi}{\sin (\pi s)}S^{-s}$.
Then by applying Corollary~\ref{cor:bnd-standard-f-div-1}, we find that
\begin{multline*}
\norm{\si-R_\rho^t(\si_\N)}{1} \le
\frac{2\cosh(\pi t)}{\pi}\times \\\left(4S^{1/2}+\left(\frac{\pi}{\sin (\pi s)}\right)^{1/2}S^{-s/2}(T-S)^{1/2}|Q_{s}(\rho\|\si)-Q_{s}(\rho_\N\|\si_\N)|^{1/2} +4T^{-1/2}Q_{x^2}(\rho\|\si)^{1/2}\right)
\end{multline*}
Define the following constants:
\[a=4\pl,\quad b=\left(\frac{\pi}{\sin (\pi s)}\right)^{\frac{1}{2}}|Q_{s}(\rho\|\si) - Q_{s}(\rho_\N\|\si_\N)|^{1/2}\pl, \quad c=4Q_{x^2}(\rho\|\si)^{1/2}.\]
We then have
\begin{align*}
\norm{\si-R_\rho^t(\si_\N)}{1}
& \le \frac{2\cosh(\pi t)}{\pi}(aS^{1/2}+bS^{-s/2}(T-S)^{\frac{1}{2}}+cT^{-1/2})\\
& \le \frac{2\cosh(\pi t)}{\pi}(aS^{1/2}+bS^{-s/2}(T-S)^{\frac{1}{2}}+c(T-S)^{-\frac{1}{2}}).
\end{align*}
Minimizing the following function over $0<S<T<\infty$:
\[F(S,T)=aS^{1/2}+bS^{-s/2}(T-S)^{\frac{1}{2}}+c(T-S)^{-\frac{1}{2}},
\]
we obtain the following inequality at the choices $S = \left(\frac{cbs^2}{a^2}\right)^{\frac{2}{s+2}}$ and $T-S=\left(\frac{cbs^2}{a^2}\right)^{\frac{s}{s+2}}\left(\frac{c}{b}\right)$: \begin{align}
\nonumber
&\norm{\si-R_\rho^t(\si_\N)}{1}\\
&\nonumber \le
\frac{2\cosh(\pi t)}{\pi}(s+2)(a^s s^{-s}bc)^{\frac{1}{s+2}}
\\ &= \frac{2\cosh(\pi t)}{\pi}(s+2) \Big(16^{s+1}s^{-2s}\frac{\pi}{\sin(\pi s)}|Q_{s}(\rho\|\si) - Q_{s}(\rho_\N\|\si_\N)|Q_{x^2}(\rho\|\si)   \Big)^{\frac{1}{2s+4}}.
\label{eq:Petz-renyi-remainder-1}
\end{align}

For $-1< s<0$, the function $f(x)=x^s$ is operator anti-monotone and operator convex.
An integral representation of $x^s$ for $s \in (-1,0)$ is
\[x^s= \frac{\sin (\pi |s|)}{\pi}\int_0^\infty \frac{\la^s}{\la+x}\ d\la\pl.\]
Then we can choose $S=0$ and the constant $c(0,T)\le \frac{\pi}{\sin (\pi |s|)}T^{|s|}$.
By Corollary~\ref{cor:bnd-standard-f-div-1}, we find that
\begin{multline*}
\norm{\si-R_\rho^t(\si_\N)}{1}
 \le
\frac{2\cosh(\pi t)}{\pi}\times \\\left(\left(\frac{\pi}{\sin (\pi |s|)}T^{|s|}\right)^{1/2} T^{1/2}(Q_{s}(\rho\|\si)-Q_{s}(\rho_\N\|\si_\N))^{1/2}
 +4T^{-1/2}Q_{x^2}(\rho\|\si)^{1/2}\right).
\end{multline*}
Define the following constants:
\[ b=\left(\frac{\pi}{\sin (\pi |s|)}\right)^{\frac{1}{2}}(Q_{s}(\rho\|\si)-Q_{s}(\rho_\N\|\si_\N))^{1/2}\pl, \qquad c=4Q_{x^2}(\rho\|\si)^{1/2}.\]
We want to minimize the following function over $0<T<\infty$:
\[
G(T) = bT^{\frac{|s|+1}{2}}+cT^{-1/2}\pl.
\]
Choosing $T=\left(\frac{c}{b(|s|+1)}\right)^{\frac{2}{|s|+2}}$, we find that
\begin{align}
\nonumber
&\norm{\si-R_\rho^t(\si_\N)}{1}\\
& \nonumber \le
\frac{2\cosh(\pi t)}{\pi}
(|s|+2)
(|s|+1)^{-\frac{|s|+1}{|s|+2}}
c^{\frac{|s|+1}{|s|+2}}b^{\frac{1}{|s|+2}}\\
& =
\frac{2\cosh(\pi t)}{\pi}
(|s|+2)
(|s|+1)^{-\frac{|s|+1}{|s|+2}}\times\nonumber \\
& \qquad \left(\left(\frac{\pi}{\sin (\pi |s|)}\right)(Q_{s}(\rho\|\si)-Q_{s}(\rho_\N\|\si_\N))\right)^{\frac{1}{2(|s|+2)}}
(4\sqrt{Q_{x^2}(\rho\|\si)})^{\frac{|s|+1}{|s|+2}}
\label{eq:Petz-renyi-remainder-2}.
\end{align}

Putting together the conclusions in \eqref{eq:Petz-renyi-remainder-1} and \eqref{eq:Petz-renyi-remainder-2}, we arrive at \eqref{petz-renyi-quasi-recov-with-t}. Eq.~\eqref{petz-renyi-quasi-recov-no-t}  is a special case of \eqref{petz-renyi-quasi-recov-with-t}.
The proof of \eqref{petz-renyi-quasi-recov-universal-1} is similar to the proof in Theorem~\ref{re}.
\end{proof}

\bigskip
Note that the estimate above fails for $s=-1$  because the measure in the integral representation of $x^{-1}$ is a point mass at $\lambda = 0$.

\begin{exam}{\rm  For $s=1/2$, we have the Holevo fidelity
\[ F_H(\rho,\si)=Q_{x^{1/2}}(\rho\|\si)^2=\tau(\rho^{1/2}\si^{1/2})^2\pl. \]
Then
\begin{equation}
\label{eq:holevo-fid-rem}
\sqrt{F_H(\rho_\N,\si_\N)}-\sqrt{F_H}(\rho,\si) \geq \frac{1}{128\pi Q_{x^2}(\rho\|\si)}\left(\frac{\pi}{5}\norm{\si-R_\rho(\si_\N)}{1}\right)^{5} .
\end{equation}
The inequality in \eqref{eq:holevo-fid-rem} can be compared with the main result of \cite{wilde18}. The prefactor $[Q_{x^2}(\rho\|\si)]^{-1}$ is an improvement, but the fifth power on the trace distance is not.
}
\end{exam}

The estimate in Theorem~\ref{thm:quasi-renyi-recoverability} leads to a strengthened data-processing inequality for the Petz--R\'enyi relative entropy, as defined in \eqref{eq:petz-renyi-def-recall},
by following the same argument used to arrive at \cite[Theorem 6.1]{CV18}, along with an additional argument:

\begin{cor}
\label{cor:petz-renyi-recoverability}
Let $\M$ be a finite-dimensional von Neumann algebra, and let $\N\subset \M$ be a subalgebra. Let $\rho$ and $\si$ be two faithful states. For $\al\in (0,1)$ and $t\in \mathbb{R}$,
\begin{multline*}
D_\al(\rho\|\si)-D_\al(\rho_\N\|\si_\N)\\ \ge \frac{1}{1-\al}\log\!\left(1+K(1-\al ,Q_{x^2}(\rho\|\si))\left(\frac{\pi}{2(3-\al)\cosh{\pi t}}\norm{\si-R_\rho^t(\si_\N)}{1}\right)^{2(3-\al)}\right),
\end{multline*}
and for $\al \in (1,2)$ and $t\in \mathbb{R}$,
\begin{multline*}
D_\al(\rho\|\si)-D_\al(\rho_\N\|\si_\N)\\ \ge
\frac{1}{\al-1} \log \!\left(1+\frac{ K(1-\al , Q_{x^2}(\rho\|\si)) }{Q_{x^{-1}}(\rho\|\si)^{\alpha-1}}\left( \frac{\pi}{2(\al + 1)\cosh{\pi t}}
\norm{\si-R_\rho^t(\si_\N)}{1}\right)^{2(\alpha+1)}\right),
\end{multline*}
where $Q_{x^2}(\rho\|\si))  = \tau(\rho^{-1}\si^{2})$, $Q_{x^{-1}}(\rho\|\si))  = \tau(\rho^{2}\si^{-1})$, and the constant $K(1-\al ,Q_{x^2}(\rho\|\si))$ is given by \eqref{constant}.
\end{cor}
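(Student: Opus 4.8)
The plan is to convert the lower bound on the quasi-entropy difference $|Q_{s}(\rho\|\si)-Q_{s}(\rho_\N\|\si_\N)|$ supplied by Theorem~\ref{thm:quasi-renyi-recoverability} into a lower bound on the relative-entropy difference, in the spirit of \cite[Theorem 6.1]{CV18}. Setting $s\coloneqq 1-\al$ and recalling from \eqref{eq:petz-renyi-def-recall-1} that $D_\al(\rho\|\si)=\frac{1}{\al-1}\log Q_{s}(\rho\|\si)$, I would first write
\[
D_\al(\rho\|\si)-D_\al(\rho_\N\|\si_\N)=\frac{1}{\al-1}\log\frac{Q_{s}(\rho\|\si)}{Q_{s}(\rho_\N\|\si_\N)}.
\]
The two regimes $\al\in(0,1)$ and $\al\in(1,2)$ differ in the sign of $\al-1$, in the direction of data processing for $Q_{s}$, and in how the normalizing denominator is controlled, so I would handle them separately and finish by inserting \eqref{petz-renyi-quasi-recov-with-t} into $\log(1+\cdot)$.

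For $\al\in(0,1)$, so $s\in(0,1)$, the function $x^{s}$ is operator concave, which gives the reversed data-processing inequality $Q_{s}(\rho\|\si)\le Q_{s}(\rho_\N\|\si_\N)$ (equivalently, apply \eqref{1} to the operator convex function $-x^{s}$), while H\"older's inequality gives $Q_{s}(\rho\|\si)=\tau(\rho^{1-s}\si^{s})\le 1$. Hence
\[
\frac{Q_{s}(\rho_\N\|\si_\N)}{Q_{s}(\rho\|\si)}=1+\frac{Q_{s}(\rho_\N\|\si_\N)-Q_{s}(\rho\|\si)}{Q_{s}(\rho\|\si)}\ge 1+|Q_{s}(\rho\|\si)-Q_{s}(\rho_\N\|\si_\N)|,
\]
and substituting \eqref{petz-renyi-quasi-recov-with-t} (with $|s|=1-\al$, so $4+2|s|=2(3-\al)$) into $\frac{1}{1-\al}\log(\cdot)$ yields the first claimed inequality.

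For $\al\in(1,2)$, so $s\in(-1,0)$, the function $x^{s}$ is operator convex and data processing runs the usual way, $Q_{s}(\rho\|\si)\ge Q_{s}(\rho_\N\|\si_\N)\ge 0$, so the ratio has the form $1+|Q_{s}(\rho\|\si)-Q_{s}(\rho_\N\|\si_\N)|/Q_{s}(\rho_\N\|\si_\N)$. The additional argument beyond \cite{CV18} is the upper bound on the denominator, since for $s<0$ one no longer has $Q_{s}\le 1$. I would use log-convexity of $u\mapsto\log\langle\rho_\N^{1/2}|\Delta_\N^{u}|\rho_\N^{1/2}\rangle=\log Q_{x^{u}}(\rho_\N\|\si_\N)$, which holds because $Q_{x^{u}}(\rho_\N\|\si_\N)=\int_0^\infty\la^{u}\,d\mu(\la)$ is the moment-generating function of the spectral probability measure $\mu$ of $\Delta_\N$ in the state $|\rho_\N^{1/2}\rangle$ (note $Q_{x^{0}}(\rho_\N\|\si_\N)=\tau(\rho_\N)=1$). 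Interpolating between $u=0$ and $u=-1$ at the point $s=-|s|$ gives $Q_{s}(\rho_\N\|\si_\N)\le Q_{x^{-1}}(\rho_\N\|\si_\N)^{|s|}$, and data processing for the operator anti-monotone function $x^{-1}$ then gives $Q_{x^{-1}}(\rho_\N\|\si_\N)\le Q_{x^{-1}}(\rho\|\si)$. Combining, $Q_{s}(\rho_\N\|\si_\N)\le Q_{x^{-1}}(\rho\|\si)^{\al-1}$, so the ratio is at least $1+|Q_{s}(\rho\|\si)-Q_{s}(\rho_\N\|\si_\N)|/Q_{x^{-1}}(\rho\|\si)^{\al-1}$; inserting \eqref{petz-renyi-quasi-recov-with-t} (now $|s|=\al-1$, so $4+2|s|=2(\al+1)$) into $\frac{1}{\al-1}\log(\cdot)$ produces the second inequality.

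The only genuinely new step is this denominator estimate for $\al\in(1,2)$; everything else is bookkeeping with $\log(1+x)$, monotonicity of $\log$, and substitution from Theorem~\ref{thm:quasi-renyi-recoverability}. I expect the main obstacle to be justifying the interpolation cleanly, namely recognizing $Q_{x^{u}}(\rho_\N\|\si_\N)$ as a moment-generating function so that log-convexity (equivalently H\"older applied to the spectral measure) is available, and then chaining it with data processing for $x^{-1}$ to replace $\rho_\N,\si_\N$ by $\rho,\si$. One should also verify carefully that all three ingredients (reversed versus forward data processing, the H\"older/interpolation normalization, and monotonicity of $\log$) preserve the correct inequality direction in each regime.
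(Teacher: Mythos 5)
Your proposal is correct and follows essentially the same route as the paper: write the entropy difference as $\frac{1}{|\al-1|}\log(1+\text{ratio})$, insert the quasi-entropy bound \eqref{petz-renyi-quasi-recov-with-t}, and control the denominator by $Q_{s}(\rho\|\si)\le 1$ for $\al\in(0,1)$ and by $Q_{x^{-1}}(\rho\|\si)^{\al-1}$ for $\al\in(1,2)$. The only (immaterial) difference is in the second regime, where you interpolate via log-convexity of $u\mapsto\log Q_{x^u}(\rho_\N\|\si_\N)$ first and then apply data processing to $Q_{x^{-1}}$, while the paper applies data processing to $Q_{x^{1-\al}}$ first and then invokes monotonicity of the Petz--R\'enyi entropies in $\al$ --- the same two ingredients in the opposite order, yielding the identical bound.
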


\begin{proof}
For $0<\al<1$, we find that
\begin{align*}
& D_\al(\rho\|\si)-D_\al(\rho_\N\|\si_\N)\\
& = \frac{1}{1-\al}
\log \frac{ Q_{x^{1-\al}}(\rho_\N\|\si_\N)}{Q_{x^{1-\al}}(\rho\|\si)}
\\
& = \frac{1}{1-\al}  \log \left(1+
\frac{ Q_{x^{1-\al}} (\rho_\N\|\si_\N)-Q_{x^{1-\al}}(\rho\|\si)}{Q_{x^{1-\al}}(\rho\|\si)}\right)
\\
& \ge  \frac{1}{1-\al}  \log \left(1+\frac{ K(1-\al  ,Q_{x^2}(\rho\|\si)) }{Q_{x^{1-\al}}(\rho\|\si)} \left( \frac{\pi}{2(3-\al)\cosh{\pi t}}\norm{\si-R_\rho^t(\si_\N)}{1}\right)^{2(3-\al)}\right),
\\ & \ge
\frac{1}{1-\al}  \log\left(1+
K(1-\al  ,Q_{x^2}(\rho\|\si)) \left(\frac{\pi}{2(3-\al)\cosh{\pi t}} \norm{\si-R_\rho^t(\si_\N)}{1}\right)^{2(3-\al)}\right).
\end{align*}
The first inequality follows from \eqref{petz-renyi-quasi-recov-with-t}, and the second
 follows because  $Q_{x^{1-\al}}(\rho\|\si)\le 1$ for $\al \in (0,1)$.

For $\al \in (1,2)$, consider that
\begin{align*}
&D_\al(\rho\|\si)- D_\al(\rho_\N\|\si_\N)\\
& = \frac{1}{\al-1}\log \frac{ Q_{x^{1-\al}}(\rho\|\si)}{Q_{x^{1-\al}}(\rho_\N\|\si_\N)}
\\& = \frac{1}{\al-1}\log \!\left(1+\frac{ Q_{x^{1-\al}}(\rho\|\si) - Q_{x^{1-\al}}(\rho_\N\|\si_\N)}{Q_{x^{1-\al}}(\rho_\N\|\si_\N)}\right)
\\
& \ge  \frac{1}{\al-1}\log \!\left(1+\frac{ K(1-\al  ,Q_{x^2}(\rho\|\si)) }{Q_{x^{1-\al}}(\rho_\N\|\si_\N)}\left(\frac{\pi}{2(\al + 1)\cosh{\pi t}}
\norm{\si-R_\rho^t(\si_\N)}{1}\right)^{2(\al + 1)}\right)
\\
& \ge  \frac{1}{\al-1}\log \!\left(1+\frac{ K(1-\al  ,Q_{x^2}(\rho\|\si)) }{Q_{x^{-1}}(\rho\|\si)^{\alpha-1}}\left( \frac{\pi}{2(\al + 1)\cosh{\pi t}}
\norm{\si-R_\rho^t(\si_\N)}{1}\right)^{2(\al + 1)}\right).
\end{align*}
The first inequality follows from \eqref{petz-renyi-quasi-recov-with-t}, and the second
 follows because $Q_{x^{1-\al}}(\rho_\N\|\si_\N) \leq Q_{x^{1-\al}}(\rho\|\si) \leq Q_{x^{-1}}(\rho\|\si)^{\al -1}$, the latter following from data processing and the fact that the Petz--R\'enyi relative entropies are monotone non-decreasing with respect to $\al$.
\end{proof}

\subsection{Recoverability via another rotated Petz map $R_\si^t$}

We now modify the argument from the previous section to obtain a recoverability statement involving the other rotated Petz map $R_\si^t$. This time we use the following integral representation, holding for $t\in \R$:
\begin{align}
\label{neg}
x^{-\frac{1}{2}-it}
& = -\frac{\sin(\pi(-\frac{1}{2}-it))}{\pi}\int^{\infty}_{0} \la^{-\frac{1}{2}-it}({\la +x})^{-1}d\la
\\ &=\frac{\cosh(\pi t)}{\pi}\int^{\infty}_{0} \la^{-\frac{1}{2}-it}({\la +x})^{-1}d\la\ .
\label{eq:integral-rep-x-min-1-2-it}
\end{align}

\begin{lemma}
\label{3.10}
Let $t\in \R$ and
\begin{align*}\ket{v_t}:=\ket{\rho^{1/2}}-\Delta_{\M}^{\frac{1}{2}+it}V_\rho\Delta_{\N}^{-\frac{1}{2}-it}\ket{\rho_\N^{1/2}}
=\ket{\rho^{1/2}}- \ket{\si^{\frac{1}{2}+it}\si_\N^{-\frac{1}{2}-it}\rho_\N^{it+\frac{1}{2}}\rho^{-it}}.
\end{align*}
Then the following equality holds
\begin{equation}
\ket{v_t}=\frac{\cosh(\pi t)}{\pi}\Delta_{\M}^{\frac{1}{2}+it}\int_0^\infty \la^{-\frac{1}{2}-it}\ket{w_\la}\, d\la,
\label{eq:v_t-rewrite}
\end{equation}
where $\ket{w_\la}$ is defined in \eqref{eq:def-w_lam},
and the following inequality holds
\begin{equation}
\label{eq:v_t-inequality-to-rec-err}
\norm{\rho-R_\si^{-t}(\rho_\N)}{1}\le 2\norm{\ket{v_t}}{2}.
\end{equation}
\end{lemma}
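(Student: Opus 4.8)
The statement has two parts, and the plan is to establish each in turn, closely paralleling the proof of Lemma~\ref{3.2} but with the roles of the twisted and untwisted vectors interchanged. For the identity \eqref{eq:v_t-rewrite}, the plan is to start from the operator integral representation \eqref{eq:integral-rep-x-min-1-2-it}, which by the spectral theorem (functional calculus) applied to the positive invertible operators $\Delta_\M$ and $\Delta_\N$ gives
\[
\Delta_\M^{-\frac{1}{2}-it}=\frac{\cosh(\pi t)}{\pi}\int_0^\infty \la^{-\frac12-it}(\la+\Delta_\M)^{-1}\,d\la,\qquad \Delta_\N^{-\frac{1}{2}-it}=\frac{\cosh(\pi t)}{\pi}\int_0^\infty \la^{-\frac12-it}(\la+\Delta_\N)^{-1}\,d\la.
\]
Substituting the definition \eqref{eq:def-w_lam} of $\ket{w_\la}$ into the right-hand side of \eqref{eq:v_t-rewrite}, splitting into two integrals (each of which converges, since near $\la=0$ the integrand is $O(\la^{-1/2})$ and near $\la=\infty$ it is $O(\la^{-3/2})$), and pulling the fixed bounded operators $V_\rho$ and $\Delta_\M^{\frac12+it}$ through the integrals, I obtain
\[
\frac{\cosh(\pi t)}{\pi}\int_0^\infty\la^{-\frac12-it}\ket{w_\la}\,d\la=\Delta_\M^{-\frac12-it}\ket{\rho^{1/2}}-V_\rho\Delta_\N^{-\frac12-it}\ket{\rho_\N^{1/2}}.
\]
Applying $\Delta_\M^{\frac12+it}$ and using $\Delta_\M^{\frac12+it}\Delta_\M^{-\frac12-it}=\id$ then yields $\ket{\rho^{1/2}}-\Delta_\M^{\frac12+it}V_\rho\Delta_\N^{-\frac12-it}\ket{\rho_\N^{1/2}}=\ket{v_t}$, which is \eqref{eq:v_t-rewrite}. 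Along the way I would record the explicit form of $\ket{v_t}$ by repeatedly using $\Delta_\M^{z}\ket{x}=\ket{\si^{z}x\rho^{-z}}$ and $\Delta_\N^z\ket{x}=\ket{\si_\N^z x\rho_\N^{-z}}$ together with $V_\rho\ket{x}=\ket{x\rho_\N^{-1/2}\rho^{1/2}}$; this reproduces the second displayed expression for $\ket{v_t}$ in the statement.

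For the inequality \eqref{eq:v_t-inequality-to-rec-err}, the plan is to set $x\coloneqq\rho^{1/2}$ and $y\coloneqq\si^{\frac12+it}\si_\N^{-\frac12-it}\rho_\N^{\frac12+it}\rho^{-it}$, so that $\ket{v_t}=\ket{x-y}$ and hence $\norm{\ket{v_t}}{2}=\norm{x-y}{2}$. A direct computation gives $xx^*=\rho$ and, after the unitary factors $\rho^{-it}\rho^{it}$ and the product $\rho_\N^{\frac12+it}\rho_\N^{\frac12-it}=\rho_\N$ collapse,
\[
yy^*=\si^{\frac12+it}\si_\N^{-\frac12-it}\rho_\N\si_\N^{-\frac12+it}\si^{\frac12-it}=R_\si^{-t}(\rho_\N),
\]
matching the definition \eqref{eq:rotated-petz-map-def} with $\rho$ replaced by $\si$ and $t$ by $-t$. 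Since $R_\si^{-t}$ is trace preserving and $\tau(\rho_\N)=1$, one has $\norm{y}{2}^2=\tau(yy^*)=\tau(R_\si^{-t}(\rho_\N))=1$, and likewise $\norm{x}{2}^2=\tau(\rho)=1$. Inequality \eqref{2} applied to the adjoints $x^*$ and $y^*$ (which again have unit $L_2$-norm, as $\norm{\cdot}{2}$ is adjoint-invariant) then gives
\[
\norm{\rho-R_\si^{-t}(\rho_\N)}{1}=\norm{xx^*-yy^*}{1}\le 2\norm{x^*-y^*}{2}=2\norm{x-y}{2}=2\norm{\ket{v_t}}{2},
\]
which is \eqref{eq:v_t-inequality-to-rec-err}.

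The computations are largely bookkeeping, so I do not expect a serious obstacle; the two points requiring care are the following. First, in \eqref{eq:v_t-rewrite} the factor $\Delta_\M^{\frac12+it}$ sits \emph{outside} the integral, whereas the analogous vector $\ket{w_t}$ in Lemma~\ref{3.2} carried no such prefactor; I must therefore keep the integral representation of $\Delta_\M^{-\frac12-it}$ (rather than attempt to represent $\Delta_\M^{\frac12+it}$ directly) and only multiply by $\Delta_\M^{\frac12+it}$ at the very end. Second, because the untwisted vector $\ket{\rho^{1/2}}$ now appears as the leading term of $\ket{v_t}$, the relevant quadratic forms are $xx^*$ and $yy^*$ rather than $x^*x$ and $y^*y$; this is why I apply \eqref{2} to the adjoints, and it is here that one must verify that $yy^*$ reproduces $R_\si^{-t}(\rho_\N)$ \emph{exactly} (including the correct placement of the $\pm it$ exponents and the collapse of the unitary factors) rather than some unitarily rotated variant.
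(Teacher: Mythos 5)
Your proposal is correct and follows essentially the same route as the paper's proof: the identity \eqref{eq:v_t-rewrite} is obtained by applying the integral representation \eqref{eq:integral-rep-x-min-1-2-it} to $\Delta_\M^{-\frac12-it}$ and $\Delta_\N^{-\frac12-it}$ and then multiplying by $\Delta_\M^{\frac12+it}$, and the trace-norm bound comes from \eqref{2} together with the collapse $yy^*=R_\si^{-t}(\rho_\N)$ for $y=\si^{\frac12+it}\si_\N^{-\frac12-it}\rho_\N^{\frac12+it}\rho^{-it}$. Your explicit remark that \eqref{2} must be applied to the adjoints (since the relevant products are $xx^*$ and $yy^*$) and your verification of the unit-norm hypotheses are small clarifications of steps the paper leaves implicit.
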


\begin{proof}
Using the integral representation in \eqref{eq:integral-rep-x-min-1-2-it} for $\Delta_{\M}$ and $\Delta_{\N}$, we find that
\begin{align}
& \Delta_{\M}^{-\frac{1}{2}-it} \ket{v_t} \notag \\
&= \Delta_{\M}^{-\frac{1}{2}-it}\ket{\rho^{1/2}}-V_\rho\Delta_{\N}^{-\frac{1}{2}-it}\ket{\rho_\N^{1/2}}
\label{eq:v_t_mod_1}
\\
& =  \frac{\cosh(\pi t)}{\pi}\left(\int_0^\infty \la^{-\frac{1}{2}-it}(\Delta_{\M}+\la)^{-1}\, d\la \ket{\rho^{1/2}}-V_\rho\left(\int_0^\infty \la^{-\frac{1}{2}-it}(\Delta_{\N}+\la)^{-1}\, d\la\right) \ket{\rho_\N^{1/2}}\right)
\\
& =  \frac{\cosh(\pi t)}{\pi}\int_0^\infty \la^{-\frac{1}{2}-it}\ket{w_\la}\, d\la ,
\label{eq:v_t_mod_3}
\end{align}
where $\ket{w_\la}$ is defined in \eqref{eq:def-w_lam}.
Applying $\Delta_{\M}^{1/2+it}$ leads to \eqref{eq:v_t-rewrite}:
\begin{align*}
\Delta_{\M}^{1/2+it}\left(\Delta_{\M}^{-\frac{1}{2}-it}\ket{\rho^{1/2}}-V_\rho\Delta_{\N}^{-\frac{1}{2}-it}\ket{\rho_\N^{1/2}}\right)
&=\ket{\rho^{1/2}}-\Delta_{\M}^{\frac{1}{2}+it}V_\rho\Delta_{\N}^{-\frac{1}{2}-it}\ket{\rho_\N^{1/2}}
\\& =\ket{\rho^{1/2}}- \ket{\si^{\frac{1}{2}+it}\si_\N^{-\frac{1}{2}-it}\rho_\N^{\frac{1}{2}+it}\rho^{-it}}\pl.
\end{align*}

The inequality in \eqref{eq:v_t-inequality-to-rec-err} follows from \eqref{2} and the following identity:
\begin{align*}
\si^{\frac{1}{2}+it}\si_\N^{-\frac{1}{2}-it}\rho_\N^{\frac{1}{2}+it}\rho^{-it}(\si^{\frac{1}{2}+it}\si_\N^{-\frac{1}{2}-it}\rho_\N^{\frac{1}{2}+it}\rho^{-it})^*
& =
\si^{\frac{1}{2}+it}\si_\N^{-\frac{1}{2}-it}\rho_\N
\si_\N^{-\frac{1}{2}+it}\si^{\frac{1}{2}-it}\\
& = R_\si^{-t}(\rho_{\N}),
\end{align*}
where $R_\si^{-t}(\rho_{\N})$ is defined through \eqref{eq:rotated-petz-map-def}.
\end{proof}

\bigskip
 We have the following estimate of $\norm{\ket{v_t}}{2}$:

\begin{lemma}
\label{fd2}
Let $f:(0,\infty)\to \R$ be a regular  operator anti-monotone function, and let $d\nu$ be the measure in its integral representation. Let $Q_{x^{-1}}(\rho\|\si)=\bra{\rho^{1/2}}\Delta_\M^{-1}\ket{\rho^{1/2}}=\tau(\rho^{2}\si^{-1})$.
Suppose for $0< S<T< \infty$ that there exists $c(S,T)>0$ such that on the interval $(S,T)$
\[d\la\le c(S,T)\, d\nu(\la).\]
Then
\begin{multline*}
\norm{\ket{v_t}}{2}\le  \frac{\cosh(\pi t)}{\pi} \times \\
\left(4(Q_{x^{-1}}(\rho\|\si)S)^{1/2}+[c(S,T)\ln(T/S)]^{1/2}(Q_f(\rho\|\si)-Q_f(\rho_\N\|\si_\N))^{1/2}+4T^{-1/2}
\right).
\end{multline*}
\end{lemma}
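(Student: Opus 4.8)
The plan is to transcribe the three-region argument of Lemma~\ref{fd} to the rewritten form \eqref{eq:v_t-rewrite}, the essential new feature being the factor $\Delta_\M^{\frac12+it}$ standing in front of the integral. First I would split $\int_0^\infty = \int_0^S + \int_S^T + \int_T^\infty$ and write $\ket{v_t} = \frac{\cosh(\pi t)}{\pi}\Delta_\M^{\frac12+it}(\operatorname{I}+\operatorname{II}+\operatorname{III})$, so that by the triangle inequality it suffices to bound $\norm{\Delta_\M^{\frac12+it}\operatorname{I}}{2}$, $\norm{\Delta_\M^{\frac12+it}\operatorname{II}}{2}$, and $\norm{\Delta_\M^{\frac12+it}\operatorname{III}}{2}$ separately, where each $\operatorname{I},\operatorname{II},\operatorname{III}$ carries the integrand $\la^{-\frac12-it}\ket{w_\la}$ with $\ket{w_\la}$ as in \eqref{eq:def-w_lam}. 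Two elementary facts drive every estimate: since $\Delta_\M$ is positive and invertible, $\Delta_\M^{it}$ is unitary, so $\norm{\Delta_\M^{\frac12+it}\xi}{2} = \norm{\Delta_\M^{\frac12}\xi}{2}$; and since $V_\rho^*\Delta_\M V_\rho = \Delta_\N$, we have $\norm{\Delta_\M^{\frac12}V_\rho\eta}{2}^2 = \lan\eta|V_\rho^*\Delta_\M V_\rho|\eta\ran = \norm{\Delta_\N^{\frac12}\eta}{2}^2$ for $\eta\in L_2(\N)$. This second identity is the crucial device, since $\Delta_\M^{\frac12}$ cannot be commuted through $V_\rho$; it converts every $\N$-contribution into an honest functional-calculus estimate on $L_2(\N)$.

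For the two outer terms I would expand $\ket{w_\la}$ and record, via the scaling substitution $\la = xu$, the closed forms $x^{\frac12+it}\int_0^S \la^{-\frac12-it}(\la+x)^{-1}d\la = \int_0^{S/x}u^{-\frac12-it}(u+1)^{-1}du$ and the analogous tail formula, whose moduli are at most $2\arctan(\sqrt{S}/\sqrt{x})$ and $2\arctan(\sqrt{x}/\sqrt{T})$, respectively. Applying $\arctan(y)\le y$ gives the pointwise squared bounds $\le 4S/x$ and $\le 4x/T$. Feeding these through the spectral measure of $\Delta_\M$ (for the first summand of $\ket{w_\la}$) and of $\Delta_\N$ (for the second, after the $V_\rho^*\Delta_\M V_\rho = \Delta_\N$ reduction) gives, for $\operatorname{I}$, the bound $4S\,\bra{\rho^{1/2}}\Delta_\M^{-1}\ket{\rho^{1/2}} = 4S\,Q_{x^{-1}}(\rho\|\si)$ on the $\M$-side and $4S\,Q_{x^{-1}}(\rho_\N\|\si_\N)\le 4S\,Q_{x^{-1}}(\rho\|\si)$ on the $\N$-side (the last step being data processing for $Q_{x^{-1}}$), so that $\norm{\Delta_\M^{\frac12+it}\operatorname{I}}{2}\le 4(Q_{x^{-1}}(\rho\|\si)S)^{1/2}$. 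For $\operatorname{III}$ the same computation produces $\frac{4}{T}\bra{\rho^{1/2}}\Delta_\M\ket{\rho^{1/2}}$ and $\frac{4}{T}\bra{\rho_\N^{1/2}}\Delta_\N\ket{\rho_\N^{1/2}}$; here I would use $\bra{\rho^{1/2}}\Delta_\M\ket{\rho^{1/2}} = Q_x(\rho\|\si) = \tau(\si) = 1$ (and likewise on $\N$), yielding $\norm{\Delta_\M^{\frac12+it}\operatorname{III}}{2}\le 4T^{-1/2}$.

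The middle term is where the new $\ln(T/S)$ factor appears. I would bound $\norm{\Delta_\M^{\frac12}\operatorname{II}}{2}\le \int_S^T \la^{-\frac12}\norm{\Delta_\M^{\frac12}\ket{w_\la}}{2}\,d\la$ by the triangle inequality (moving $\Delta_\M^{\frac12}$ inside), then apply Cauchy--Schwarz to the two functions $\la^{-1/2}$ and $\norm{\Delta_\M^{\frac12}\ket{w_\la}}{2}$, obtaining the factor $\left(\int_S^T \la^{-1}\,d\la\right)^{1/2} = [\ln(T/S)]^{1/2}$. The remaining factor is controlled using $\norm{\Delta_\M^{\frac12}\ket{w_\la}}{2}^2\le F(\la)$ from \eqref{eq:f_lam_norm_sum}, the regularity hypothesis $d\la\le c(S,T)\,d\nu(\la)$, and the identity \eqref{dif}, which together give $\int_S^T \norm{\Delta_\M^{\frac12}\ket{w_\la}}{2}^2\,d\la \le c(S,T)(Q_f(\rho\|\si)-Q_f(\rho_\N\|\si_\N))$. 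This is precisely where the computation departs from Lemma~\ref{fd}: there the weight $\la^{1/2}$ forced the use of the $\la\norm{\ket{w_\la}}{2}^2$ piece of $F(\la)$ and produced $T-S$, whereas here the weight $\la^{-1/2}$ selects the $\norm{\Delta_\M^{\frac12}\ket{w_\la}}{2}^2$ piece and produces $\ln(T/S)$. Assembling the three bounds by the triangle inequality delivers the claimed inequality. I expect the only genuine subtlety to be the bookkeeping around the $\Delta_\M^{\frac12+it}V_\rho$ factor on the $\N$-terms; once the reduction $\norm{\Delta_\M^{\frac12}V_\rho\,\cdot\,}{2} = \norm{\Delta_\N^{\frac12}\,\cdot\,}{2}$ is in place, the rest is a direct adaptation of the estimates in Lemma~\ref{fd} with the powers of $\la$ adjusted.
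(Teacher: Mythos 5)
Your proposal is correct and follows essentially the same route as the paper's proof: the same three-region split of $\int_0^\infty \la^{-1/2-it}\ket{w_\la}\,d\la$ with $\Delta_\M^{1/2+it}$ absorbed via unitarity of $\Delta_\M^{it}$, the same $\arctan$ bounds with the weights $4S/x$ and $4x/T$ fed through the spectral measures, the same reduction $\norm{\Delta_\M^{1/2}V_\rho\,\cdot\,}{2}=\norm{\Delta_\N^{1/2}\,\cdot\,}{2}$ together with data processing for $Q_{x^{-1}}$ on the $\N$-terms, and the same Cauchy--Schwarz step on the middle term producing $\ln(T/S)$ via \eqref{eq:f_lam_norm_sum} and \eqref{dif}. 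Nothing is missing.
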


\begin{proof}
By applying \eqref{eq:v_t-rewrite}, consider that
\begin{align*}
\norm{\ket{v_t}}{2} & =  \frac{\cosh(\pi t)}{\pi}\norm{\Delta_{\M}^{1/2}\int_0^\infty \la^{-\frac{1}{2}-it}\ket{w_\la}\, d\la}{2}
\\ & \le  \frac{\cosh(\pi t)}{\pi}\Bigg(\norm{\Delta_{\M}^{1/2}\int_0^S \la^{-\frac{1}{2}-it}\ket{w_\la}\, d\la}{2}+\norm{\Delta_{\M}^{1/2}\int_S^T \la^{-\frac{1}{2}-it}\ket{w_\la}\, d\la}{2}\\
& \qquad\qquad\qquad\qquad +\norm{\Delta_{\M}^{1/2}\int_T^\infty \la^{-\frac{1}{2}-it}\ket{w_\la}\, d\la}{2}\Bigg)
\\ & =:  \frac{\cosh(\pi t)}{\pi}(\norm{\operatorname{I}}{2}+\norm{\operatorname{II}}{2}+\norm{\operatorname{III}}{2}).
\end{align*}
For the terms above, we show the following estimates:
\begin{align*}
\norm{\operatorname{I}}{2}& = \norm{\Delta_{\M}^{\frac{1}{2}}\left(\int_0^S \la^{-\frac{1}{2}-it}\ket{w_\la} \,d\la\right)}{2}\le 4 S^{1/2}[Q_{x^{-1}}(\rho\|\si)]^{1/2},
\\  \norm{\operatorname{II}}{2}& = \norm{\Delta_{\M}^{\frac{1}{2}}\left(\int_S^T \la^{-\frac{1}{2}-it}\ket{w_\la} \,d\la\right)}{2}\le \left(c(S,T)\ln \!\left(\frac{T}{S}\right) (Q_f(\rho\|\si)-Q_f(\rho_\N\|\si_\N))\right)^{1/2},
\\  \norm{\operatorname{III}}{2}& = \norm{\Delta_{\M}^{\frac{1}{2}}\left(\int_T^\infty \la^{-\frac{1}{2}-it}\ket{w_\la} \, d\la\right)}{2}\le 4T^{-1/2}.
\end{align*}

For the first term, we define the following function:
\[h_S^t(x) \coloneqq \int_0^S \la^{-\frac{1}{2}-it}\frac{1}{\la+x}\ d\la \pl.\]
Thus
\begin{align*}
\int_0^S \la^{-\frac{1}{2} - it}\ket{w_\la}\, d\la= h_S^t(\Delta_\M)\ket{\rho^{1/2}}- V_\rho h_S^t(\Delta_\N)\ket{\rho_\N^{1/2}},
\end{align*}
leading to
\begin{align*}
 \Delta_{\M}^{\frac{1}{2}}\left(\int_0^S \la^{-\frac{1}{2}-it}\ket{w_\la} \, d\la\right)
=\Delta_{\M}^{\frac{1}{2}} h_S^t(\Delta_\M)\ket{\rho^{1/2}}- \Delta_{\M}^{\frac{1}{2}} V_\rho h_S^t(\Delta_\N)\ket{\rho_\N^{1/2}}.
\end{align*}
Note that for $x\ge 0$,
\begin{align*}|h_{S}^t(x)|& \le \int_0^S \la^{-\frac{1}{2}}\frac{1}{\la+x}\  d\la= 2x^{-1/2}\arctan \!\left(\frac{\sqrt{S}}{\sqrt{x}}\right) =: h_S(x),
\end{align*}
so that
\begin{align*}
|h_{S}^t(x)|^2& \le 4x^{-1}\arctan^2 \!\left(\frac{\sqrt{S}}{\sqrt{x}}\right).
\end{align*}
Here $h_S$ is the function $h_S^t$ with $t=0$. Using the spectral theorem for the probability measure $d\mu(s)=d \bra{\rho^{1/2}} E_{[0,s]}(\Delta_\M)\ket{\rho^{1/2}}$, we find that
\begin{align*}
& \norm{\Delta_{\M}^{\frac{1}{2}} h_S^t(\Delta_\M)\ket{\rho^{1/2}}}{2}^2\notag \\
& =\bra{\rho^{1/2}}\Delta_\M h^t_S(\Delta_\M )^*h^t_S(\Delta_\M )\ket{\rho^{1/2}}
\\
& \le \bra{\rho^{1/2}}\Delta_\M h^0_S(\Delta_\M )^*h^0_S(\Delta_\M )\ket{\rho^{1/2}}
=\int_0^\infty s\, h^2_S(s) \, d\mu(s)
\\
& =\int_0^\infty 4 \arctan^{2}\left(\frac{\sqrt{S}}{\sqrt{s}}\right) d\mu(s)
\le  4S\int_0^\infty  \frac{1}{s}\, d\mu(s)
=  4S\bra{\rho^{1/2}} \Delta_\M^{-1}\ket{\rho^{1/2}}
\\
& =4S\,  Q_{x^{-1}}(\rho\|\si).
\end{align*}
where we use again the inequality $\arctan(x)\le x$, holding for $x\geq 0$. Similarly,
\begin{align*}
\norm{\Delta_{\M}^{\frac{1}{2}} V_\rho h_S^t(\Delta_\N)\ket{\rho_\N^{1/2}}}{2}^2 & =
\bra{\rho_\N^{1/2}}h_S^t(\Delta_\N)^*V_\rho^* \Delta_{\M} V_\rho h_S^t(\Delta_\N)\ket{\rho^{1/2}}
\\ & \le \bra{\rho_\N^{1/2}}h_S^t(\Delta_\N)^*\Delta_{\N} h_S^t(\Delta_\N)\ket{\rho_\N^{1/2}}
\\ & = \bra{\rho_\N^{1/2}}\Delta_{\N} h_S(\Delta_\N)^2\ket{\rho_\N^{1/2}}
\\
& \le  4S \bra{\rho_\N^{1/2}} \Delta_\N^{-1}\ket{\rho_\N^{1/2}}
\\
& \le 4S \bra{\rho^{1/2}} \Delta_\M^{-1}\ket{\rho^{1/2}}
= 4S\,  Q_{x^{-1}}(\rho\|\si),
\end{align*}
where we used $V_\rho^* \Delta_{\M} V_\rho= \Delta_{\N}$ and the fact that $x\mapsto x^{-1}$ is an operator anti-monotone and operator convex function. Therefore
\begin{align*}
\norm{\operatorname{I}}{2} & \le  \norm{\Delta_{\M}^{\frac{1}{2}}\int_0^S \la^{-\frac{1}{2}-it}(\Delta_{\M}+\la)\ket{\rho^{1/2}}}{2}+\norm{\Delta_{\M}^{\frac{1}{2}}\int_0^S \la^{-\frac{1}{2}-it}V_\rho(\Delta_{\N}+\la)\ket{\rho_\N^{1/2}}}{2}
\\ &\le 2 S^{1/2}[Q_{x^{-1}}(\rho\|\si)]^{1/2}
+2 S^{1/2}[Q_{x^{-1}}(\rho\|\si)]^{1/2} \notag \\
& \le 4 S^{1/2}[Q_{x^{-1}}(\rho\|\si)]^{1/2}.
\end{align*}

For the second term, consider that
\begin{align*}
\norm{\operatorname{II}}{2}&=\norm{\Delta_\M^{1/2} \int^{T}_{S}\la^{-1/2-it}\ket{w_\la}d\la}{2} \\
& \le \int^{T}_{S}\la^{-1/2}\norm{\Delta_\M^{1/2}\ket{w_\la}}{2}d\la
\\ &\le \left(\int^{T}_{S} \la^{-1} d\la\right)^{1/2} \left(\int^{T}_{S}\norm{\Delta_\M^{1/2}\ket{w_\la}}{2}^2d\la \right)^{1/2}
\\ &\le (\ln (T/S))^{1/2} \left(\int^{T}_{S}F(\la)\, d\la \right)^{1/2}
\\ &\le (\ln (T/S))^{1/2} \left(c(S,T)\int^{T}_{S}F(\la)\, d\nu(\la) \right)^{1/2}
\\ &\leq  \Big(c(S,T)\ln (T/S)(Q_f(\rho\|\si)-Q_f(\rho_\N\|\si_\N))\Big)^{1/2}.
\end{align*}

For the third term, consider that
\begin{align*}
&\Delta_\M^{1/2}\int_{T}^{\infty}\la^{-1/2-it}\ket{w_\la}\,d\la
\\ & = \Delta_\M^{1/2}\int_{T}^{\infty}\la^{-1/2-it}\frac{1}{\la+\Delta_\M}\, d\la \ket{\rho^{1/2}}
- \Delta_\M^{1/2}V_\rho\int_{T}^{\infty}\la^{-1/2-it}\frac{1}{\la+\Delta_\N}\,d\la\ket{\rho_\N^{1/2}}.
\end{align*}
Let us consider the integral
\begin{align*}
\int_{T}^{\infty}\la^{-1/2-it}\frac{1}{\la+x}\ d\la
=x^{-1/2-it}\int_{\frac{T}{x}}^{\infty}\la^{-1/2-it}\frac{1}{\la+1}\ d\la.
\end{align*}
Note that the function $\la\mapsto \la^{-1/2-it}\frac{1}{\la+1}$ is bounded and integrable on $(0,\infty)$. We define the continuous function
\[g_T^t(x):=\int_{T}^{\infty}\la^{-1/2-it}\frac{1}{\la+x} \ d\la. \]
Then
\[\int_{T}^{\infty}\la^{-1/2-it}\frac{1}{\la+\Delta_\M}\ d\la=g_T^t({\Delta_\M})\pl.\]
For $x\ge 0$,
\[|g_T^t(x)|^2\le x^{-1} \left(\int_{\frac{T}{x}}^{\infty}\la^{-1/2}\frac{1}{\la+1}\, d\la\right)^2 = 4x^{-1}
\arctan^2\!\left(\frac{\sqrt{x}}{\sqrt{T}}\right)\pl.\]
Therefore
\begin{align*}
\norm{\Delta_\M^{1/2}\int_{T}^{\infty}\la^{-1/2-it}\left(\frac{1}{\la+\Delta_\M}\right) d\la \ket{\rho^{1/2}}}{2}^2
& = \bra{\rho^{1/2}}  g_T^t(\Delta_\M)^*\Delta_\M g_T^t(\Delta_\M)   \ket{\rho^{1/2}}
\\ & \le \bra{\rho^{1/2}} 4\arctan^2\!\left(\frac{\sqrt{\Delta_\M}}{\sqrt{T}}\right)   \ket{\rho^{1/2}}
\\ & \le \frac{4}{T}\bra{\rho^{1/2}} \Delta_\M  \ket{\rho^{1/2}}
\\ & = \frac{4}{T},
\end{align*}
where we used again the inequality $\arctan(x)\le x$, holding for $x\geq 0$. Similarly,
\begin{align*}
\norm{\Delta_\M^{1/2}V_\rho\int_{T}^{\infty}\la^{-1/2-it}\frac{1}{\la+\Delta_\N}d\la \ket{\rho_\N^{1/2}}}{2}^2 & \le \frac{4}{T}\bra{\rho_\N^{1/2}} \Delta_\N  \ket{\rho_\N^{1/2}} = \frac{4}{T}.
\end{align*}
Putting the estimates together, we conclude the proof.
\end{proof}
\bigskip

A direct consequence of Lemmas~\ref{3.10} and \ref{fd2}, as well as the symmetry of the function $\cosh(\pi t)$ about $t=0$, is the following general bound on the recoverability error in terms of a standard $f$-divergence:
\begin{cor}
\label{cor:bnd-standard-f-div-2}
Considering the same hypotheses of Lemma~\ref{fd2}, the following inequality holds
\begin{multline*}
\norm{\rho-R_\si^t(\rho_\N)}{1}\le
\frac{2 \cosh(\pi t)}{\pi} \\
\left(4S^{1/2}Q_{x^{-1}}(\rho\|\si)^{1/2}+[c(S,T)\ln(T/S)]^{1/2}(Q_f(\rho\|\si)-Q_f(\rho_\N\|\si_\N))^{1/2}+4T^{-1/2}
\right).
\end{multline*}
\end{cor}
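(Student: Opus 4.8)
The plan is to chain together the two immediately preceding lemmas and then exploit the parity of $\cosh(\pi t)$ to pass from the map $R_\si^{-t}$ to $R_\si^{t}$. First I would invoke Lemma~\ref{3.10}, which supplies the control
\[
\norm{\rho-R_\si^{-t}(\rho_\N)}{1}\le 2\norm{\ket{v_t}}{2}
\]
of the recoverability error in terms of the vector $\ket{v_t}$. I would emphasize here the one genuine subtlety of the argument: this estimate is phrased for the rotated Petz map with parameter $-t$, whereas the vector $\ket{v_t}$ that appears on the right-hand side carries the parameter $+t$ in its modular exponents. Tracking this sign mismatch correctly is the only point at which care is required.

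Next I would apply Lemma~\ref{fd2}, under its stated hypotheses (regularity of $f$, the comparison $d\la\le c(S,T)\,d\nu(\la)$ on the interval $(S,T)$, and finiteness of $Q_{x^{-1}}(\rho\|\si)$), to bound the right-hand side by
\[
\norm{\ket{v_t}}{2}\le \frac{\cosh(\pi t)}{\pi}\left(4(Q_{x^{-1}}(\rho\|\si)S)^{1/2}+[c(S,T)\ln(T/S)]^{1/2}(Q_f(\rho\|\si)-Q_f(\rho_\N\|\si_\N))^{1/2}+4T^{-1/2}\right).
\]
Multiplying through by the factor of $2$ from Lemma~\ref{3.10}, and rewriting $4(Q_{x^{-1}}(\rho\|\si)S)^{1/2}$ as $4S^{1/2}Q_{x^{-1}}(\rho\|\si)^{1/2}$, then yields the claimed three-term estimate verbatim, except with $R_\si^{-t}$ on the left-hand side in place of $R_\si^{t}$.

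The final step converts this into the stated bound for $R_\si^{t}$. The crucial observation is that the entire right-hand side of the combined estimate depends on $t$ only through the prefactor $\cosh(\pi t)$: the parenthetical expression involves only $S$, $T$, $c(S,T)$, and the $f$-divergence data, none of which carry $t$. Since $\cosh(\pi t)$ is an even function of $t$, replacing $t$ by $-t$ throughout leaves the right-hand side entirely unchanged while turning $R_\si^{-t}$ into $R_\si^{t}$; this is precisely the sense in which \emph{the symmetry of $\cosh(\pi t)$ about $t=0$} enters. Because every quantitative ingredient has already been established in Lemmas~\ref{3.10} and \ref{fd2}, there is no substantial obstacle remaining—the corollary follows simply by composing the two bounds and executing this parity substitution, with the sign bookkeeping on the Petz-map parameter being the sole place to remain attentive.
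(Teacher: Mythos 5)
Your proposal is correct and is exactly the paper's argument: the corollary is stated there as a direct consequence of Lemmas~\ref{3.10} and \ref{fd2} together with the evenness of $\cosh(\pi t)$, which is precisely the composition and $t\mapsto -t$ substitution you carry out. The sign bookkeeping you flag (the bound in Lemma~\ref{3.10} controls $R_\si^{-t}$ while the right-hand side depends on $t$ only through $\cosh(\pi t)$) is indeed the only point requiring care, and you handle it correctly.
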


\subsubsection{Recoverability for quantum relative entropy}

We have the following estimate for the quantum relative entropy $D(\rho\|\si)$, as defined in~\eqref{eq:umegaki-rel-ent-def}:

\begin{theorem}
\label{rsi}
Let $\M$ be a finite von Neumann algebra, and let $\N\subset \M$ be a subalgebra.
Let $\rho$ and $\si$ be faithful density operators of $\M$, and let $\rho_\N$ and $\si_\N$  be the respective reduced density operators on $\N$.
Denote $Q_{x^{-1}}(\rho\|\si)=\tau(\rho^2\si^{-1})$. Then for all $\varepsilon\in(0,1/2)$ and~$t\in \mathbb{R}$,
\begin{align}
D(\rho\|\si)-D(\rho_\N\|\si_\N)
& \ge \left(K(Q_{x^{-1}}(\rho\|\si),\varepsilon)\frac{\pi}{2} \norm{\rho-R_\si(\rho_\N)}{1}\right)^{\frac{1}{1/2-\varepsilon}} \label{eq1b},\\
D(\rho\|\si)-D(\rho_\N\|\si_\N) & \ge \left(K(Q_{x^{-1}}(\rho\|\si),\varepsilon)\frac{\pi}{2\cosh(\pi t)} \norm{\rho-R_\si^t(\rho_\N)}{1}\right)^{\frac{1}{1/2-\varepsilon}},
\\
D(\rho\|\si)-D(\rho_\N\|\si_\N) & \ge \Big(K(Q_{x^{-1}}(\rho\|\si),\varepsilon) \norm{\rho-R_\si^u(\rho_\N)}{1}\Big)^{\frac{1}{1/2-\varepsilon}}, \label{eq2b}
\end{align}
where the constant
\begin{equation}
K(Q_{x^{-1}}(\rho\|\si),\varepsilon) \coloneqq \Big(4\sqrt{Q_{x^{-1}}(\rho\|\si)} + 4 + (\varepsilon e)^{-1/2}\Big)^{-1}.
\label{eq:K-constant-rel-ent-orig}
\end{equation}
\end{theorem}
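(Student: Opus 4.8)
The plan is to specialize the general estimate of Corollary~\ref{cor:bnd-standard-f-div-2} to the function $f(x)=-\log x$ and then optimize over the free cutoffs $S$ and $T$. Exactly as in the proof of Theorem~\ref{re}, the integral representation $-\log x=\int_0^\infty\bigl(\frac{1}{\la+x}-\frac{\la}{1+\la^2}\bigr)\,d\la$ has $d\nu(\la)=d\la$ the Lebesgue measure, so we may take $c(S,T)=1$ for all $0<S<T<\infty$, while $Q_f(\rho\|\si)=D(\rho\|\si)$. Writing $\delta:=D(\rho\|\si)-D(\rho_\N\|\si_\N)\ge 0$ and $Q:=Q_{x^{-1}}(\rho\|\si)$ for brevity, Corollary~\ref{cor:bnd-standard-f-div-2} gives
\[
\norm{\rho-R_\si^t(\rho_\N)}{1}\le \frac{2\cosh(\pi t)}{\pi}\left(4S^{1/2}Q^{1/2}+[\ln(T/S)]^{1/2}\delta^{1/2}+4T^{-1/2}\right).
\]
The goal is to choose $S,T$ so that each of the three summands is dominated by a common power $\delta^{1/2-\varepsilon}$, with the prefactors assembling into $K(Q,\varepsilon)^{-1}=4\sqrt{Q}+4+(\varepsilon e)^{-1/2}$.

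First I would dispose of the easy regime $\delta\ge 1$. Using that $Q=\tau(\rho^2\si^{-1})\ge 1$ (by Cauchy--Schwarz, $1=\tau(\rho)\le \tau(\rho^2\si^{-1})^{1/2}\tau(\si)^{1/2}$) one gets $K(Q,\varepsilon)<1/8$; combined with $\cosh(\pi t)\ge 1$ and the trivial bound $\norm{\rho-R_\si^t(\rho_\N)}{1}\le 2$ (both $\rho$ and $R_\si^t(\rho_\N)$ are states, the latter since $R_\si^t$ is CPTP), the right-hand side of each claimed inequality is strictly below $1\le\delta$, so the inequalities hold automatically. Hence it suffices to treat $0<\delta<1$.

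For $0<\delta<1$ I would set $S=\delta^{\,1-2\varepsilon}$ and $T=S^{-1}=\delta^{-(1-2\varepsilon)}$, which satisfy $0<S<1<T<\infty$ because $1-2\varepsilon>0$. The first and third terms are then immediate: $4S^{1/2}Q^{1/2}=4\sqrt{Q}\,\delta^{1/2-\varepsilon}$ and $4T^{-1/2}=4\,\delta^{1/2-\varepsilon}$. The crux is the middle term. Here $\ln(T/S)=2(1-2\varepsilon)\,\lvert\ln\delta\rvert$, so after dividing by $\delta^{1/2-\varepsilon}$ and squaring, the target estimate $[\ln(T/S)]^{1/2}\delta^{1/2}\le(\varepsilon e)^{-1/2}\delta^{1/2-\varepsilon}$ reduces to
\[
2(1-2\varepsilon)\,u\,e^{-2\varepsilon u}\le(\varepsilon e)^{-1},\qquad u:=\lvert\ln\delta\rvert\ge 0,
\]
which is exactly the calculus fact $\max_{u\ge 0}u\,e^{-2\varepsilon u}=\tfrac{1}{2\varepsilon e}$ (attained at $u=1/(2\varepsilon)$), giving $2(1-2\varepsilon)\cdot\tfrac{1}{2\varepsilon e}=\tfrac{1-2\varepsilon}{\varepsilon e}\le(\varepsilon e)^{-1}$. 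This is the main obstacle: the logarithm produced by the $d\la/\la$ weight in the middle integral of Lemma~\ref{fd2} must be absorbed into an arbitrarily small power of $\delta$, and the parameter $\varepsilon$ is introduced precisely for this purpose, at the cost of the blow-up constant $(\varepsilon e)^{-1/2}$.

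Combining the three estimates yields $\norm{\rho-R_\si^t(\rho_\N)}{1}\le \frac{2\cosh(\pi t)}{\pi}\,K(Q,\varepsilon)^{-1}\delta^{1/2-\varepsilon}$, and raising both sides to the power $(1/2-\varepsilon)^{-1}$ and rearranging produces the general $t$-dependent inequality. Inequality~\eqref{eq1b} is the special case $t=0$. For the universal bound \eqref{eq2b} I would integrate the $t$-dependent estimate against $d\beta$, using convexity of the trace norm together with the identity $\int_\R\cosh(\pi t/2)\,d\beta(t)=\tfrac{\pi}{2}$ (equivalently $\int_\R\frac{\cosh(\pi t/2)}{\cosh(\pi t)+1}\,dt=1$), exactly as in the proof of Theorem~\ref{re}; this cancels the prefactor $\frac{\pi}{2\cosh(\pi t)}$ and gives $\norm{\rho-R_\si^u(\rho_\N)}{1}\le K(Q,\varepsilon)^{-1}\delta^{1/2-\varepsilon}$.
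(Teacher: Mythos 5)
Your proof is correct and follows essentially the same route as the paper: both specialize Corollary~\ref{cor:bnd-standard-f-div-2} to $f(x)=-\log x$ with $c(S,T)=1$, choose $S=T^{-1}$ as a power of $\delta$, absorb the logarithmic factor via the maximum of $u\,e^{-2\varepsilon u}$, dispatch the regime $\delta\ge 1$ trivially, and integrate against $d\beta$ for the universal map. The only (immaterial) differences are that you take $S=\delta^{1-2\varepsilon}$ where the paper takes $S=\delta$ and then uses $\delta^{1/2}\le\delta^{1/2-\varepsilon}$ on the outer terms, and that you justify the trivial case via $Q_{x^{-1}}(\rho\|\si)\ge 1$ rather than the paper's direct comparison of terms.
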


\begin{proof}
Consider from \eqref{eq:umegaki-rel-ent-def} that $f(x)=-\log x$, for which we have the following integral representation:
\[-\log x=\int_0^\infty \left(\frac{1}{\la+x}-\frac{\la}{\la^2+1}\right)d\la \pl,\]
where $d\la$ is the Lebesgue measure. Thus $c(S,T)=1$ for all $0\le S\le T\le \infty$.
Then, by applying Corollary~\ref{cor:bnd-standard-f-div-2}, we find that
\begin{multline}
\norm{\rho-R_\si^t(\rho_\N)}{1}
\le \frac{2\cosh(\pi t)}{\pi} \times
\\ \left(4Q_{x^{-1}}(\rho\|\si)^{1/2}S^{1/2}+(\ln(T/S) (D(\rho\|\si)-D(\rho_\N\|\si_\N)))^{1/2}+4T^{-1/2}\right).
\label{eq:rel-ent-1st-step-remainder-term}
\end{multline}
Define the following constants:
\[a \coloneqq 4Q_{x^{-1}}(\rho\|\si)^{1/2}\pl, \qquad b \coloneqq (D(\rho\|\si)-D(\rho_\N\|\si_\N))^{1/2}\pl, \qquad c \coloneqq 4.\]
We want to minimize the following function over $0<S<T<\infty$,
\[ F(S,T)=aS^{1/2}+b\sqrt{\ln(T/S)}+cT^{-1/2}\pl.\]
Set $\delta \coloneqq  \min\{ D(\rho\|\si)-D(\rho_\N\|\si_\N),1\}$.
Then a rough choice is $S=T^{-1}=\delta$, and we find that
\begin{align*}
\norm{\rho-R_\si^t(\rho_\N)}{1}
& \le \frac{2\cosh(\pi t)}{\pi}\left(4Q_{x^{-1}}(\rho\|\si)^{1/2}\delta^{1/2}+(2\delta|\ln \delta|)^{1/2}+4\delta^{1/2}\right)
\\ & = \frac{2\cosh(\pi t)}{\pi}\left(4\sqrt{Q_{x^{-1}}(\rho\|\si)}+4+\sqrt{2|\ln \delta|}\right)\delta^{1/2}
\\ & \le  \frac{2\cosh(\pi t)}{\pi} \left(4\sqrt{Q_{x^{-1}}(\rho\|\si)}+4 + (\varepsilon e)^{-1/2}\right)\delta^{1/2-\varepsilon}.
\end{align*}
In the case that $\delta = 1$,  the first inequality is trivial, following because $\norm{\rho-R_\si^t(\rho_\N)}{1} \leq 2$, $4Q_{x^{-1}}(\rho\|\si)^{1/2}\delta^{1/2}+(2\delta|\ln \delta|)^{1/2} \geq 0$, and $\frac{2\cosh(\pi t)}{\pi}4\delta^{1/2} \geq 2$ for all $t\in \mathbb{R}$ in this case. Otherwise, the first inequality is a consequence of \eqref{eq:rel-ent-1st-step-remainder-term}.
The last inequality is a consequence of  the inequalities $\delta^{\varepsilon} < 1$ and  $\delta^\varepsilon\sqrt{2|\ln \delta|}\le (\varepsilon e)^{-1/2}$, holding for $0<\delta<1$ and $\varepsilon > 0$.

The rest of the proof is similar to the proof of Theorem~\ref{re}.
\end{proof}

\begin{rem}{\rm It is a consequence of the result in \cite{universal} that the following inequality holds for the universal recovery map $R_\si^u$:
\begin{equation}
D(\rho\|\si)-D(\rho_\N\|\si_\N)\ge 4\norm{\rho - R_\si^u(\rho_\N)}{1}^2\pl.
\label{eq:universal-consq-JRSWW}
\end{equation}
For $R_\si^u$, the inequality in \eqref{eq:universal-consq-JRSWW} is stronger than our estimate in \eqref{eq2b}. Nevertheless,  Theorem~\ref{rsi} above provides an error estimate with the rotated Petz map $R_\si^t$ for each $t$.
}
\end{rem}

\subsubsection{Recoverability for Petz--R\'enyi relative (quasi-)entropy}

We now turn to the Petz--R\'enyi relative quasi-entropy, as defined in \eqref{eq:petz-renyi-def-recall}--\eqref{eq:quasi-renyi-with-s}.

\begin{theorem}
\label{thm:other-petz-map-renyi-rec}
Let $s\in (-1,0)\cup(0,1)$. Denote $Q_{x^{-1}}(\rho\|\si)=\tau(\rho^2\si^{-1})$. Then the following inequalities hold for all $t\in \mathbb{R}$ and $\varepsilon \in (0,(1-|s|)/2)$:
\begin{align}
|Q_{s}(\rho\|\si)-Q_{s}(\rho_\N\|\si_\N)|
& \ge \left(K(s ,Q_{x^{-1}}(\rho\|\si),\varepsilon)\frac{\pi}{2}
\norm{\rho-R_\si(\rho_\N)}{1}\right)^{\frac{1}{\frac{1-|s|}{2}-\varepsilon }} \label{eq1c}, \\
|Q_{s}(\rho\|\si)-Q_{s}(\rho_\N\|\si_\N)|
& \ge  \left(K(s ,Q_{x^{-1}}(\rho\|\si),\varepsilon) \frac{\pi}{2\cosh{\pi t}}\norm{\rho-R_\si^t(\rho_\N)}{1}\right)^{\frac{1}{\frac{1-|s|}{2}-\varepsilon }}  \label{eq2c} , \\
|Q_{s}(\rho\|\si)-Q_{s}(\rho_\N\|\si_\N)|
& \ge \big(K(s ,Q_{x^{-1}}(\rho\|\si),\varepsilon )\norm{\rho-R_\si^u(\rho_\N)}{1}\big)^{\frac{1}{\frac{1-|s|}{2}-\varepsilon }}
\label{eq3c},
\end{align}
where the constant
\begin{equation}
\label{eq:K-const-other-recov}
K(s ,Q_{x^{-1}}(\rho\|\si),\varepsilon ) \coloneqq \Big(
4Q_{x^{-1}}(\rho\|\si)^{1/2}
+\left(\frac{\pi}{e\varepsilon\sin (\pi |s|)}\right)^{1/2}+4\Big)^{-1}.
\end{equation}
\end{theorem}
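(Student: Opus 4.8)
The plan is to follow the template of the proof of Theorem~\ref{rsi}, applying Corollary~\ref{cor:bnd-standard-f-div-2} to the regular operator anti-monotone function $f(x)=x^s$ and then optimizing the free parameters $S$ and $T$. First I would split into the two cases $0<s<1$ and $-1<s<0$, recording for each the integral representation of $x^s$ already displayed in the proof of Theorem~\ref{thm:quasi-renyi-recoverability} together with the resulting comparison constant $c(S,T)$. For $0<s<1$ the function $x^s$ is operator monotone, so one works with the anti-monotone function $-x^s$, whose representing measure has density $\frac{\sin(\pi s)}{\pi}\la^s\,d\la$; this yields $c(S,T)=\frac{\pi}{\sin(\pi s)}S^{-s}$ and forces the divergence difference to enter through its absolute value $|Q_{s}(\rho\|\si)-Q_{s}(\rho_\N\|\si_\N)|$, exactly as in Theorem~\ref{thm:quasi-renyi-recoverability}. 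For $-1<s<0$ the function $x^s$ is genuinely operator anti-monotone with density $\frac{\sin(\pi|s|)}{\pi}\la^s\,d\la$, giving $c(S,T)=\frac{\pi}{\sin(\pi|s|)}T^{|s|}$.

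Applying Corollary~\ref{cor:bnd-standard-f-div-2} produces, in both cases, a three-term upper bound on $\norm{\rho-R_\si^t(\rho_\N)}{1}$ consisting of a boundary contribution $4S^{1/2}Q_{x^{-1}}(\rho\|\si)^{1/2}$, a middle contribution involving $[c(S,T)\ln(T/S)]^{1/2}|Q_{s}(\rho\|\si)-Q_{s}(\rho_\N\|\si_\N)|^{1/2}$, and a tail contribution $4T^{-1/2}$. The next step is to set $\delta\coloneqq\min\{|Q_{s}(\rho\|\si)-Q_{s}(\rho_\N\|\si_\N)|,1\}$ and make the symmetric rough choice $S=T^{-1}=\delta$, mirroring the choice in Theorem~\ref{rsi}. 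Substituting this, the boundary and tail terms each become a constant times $\delta^{1/2}$, which is dominated by $\delta^{(1-|s|)/2-\varepsilon}$ since $\delta\le 1$ and $\varepsilon>0$; these contribute the summands $4Q_{x^{-1}}(\rho\|\si)^{1/2}$ and $4$ appearing in $K(s,Q_{x^{-1}}(\rho\|\si),\varepsilon)^{-1}$ from \eqref{eq:K-const-other-recov}.

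The crux of the argument is the middle term. Using $\ln(T/S)=2|\ln\delta|$ and the value of $c(S,T)$ in each case, a short computation shows that this term equals $\left(\frac{\pi}{\sin(\pi|s|)}\right)^{1/2}(2|\ln\delta|)^{1/2}\delta^{(1-|s|)/2}$, where the factor $\delta^{-|s|/2}$ coming from $c(S,T)$ and the factor $\delta^{1/2}$ coming from the divergence difference combine cleanly (this is why both cases produce the common exponent $(1-|s|)/2$). The main obstacle, and the feature distinguishing this argument from the $R_\rho^t$ case of Theorem~\ref{thm:quasi-renyi-recoverability}, is the logarithmic factor $|\ln\delta|^{1/2}$: whereas Lemma~\ref{fd} produced a harmless factor $(T-S)$ permitting the choice $S=0$, here Lemma~\ref{fd2} produces $\ln(T/S)$, which blows up as $S\to 0$ and so rules out $S=0$. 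I would absorb this factor into the exponent by the elementary estimate $|\ln\delta|^{1/2}\delta^{\varepsilon}\le(2\varepsilon e)^{-1/2}$ for $0<\delta<1$ (obtained by maximizing $u^{1/2}e^{-\varepsilon u}$ at $u=1/(2\varepsilon)$), which converts $(2|\ln\delta|)^{1/2}\delta^{(1-|s|)/2}$ into $(\varepsilon e)^{-1/2}\delta^{(1-|s|)/2-\varepsilon}$ and hence yields the summand $\left(\frac{\pi}{e\varepsilon\sin(\pi|s|)}\right)^{1/2}$. This is precisely where the restriction $\varepsilon\in(0,(1-|s|)/2)$ enters, ensuring the exponent stays positive.

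Collecting the three estimates gives $\norm{\rho-R_\si^t(\rho_\N)}{1}\le\frac{2\cosh(\pi t)}{\pi}K(s,Q_{x^{-1}}(\rho\|\si),\varepsilon)^{-1}\delta^{(1-|s|)/2-\varepsilon}$. The degenerate case $\delta=1$ (that is, $|Q_{s}(\rho\|\si)-Q_{s}(\rho_\N\|\si_\N)|\ge 1$), in which the choice $S=T^{-1}=\delta$ collapses, is handled trivially as in Theorem~\ref{rsi}, using $\norm{\rho-R_\si^t(\rho_\N)}{1}\le 2$ together with $\frac{2\cosh(\pi t)}{\pi}K^{-1}\ge\frac{8}{\pi}>2$. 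Rearranging the displayed bound and raising to the power $\frac{1}{(1-|s|)/2-\varepsilon}$ yields \eqref{eq2c}, since $\delta\le|Q_{s}(\rho\|\si)-Q_{s}(\rho_\N\|\si_\N)|$; then \eqref{eq1c} is the special case $t=0$, and \eqref{eq3c} follows by integrating \eqref{eq2c} against the probability measure $d\beta$, using convexity of the trace norm and the normalization $\int_\R\frac{\cosh(\pi t/2)}{\cosh(\pi t)+1}\,dt=1$, exactly as in the proofs of Theorems~\ref{re} and \ref{rsi}.
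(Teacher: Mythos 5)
Your proposal is correct and follows essentially the same route as the paper's own proof: the same integral representations and constants $c(S,T)$ in the two cases, the same application of Corollary~\ref{cor:bnd-standard-f-div-2}, the same choice $S=T^{-1}=\delta$ with $\delta=\min\{|Q_{s}(\rho\|\si)-Q_{s}(\rho_\N\|\si_\N)|,1\}$, the same absorption of the $|\ln\delta|^{1/2}$ factor via $\delta^{\varepsilon}\sqrt{2|\ln\delta|}\le(\varepsilon e)^{-1/2}$, and the same trivial treatment of the $\delta=1$ case. The derivations of \eqref{eq1c} and \eqref{eq3c} from \eqref{eq2c} also match the paper's argument.
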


\begin{proof}
For $0<s<1$, the function $f(x)=x^s$ is operator monotone and operator concave. The integral representation of $x^s$ is
\[x^s=\frac{\sin (\pi s)}{\pi}\int_0^\infty \la^s\left(\frac{1}{\la}-\frac{1}{\la+x}\right)d\la\pl.
\]
Corollary~\ref{cor:bnd-standard-f-div-2} holds for $c(S,T)\le \frac{\pi}{\sin (\pi s)}S^{- s}$, and we find that
\begin{multline}
\label{eq:renyi-rel-ent-recovery-1st-step}
\norm{\rho-R_\si^t(\rho_\N)}{1}  \le
\frac{2\cosh(\pi t)}{\pi}\Bigg(4S^{1/2}Q_{x^{-1}}(\rho\|\si)^{1/2}\\
+\sqrt{S^{- s}\frac{\pi}{\sin (\pi  s)}\ln(T/S) }\left|Q_{s}(\rho\|\si)-Q_{s}(\rho_\N\|\si_\N)\right|^{1/2}+4T^{-1/2}\Bigg).
\end{multline}
Define the following constants:
\[a \coloneqq 4Q_{x^{-1}}(\rho\|\si)^{1/2}\pl, \qquad b \coloneqq \left(\frac{\pi}{\sin (\pi  s)}\left|Q_{s}(\rho\|\si)-Q_{s}(\rho_\N\|\si_\N)\right|\right)^{1/2}\pl,\qquad c\coloneqq 4,\]
and the function
\[F(S,T)=aS^{1/2}+b\sqrt{S^{- s}\ln(T/S)}+cT^{-1/2}\pl.\]
Setting $\delta \coloneqq \min\{ |Q_{s}(\rho\|\si)-Q_{s}(\rho_\N\|\si_\N)|,  1\}$ and $S=T^{-1}=\delta$,
we find that
\begin{align}
&\norm{\rho-R_\si^t(\rho_\N)}{1}  \notag \\
& \label{eq:delta-arg-1st-time-1} \le
\frac{2\cosh(\pi t)}{\pi}\left(4\delta^{1/2}Q_{x^{-1}}(\rho\|\si)^{1/2}
+\delta^{(1- s)/2}\sqrt{\frac{2\pi}{\sin (\pi  s)}|\ln \delta|}+4\delta^{1/2}\right)
\\ & =
\frac{2\cosh(\pi t)}{\pi}\left(4Q_{x^{-1}}(\rho\|\si)^{1/2} \delta^{ s/2}
+\sqrt{\frac{2\pi}{\sin (\pi  s)}|\ln \delta|}+4\delta^{ s/2} \right)\delta^{(1- s)/2}
\\ &  \le
\frac{2\cosh(\pi t)}{\pi}\left(4Q_{x^{-1}}(\rho\|\si)^{1/2}
+\left(\frac{\pi}{e\varepsilon\sin (\pi  s)}\right)^{1/2}+4\right)\delta^{(1- s)/2-\varepsilon}.
\label{eq:delta-arg-1st-time-last}
\end{align}
In the case that $\delta = 1$,  the first inequality is trivial, following from the facts that $\norm{\rho-R_\si^t(\rho_\N)}{1} \leq 2$,
\[
4\delta^{1/2}Q_{x^{-1}}(\rho\|\si)^{1/2}
+\delta^{(1- s)/2} \sqrt{\frac{2\pi}{\sin (\pi  s)}|\ln \delta|} \geq 0,
\]
and $\frac{2\cosh(\pi t)}{\pi}4\delta^{1/2} \geq 2$ for all $t\in \mathbb{R}$ in this case. Otherwise, the first inequality is a consequence of \eqref{eq:renyi-rel-ent-recovery-1st-step}.
The last inequality is a consequence of  the inequalities $\delta^{ s/2+\varepsilon}\leq 1 $ and
$\delta^\varepsilon\sqrt{2|\ln \delta|}\le (\varepsilon e)^{-1/2}$, holding for $0<\delta<1$.

For $-1<  s<0$, the function $f(x)=x^ s$ is operator anti-monotone and operator convex. The integral representation of $x^ s$ in this case is
\[x^ s=\frac{\sin (\pi | s|)}{\pi}\int_0^\infty \frac{\la^ s}{\la+x}\, d\la\pl.\]
Then the constant $c(S,T)\le \frac{\pi}{\sin (\pi | s|)}T^{|s|}$. The following inequality holds as a consequence of  Corollary~\ref{cor:bnd-standard-f-div-2}:
\begin{multline*}
\norm{\rho-R_\si^t(\rho_\N)}{1}  \le \frac{2\cosh(\pi t)}{\pi}
\Bigg(4S^{1/2}Q_{x^{-1}}(\rho\|\si)^{1/2} \\
+\sqrt{\frac{\pi}{\sin (\pi | s|)}T^{|s|}\ln (T/S)}(Q_{s}(\rho\|\si)-Q_{s}(\rho_\N\|\si_\N))^ {1/2}+4T^{-1/2}\Bigg).
\end{multline*}
The rest of the analysis is the same as that given for the case $0<s<1$, by taking $S=T^{-1}=\delta$.
\end{proof}

% EXAMPLE NO LONGER MAKES SENSE
%\begin{exam}{\rm  The estimates above can be improved to
%\[|Q_s(\rho\|\si)-Q_s(\rho_\N\|\si_\N)|\ge \tilde{K}(s ,Q_{x^{-1}}(\rho\|\si),\varepsilon)\left(\frac{\pi}{2}\norm{\si-R_\rho(\si_\N)}{1}\right)^{\frac{4}{2-|s|}+\varepsilon},
%\]
%by observing that $a\delta^{1/2}+b \sqrt{2|\ln\delta|}\delta^{(1-|s|)/2}=o(\delta^{(2-|s|)/4-\varepsilon})$ for some constants $a,b>0$. [\textbf{DOUBLE CHECK}] Here $\tilde{K}(s ,Q_{x^{-1}}(\rho\|\si),\varepsilon)$ is some constant that differs from that in \eqref{eq:K-const-other-recov}.
%In particular, when $s=1/2$, we have for the Holevo fidelity   $\sqrt{F_H}(\rho,\si)=\tau(\rho^{1/2}\si^{1/2})^2$ that
%\[
%\sqrt{F_H}(\rho_\N,\si_\N)-\sqrt{F_H}(\rho,\si)\geq
%\tilde{K}(1/2 ,Q_{x^{-1}}(\rho\|\si),\varepsilon)\norm{\rho-R_\si(\rho_\N)}{1}^{8/3+\varepsilon} ,\]
%which slightly improves the result in \cite{wilde18} in terms of the power of $\norm{\si-R_\rho(\si_\N)}{1}$.
%}
%\end{exam}

\bigskip
Following the same method of proof given for
Corollary~\ref{cor:petz-renyi-recoverability}, we arrive at the following corollary:

\begin{cor}
\label{cor:petz-renyi-recoverability-2}
Let $\M$ be a finite-dimensional von Neumann algebra, and let $\N\subset \M$ be a subalgebra. Let $\rho$ and $\si$ be two faithful states. For $\al\in (0,1)$, $\varepsilon\in(0,\alpha/2)$, and  $t\in \mathbb{R}$,
\begin{multline*}
D_\al(\rho\|\si)-D_\al(\rho_\N\|\si_\N)\\
\ge \frac{1}{1-\al}\log\!\left(1+
\left(
K(1-\al ,Q_{x^{-1}}(\rho\|\si),\varepsilon)
\frac{\pi}{2\cosh{\pi t}}\norm{\rho-R_\si^t(\rho_\N)}{1}\right)^{\frac{1}{\al/2-\varepsilon}}\right),
\end{multline*}
and for $\al \in(1, 2)$, $\varepsilon\in(0,(2-\alpha)/2)$, and $t\in \mathbb{R}$,
\begin{multline*}
D_\al(\rho\|\si)-D_\al(\rho_\N\|\si_\N)\\ \ge
\frac{1}{\al-1} \log \!\left(1+\frac{
1 }{Q_{x^{-1}}(\rho\|\si)^{\alpha-1}}\left(  \frac{K(1-\al , Q_{x^{-1}}(\rho\|\si),\varepsilon)\pi}{2(\al + 1)\cosh{\pi t}}
\norm{\rho-R_\si^t(\rho_\N)}{1}\right)^{\frac{1}{(2-\al)/2-\varepsilon}}\right),
\end{multline*}
where the constant $K(1-\al ,Q_{x^{-1}}(\rho\|\si),\varepsilon)$ is given by \eqref{eq:K-const-other-recov}.
\end{cor}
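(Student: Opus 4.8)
The plan is to reproduce the argument of Corollary~\ref{cor:petz-renyi-recoverability} line for line, simply swapping its analytic input: wherever that proof invoked Theorem~\ref{thm:quasi-renyi-recoverability} (the $R_\rho^t$-estimate with remainder weighted by $Q_{x^2}$), I instead invoke Theorem~\ref{thm:other-petz-map-renyi-rec} (the $R_\si^t$-estimate with remainder weighted by $Q_{x^{-1}}$ and the free parameter $\varepsilon$). Writing $s=1-\al$ so that $Q_s(\rho\|\si)=\tau(\rho^{\al}\si^{1-\al})=Q_{x^{1-\al}}(\rho\|\si)$, I first express the relative-entropy gap as the logarithm of a ratio of quasi-entropies exactly as in \eqref{eq:petz-renyi-def-recall-1}, then split off a $\log(1+\cdot)$ and feed the inner difference into \eqref{eq2c}.

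Concretely, for $\al\in(0,1)$ I would write
\[
D_\al(\rho\|\si)-D_\al(\rho_\N\|\si_\N)=\frac{1}{1-\al}\log\frac{Q_{x^{1-\al}}(\rho_\N\|\si_\N)}{Q_{x^{1-\al}}(\rho\|\si)}=\frac{1}{1-\al}\log\!\left(1+\frac{Q_{x^{1-\al}}(\rho_\N\|\si_\N)-Q_{x^{1-\al}}(\rho\|\si)}{Q_{x^{1-\al}}(\rho\|\si)}\right),
\]
where the numerator is nonnegative because, for $\al\in(0,1)$, $x^{1-\al}$ is operator concave and hence $Q_{x^{1-\al}}$ is non-decreasing under the conditional expectation. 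I would then lower-bound this numerator by $|Q_s(\rho\|\si)-Q_s(\rho_\N\|\si_\N)|$ via \eqref{eq2c} of Theorem~\ref{thm:other-petz-map-renyi-rec}, noting that with $s=1-\al\in(0,1)$ one has $|s|=1-\al$, so $(1-|s|)/2=\al/2$; this reproduces both the exponent $\tfrac{1}{\al/2-\varepsilon}$ and the admissible range $\varepsilon\in(0,\al/2)$ claimed in the statement. Finally I would discard the denominator using $Q_{x^{1-\al}}(\rho\|\si)=\tau(\rho^{\al}\si^{1-\al})\le 1$ for $\al\in(0,1)$, which only strengthens the inequality and yields the first displayed bound.

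For $\al\in(1,2)$ the prefactor $\tfrac{1}{\al-1}$ is positive and $x^{1-\al}$ is operator convex, so the ratio is taken in the opposite order and I would write the gap as $\tfrac{1}{\al-1}\log\!\big(1+(Q_{x^{1-\al}}(\rho\|\si)-Q_{x^{1-\al}}(\rho_\N\|\si_\N))/Q_{x^{1-\al}}(\rho_\N\|\si_\N)\big)$, with nonnegative numerator by the data-processing inequality. Applying \eqref{eq2c} (now $|s|=\al-1$, so $(1-|s|)/2=(2-\al)/2$, matching the exponent and the range $\varepsilon\in(0,(2-\al)/2)$) bounds the numerator, and I would control the denominator by the same two-step chain as in Corollary~\ref{cor:petz-renyi-recoverability}, namely $Q_{x^{1-\al}}(\rho_\N\|\si_\N)\le Q_{x^{1-\al}}(\rho\|\si)\le Q_{x^{-1}}(\rho\|\si)^{\al-1}$ — the first inequality from data processing and the second from monotonicity of $D_\al$ in $\al$ evaluated at $\al=2$. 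Substituting the constant $K$ from \eqref{eq:K-const-other-recov} then produces the second displayed bound.

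The genuinely routine parts — the logarithmic algebra and the insertion of $K$ — transfer without change. The only point demanding care is the sign/monotonicity bookkeeping across the two regimes: one must confirm that the quasi-entropy difference in each numerator is exactly the nonnegative quantity $|Q_s(\rho\|\si)-Q_s(\rho_\N\|\si_\N)|$ to which \eqref{eq2c} literally applies, and that the substitution $s=1-\al$ sends the theorem's exponent $\tfrac{1}{(1-|s|)/2-\varepsilon}$ and its constraint $\varepsilon\in\big(0,(1-|s|)/2\big)$ to the two parameter ranges stated here. No new analytic estimate beyond those already proved is required.
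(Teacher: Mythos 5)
Your proposal is correct and is essentially the paper's own proof: the paper disposes of this corollary with the single line ``following the same method of proof given for Corollary~\ref{cor:petz-renyi-recoverability},'' and your argument --- substituting the $R_\si^t$-estimate \eqref{eq2c} of Theorem~\ref{thm:other-petz-map-renyi-rec} for the $R_\rho^t$-estimate, checking that $s=1-\al$ sends $(1-|s|)/2$ to $\al/2$ resp.\ $(2-\al)/2$, and handling the denominators via $Q_{x^{1-\al}}(\rho\|\si)\le 1$ resp.\ $Q_{x^{1-\al}}(\rho_\N\|\si_\N)\le Q_{x^{1-\al}}(\rho\|\si)\le Q_{x^{-1}}(\rho\|\si)^{\al-1}$ --- is exactly that method. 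The one cosmetic discrepancy is that your derivation produces the $\al\in(1,2)$ bound with prefactor $\frac{\pi}{2\cosh(\pi t)}$ rather than the printed $\frac{\pi}{2(\al+1)\cosh(\pi t)}$ (the latter appears to be carried over from the exponent-dependent prefactor of Corollary~\ref{cor:petz-renyi-recoverability}); since $\al+1>1$, your bound is the stronger one and implies the stated inequality.
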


\subsection{Recoverability for optimized $f$-divergence}

We now discuss recoverability for the optimized $f$-divergence.
Let $\M$ be a finite-dimensional von Neumann algebra with trace $\tau$, and let $\N\subset\M$ be a subalgebra. Let $\rho,\si\in \M$ be two faithful states, and let $E(\rho)=\rho_\N$ and $E(\si)=\si_\N$ be the respective reduced density operators on $\N$.
Let $f$ be an operator anti-monotone function. Recall from \eqref{eq:opt-f-div-def} that the optimized $f$-divergences are defined as follows:
\begin{align*}
\widetilde{Q}_f(\rho\|\si) & =\sup_{\omega\in D_+(\M)}\bra{\rho^{1/2}} f(\Delta_\M(\si,\omega))\ket{\rho^{1/2}},\\
\widetilde{Q}_f(\rho_\N\|\si_\N) & = \sup_{\omega_\N\in D_+(\N)}\bra{\rho_\N^{1/2}} f(\Delta_\N(\si_\N,\omega_\N))\ket{\rho_\N^{1/2}}\pl,
\end{align*}
where the supremum is with respect to all invertible density operators $\omega\in D_+(\M) $ (resp. $\omega_\N\in D_+(\N)$). Let $V_\rho:L_2(\N)\to L_2(\M)$ denote the following isometry:
\[ V_\rho(a\ket{\rho_\N^{1/2}})=a\ket{\rho^{1/2}}\pl, \quad \forall a\in \N, \]
with a similar definition for $V_\si$.

\begin{lemma}\label{omega}
Let $\rho,\si,\omega\in D_+(\M)$.
The following equality holds
\[ V_\rho^*\Delta_\M(\si, R_\rho(\omega_\N))V_\rho= \Delta_\N(\si_\N,\omega_\N),
\]
where $R_\rho$ is the Petz recovery map from \eqref{eq:Petz-rec-marginal}. As a consequence, the following inequality holds for all operator anti-monotone functions $f:(0,\infty)\to \R$:
\[
\bra{\rho_\N^{1/2}} f(\Delta_\N(\si_\N,\omega_\N))\ket{\rho_\N^{1/2}} \le \bra{\rho^{1/2}} f(\Delta_\M(\si,R_\rho(\omega_\N)))\ket{\rho^{1/2}}.
\]
\end{lemma}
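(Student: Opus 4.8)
The plan is to prove the operator identity first by a direct computation in the GNS space, and then to read off the inequality from the operator Jensen inequality recorded in the preliminaries, in exactly the same spirit as the saturation relation $V_\rho^*\Delta_\M V_\rho=\Delta_\N$ was used throughout Section~\ref{sec:finite}. Since $\rho$ and $\omega_\N$ are faithful, $R_\rho(\omega_\N)=\rho^{1/2}\rho_\N^{-1/2}\omega_\N\rho_\N^{-1/2}\rho^{1/2}$ is invertible, so $\Delta_\M(\si,R_\rho(\omega_\N))$ is a well-defined positive invertible operator and no regularity issues arise.

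For the identity, I would verify it on the vectors $a\ket{\rho_\N^{1/2}}$ with $a\in\N$, which exhaust $L_2(\N)$ because $\rho_\N^{1/2}$ is invertible. Writing $\tilde\omega\coloneqq R_\rho(\omega_\N)$, so that $\tilde\omega^{-1}=\rho^{-1/2}\rho_\N^{1/2}\omega_\N^{-1}\rho_\N^{1/2}\rho^{-1/2}$, I compute in three steps. First, $V_\rho(a\ket{\rho_\N^{1/2}})=\ket{a\rho^{1/2}}$ by the second definition of $V_\rho$. Second, the definition $\Delta(\si,\tilde\omega)\ket{x}=\ket{\si x\tilde\omega^{-1}}$ gives $\Delta_\M(\si,\tilde\omega)\ket{a\rho^{1/2}}=\ket{\si a\rho^{1/2}\tilde\omega^{-1}}$, and the cancellation $\rho^{1/2}\rho^{-1/2}=1$ reduces the argument to $\si a\rho_\N^{1/2}\omega_\N^{-1}\rho_\N^{1/2}\rho^{-1/2}$. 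Third, the adjoint formula $V_\rho^*\ket{z}=\ket{E(z\rho^{1/2})\rho_\N^{-1/2}}$ applies: the trailing $\rho^{-1/2}\rho^{1/2}=1$ cancels, and the decisive step is that $a\rho_\N^{1/2}\omega_\N^{-1}\rho_\N^{1/2}\in\N$, so the bimodule property of the conditional expectation yields $E(\si\cdot a\rho_\N^{1/2}\omega_\N^{-1}\rho_\N^{1/2})=\si_\N a\rho_\N^{1/2}\omega_\N^{-1}\rho_\N^{1/2}$. Multiplying on the right by $\rho_\N^{-1/2}$ produces $\ket{\si_\N a\rho_\N^{1/2}\omega_\N^{-1}}$, which is precisely $\Delta_\N(\si_\N,\omega_\N)(a\ket{\rho_\N^{1/2}})$, establishing the identity.

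For the inequality, I would first note that an operator anti-monotone $f$ is operator convex: by definition $-f$ is operator monotone, hence operator concave, so $f$ is operator convex. Applying the operator Jensen inequality from the preliminaries to the isometry $V_\rho$ and the positive operator $\Delta_\M(\si,R_\rho(\omega_\N))$, and then inserting the identity just proved, gives the operator inequality $V_\rho^*f(\Delta_\M(\si,R_\rho(\omega_\N)))V_\rho\ge f\big(V_\rho^*\Delta_\M(\si,R_\rho(\omega_\N))V_\rho\big)=f(\Delta_\N(\si_\N,\omega_\N))$. Sandwiching this with $\ket{\rho_\N^{1/2}}$ and using $V_\rho\ket{\rho_\N^{1/2}}=\ket{\rho^{1/2}}$ (the case $a=1$) turns the left-hand side into $\bra{\rho^{1/2}}f(\Delta_\M(\si,R_\rho(\omega_\N)))\ket{\rho^{1/2}}$, which is the claimed inequality.

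The main obstacle is the bookkeeping in the third step: one must correctly commute the conditional expectation past the $\N$-valued factors and track the half-power cancellations, the crucial observation being that the entire string following $\si$ lies in $\N$ so that $E$ acts on it as the identity, leaving only $E(\si)=\si_\N$. Everything else is routine, and the passage from the operator identity to the scalar inequality is a standard application of operator convexity.
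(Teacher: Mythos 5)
Your proof is correct, and both halves reach the same conclusions as the paper, but the way you establish the intertwining identity is genuinely different. You verify $V_\rho^*\Delta_\M(\si,R_\rho(\omega_\N))V_\rho=\Delta_\N(\si_\N,\omega_\N)$ by a direct computation on the spanning vectors $a\ket{\rho_\N^{1/2}}$, using the explicit finite-dimensional formula $\Delta(\si,\omega)\ket{x}=\ket{\si x\omega^{-1}}$, the adjoint formula $V_\rho^*\ket{z}=\ket{E(z\rho^{1/2})\rho_\N^{-1/2}}$, and the $\N$-bimodule property of $E$; your bookkeeping of the half-power cancellations and the observation that $a\rho_\N^{1/2}\omega_\N^{-1}\rho_\N^{1/2}\in\N$ are all correct. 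The paper instead factors the relative modular operator through the anti-linear operator $S_{\si,\bome}$ associated with the vector $\ket{\bome}=\ket{\omega_\N^{1/2}\rho_\N^{-1/2}\rho^{1/2}}$, proves the intertwining at the level of $S$-operators, and then identifies $\Delta_\M(\si/\bome)$ with $\Delta_\M(\si,R_\rho(\omega_\N))$ by checking which state $\bome$ induces on the commutant. Your route is more elementary and entirely adequate here, since $\rho,\si,\omega$ are faithful and everything is finite-dimensional, but it leans on explicit inverses of density operators; the paper's $S$-operator/spatial-derivative formulation is chosen because it is exactly the argument that survives the passage to general von Neumann algebras in Lemma~\ref{opin}, where such inverses are unavailable. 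The second half of your argument (operator anti-monotone $\Rightarrow$ operator convex, then the operator Jensen inequality applied to the isometry $V_\rho$, then sandwiching with $\ket{\rho_\N^{1/2}}$ and using $V_\rho\ket{\rho_\N^{1/2}}=\ket{\rho^{1/2}}$) coincides with the paper's.
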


\begin{proof}
Recall the recovery map $R_\rho(\omega_\N)=\rho^{1/2}\rho_\N^{-1/2}\omega_\N \rho_\N^{-1/2}\rho^{1/2} $.
By definition of $\Delta$ and $V_\rho$, we have for any $a\in \N$,
\begin{align*}
V_\rho^*\Delta(\si,R_\rho(\omega_\N))V_\rho \ket{a\rho_\N^{1/2}}
=& V_\rho^*\Delta(\si,R_\rho(\omega_\N))\ket{a\rho^{1/2}}
= V_\rho^*\ket{\si a\rho^{1/2}R_\rho(\omega_\N)^{-1}}
\\=&\ket{E(\si a\rho_\N^{1/2}\omega_\N^{-1} \rho_\N^{1/2})\rho_\N^{-1/2}}
\\=& \ket{\si_\N a\rho_\N^{1/2}\omega_\N^{-1}}
\\=& \Delta(\si_\N,\omega_\N)\ket{ a\rho_\N^{1/2}}
\end{align*}
where in the second last equality we used the fact that $\rho_\N^{1/2}\omega_\N^{-1} \rho_\N^{1/2}\in \N$.
This verifies the claimed equality.

Now using operator convexity and operator anti-monotonicity of $f$, we find that
\begin{align*}
\bra{\rho_\N^{1/2}} f(\Delta_\N(\si_\N,\omega_\N))\ket{\rho_\N^{1/2}}&= \bra{\rho_\N^{1/2}} f(V_\rho^*\Delta_\M(\si,R_\rho(\omega_\N))V_\rho)\ket{\rho_\N^{1/2}}\\&\le
\bra{\rho_\N^{1/2}} V_\rho^* f(\Delta_\M(\si,R_\rho(\omega_\N)))V_\rho\ket{\rho_\N^{1/2}}
\\&=
\bra{\rho^{1/2}} f(\Delta_\M(\si,R_\rho(\omega_\N)))\ket{\rho^{1/2}}.\qedhere
\end{align*}
\end{proof}
\bigskip
For all $\varepsilon>0$, we can choose $\omega_\N\in D_+(\N)$ such that
\begin{align*}\bra{\rho_\N^{1/2}} f(\Delta_\N(\si_\N,\omega_\N))\ket{\rho_\N^{1/2}}\ge \widetilde{Q}_f(\rho_\N\|\si_\N)-\varepsilon.
\end{align*}
Note that by Lemma \ref{omega}, \[\bra{\rho^{1/2}}\Delta_\M(\si,R_\rho(\omega_\N))\ket{\rho^{1/2}}= \bra{\rho_\N^{1/2}}V_\rho^*\Delta_\M(\si,R_\rho(\omega_\N))V_\rho\ket{\rho_\N^{1/2}}
=\bra{\rho_\N^{1/2}} \Delta_\N(\si_\N,\omega_\N)\ket{\rho_\N^{1/2}}\pl.\]
Then
\begin{align*}
&\widetilde{Q}_f(\rho\|\si)-\widetilde{Q}_f(\rho_\N\|\si_\N)\\
& =
\sup_{\omega\in D_+(\M)}\bra{\rho^{1/2}} f(\Delta_\M(\si,\omega))\ket{\rho^{1/2}}-\sup_{\omega\in D_+(\N)}\bra{\rho_\N^{1/2}} f(\Delta_\N(\si_\N,\omega))\ket{\rho_\N^{1/2}}
\\
& \ge
\sup_{\omega\in D_+(\M)}\bra{\rho^{1/2}} f(\Delta_\M(\si,\omega))\ket{\rho^{1/2}}-\bra{\rho_\N^{1/2}} f(\Delta_\N(\si_\N,\omega_\N))\ket{\rho_\N^{1/2}}-\varepsilon
\\
& \ge
\bra{\rho^{1/2}} f(\Delta_\M(\si,R_\rho(\omega_\N)))\ket{\rho^{1/2}}-\bra{\rho_\N^{1/2}} f(\Delta_\N(\si_\N,\omega_\N))\ket{\rho_\N^{1/2}}-\varepsilon
\\
& = b\bra{\rho^{1/2}} \Delta_\M(\si,R_\rho(\omega_\N))\ket{\rho^{1/2}}-b \bra{\rho_\N^{1/2}} \Delta_\N(\si_\N,\omega_\N)\ket{\rho_\N^{1/2}}\\
\pl &+\int_0^\infty \left(\bra{\rho^{1/2}} (\Delta_\M(\si,R_\rho(\omega_\N))+\la)^{-1}\ket{\rho^{1/2}}-\bra{\rho_\N^{1/2}} (\Delta_\N(\si_\N,\omega_\N)+\la)^{-1}\ket{\rho_\N^{1/2} } \right)d\nu(\la)-\varepsilon\pl.
\\
& = \int_0^\infty \left(\bra{\rho^{1/2}} (\Delta_\M(\si,R_\rho(\omega_\N))+\la)^{-1}\ket{\rho^{1/2}}-\bra{\rho_\N^{1/2}} (\Delta_\N(\si_\N,\omega_\N)+\la)^{-1}\ket{\rho_\N^{1/2} } \right)d\nu(\la)-\varepsilon\pl.
\end{align*}
where $b$ is the parameter and $d\nu$ is the measure in the integral representation \eqref{om} of $f$. Denote
\begin{align*}
& F(\la) \\
& :=\bra{\rho^{1/2}} (\Delta_\M(\si,R_\rho(\omega_\N))+\la)^{-1}\ket{\rho^{1/2}}-\bra{\rho_\N^{1/2}} (\Delta_\N(\si_\N,\omega_\N)+\la)^{-1}\ket{\rho_\N^{1/2} } ,\\
& = \lan \rho_\N^{1/2}|V_\rho^* (\Delta_{\M}(\si,R_\rho(\omega_\N))+\la)^{-1}V_\rho|\rho_\N^{1/2}\ran-\lan \rho_\N^{1/2}| (V_\rho^*(\Delta_{\M}(\si,R_\rho(\omega_\N))+\la)V_\rho)^{-1}|\rho_\N^{1/2}\ran \\
& =\bra{u_\la}\Delta_\M(\si,R_\rho(\omega_\N))+\la \ket{u_\la}\ge 0
\end{align*}
where
\[
\ket{u_\la}:=(\Delta_\M(\si,R_\rho(\omega_\N))+\la)^{-1}\ket{\rho^{1/2}}-V_\rho (\Delta_\N(\si_\N,\omega_\N)+\la)^{-1}\ket{\rho_\N^{1/2}}\pl,
\]
and the last line follows from Lemma~\ref{key}.
Thus, we find that
\begin{equation}
\widetilde{Q}_f(\rho\|\si)-\widetilde{Q}_f(\rho_\N\|\si_\N) \geq \int_0^\infty F(\la) \, d\nu(\la)-\varepsilon.
\label{eq:lower-bnd-optimized-f-diff-big-F}
\end{equation}

\begin{lemma}
\label{3.19}
Let $t\in \mathbb{R}$
and
\[
\ket{u_t}\coloneqq  \frac{\cosh(\pi t)}{\pi} \Delta_\M(\si,R_\rho(\omega_\N))^{1/2+it} \int_{0}^\infty \la^{-1/2-it} \ket{u_\la}\, d\la\pl.\]
Then the following inequality holds
\[
\norm{\rho-R_\si^{-t}(\rho)}{1}\le 2\norm{\ket{u_t}}{2} .
\]
\end{lemma}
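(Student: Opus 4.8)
The plan is to mirror the proof of Lemma~\ref{3.10}, replacing the standard modular operators by the optimized ones. Throughout, abbreviate $\Theta_\M \coloneqq \Delta_\M(\si,R_\rho(\omega_\N))$ and $\Theta_\N \coloneqq \Delta_\N(\si_\N,\omega_\N)$, and write $\omega \coloneqq R_\rho(\omega_\N)$, so that these operators act by $\Theta_\M^{s}\ket{x}=\ket{\si^{s}x\,\omega^{-s}}$ and $\Theta_\N^{s}\ket{x}=\ket{\si_\N^{s}x\,\omega_\N^{-s}}$. First I would strip off the leading operator factor in the definition of $\ket{u_t}$ by applying $\Theta_\M^{-1/2-it}$, which gives
\[
\Theta_\M^{-1/2-it}\ket{u_t}=\frac{\cosh(\pi t)}{\pi}\int_0^\infty \la^{-1/2-it}\ket{u_\la}\,d\la .
\]
Inserting the definition of $\ket{u_\la}$, splitting the integral into its two summands, pulling the fixed isometry $V_\rho$ and the resolvents through the scalar integral, and invoking the integral representation \eqref{eq:integral-rep-x-min-1-2-it} separately for $\Theta_\M$ on $L_2(\M)$ and for $\Theta_\N$ on $L_2(\N)$ then collapses each integral into an imaginary power, yielding
\[
\Theta_\M^{-1/2-it}\ket{u_t}=\Theta_\M^{-1/2-it}\ket{\rho^{1/2}}-V_\rho\,\Theta_\N^{-1/2-it}\ket{\rho_\N^{1/2}} .
\]
Applying $\Theta_\M^{1/2+it}$ to both sides produces the closed form $\ket{u_t}=\ket{\rho^{1/2}}-\Theta_\M^{1/2+it}V_\rho\Theta_\N^{-1/2-it}\ket{\rho_\N^{1/2}}$. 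Since everything here is finite-dimensional and $\Theta_\M,\Theta_\N$ are positive and invertible, convergence of the integrals and their interchange with $V_\rho$ are routine.

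Next I would evaluate the second vector explicitly. Using $\Theta_\N^{-1/2-it}\ket{\rho_\N^{1/2}}=\ket{\si_\N^{-1/2-it}\rho_\N^{1/2}\omega_\N^{1/2+it}}$, then $V_\rho\ket{z}=\ket{z\rho_\N^{-1/2}\rho^{1/2}}$ for $z\in\N$, and finally $\Theta_\M^{1/2+it}\ket{y}=\ket{\si^{1/2+it}y\,\omega^{-1/2-it}}$, I obtain $\ket{u_t}=\ket{\rho^{1/2}}-\ket{d}$ with
\[
d=\si^{1/2+it}\si_\N^{-1/2-it}\rho_\N^{1/2}\omega_\N^{1/2+it}\rho_\N^{-1/2}\rho^{1/2}\,\omega^{-1/2-it} .
\]

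The crux is to show that $dd^{*}=R_\si^{-t}(\rho_\N)$, so that the dependence on the optimizing state $\omega_\N$ disappears entirely. Forming $dd^{*}$, the central factor is $\rho^{1/2}\omega^{-1}\rho^{1/2}$; substituting the explicit Petz map $\omega=R_\rho(\omega_\N)=\rho^{1/2}\rho_\N^{-1/2}\omega_\N\rho_\N^{-1/2}\rho^{1/2}$ from \eqref{eq:Petz-rec-marginal} gives $\rho^{1/2}\omega^{-1}\rho^{1/2}=\rho_\N^{1/2}\omega_\N^{-1}\rho_\N^{1/2}$, after which the neighboring $\rho_\N^{\pm1/2}$ cancel and the three $\omega_\N$-powers collapse via $\omega_\N^{1/2+it}\omega_\N^{-1}\omega_\N^{1/2-it}=\mathbf{1}$, leaving
\[
dd^{*}=\si^{1/2+it}\si_\N^{-1/2-it}\rho_\N\,\si_\N^{-1/2+it}\si^{1/2-it}=R_\si^{-t}(\rho_\N),
\]
the last equality by \eqref{eq:rotated-petz-map-def}. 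I expect this cancellation to be the main obstacle: it is precisely the place where the auxiliary state of the optimized divergence drops out and a genuine $\si$-dependent recovery channel emerges. To finish, I would apply the inequality \eqref{2} in its adjoint form $\norm{xx^{*}-yy^{*}}{1}\le 2\norm{x-y}{2}$ with $x=\rho^{1/2}$ and $y=d$; since $\tau(\rho)=1$ and $R_\si^{-t}$ is trace-preserving, both $\ket{\rho^{1/2}}$ and $\ket{d}$ are unit vectors, and $xx^{*}=\rho$ while $yy^{*}=R_\si^{-t}(\rho_\N)$, so the claimed bound $\norm{\rho-R_\si^{-t}(\rho_\N)}{1}\le 2\norm{\ket{u_t}}{2}$ follows.
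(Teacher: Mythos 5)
Your proposal is correct and follows essentially the same route as the paper: use the integral representation \eqref{eq:integral-rep-x-min-1-2-it} to collapse $\int_0^\infty \la^{-1/2-it}\ket{u_\la}\,d\la$ into $\Delta_\M(\si,R_\rho(\omega_\N))^{-1/2-it}\ket{\rho^{1/2}}-V_\rho\Delta_\N(\si_\N,\omega_\N)^{-1/2-it}\ket{\rho_\N^{1/2}}$, obtain the closed form $\ket{u_t}=\ket{\rho^{1/2}}-\ket{d}$, verify via $\rho^{1/2}R_\rho(\omega_\N)^{-1}\rho^{1/2}=\rho_\N^{1/2}\omega_\N^{-1}\rho_\N^{1/2}$ that $dd^*=R_\si^{-t}(\rho_\N)$, and conclude with \eqref{2}. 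Your reading of the conclusion as $\norm{\rho-R_\si^{-t}(\rho_\N)}{1}$ is also the intended one (the statement's $R_\si^{-t}(\rho)$ is a typo, since the recovery map acts on $L_1(\N)$), and this matches what the paper's own computation actually produces.
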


\begin{proof}
Using the integral representation in \eqref{eq:integral-rep-x-min-1-2-it}, i.e.,
\[
x^{-1/2-it} = \frac{\cosh(\pi t )}{\pi} \int_{0}^\infty \la^{-1/2-it} (\la+x)^{-1}d\la,
\]
we find, by a similar argument to that given for \eqref{eq:v_t_mod_1}--\eqref{eq:v_t_mod_3}, that
\begin{multline*}
\Delta_\M(\si,R_\rho(\omega_\N))^{-1/2-it}\ket{\rho^{1/2}}-V_\rho \Delta_\N(\si_\N,\omega_\N)^{-1/2-it}\ket{\rho_\N^{1/2}} \\
 = \frac{\cosh(\pi t)}{\pi} \int_{0}^\infty \la^{-1/2-it} \ket{u_\la}d\la\pl.
\end{multline*}
Applying $\Delta_\M(\si,R_\rho(\omega_\N))^{1/2+it}$,
we find that
\begin{multline*}
\ket{u_t} = \ket{\rho^{1/2}}-\Delta_\M(\si,R_\rho(\omega_\N))^{1/2+it}V_\rho \Delta_\N(\si_\N,\omega_\N)^{-1/2-it}\ket{\rho_\N^{1/2}}\\
 = \ket{\rho^{1/2}}-\ket{\si^{1/2+it}\si_\N^{-1/2-it}\rho_\N^{1/2}\omega_\N^{1/2+it} \rho_\N^{-1/2}\rho^{1/2}R_\rho(\omega_\N)^{-1/2-it}}.
\end{multline*}
For the second term above, we have the following collapse:
\begin{align*}&\si^{\frac12+it}\si_\N^{-\frac12-it}\rho_\N^{\frac12}\omega_\N^{\frac12+it} \rho_\N^{-\frac12}\rho^{\frac12}R_\rho(\omega_\N)^{-\frac12-it}\Big(\si^{\frac12+it}\si_\N^{-\frac12-it}\rho_\N^{\frac12}\omega_\N^{\frac12+it} \rho_\N^{-\frac12}\rho^{\frac12}R_\rho(\omega_\N)^{-\frac12-it}\Big)^*
\\
& = \si^{\frac12+it}\si_\N^{-\frac12-it}\rho_\N^{\frac12}\omega_\N^{\frac12+it} \rho_\N^{-\frac12}\rho^{\frac12}R_\rho(\omega_\N)^{-\frac12-it}R_\rho(\omega_\N)^{-\frac12+it}\rho^{\frac12}\rho_\N^{-\frac12}\omega_\N^{\frac12-it}\rho_\N^{\frac12}\si_\N^{-\frac12+it}\si^{\frac12-it}
\\
& = \si^{\frac12+it}\si_\N^{-\frac12-it}\rho_\N^{\frac12}\omega_\N^{\frac12+it} \rho_\N^{-\frac12}\rho^{\frac12}R_\rho(\omega_\N)^{-1}\rho^{\frac12}\rho_\N^{-\frac12}\omega_\N^{\frac12-it}\rho_\N^{\frac12}\si_\N^{-\frac12+it}\si^{\frac12-it}
\\
& = \si^{\frac12+it}\si_\N^{-\frac12-it}\rho_\N^{\frac12}\omega_\N^{\frac12+it} \omega_\N^{-1}\omega_\N^{\frac12-it}\rho_\N^{\frac12}\si_\N^{-\frac12+it}\si^{\frac12-it}
\\
& =\si^{\frac12+it}\si_\N^{-\frac12-it}\rho_\N\si_\N^{-\frac12+it}\si^{\frac12-it}
\\
& = R_\si^{-t}(\rho),
\end{align*}
where $R_\si^{-t}$ is defined through \eqref{eq:rotated-petz-map-def}.
For the third equality above, we used the following:
\[
R_\rho(\omega_\N)=\rho^{1/2}\rho_\N^{-1/2}\omega_\N\rho_\N^{-1/2}\rho^{1/2} \qquad \Longleftrightarrow \qquad \rho_\N^{1/2}\rho^{-1/2}R_\rho(\omega_\N)\rho^{-1/2}\rho_\N^{1/2}=\omega_\N.
\]
After applying \eqref{2}, we  find that
\begin{align*}
\norm{\rho-R_\si^{-t}(\rho)}{1}& \le  2\norm{\ket{\rho^{1/2}}-\Delta_\M^{1/2+it}(\si,R_\rho(\omega_\N))^{1/2}V_\rho \Delta_\N(\si_\N,\omega_\N)^{-1/2-it}\ket{\rho_\N^{1/2}}}{2}
\\
& =  \frac{2\cosh(\pi t )}{\pi}\norm{\Delta_\M^{1/2}(\si,R_\rho(\omega_\N))\int_{0}^\infty \la^{-1/2-it} \ket{u_\la}}{2}=2\norm{\ket{u_t}}{2}.\qedhere
\end{align*}
\end{proof}
\bigskip

\begin{lemma}
\label{fd3}
Let $f:(0,\infty)\to \mathbb{R}$ be a regular operator anti-monotone function, and let $d\nu$ be the measure in its integral representation. Suppose for some $S$ and $T$, satisfying $0<S<T<\infty$, that $d\la\le c(S,T)\, d\nu(\la)$ for $c(S,T)>0$. Then
\begin{multline*}
\norm{\rho-R_\si^t(\rho_\N)}{1} \le \frac{2\cosh(\pi t )}{\pi}\Bigg(4S^{1/2}\widetilde{Q}_{x^{-1}}(\rho\|\si)^{1/2}\\
+\left(c(S,T)\ln\!\left(T/S\right)\right)^{1/2} (\widetilde{Q}_f(\rho\|\si)-\widetilde{Q}_f(\rho_\N\|\si_\N))^{1/2} + 4T^{-1/2} \Bigg) ,
\end{multline*}
where $\widetilde{Q}_{x^{-1}}(\rho\|\si)=\norm{\rho^{1/2}\si^{-1}\rho^{1/2}}{\infty}=\inf \{\la >0\pl |\pl \rho\le \la\si \}$.
\end{lemma}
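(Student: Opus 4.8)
The plan is to transcribe the proof of Lemma~\ref{fd2} almost verbatim, replacing the relative modular operators $\Delta_\M,\Delta_\N$ there by the optimized modular operators $D\coloneqq\Delta_\M(\si,R_\rho(\omega_\N))$ and $D_\N\coloneqq\Delta_\N(\si_\N,\omega_\N)$, and using the intertwining relation $V_\rho^*DV_\rho=D_\N$ supplied by Lemma~\ref{omega} in place of $V_\rho^*\Delta_\M V_\rho=\Delta_\N$. First I would fix $\varepsilon>0$ and choose $\omega_\N\in D_+(\N)$ nearly optimal for $\widetilde{Q}_f(\rho_\N\|\si_\N)$, as in the discussion preceding Lemma~\ref{3.19}, so that \eqref{eq:lower-bnd-optimized-f-diff-big-F} is available. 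Replacing $t$ by $-t$ in Lemma~\ref{3.19} and using that $\cosh$ is even, the claim reduces to estimating $\norm{\ket{u_t}}{2}$. Through the analogue of \eqref{eq:v_t-rewrite}, namely $\ket{u_t}=\frac{\cosh(\pi t)}{\pi}D^{1/2+it}\int_0^\infty\la^{-1/2-it}\ket{u_\la}\,d\la$, I would split $\int_0^\infty=\int_0^S+\int_S^T+\int_T^\infty=:\operatorname{I}+\operatorname{II}+\operatorname{III}$ (each carrying the prefactor $D^{1/2}$) and bound the three pieces separately.

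For $\operatorname{I}$, with $h_S^t(x)=\int_0^S\la^{-1/2-it}(\la+x)^{-1}\,d\la$ one has $x|h_S^t(x)|^2\le 4\arctan^2(\sqrt S/\sqrt x)\le 4S/x$, so that $\norm{\operatorname{I}}{2}$ is governed by $\bra{\rho^{1/2}}D^{-1}\ket{\rho^{1/2}}$ and, after using $V_\rho^*DV_\rho=D_\N$, by $\bra{\rho_\N^{1/2}}D_\N^{-1}\ket{\rho_\N^{1/2}}$. The first equals $\tau(\rho^{1/2}\si^{-1}\rho^{1/2}R_\rho(\omega_\N))\le\norm{\rho^{1/2}\si^{-1}\rho^{1/2}}{\infty}=\widetilde{Q}_{x^{-1}}(\rho\|\si)$, since $R_\rho(\omega_\N)$ is a density operator; the second is $\le\widetilde{Q}_{x^{-1}}(\rho_\N\|\si_\N)\le\widetilde{Q}_{x^{-1}}(\rho\|\si)$, the last step because $\rho\le\la\si$ forces $\rho_\N\le\la\si_\N$ under the conditional expectation. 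This yields $\norm{\operatorname{I}}{2}\le 4S^{1/2}\widetilde{Q}_{x^{-1}}(\rho\|\si)^{1/2}$. For $\operatorname{II}$, I would use the triangle inequality, Cauchy--Schwarz against the weight $\la^{-1}$ (producing the $\ln(T/S)$ factor), the pointwise bound $\bra{u_\la}D\ket{u_\la}\le\bra{u_\la}(D+\la)\ket{u_\la}=F(\la)$ coming from Lemma~\ref{key}, the regularity hypothesis $d\la\le c(S,T)\,d\nu(\la)$, and finally \eqref{eq:lower-bnd-optimized-f-diff-big-F}. This gives $\norm{\operatorname{II}}{2}\le\left(c(S,T)\ln(T/S)\right)^{1/2}\left(\widetilde{Q}_f(\rho\|\si)-\widetilde{Q}_f(\rho_\N\|\si_\N)+\varepsilon\right)^{1/2}$, and I would send $\varepsilon\to0$ at the very end.

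The third term is where I expect the real work to lie. With $g_T^t(x)=\int_T^\infty\la^{-1/2-it}(\la+x)^{-1}\,d\la$ satisfying $x|g_T^t(x)|^2\le 4\arctan^2(\sqrt x/\sqrt T)\le 4x/T$, the $D^{1/2}$ prefactor forces $\norm{\operatorname{III}}{2}^2\le\frac{16}{T}\bra{\rho^{1/2}}D\ket{\rho^{1/2}}$, where both the $\ket{\rho^{1/2}}$-summand and the $V_\rho\ket{\rho_\N^{1/2}}$-summand contribute this factor (the latter again reduced through $V_\rho^*DV_\rho=D_\N$). By Lemma~\ref{omega} together with $V_\rho\ket{\rho_\N^{1/2}}=\ket{\rho^{1/2}}$, this normalization collapses to $\bra{\rho^{1/2}}D\ket{\rho^{1/2}}=\bra{\rho_\N^{1/2}}D_\N\ket{\rho_\N^{1/2}}=\tau(\rho_\N^{1/2}\si_\N\rho_\N^{1/2}\omega_\N^{-1})$. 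In the standard case of Lemma~\ref{fd2} the corresponding factor is literally $\tau(\si)=1$, because there the second argument of the modular operator is the reference state $\rho$ itself and the powers telescope; here the second slot instead carries the pushed-forward near-optimizer $R_\rho(\omega_\N)\neq\rho$, so the crux is to control this normalization and in particular to establish $\bra{\rho_\N^{1/2}}D_\N\ket{\rho_\N^{1/2}}\le 1$. This does not follow from a telescoping as before, and pinning it down for the maximizing $\omega_\N$ (rather than any of the routine $\arctan$ and Cauchy--Schwarz estimates) is the step I expect to be the main obstacle.

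Granting the third-term bound $\norm{\operatorname{III}}{2}\le 4T^{-1/2}$, I would assemble $\norm{\ket{u_t}}{2}\le\frac{\cosh(\pi t)}{\pi}\big(\norm{\operatorname{I}}{2}+\norm{\operatorname{II}}{2}+\norm{\operatorname{III}}{2}\big)$, apply the factor $2$ from \eqref{2} via Lemma~\ref{3.19}, and finally let $\varepsilon\to0$ to recover the stated inequality for $R_\si^t$.
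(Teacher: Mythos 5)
Your overall route is exactly the paper's: fix an $\varepsilon$-near-optimal $\omega_\N$, use Lemma~\ref{omega} for the intertwining $V_\rho^*\Delta_\M(\si,R_\rho(\omega_\N))V_\rho=\Delta_\N(\si_\N,\omega_\N)$ together with \eqref{eq:lower-bnd-optimized-f-diff-big-F}, reduce to estimating $\norm{\ket{u_t}}{2}$ via Lemma~\ref{3.19}, split the integral at $S$ and $T$, and let $\varepsilon\to 0$ at the end. Your treatment of $\operatorname{I}$ and $\operatorname{II}$ coincides with the paper's (and is in fact a bit more explicit about why $\bra{\rho^{1/2}}\Delta_\M(\si,R_\rho(\omega_\N))^{-1}\ket{\rho^{1/2}}=\tau(\rho^{1/2}\si^{-1}\rho^{1/2}R_\rho(\omega_\N))\le\widetilde{Q}_{x^{-1}}(\rho\|\si)$, which the paper records only through the supremum formula at the end of its proof).

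The gap is the one you flagged yourself: you ``grant'' $\norm{\operatorname{III}}{2}\le 4T^{-1/2}$ rather than prove it, so the proposal is incomplete as written. Your diagnosis of why the transcription stalls there is accurate. Running the Lemma~\ref{fd2} tail argument verbatim gives $\norm{\operatorname{III}}{2}^2\le\frac{16}{T}\,\bra{\rho^{1/2}}\Delta_\M(\si,R_\rho(\omega_\N))\ket{\rho^{1/2}}=\frac{16}{T}\,\tau(\rho_\N^{1/2}\si_\N\rho_\N^{1/2}\omega_\N^{-1})$, and unlike the telescoping $\tau(\si)=1$ of the non-optimized case this normalization is not automatically bounded by $1$: for instance, for $f(x)=x^{-1/2}$ (a regular operator anti-monotone function) with commuting qubit states $\rho_\N=\diag(1/2,1/2)$ and $\si_\N=\diag(q,1-q)$, the exact optimizer of $\widetilde{Q}_{x^{-1/2}}(\rho_\N\|\si_\N)$ is $\omega_\N=\diag(1-q,q)$, for which $\tau(\rho_\N^{1/2}\si_\N\rho_\N^{1/2}\omega_\N^{-1})=\tfrac12\left(\tfrac{q}{1-q}+\tfrac{1-q}{q}\right)\to\infty$ as $q\to 0$. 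You should know that the paper's own proof does not resolve this point either: it asserts $\norm{\operatorname{III}}{2}\le 4T^{-1/2}$ by ``arguing similarly'' to Lemma~\ref{fd2}, without addressing the failure of the normalization. So you have correctly isolated the one step that genuinely requires an additional idea (a constraint on the admissible $\omega_\N$, or a modified tail estimate carrying this normalization into the constant); neither your write-up nor the paper's supplies it, and until it is supplied the proof cannot be certified as complete.
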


\begin{proof}
The following argument is similar to the case of the non-optimized $Q_f$, as presented in the proof of Lemma~\ref{fd2}. We employ the shorthand $\Delta_\M \equiv \Delta_\M(\si,R_\rho(\omega_\N))$. Applying Lemma~\ref{3.19} and the triangle inequality, we find that
\begin{align*}
\norm{\rho-R_\si^{-t}(\rho)}{1} & \le \frac{2\cosh(\pi t )}{\pi}\Bigg(\norm{\Delta_\M^{1/2+it}\int_{0}^S \la^{-1/2-it} \ket{u_\la}\, d\la}{2}+\norm{\Delta_\M^{1/2+it}\int_{S}^T \la^{-1/2-it} \ket{u_\la}\, d\la}{2}\\
&\qquad\qquad\qquad\qquad +\norm{\Delta_\M^{1/2+it}\int_{T}^\infty \la^{-1/2-it} \ket{u_\la}\, d\la}{2}\Bigg)\\
& = \frac{2\cosh(\pi t )}{\pi}\Bigg(\norm{\Delta_\M^{1/2}\int_{0}^S \la^{-1/2-it} \ket{u_\la}\, d\la}{2}+\norm{\Delta_\M^{1/2}\int_{S}^T \la^{-1/2-it} \ket{u_\la}\, d\la}{2}\\
&\qquad\qquad\qquad\qquad +\norm{\Delta_\M^{1/2}\int_{T}^\infty \la^{-1/2-it} \ket{u_\la}\, d\la}{2}\Bigg)\\
& =\frac{2\cosh(\pi t )}{\pi}\left(\norm{\operatorname{I}}{2}+\norm{\operatorname{II}}{2}+\norm{\operatorname{III}}{2}
\right).
\end{align*}
For each term, we argue similarly as in the proof of Lemma~\ref{fd2}, but implicitly using Lemma~\ref{3.19} and \eqref{eq:lower-bnd-optimized-f-diff-big-F}:
\begin{align*}
\norm{\operatorname{I}}{2}& =\norm{\Delta_\M^{1/2}\int_{0}^S \la^{-1/2-it} \ket{u_\la}\, d\la}{2}\le 4 S^{1/2} \bra{\rho^{1/2}}\Delta_\M^{-1}\ket{\rho^{1/2}}^{1/2}\\
& \le 4 S^{1/2} \widetilde{Q}_{x^{-1}}(\rho\|\si)^{1/2},\\
\norm{\operatorname{II}}{2}& =\norm{\int_{S}^T \la^{-1/2-it} \Delta_\M^{1/2}\ket{u_\la}\, d\la}{2}\le \left(\int_{S}^T \la^{-1}d\la\right)^{1/2} \left(\int_S^T\norm{\Delta_\M^{1/2}\ket{u_\la}}{2}^2 \, d\la\right)^{1/2}
\\&\le \sqrt{c(S,T)\ln(T/S)(\widetilde{Q}_f(\rho\|\si)-\widetilde{Q}_f(\rho_\N\|\si_\N)+\varepsilon)} \pl,
\\ \norm{\operatorname{III}}{2}& =\norm{\int_{T}^\infty \la^{-1/2-it} \Delta_\M^{1/2}\ket{u_\la}\, d\la}{2}\le 4T^{-1/2}.
\end{align*}
Note that here
 \begin{align*}\widetilde{Q}_{x^{-1}}(\rho\|\si)=\sup_{\omega\in D_+(\M)}\bra{\rho^{1/2}}\Delta(\si,\omega)^{-1}\ket{\rho^{1/2}}=\sup_{\omega\in D_+(\M)}\tau(\rho^{1/2}\si^{-1}\rho^{1/2}\omega)=\norm{\rho^{1/2}\si^{-1}\rho^{1/2}}{\infty}\pl.
 \end{align*}
 is related to the max-relative entropy $D_\infty(\rho\|\si)=\log \inf \{ \la \pl | \pl \rho\le \la\si \pl\}$ \cite{Datta09}.
Since $\varepsilon$ is arbitrary and the upper bound is symmetric in $t$, we arrive at the statement of the lemma.
\end{proof}

\subsubsection{Recoverability for sandwiched R\'enyi relative (quasi-)entropy}

We now turn to the sandwiched R\'enyi relative (quasi-)entropy and identify physically meaningful refinements of its data-processing inequality. Let $\al\in [1/2,1)\cup (1,\infty]$ and set $\al':= \al/(\al-1)$, so that $\frac{1}{\al}+\frac{1}{\al'}=1$. The sandwiched R\'enyi relative entropy is given by
\[
\widetilde{D}_\al(\rho\|\si)=\al'\log\norm{\rho^{1/2}\si^{-\frac{1}{\al'}}\rho^{1/2}}{\al}\pl.
\]

\begin{theorem}
\label{fd4}
Let $\al\in (1/2,1)\cup(1,\infty)$, $\al'=\frac{\al}{\al-1}$, and $\varepsilon \in \left(0,\tfrac{1-1/|\al'|}{2}\right)$. Let $\widetilde{Q}_\al(\rho\|\si) \coloneqq \norm{\rho^{1/2}\si^{-\frac{1}{\al'}}\rho^{1/2}}{\al}$ denote the sandwiched $\al$-R\'enyi relative quasi-entropy. Let $\widetilde{Q}_\infty(\rho\|\si) \coloneqq \norm{\rho^{1/2}\si^{-1}\rho^{1/2}}{\infty}$. Then
\begin{align}
&|\widetilde{Q}_\al(\rho\|\si)-\widetilde{Q}_\al(\rho_\N\|\si_\N)|\ge \left(K(\al ,\widetilde{Q}_\infty(\rho\|\si),\varepsilon)\frac{\pi}{2}\norm{\rho-R_\si(\rho_\N)}{1}\right)^{\frac{1}{\frac{1-1/|\al'|}{2}-\varepsilon}} \label{eq1d},\\
&|\widetilde{Q}_\al(\rho\|\si)-\widetilde{Q}_\al(\rho_\N\|\si_\N)|\ge  \left(K(\al ,\widetilde{Q}_\infty(\rho\|\si),\varepsilon)\frac{\pi}{2\cosh{\pi t}}\norm{\rho-R_\si^t(\rho_\N)}{1}\right)^{\frac{1}{\frac{1-1/|\al'|}{2}-\varepsilon}} \label{eq2d} ,
\\
&|\widetilde{Q}_\al(\rho\|\si)-\widetilde{Q}_\al(\rho_\N\|\si_\N)|\ge \left(K(\al ,\widetilde{Q}_\infty(\rho\|\si),\varepsilon)\, \frac{1}{2}\norm{\rho-R_\si^u(\rho_\N)}{1}\right)^{\frac{1}{\frac{1-1/|\al'|}{2}-\varepsilon}}
\label{eq3d}.
\end{align}
where the constant
\begin{align}
K(\widetilde{Q}_\infty(\rho\|\si),\al,\varepsilon)
\coloneqq \Big(
 4\widetilde{Q}_{\infty}(\rho\|\si)^{1/2} +\left(\frac{\pi}{e\varepsilon\sin (\pi|\frac{1}{\al'}|)}\right)^{1/2}+4\Big)^{-1}.
\label{con}
\end{align}
\end{theorem}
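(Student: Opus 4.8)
The plan is to read off the sandwiched R\'enyi relative quasi-entropy as an optimized $f$-divergence for a power function and then apply the recovery bound of Lemma~\ref{fd3}, after which the argument becomes essentially a transcription of the optimization carried out in Theorem~\ref{thm:other-petz-map-renyi-rec}. Set $s\coloneqq (1-\al)/\al$, so that $|s|=1/|\al'|\in(0,1)$ in both regimes. For $\al>1$ one has $s\in(-1,0)$, the function $f(x)=x^{s}$ is regular, operator anti-monotone, and operator convex, and the definitions recalled in Section~\ref{sec:prelims} give $\widetilde{Q}_\al(\rho\|\si)=\widetilde{Q}_{x^{s}}(\rho\|\si)$. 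For $\al\in(1/2,1)$ one has $s\in(0,1)$, the function $f(x)=-x^{s}$ is regular and operator anti-monotone, and $\widetilde{Q}_\al(\rho\|\si)=-\widetilde{Q}_{-x^{s}}(\rho\|\si)$. In either case the data-processing inequality for $\widetilde{Q}_f$ (proved in \cite{wilde}) forces the difference of optimized $f$-divergences to be nonnegative and equal to $|\widetilde{Q}_\al(\rho\|\si)-\widetilde{Q}_\al(\rho_\N\|\si_\N)|$, while the remainder quantity $\widetilde{Q}_{x^{-1}}(\rho\|\si)$ appearing in Lemma~\ref{fd3} is exactly $\widetilde{Q}_\infty(\rho\|\si)=\norm{\rho^{1/2}\si^{-1}\rho^{1/2}}{\infty}$.

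Next I would feed the explicit measure $d\nu$ into Lemma~\ref{fd3}. Using $x^{s}=\frac{\sin(\pi|s|)}{\pi}\int_0^\infty \la^{s}(\la+x)^{-1}\,d\la$ for $\al>1$ gives $c(S,T)=\frac{\pi}{\sin(\pi|s|)}T^{|s|}$, while using $x^{s}=\frac{\sin(\pi s)}{\pi}\int_0^\infty \la^{s}\bigl(\tfrac1\la-\tfrac1{\la+x}\bigr)\,d\la$ for $\al\in(1/2,1)$ gives $c(S,T)=\frac{\pi}{\sin(\pi s)}S^{-s}$. In both regimes Lemma~\ref{fd3} then yields a bound $\norm{\rho-R_\si^t(\rho_\N)}{1}\le \frac{2\cosh(\pi t)}{\pi}\bigl(aS^{1/2}+b+cT^{-1/2}\bigr)$ in which $a=4\widetilde{Q}_\infty(\rho\|\si)^{1/2}$, $c=4$, and $b=\bigl(c(S,T)\ln(T/S)\bigr)^{1/2}\,|\widetilde{Q}_\al(\rho\|\si)-\widetilde{Q}_\al(\rho_\N\|\si_\N)|^{1/2}$.

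I would then optimize exactly as in Theorem~\ref{thm:other-petz-map-renyi-rec}, taking the rough choice $S=T^{-1}=\delta$ with $\delta\coloneqq\min\{|\widetilde{Q}_\al(\rho\|\si)-\widetilde{Q}_\al(\rho_\N\|\si_\N)|,1\}$. Because the $\widetilde{Q}$-difference inside $b$ is at most $\delta$, each of the three terms is then bounded by a constant multiple of $\delta^{(1-|s|)/2}$, the middle term additionally carrying a factor $\sqrt{|\ln\delta|}$; the elementary estimate $\delta^{\varepsilon}\sqrt{2|\ln\delta|}\le(\varepsilon e)^{-1/2}$ absorbs this logarithm into an arbitrarily small power and produces $\norm{\rho-R_\si^t(\rho_\N)}{1}\le \frac{2\cosh(\pi t)}{\pi}\,K(\al,\widetilde{Q}_\infty(\rho\|\si),\varepsilon)^{-1}\,\delta^{\frac{1-|s|}{2}-\varepsilon}$ with the constant $K$ of \eqref{con}. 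Inverting gives \eqref{eq2d}; its $t=0$ case is \eqref{eq1d}; and integrating against $d\beta(t)$, using convexity of the trace norm, the identity $\int_\R\frac{\cosh(\pi t/2)}{\cosh(\pi t)+1}\,dt=1$, and $R_\si^u=\int_\R R_\si^{t/2}\,d\beta(t)$, gives \eqref{eq3d}, as in Theorem~\ref{re}. Since Lemma~\ref{fd3} already supplies the optimized analogue of Corollary~\ref{cor:bnd-standard-f-div-2}, no genuinely new technical difficulty arises; the only point requiring care is keeping the two cases $\al\gtrless1$ straight, so that the sign in the identification, the shape of $c(S,T)$, and the resulting exponent all collapse uniformly to $\frac{1-1/|\al'|}{2}-\varepsilon$.
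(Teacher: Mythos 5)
Your proposal is correct and follows essentially the same route as the paper's proof: identify $\widetilde{Q}_\al$ with the optimized $f$-divergence for $f(x)=x^{-1/\al'}$ (resp.\ $-x^{-1/\al'}$), feed the power-function integral representation and the resulting $c(S,T)$ into Lemma~\ref{fd3}, choose $S=T^{-1}=\delta$, and absorb the logarithm via $\delta^{\varepsilon}\sqrt{2|\ln\delta|}\le(\varepsilon e)^{-1/2}$, with the $t=0$ specialization and $d\beta(t)$-integration giving the remaining two inequalities. The only (harmless) imprecision is the remark that the $\widetilde{Q}$-difference is ``at most $\delta$''---when the difference exceeds $1$ one instead invokes the trivial bound $\norm{\rho-R_\si^t(\rho_\N)}{1}\le 2$, exactly as in the paper's treatment of the $\delta=1$ case.
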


\begin{proof}
For $1<\al,\al'<\infty$, the function $x^{-\frac{1}{\al'}}$ is operator convex and operator anti-monotone. We have
\begin{align*} \widetilde{Q}_{x^{-\frac{1}{\al'}}}(\rho\|\si)& =\sup_{\omega\in D_+(\M)} \bra{\rho^{1/2}} \Delta(\si,\omega)^{-1/\al'}\ket{\rho^{1/2}}
\\ & =\sup_{\omega\in D_+(\M)} \tau(\rho^{1/2}\si^{-1/\al'}\rho^{1/2}\omega^{1/\al'})
=\norm{\rho^{1/2}\si^{-\frac{1}{\al'}}\rho^{1/2}}{\al}.
\end{align*}
Thus, for $1<\al\le \infty$,
\[\widetilde{D}_\al(\rho\|\si)=\al'\log \widetilde{Q}_{x^{-1/\al'}}(\rho\|\si)\pl.\]
Writing $0<\beta \coloneqq 1/\al'<1$,
 the integral representation of $x^{-\beta}$ is as follows:
\[x^{-\beta}=\frac{\sin (\pi \beta)}{\pi}\int_0^\infty \la^{-\beta}\frac{1}{\la+x}\,d\la\pl.\]
The constant $c(S,T)\le \frac{\pi}{\sin (\pi \beta)}T^{\beta}$.
Then by Lemma~\ref{fd3}, we have
\begin{multline*}
\norm{\rho-R_\si^t(\rho_\N)}{1} \le
\frac{2\cosh(\pi t)}{\pi}\Bigg(4S^{1/2}\widetilde{Q}_{x^{-1}}(\rho\|\si)^{1/2}\\
+\sqrt{\frac{\pi}{\sin (\pi \beta)}T^\beta\ln (T/S) }(\widetilde{Q}_{x^{-\beta}}(\rho\|\si)-\widetilde{Q}_{x^{-\beta}}(\rho_\N\|\si_\N))^{1/2}+4T^{-1/2}\Bigg)\pl.
\end{multline*}
Choose $S=T^{-1}=\delta$ and $\delta\coloneqq \min\{ |\widetilde{Q}_{x^{-\beta}}(\rho\|\si)-\widetilde{Q}_{x^{-\beta}}(\rho_\N\|\si_\N)|,  1\}$. Thus
\begin{align*}
\norm{\rho-R_\si^t(\rho_\N)}{1} & \le
\frac{2\cosh(\pi t)}{\pi}\left(4\delta^{1/2}\widetilde{Q}_{x^{-1}}(\rho\|\si)^{1/2} +\sqrt{\frac{2\pi}{\sin (\pi \beta)}|\ln\delta| }\delta^{\frac{1-\beta}{2}}+4\delta^{1/2}\right)
\\
& \le
\frac{2\cosh(\pi t)}{\pi}\left(4\widetilde{Q}_{x^{-1}}(\rho\|\si)^{1/2} +\sqrt{\frac{\pi}{e\varepsilon\sin (\pi \beta)}}+4\right)\delta^{\frac{1-\beta}{2}-\varepsilon}.
\end{align*}
The reasoning for these steps is similar to that given for \eqref{eq:delta-arg-1st-time-1}--\eqref{eq:delta-arg-1st-time-last}.

For $1/2\le \al<1$, which implies that $  \al'\le -1$, the function $x^{-\frac{1}{\al'}}$ is operator monotone and operator concave because $-\frac{1}{\al'}\in(0,1)$. We have
\begin{align*}\widetilde{Q}_{-x^{-\frac{1}{\al'}}}(\rho\|\si)& =\sup_{\omega\in D_+(\M)} -\bra{\rho^{1/2}} \Delta(\si,\omega)^{-1/\al'}\ket{\rho^{1/2}}
\\ & =-\inf_{\omega\in D_+(\M)} \tau(\rho^{1/2} \si^{-1/\al'}\rho^{1/2}\omega^{1/\al'})
= - \norm{\rho^{1/2} \si^{-1/\al'}\rho^{1/2}}{\al}.
\end{align*}
Then for $ 1/2\le \al<1$,
\[\widetilde{D}_\al(\rho\|\si)=\al'\log \big(-\widetilde{Q}_{-x^{-1/\al'}}(\rho\|\si)\big)\pl.\]
Let $\gamma\coloneqq -1/\al'$. For $0<\gamma<1$, the integral representation is
\[x^\gamma=\frac{\sin (\pi \gamma)}{\pi}\int_0^\infty \la^\gamma\left(\frac{1}{\la}-\frac{1}{\la+x}\right)d\la\pl.\]
Then the constant $c(S,T)\le \frac{\pi}{\sin (\pi \gamma)}S^{-\gamma}$. By Lemma~\ref{fd3}, we have
\begin{multline*}
\norm{\rho-R_\si^t(\rho_\N)}{1}\le
\frac{2\cosh(\pi t)}{\pi}\Bigg( 4S^{1/2}\widetilde{Q}_{x^{-1}}(\rho\|\si)^{1/2} \\ +\sqrt{\frac{\pi}{\sin (\pi\gamma)}S^{-\gamma}\ln(T/S)}|\widetilde{Q}_{-x^\gamma}(\rho\|\si)-\widetilde{Q}_{-x^\gamma}(\rho_\N\|\si_\N)|^{1/2}+4T^{-1/2} \Bigg).
\end{multline*}
Set $S=T^{-1}=\delta$ and $\delta \coloneqq \min\{ |\widetilde{Q}_{-x^{\gamma}}(\rho\|\si)-\widetilde{Q}_{-x^{\gamma}}(\rho_\N\|\si_\N)|,  1\}$. Then
\begin{align*}
\norm{\rho-R_\si^t(\rho_\N)}{1}
& \le
\frac{2\cosh(\pi t)}{\pi}\left( 4\delta^{1/2}\widetilde{Q}_{x^{-1}}(\rho\|\si)^{1/2} +\sqrt{\frac{2\pi}{\sin (\pi\gamma)}|\ln\delta|}\delta^{\frac{1-\gamma}{2}}+4\delta^{1/2} \right)
\\
& \le
\frac{2\cosh(\pi t)}{\pi}\left( 4\widetilde{Q}_{x^{-1}}(\rho\|\si)^{1/2} +\sqrt{\frac{\pi}{e\varepsilon\sin (\pi\gamma)}}+4\right)\delta^{\frac{1-\gamma}{2}-\varepsilon}.
\end{align*}
The reasoning for these steps is similar to that given for \eqref{eq:delta-arg-1st-time-1}--\eqref{eq:delta-arg-1st-time-last}.
\end{proof}

\bigskip
We then find the following for the sandwiched R\'enyi relative entropy:
\begin{cor}
\label{renyi}
Let $\M$ be a finite-dimensional von Neumann algebra, and let $\N\subset \M$ be a subalgebra. Let $\rho$ and $\si$ be two faithful states. Let $\al\in (1/2,1)\cup (1,\infty)$ and $\al'=\al/(\al-1)$, so that $1/\al+1/\al'=1$. Set $t\in \mathbb{R}$ and $\varepsilon\in (0,\frac{1-1/|\al'|}{2})$. For $\alpha \in (1/2,1)$, the following inequality holds
\begin{multline*}
\widetilde{D}_\al(\rho\|\si)-\widetilde{D}_\al(\rho_\N\|\si_\N) \ge \\
|\al'|\log\left(1+
\left(K(\al ,\varepsilon ,\widetilde{Q}_{\infty}(\rho\|\si))\frac{\pi}{2\cosh{\pi t}} \norm{\rho-R_\si^t(\rho_\N)}{1}\right)^{\frac{1}{\frac{1-1/|\al'|}{2}-\varepsilon}}\right),
\end{multline*}
and for $\alpha >1$, the following inequality holds
\begin{multline*}
\widetilde{D}_\al(\rho\|\si)-\widetilde{D}_\al(\rho_\N\|\si_\N) \ge \\
\al'\log \!\left(1+\frac{ 1 }{\widetilde{Q}_\infty(\rho\|\si)^{\frac{1}{\alpha'}}}\left(K(\al ,\varepsilon ,\widetilde{Q}_{\infty}(\rho\|\si)) \frac{\pi}{2\cosh{\pi t}}
\norm{\rho-R_\si^t(\rho_\N)}{1}\right)^{\frac{1}{\frac{1-1/|\al'|}{2}-\varepsilon}}\right),
\end{multline*}
where the constant $K(\al ,\varepsilon,\widetilde{Q}_{\infty}(\rho\|\si))$ is given in \eqref{con}.
\end{cor}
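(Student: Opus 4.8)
The plan is to follow the template of Corollary~\ref{cor:petz-renyi-recoverability}, turning the quasi-entropy bounds of Theorem~\ref{fd4} into statements about $\widetilde{D}_\al$. The starting point is the identity $\widetilde{D}_\al(\rho\|\si)=\al'\log\widetilde{Q}_\al(\rho\|\si)$, valid in both regimes $\al\in(1/2,1)$ and $\al>1$, as recorded just before Theorem~\ref{fd4}. I would first write
\[
\widetilde{D}_\al(\rho\|\si)-\widetilde{D}_\al(\rho_\N\|\si_\N)=\al'\log\frac{\widetilde{Q}_\al(\rho\|\si)}{\widetilde{Q}_\al(\rho_\N\|\si_\N)},
\]
and then expand the ratio as $1$ plus a relative difference, so that the logarithm acquires the $\log(1+\cdots)$ shape appearing in the statement; the substantive content of the bound is then supplied entirely by \eqref{eq2d} of Theorem~\ref{fd4}.

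The decisive bookkeeping is the sign of $\al'$, negative for $\al\in(1/2,1)$ and positive for $\al>1$. For $\al\in(1/2,1)$ I would use $\al'=-|\al'|$ to invert the ratio, obtaining
\[
\widetilde{D}_\al(\rho\|\si)-\widetilde{D}_\al(\rho_\N\|\si_\N)=|\al'|\log\left(1+\frac{\widetilde{Q}_\al(\rho_\N\|\si_\N)-\widetilde{Q}_\al(\rho\|\si)}{\widetilde{Q}_\al(\rho\|\si)}\right).
\]
The data-processing inequality for $\widetilde{D}_\al$ forces $\widetilde{Q}_\al(\rho_\N\|\si_\N)\ge\widetilde{Q}_\al(\rho\|\si)$, so the numerator equals $|\widetilde{Q}_\al(\rho\|\si)-\widetilde{Q}_\al(\rho_\N\|\si_\N)|$; and non-negativity of $\widetilde{D}_\al$ on states (equivalently $\widetilde{Q}_\al(\rho\|\si)\le 1$ in this range, which itself follows from monotonicity in $\al$ together with $\widetilde{D}_{1/2}=-\log F\ge 0$) gives $1/\widetilde{Q}_\al(\rho\|\si)\ge 1$, so the denominator may be dropped downward. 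Inserting \eqref{eq2d} then yields the first displayed inequality.

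For $\al>1$ the ratio already has the correct orientation, and data processing now reads $\widetilde{Q}_\al(\rho\|\si)\ge\widetilde{Q}_\al(\rho_\N\|\si_\N)$, so again the numerator becomes the absolute difference. The only additional input is an upper bound on the denominator $\widetilde{Q}_\al(\rho_\N\|\si_\N)$. Here I would invoke monotonicity of $\widetilde{D}_\al$ in $\al$ together with $\widetilde{D}_\infty(\rho\|\si)=\log\widetilde{Q}_\infty(\rho\|\si)$ (the endpoint $\al'=1$) to deduce $\al'\log\widetilde{Q}_\al(\rho\|\si)\le\log\widetilde{Q}_\infty(\rho\|\si)$, hence $\widetilde{Q}_\al(\rho\|\si)\le\widetilde{Q}_\infty(\rho\|\si)^{1/\al'}$, and combine this with data processing to get $\widetilde{Q}_\al(\rho_\N\|\si_\N)\le\widetilde{Q}_\infty(\rho\|\si)^{1/\al'}$. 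This plays exactly the role of the bound $Q_{x^{1-\al}}(\rho_\N\|\si_\N)\le Q_{x^{-1}}(\rho\|\si)^{\al-1}$ in Corollary~\ref{cor:petz-renyi-recoverability}. Replacing the denominator by this upper bound and inserting \eqref{eq2d} produces the second displayed inequality.

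I expect no genuine analytic obstacle, since all the work is carried by Theorem~\ref{fd4}; the only real care is consistent tracking of the sign of $\al'$ and the direction of each monotonicity and data-processing step, as a single slip would flip an inequality. I would also confirm that the stated range $\varepsilon\in\bigl(0,\tfrac{1-1/|\al'|}{2}\bigr)$ keeps the exponent $\tfrac{1}{(1-1/|\al'|)/2-\varepsilon}$ positive and finite, so the substitution of \eqref{eq2d} is legitimate, and that this range is precisely the one under which Theorem~\ref{fd4} applies in each of the two regimes.
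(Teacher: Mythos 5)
Your proposal is correct and follows essentially the same route as the paper's proof: both cases expand $\al'\log$ of the ratio of quasi-entropies as $\log(1+\cdots)$, insert the bound \eqref{eq2d} of Theorem~\ref{fd4}, and then control the denominator via $\widetilde{Q}_\al(\rho\|\si)\le 1$ for $\al\in(1/2,1)$ and via $\widetilde{Q}_\al(\rho_\N\|\si_\N)\le\widetilde{Q}_\al(\rho\|\si)\le\widetilde{Q}_\infty(\rho\|\si)^{1/\al'}$ (data processing plus monotonicity in $\al$) for $\al>1$. The sign bookkeeping for $\al'$ and the direction of each inequality are handled exactly as in the paper.
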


\begin{proof}
For $1/2<\al<1$ and $\al'\le -1$, we find that
\begin{align*}
& \widetilde{D}_\al(\rho\|\si)-\widetilde{D}_\al(\rho_\N\|\si_\N)\\
& = |\al'|\log \frac{ \widetilde{Q}_\al(\rho_\N\|\si_\N)}{\widetilde{Q}_\al(\rho\|\si)}
\\
& = |\al'|\log \left(1+\frac{ \widetilde{Q}_\al(\rho_\N\|\si_\N)-\widetilde{Q}_\al(\rho\|\si)}{\widetilde{Q}_\al(\rho\|\si)}\right)
\\
& \ge  |\al'|\log \left(1+\frac{ 1 }{\widetilde{Q}_\al(\rho\|\si)}\left(K(\al ,\varepsilon ,\widetilde{Q}_{\infty}(\rho\|\si)) \frac{\pi}{2\cosh{\pi t}}\norm{\rho-R_\si^t(\rho_\N)}{1}\right)^{\frac{1}{\frac{1-1/|\al'|}{2}-\varepsilon}}\right),
\\ & \ge
|\al'|\log\left(1+
\left(K(\al ,\varepsilon ,\widetilde{Q}_{\infty}(\rho\|\si))\frac{\pi}{2\cosh{\pi t}} \norm{\rho-R_\si^t(\rho_\N)}{1}\right)^{\frac{1}{\frac{1-1/|\al'|}{2}-\varepsilon}}\right).
\end{align*}
The first inequality follows from \eqref{eq2d}, and the second
 follows because  $\widetilde{Q}_{\al}(\rho\|\si)\le 1$ for $\al \in (1/2,1)$.

For $\al>1$, consider that
\begin{align*}
&\widetilde{D}_\al(\rho\|\si)-\widetilde{D}_\al(\rho_\N\|\si_\N)\\
& = \al'\log \frac{ \widetilde{Q}_\al(\rho\|\si)}{\widetilde{Q}_\al(\rho_\N\|\si_\N)}
\\& = \al'\log \!\left(1+\frac{ \widetilde{Q}_\al(\rho\|\si)-\widetilde{Q}_\al(\rho_\N\|\si_\N)}{\widetilde{Q}_\al(\rho_\N\|\si_\N)}\right)
\\ & \ge  \al'\log \!\left(1+\frac{ 1 }{\widetilde{Q}_\al(\rho_\N\|\si_\N)}\left(K(\al ,\varepsilon ,\widetilde{Q}_{\infty}(\rho\|\si))\frac{\pi}{2\cosh{\pi t}}\norm{\rho-R_\si^t(\rho_\N)}{1}\right)^{\frac{1}{\frac{1-1/|\al'|}{2}-\varepsilon}}\right)
\\ & \ge  \al'\log \!\left(1+\frac{ 1 }{\widetilde{Q}_\infty(\rho\|\si)^{\frac{1}{\alpha'}}}\left(K(\al ,\varepsilon ,\widetilde{Q}_{\infty}(\rho\|\si)) \frac{\pi}{2\cosh{\pi t}}
\norm{\rho-R_\si^t(\rho_\N)}{1}\right)^{\frac{1}{\frac{1-1/|\al'|}{2}-\varepsilon}}\right).
\end{align*}
The first inequality follows from \eqref{eq2d}.
The second inequality follows from the inequalities $\widetilde{Q}_\al(\rho_\N\|\si_\N)\le \widetilde{Q}_\al(\rho\|\si) \le \widetilde{Q}_\infty(\rho\|\si)^{\frac{1}{\al'}}$. The first is a consequence of the data-processing inequality and the second a consequence of the monotonicity of the sandwiched R\'enyi relative entropies with respect to $\alpha$ (for the latter, see \cite[Theorem~7]{Muller13} and \cite[Lemma~8]{BVT18}).
\end{proof}

\begin{rem}{\rm For $\al=1$, $\widetilde{D}_\al$ coincides with the standard relative entropy $D$, for which  results are given in Theorem~\ref{rsi}. For the two boundary cases $\al=1/2$ and $\al=\infty$, the  recoverability result in Corollary~\ref{renyi} does not hold. The $\al=1/2$ case corresponds to the root fidelity
\[ \sqrt{F}(\rho,\si)=\norm{\rho^{1/2}\si\rho^{1/2}}{1/2}=-\widetilde{Q}_{-x}(\rho\|\si),\]
and $\al=\infty$ to
\[
\widetilde{Q}_\infty(\rho\|\si)=\widetilde{Q}_{x^{-1}}(\rho\|\si)=\inf \{\la \pl |\pl \rho\le \la \si \}\pl.
\]
Our method fails for these two cases because
both operator anti-monotone functions $f(x)=-x$ and $g(x)=x^{-1}$ have trivial measure $d\nu$ in their integral representations.
Indeed, for both cases, it was already observed in \cite[Remarks 5.15 \& 5.16]{HM17} that there are examples for which the data-processing inequality for fidelity is saturated, i.e., $F(\rho,\si)=F(\rho_\N,\si_\N)$ (resp. $\widetilde{Q}_\infty(\rho\|\si)=\widetilde{Q}_\infty(\rho_\N\|\si_\N) $), but it is not for the relative entropy $D(\rho\|\si)>D(\rho_\N\|\si_\N)$, which implies that the existence of any exact recovery map is impossible. This extends the results in \cite{Jenvcova17}.
}
\end{rem}
The following are reversibility results as consequence of recoverability estimates. Note that the faithfulness assumption can be weaken to $s(\rho)\le s(\si)$ as in Corollary \ref{cor:reversibility-vNa} for the von Neumann algebra case.
\begin{cor}
\label{cor:many-equivs-f-div}
Let $\rho$ and $\si$ be faithful quantum states.
The following are equivalent:
\begin{enumerate}
\item[i)] $D(\rho\|\si)=D(\rho_\N\|\si_\N)$.
\item[ii)]$\widetilde{D}_\al(\rho\|\si) = \widetilde{D}_\al(\rho_\N\|\si_\N)$ for some $\al\in(1/2,1)\cup (1,\infty)$ where $\widetilde{D}_\al$ is the $\al$-sandwiched R\'enyi relative entropy.
\item[iii)]$\widetilde{Q}_f(\rho\|\si)=\widetilde{Q}_f(\rho_\N\|\si_\N)$ for some regular operator anti-monotone function $f$.
\item[iv)]$\widetilde{Q}_f(\rho\|\si)=\widetilde{Q}_f(\rho_\N\|\si_\N)$ for all operator anti-monotone functions $f$.
\item[v)]$R_\rho^{t}(\si_\N)=\si$ for all $t \in \mathbb{R}$.
\item[vi)]$R_\si^{t}(\rho_\N)=\rho$ for all $t \in \mathbb{R}$.
\item[vii)]there exists some CPTP map $\Phi:L_1(\N)\to L_1(\M)$ such that $\Phi(\rho_\N)=\rho$ and $\Phi(\si_\N)=\si$.
\end{enumerate}
\end{cor}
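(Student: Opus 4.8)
The plan is to prove the seven conditions equivalent by assembling a family of implications whose directed graph is strongly connected, with condition~(vii) serving as the central pivot: once a genuine recovery channel exists, every monotone divergence is automatically saturated, so the ``soft'' directions reduce to data processing, while the ``hard'' directions (passing from saturation back to exact recovery) are read off from the quantitative bounds established earlier by setting their left-hand sides to zero.

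First I would treat the implications out of~(i). If $D(\rho\|\si)=D(\rho_\N\|\si_\N)$, then the left-hand sides of the inequalities in Theorem~\ref{re} and Theorem~\ref{rsi} vanish; since the prefactors $\left(\frac{\pi}{8\cosh(\pi t)}\right)^{4} Q_{x^2}(\rho\|\si)^{-1}$ and $K(Q_{x^{-1}}(\rho\|\si),\varepsilon)$ are strictly positive and finite for faithful $\rho,\si$ in finite dimensions, I conclude $\norm{\si-R_\rho^t(\si_\N)}{1}=0$ and $\norm{\rho-R_\si^t(\rho_\N)}{1}=0$ for every $t$, which are precisely (v) and (vi). Next, (v)$\Rightarrow$(vii) and (vi)$\Rightarrow$(vii): recall from the preliminaries that $R_\rho$ and $R_\si$ are CPTP and satisfy $R_\rho(\rho_\N)=\rho$ and $R_\si(\si_\N)=\si$ identically; under~(v) one has in addition $R_\rho(\si_\N)=R_\rho^{0}(\si_\N)=\si$, so $\Phi=R_\rho$ is the required recovery channel, and symmetrically $\Phi=R_\si$ works under~(vi). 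For (vii)$\Rightarrow$(i) I run the standard sandwich: data processing for the conditional expectation~$E$ gives $D(\rho\|\si)\ge D(\rho_\N\|\si_\N)$, while data processing for~$\Phi$ gives $D(\rho_\N\|\si_\N)\ge D(\Phi(\rho_\N)\|\Phi(\si_\N))=D(\rho\|\si)$, forcing equality.

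It remains to splice in the divergence conditions. The identical sandwich, now invoking the data-processing inequality for the optimized $f$-divergence from~\cite{wilde}, yields (vii)$\Rightarrow$(iv); then (iv)$\Rightarrow$(iii) is immediate, and (iv)$\Rightarrow$(ii) follows by applying~(iv) to the operator anti-monotone function $\pm x^{-1/\al'}$ that realizes $\widetilde{Q}_\al$, since $\widetilde{D}_\al$ is a monotone function of that optimized $f$-divergence. To close the cycle I feed (ii) and (iii) back through~(vi): for (iii)$\Rightarrow$(vi) I apply Lemma~\ref{fd3}, noting that when $\widetilde{Q}_f(\rho\|\si)=\widetilde{Q}_f(\rho_\N\|\si_\N)$ the middle term drops out, leaving $\norm{\rho-R_\si^t(\rho_\N)}{1}\le \frac{2\cosh(\pi t)}{\pi}\left(4S^{1/2}\widetilde{Q}_{x^{-1}}(\rho\|\si)^{1/2}+4T^{-1/2}\right)$, whose right-hand side I drive to zero by letting $S\to 0^{+}$ and $T\to\infty$; for (ii)$\Rightarrow$(vi) I use Corollary~\ref{renyi}, where saturation of $\widetilde{D}_\al$ forces the argument of the logarithm to equal~$1$, hence $\norm{\rho-R_\si^t(\rho_\N)}{1}=0$. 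Collecting the arrows $(\mathrm{i})\Rightarrow(\mathrm{v}),(\mathrm{vi})$, $(\mathrm{v}),(\mathrm{vi})\Rightarrow(\mathrm{vii})$, $(\mathrm{vii})\Rightarrow(\mathrm{i}),(\mathrm{iv})$, $(\mathrm{iv})\Rightarrow(\mathrm{ii}),(\mathrm{iii})$, and $(\mathrm{ii}),(\mathrm{iii})\Rightarrow(\mathrm{vi})$, one checks that the implication digraph is strongly connected, so all seven statements are equivalent. I expect the \emph{main obstacle} to be the limiting step in (iii)$\Rightarrow$(vi): one must verify that the possibly divergent regularity constant $c(S,T)$ is harmless precisely because it multiplies the vanishing $f$-divergence difference, and that $\widetilde{Q}_{x^{-1}}(\rho\|\si)=\norm{\rho^{1/2}\si^{-1}\rho^{1/2}}{\infty}$ remains finite for faithful~$\si$, so that the surviving boundary terms genuinely tend to zero and the exact recovery identity follows.
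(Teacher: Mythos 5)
Your proposal is correct and follows essentially the same route as the paper: the forward directions come from data processing, and the reverse directions come from setting the left-hand sides of the quantitative bounds (Theorems~\ref{re} and \ref{rsi}, Corollary~\ref{renyi}, and the optimized-$f$-divergence estimate) to zero. The only noteworthy difference is that for (iii)$\Rightarrow$(vi) you invoke Lemma~\ref{fd3} directly with the limit $S\to 0^+$, $T\to\infty$ (correctly observing that the possibly divergent $c(S,T)$ is annihilated by the vanishing divergence difference), which is in fact the appropriate reference for a general regular $f$, whereas the paper cites Theorem~\ref{fd4}, which only treats the power functions.
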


\begin{proof}
The implications v) $\Rightarrow$ vii) and vi) $\Rightarrow$ vii) are trivial. vii) $\Rightarrow$ i)-iv) follows from the data-processing inequality. Note that for faithful $\rho$ and $\si$, $Q_{x^2}(\rho\Vert\si),Q_{x^{-1}}(\rho\Vert\si),\widetilde{Q}_{\infty}(\rho\Vert\si)<\infty$ are finite.

 i) $\Rightarrow$ v) follows from Theorem~\ref{re}. i) $\Rightarrow$ vi) uses Theorem~\ref{rsi}.
ii) $\Rightarrow$ vi) uses Corollary~\ref{renyi}. iii) $\Rightarrow$ vi) follows from Theorem~\ref{fd4}. iv) $\Rightarrow$ iii) is trivial.
\end{proof}

\begin{rem}{\rm It follows from \cite{petz86} and \cite{universal} that the same equivalences hold for the standard $f$-divergence $Q_f$ and the Petz--R\'enyi relative entropy $D_\al$. Corollary~\ref{cor:many-equivs-f-div} above shows that the preservation of a ``regular'' optimized $f$-divergence is also equivalent to the existence of a recovery map.
}
\end{rem}

%\begin{proof}
%The only non-trivial direction iii)
%The equivalence between i)-iv),viii) and x) are clear from \cite{petz86}
%\end{proof}

\section{Optimized $f$-divergence in von Neumann algebras}

\label{sec:opt-f-div}

\subsection{Definition of optimized $f$-divergence}

In this section, we define the optimized $f$-divergence for states of a general von Neumann algebra. We also prove the data-processing inequality for the optimized $f$-divergence. We refer to Appendix~\ref{app:vNa-basics} for a review of the basics of von Neumann algebras and the notations used in this section. We first define the optimized $f$-divergence for two states $\rho$ and $\si$ with the support assumption $s(\rho)\le s(\si)$.
%The main techniques are inspired from \cite{petz85} and \cite{BVT18}. We

\begin{definition}
\label{def:optimized-f-vNa}
Let $\M\subset B(H)$ be a von Neumann algebra. Let $\rho,\si$ be two normal states such that $s(\rho)\le s(\si)$, and let $\brho,\bsi\in H$ be their corresponding vector representations. For an operator anti-monotone function $f:(0,\infty)\to \R$, we define the optimized $f$-divergence $\widetilde{Q}_f(\rho\|\si)$ as follows:
\begin{align}
\label{eq:opt-f-div-vNa-def}
\widetilde{Q}_f(\rho\|\si)= \sup_{\bome\, :\, \norm{\bome}{2}=1,\, \brho\in [\M\bome] }\bra{\brho}f(\Delta(\bsi,\bome))\ket{\brho} \end{align}
where the supremum runs over all unit vectors $\bome\in H$ such that $\brho\in [\M\bome]$ and $\Delta(\bsi,\bome)$ is the relative modular operator. This definition of $\widetilde{Q}_f$ only depends on the states $\rho$ and $\si$, and is independent of the choice of the algebra representation $\M\subset B(H)$ and the vector representations $\ket{\brho}$ and $\ket{\bsi}$ for $\rho$ and $\si$ respectively.
\end{definition}

 If $f$ is a continuous function on $[0,\infty)$, we do not need the restriction $\brho\in [\M\bome]$ and
can take the supremum over all $\bome$ satisfying $\norm{\bome}{2}=1$. Otherwise, we have to require $\brho\in [\M'\bsi]$ and $\brho\in [\M\bome]$, since $\Delta(\bsi,\bome)$ is supported on $s_{\M}(\sigma)s_{\M'}(\omega')$, where
$\omega'$ is the state of the commutant $\M'$ implemented by the vector ${\bm \omega}$ and $s_{\M}(\sigma)$ (resp.~$s_{\M'}(\omega'))$ is the support projection of $\sigma$ (resp.~$\omega')$ on $\M$ (resp.~$\M')$.  The relative modular operator connects to the spatial derivative as follows:
\[\Delta(\bsi,\bome)=\Delta(\si/\bome)=\Delta(\si/\omega'),
\]
where $\omega'\in \M'_*$ is the state on $\M'$ implemented by the vector $\bome\in H$. Note that $\Delta(\si/\omega')$ and $\Delta(\omega'
/\si)$ have the same support and
\[\Delta(\si/\omega')=\Delta(\omega'
/\si)^{-1}\]
on their support. Then we have the following equivalent definition for the optimized $f$-divergence:
\begin{align}\label{eqde}
\widetilde{Q}_f(\rho\|\si)= \sup_{\bome\,:\,\norm{\bome}{2}=1, \brho\in [\M\bome] }\bra{\brho}\tilde{f}(\Delta(\omega'/\si))\ket{\brho}, \end{align}
where $\tilde{f}(x)=f(x^{-1})$ is operator monotone.
This latter definition via the spatial derivative is closer to the definition of the sandwiched R\'enyi relative entropy from \cite{BVT18}, which used Araki--Masuda $L_p$ spaces \cite{AM82}.

We now verify that the definition of $\widetilde{Q}_f$ in \eqref{eq:opt-f-div-vNa-def} is independent of vector representations. Note that the representation $\pi$ in the following need not be faithful.

\begin{prop}
\label{ind2}
Let $\pi:\M\to B(H_1)$ be a $*$-representation, and let $\brho_1,\bsi_1\in H_1$ be the unit vectors implementing $\rho$ and $\si$, respectively, via $\pi$. Then
\[\widetilde{Q}_f(\rho\|\si)=\sup_{\bome_1 \, :\, \norm{\bome_1}{2}=1, \brho_1\in [\pi(\M)\bome_1]}\bra{\brho_1}f(\Delta(\bsi_1, \bome_1))\ket{\brho_1}.\]
\end{prop}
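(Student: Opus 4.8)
The plan is to prove this well-definedness statement by comparing every admissible representation to the \emph{standard form} of $\M$, showing that the supremum in \eqref{eq:opt-f-div-vNa-def} is the same in all of them. Throughout I would work with the equivalent spatial-derivative expression \eqref{eqde}, so that the objective $\bra{\brho}f(\Delta(\bsi,\bome))\ket{\brho}=\bra{\brho}\tilde f(\Delta(\omega'/\si))\ket{\brho}$ depends on the optimization vector $\bome$ only through the normal state $\omega'$ it induces on the commutant $\M'$. The advantage is that Connes' spatial-derivative $\Delta(\omega'/\si)$ is an \emph{intrinsic} object attached to the pair of states $(\si,\omega')\in\M_*\times(\M')_*$ and transforms covariantly under unitary equivalence (this is the material reviewed in the appendix). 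Hence the only genuine difference between two $*$-representations is the family of commutant states $\omega'$ available to the supremum, and the whole proposition reduces to showing that this family yields the same value of the supremum.

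First I would normalize the situation. Using the cyclicity/separation conditions $\brho\in[\M'\bsi]$ and $\brho\in[\M\bome]$ built into Definition~\ref{def:optimized-f-vNa}, I would pass to the support subspace on which $\si$ is faithful, so that $\Delta(\bsi,\bome)$ is honestly a relative modular operator on a cyclic--separating piece; this also renders the passage harmless for a non-faithful $\pi$, since $\rho$ and $\si$ annihilate $\ker\pi$. Covariance under unitary equivalence then settles the case of unitarily equivalent representations immediately: if $u$ intertwines the two representations, then $u\,\Delta(\bsi,\bome)\,u^{*}=\Delta(u\bsi,u\bome)$ and $u$ carries admissible vectors to admissible vectors. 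By uniqueness of the GNS construction, the subrepresentations $[\M\ket{\brho}]$ and $[\pi(\M)\ket{\brho_1}]$ (and similarly for $\bsi$) are canonically isomorphic, and the requirement that $\brho$ and $\si$ be \emph{vector}-implementable forces both $H$ and $H_1$ to contain this common ``standard piece'' supporting $\rho$ and $\si$. The remaining task is therefore to compare a representation with an enlargement of itself, i.e.\ to show that enlarging the commutant (by amplification, out of which together with reduction every normal representation is built) does not change the supremum.

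For this enlargement comparison, one inequality is trivial: a larger commutant offers more states $\omega'$, so the supremum cannot decrease. The reverse inequality, that extra commutant room cannot \emph{increase} the supremum, is the crux. Here I would use that the spatial derivative $\Delta(\omega'/\si)$ is monotone increasing in the numerator weight $\omega'$ and that $\tilde f$ is operator monotone, so that the objective is monotone increasing in $\omega'$ subject only to the normalization $\norm{\omega'}{}=1$; the point to establish is that the additional directions opened up by a bigger commutant, being forced to be ``spread thin'' by this normalization, cannot enlarge $\Delta(\omega'/\si)$ on the subspace where $\ket{\brho}$ lives, beyond what a renormalized state of the smaller commutant already achieves. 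This is an operator-convexity phenomenon of exactly the type exploited in Lemma~\ref{key} and Lemma~\ref{omega} (morally $e\theta^{-1}e\ge(e\theta e)^{-1}$ together with $\tau(e\theta e)\le 1$), now applied with the inclusion isometry of the smaller space in place of $V_\rho$.

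The main obstacle I anticipate is making this compression/enlargement step fully rigorous in the von Neumann-algebraic setting, where $\Delta(\omega'/\si)$ is an unbounded positive operator handled through Connes' spatial calculus rather than explicit tensor formulas. Concretely, one must justify the compatibility of the spatial derivative with compression by a commutant projection on the appropriate form domain, verify that the renormalization of the compressed commutant state respects the cyclicity constraint $\brho\in[\M\bome]$, and confirm that multiplicity beyond the support of $\rho$ and $\si$ affects neither the attainable values nor the supremum. The tool controlling these estimates is the operator Jensen inequality extended to positive (possibly unbounded) arguments, as already noted in Section~\ref{sec:prelims}, together with the monotonicity properties of the spatial derivative; once enlargement-invariance is secured, transitivity through the common standard piece yields the asserted equality of the two suprema.
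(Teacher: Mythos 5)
Your core mechanism is the same as the paper's: pass to the spatial-derivative form \eqref{eqde}, pull the optimizing vector back through an intertwining isometry, apply the operator Jensen inequality for the operator concave $\tilde f$, and renormalize using monotonicity. Where you differ is in the scaffolding. The paper does not route through the standard form or decompose representations into amplifications and reductions; it simply builds one partial isometry $V:H\to H_1$ with $V(a\brho+\bet)=\pi(a)\brho_1$, notes $\pi(a)V=Va$, and compares the two suprema directly, with the converse following from the symmetric roles of the two representations. This avoids having to justify that every normal representation containing the common cyclic piece is reached by your amplification/reduction/unitary-equivalence chain. The step you flag as the main obstacle --- compatibility of the spatial derivative with compression --- is exactly what the paper resolves, and it does so by an \emph{equality} rather than an inequality: from $R_{\si}(V\bxi)=VR_{\si}(\bxi)$ one computes $\bra{\bxi}V^{*}\Delta(\bome_1/\si)V\ket{\bxi}=\bra{V^{*}\bome_1}R_{\si}(\bxi)R_{\si}(\bxi)^{*}\ket{V^{*}\bome_1}=\bra{\bxi}\Delta(V^{*}\bome_1/\si)\ket{\bxi}$, together with the support identity $s'(V^{*}\bome_1)=V^{*}s'(\bome_1)V$, so the only inequalities needed are Jensen ($V^{*}\tilde f(A)V\le\tilde f(V^{*}AV)$) and the monotone renormalization $\Delta(V^{*}\bome_1/\si)\le\Delta(\overline{V^{*}\bome_1}/\si)$. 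Your approach would work if completed, but the direct two-representation comparison is both shorter and spares you the enlargement-invariance bookkeeping.
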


\begin{proof}
We follow the idea of \cite[Lemma 3]{BVT18} and use the equivalent definition from \eqref{eqde} with the spatial derivative. Consider that
\[\widetilde{Q}_f(\rho\|\si)=\sup_{\bome\,:\,\norm{\bome}{2}=1, \brho\in [\M\bome] }\bra{\brho}\tilde{f}(\Delta(\bome/\si))\ket{\brho}.\]
Define $V_\rho : H\to H_1$ as the partial isometry such that, for $\bet\in [\M\rho]^\perp$,
\[  V(a\brho+\bet) = \pi(a)\brho_1   \pl, \pl a\in \M\]
Since $\pi(a)V_\rho=V_\rho a$, we have $R_\si(V_\rho\brho)=V_\rho R_\si(\brho)$ (see \eqref{R} for the definition of operator $R_\si(\brho)$). Let $V\equiv V_\rho$. Then for all $\bxi\in [\M\bome]s(\si)H$ and $\bome_1\in H_1$, we find that
\begin{align*}
\bra{\bxi}V^*\Delta(\bome_1/\si)V\ket{\bxi}
& =\bra{\bome_1}R_{\si}(V\bxi)R_{\si}(V\bxi)^*\ket{\bome_1}\\
& =\bra{\bome_1}VR_{\si}(\bxi)R_{\si}(\bxi)^*V^*\ket{\bome_1}\\
& =\bra{V^*\bome_1}R_{\si}(\bxi)R_{\si}(\bxi)^*\ket{V^*\bome_1}\\
& = \bra{\bxi}\Delta(V^*\bome_1/\si)\ket{\bxi}.
\end{align*}
Moreover $s'(V^*\bome_1)=[\M V^*\bome_1]=[V^*\pi(\M)\bome_1]=V^*s'(\bome_1)V$ and hence
\[V^*\Delta(\bome_1/\si)V=\Delta(V^*\bome_1/\si),
\] with the same support for all $\bome_1\in H_1$ with $\rho\in [\pi(\M)\bome_1]$.
Since $\tilde{f}$ is operator concave and operator monotone
\begin{align*}\bra{\brho_1}\tilde{f}(\Delta(\bome_1/\si))\ket{\brho_1}
& =\bra{\brho}V^*\tilde{f}(\Delta(\bome_1/\si))V\ket{\brho}
\\& \le  \bra{\brho}\tilde{f}(\Delta(V^*\bome_1/\si))\ket{\brho}
\le \bra{\brho}\tilde{f}(\Delta(\overline{V^*\bome_1}/\si))\ket{\brho}\pl,
\end{align*}
where $\overline{V^*\bome_1}$ is the normalization of $V^*\bome_1$. Here we view $V$ as an isometry by restricting on the support $V^*V=[\pi_1(\M)\brho_1]$.
Therefore
\[\sup_{\bome_1 \, :\, \norm{\bome_1}{2}=1,\brho_1\in [\pi(\M)\bome_1] }\bra{\brho_1}\tilde{f}(\Delta(\bome_1/\si))\ket{\brho_1}
\le \sup_{\bome \, :\, \norm{\bome}{2}=1 , \brho\in [\M\bome]}\bra{\brho}\tilde{f}(\Delta(\bome/\si))\ket{\brho}\pl.\]
The converse direction follows by the symmetric role of the representations $\pi_1(\M)\subset B(H_1)$ and $\M\subset B(H)$.
\end{proof}

By the independence above, we can then carry the definition to the standard form $(\M,L_2(\M),J,L_2(\M)^+)$ using Haagerup $L_2$-spaces. Let $h_\rho\in L_1(\M)$ be the density operator corresponding to $\rho$. Then
\[
\widetilde{Q}_f(\rho\|\si)
=\sup_{\omega} \bra{h_\rho^{1/2}}f(\Delta(\si,\omega)\ket{h_\rho^{1/2}},
\]
where the supremum runs over all states $\omega$ such that $s(\omega)\ge s(\rho)$.
The next proposition shows that the definition above coincides with the finite-dimensional definition in \cite{wilde}, and one can further restrict to $\omega\gg\rho$; i.e., there exists $\la>0$ such that $\rho\le \la\omega$.

\begin{prop}
\label{dense}
Let $f:(0,\infty)\to \mathbb{R}$ be an operator anti-monotone and $\nu$ be the measure in the integral representation of $f$ as in \eqref{om}.
Suppose $\nu$ does not contain a point mass at $\la=0$. Then
\begin{align}
       \widetilde{Q}_f(\rho\|\si)&=\sup_{\omega\in D(\M)}\lim_{\varepsilon\to 0^+} \bra{h_\rho^{1/2}}f(\Delta(\si,\omega+\varepsilon \phi))\ket{h_\rho^{1/2}}  \label{equiv1}\\
&=\sup_{\omega\in D(\M),\, \omega\gg\rho} \bra{h_\rho^{1/2}}f(\Delta(\si,\omega))\ket{h_\rho^{1/2}},
\label{equiv}
\end{align}
where in \eqref{equiv1}, $\phi$ can be any normal state with $s(\rho)\le s(\phi)$.
\end{prop}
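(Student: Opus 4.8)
The plan is to reduce both equalities to a single monotonicity principle. Writing $g(\omega):=\bra{h_\rho^{1/2}}f(\Delta(\si,\omega))\ket{h_\rho^{1/2}}$ for a positive $\omega$ with $s(\omega)\ge s(\rho)$, the key observation is that $g$ is \emph{monotone increasing} in $\omega$: the spatial derivative (equivalently the relative modular operator) is operator anti-monotone in its weight argument, so $\omega_1\le\omega_2$ forces $\Delta(\si,\omega_1)\ge\Delta(\si,\omega_2)$, and since $f$ is operator anti-monotone we get $f(\Delta(\si,\omega_1))\le f(\Delta(\si,\omega_2))$, hence $g(\omega_1)\le g(\omega_2)$. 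Two consequences will be recorded first: (i) $\varepsilon\mapsto g(\omega+\varepsilon\phi)$ is monotone, so the limit in \eqref{equiv1} exists and equals $\inf_{\varepsilon>0}g(\omega+\varepsilon\phi)$; and (ii) on a \emph{fixed} admissible $\omega$, the scaling $c\mapsto\bra{h_\rho^{1/2}}f(c^{-1}\Delta(\si,\omega))\ket{h_\rho^{1/2}}$ is monotone in $c$, which I will combine with the spectral theorem and monotone convergence to pass to scaling limits with $\Delta(\si,\omega)$ held fixed. Throughout I will work in the standard form justified by Proposition~\ref{ind2}, and identify resolvent differences as nonnegative via Lemma~\ref{omega} and Lemma~\ref{key}, exactly as in the display preceding \eqref{eq:lower-bnd-optimized-f-diff-big-F}.

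For \eqref{equiv}, the inclusion $\{\omega\gg\rho\}\subseteq\{\omega:s(\omega)\ge s(\rho)\}$ immediately gives that the right-hand side is $\le\widetilde{Q}_f(\rho\|\si)$. For the reverse, given any admissible state $\omega$ I will regularize by $\omega_\varepsilon:=(1-\varepsilon)\omega+\varepsilon\rho$, a genuine state satisfying $\omega_\varepsilon\gg\rho$. Since $\omega_\varepsilon\ge(1-\varepsilon)\omega$, anti-monotonicity yields $g(\omega_\varepsilon)\ge\bra{h_\rho^{1/2}}f((1-\varepsilon)^{-1}\Delta(\si,\omega))\ket{h_\rho^{1/2}}$, and as $\varepsilon\to0^+$ the latter increases to $g(\omega)$ by monotone convergence against the spectral measure of the fixed operator $\Delta(\si,\omega)$ (here $f((1-\varepsilon)^{-1}s)\uparrow f(s)$). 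Hence $g(\omega)\le\liminf_\varepsilon g(\omega_\varepsilon)\le\sup_{\omega'\gg\rho}g(\omega')$, and taking the supremum over admissible $\omega$ gives \eqref{equiv}. This is the step where the no-atom-at-$0$ hypothesis is essential: an atom at $\lambda=0$ contributes the singular summand $x^{-1}$ to \eqref{om}, and it is precisely that summand whose optimizing $\omega$ is forced to be degenerate (compare $\widetilde{Q}_{x^{-1}}(\rho\|\si)=\norm{\rho^{1/2}\si^{-1}\rho^{1/2}}{\infty}$), so without excluding it the class $\omega\gg\rho$ would not exhaust the supremum.

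For \eqref{equiv1}, the direction ``$\ge$'' is easy: for admissible $\omega$ we have $\omega+\varepsilon\phi\ge\omega$, so $\lim_{\varepsilon\to0}g(\omega+\varepsilon\phi)\ge g(\omega)$, and the supremum over $\omega\in D(\M)$ (a larger set than the admissible states) already recovers $\widetilde{Q}_f(\rho\|\si)$. The reverse inequality is the crux. For an arbitrary state $\omega$ and each $\varepsilon>0$, the operator $\omega+\varepsilon\phi$ has support $\ge s(\phi)\ge s(\rho)$ but trace $1+\varepsilon$, so I will compare it with its normalization $\hat\omega_\varepsilon:=(\omega+\varepsilon\phi)/(1+\varepsilon)$, an admissible state obeying $g(\hat\omega_\varepsilon)\le\widetilde{Q}_f(\rho\|\si)$, and control the scaling defect coming from the factor $1+\varepsilon$ in the limit $\varepsilon\to0$. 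Expanding $g(\omega+\varepsilon\phi)$ termwise via \eqref{om}, the resolvent pieces $\bra{h_\rho^{1/2}}(\lambda+\Delta(\si,\omega+\varepsilon\phi))^{-1}\ket{h_\rho^{1/2}}$ converge as $\varepsilon\to0$ and are dominated by $\lambda^{-1}$; the only potential obstruction is the contribution as $\lambda\to0$, i.e.\ the $\Delta^{-1}$-moment, which the no-atom-at-$0$ assumption removes, thereby licensing dominated convergence in the integral against $d\nu$.

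I expect the main obstacle to be exactly this exchange of $\lim_{\varepsilon\to0}$ with the integral over $\nu$ and the supremum, carried out in the general von Neumann setting where $\Delta(\si,\omega)$ is unbounded and the limiting $\omega$ may be non-admissible: one must simultaneously handle the (possibly singular) linear $b$-term under the rescaling $1+\varepsilon$ and the tail of the $d\nu$-integral near $\lambda=0$. Once that convergence is secured, reassembling the constant, linear, and integral parts gives $\lim_{\varepsilon\to0}g(\omega+\varepsilon\phi)\le\widetilde{Q}_f(\rho\|\si)$ for every $\omega\in D(\M)$, which together with the easy direction establishes \eqref{equiv1} and closes the chain with \eqref{equiv}.
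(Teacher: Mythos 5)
Your treatment of \eqref{equiv} is sound and in fact takes a slightly different route from the paper's: you only need the one-sided bound $\liminf_{\varepsilon\to 0^+} g(\omega_\varepsilon)\ge g(\omega)$ for $\omega_\varepsilon=(1-\varepsilon)\omega+\varepsilon\rho$, and you get it elementarily from the monotonicity of $g$, the scaling identity $\Delta(\si,(1-\varepsilon)\omega)=(1-\varepsilon)^{-1}\Delta(\si,\omega)$, and monotone convergence of $f((1-\varepsilon)^{-1}s)\uparrow f(s)$ against the spectral measure. The paper instead invokes strong resolvent convergence $\Delta(\si,\omega_\varepsilon)^{1/2}\to\Delta(\si,\omega)^{1/2}$ (Petz's book, Proposition 4.9) to get the full limit; your argument buys a more self-contained proof of that half at the cost of a small amount of measure-theoretic care when the spectral integrals are not finite.

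The genuine gap is in the hard direction of \eqref{equiv1}. First, your normalization comparison points the wrong way: since $\omega+\varepsilon\phi=(1+\varepsilon)\hat\omega_\varepsilon\ge\hat\omega_\varepsilon$ and $g$ is increasing, you obtain $g(\omega+\varepsilon\phi)\ge g(\hat\omega_\varepsilon)$, which is a \emph{lower} bound by an admissible state and cannot by itself yield $\lim_{\varepsilon\to0^+}g(\omega+\varepsilon\phi)\le\widetilde{Q}_f(\rho\|\si)$; controlling the ``scaling defect'' $f((1+\varepsilon)^{-1}x)-f(x)$ uniformly requires finiteness of moments such as $\bra{h_\rho^{1/2}}\Delta\ket{h_\rho^{1/2}}$ that are not assumed. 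Second, the dominated-convergence step you propose does not close: the bound $\bra{h_\rho^{1/2}}(\la+\Delta_\varepsilon)^{-1}\ket{h_\rho^{1/2}}\le\la^{-1}$ is not a valid dominating function near $\la=0$, because the L\"owner measure is only guaranteed to satisfy $\int_0^\infty\tfrac{\la}{\la^2+1}\,d\nu(\la)<\infty$, so $\la^{-1}$ need not be $\nu$-integrable there; excluding an atom at $\la=0$ does not repair this. You correctly identify the limit--integral interchange as the ``main obstacle,'' but you leave it unresolved, and that interchange \emph{is} the content of \eqref{equiv1}. The paper closes it by citing the strong resolvent convergence $\Delta(\si,\omega+\varepsilon\phi)^{1/2}\to\Delta(\si,\omega)^{1/2}$, which gives pointwise-in-$\la$ convergence of the resolvent terms for every $\la>0$; combined with the fact that $\varepsilon\mapsto\bra{h_\rho^{1/2}}(\la+\Delta(\si,\omega+\varepsilon\phi))^{-1}\ket{h_\rho^{1/2}}$ is monotone (which you already have from your monotonicity principle), the interchange follows by monotone rather than dominated convergence. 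You need either this external input or an explicit monotone-convergence argument in $\la$ to complete \eqref{equiv1}.
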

\begin{proof}For the first expression, we note that $\Delta(\si,\omega+\varepsilon\phi)^{1/2}\to  \Delta(\si,\omega)^{1/2}$ strongly in the resolvent sense by \cite[Proposition 4.9]{petzbook}.
This implies (by the integral representation of $f$) that
\[ \lim_{\varepsilon\to 0^+}\lan h_\rho^{1/2}|f(\Delta(\si,\omega+\varepsilon\phi))|h_\rho^{1/2}\ran=\lan h_\rho^{1/2}|f(\Delta(\si,\omega))|h_\rho^{1/2}\ran\pl.\]
For the second expression, we can choose
$\omega_\varepsilon=\varepsilon\rho+(1-\varepsilon)\omega$. By the same reasoning,
\[ \lim_{\varepsilon\to 0^+}\lan h_\rho^{1/2}|f(\Delta(\si,\omega_\varepsilon))|h_\rho^{1/2}\ran=\lan h_\rho^{1/2}|f(\Delta(\si,\omega))|h_\rho^{1/2}\ran\pl.\]
Note that $\rho\le \varepsilon^{-1}\omega_\varepsilon$. Then we have \[ \sup_{\omega\in D(\M),\omega\gg\rho} \bra{h_\rho^{1/2}}f(\Delta(\si,\omega))\ket{h_\rho^{1/2}}\ge \widetilde{Q}_f(\rho\|\si)\pl.\]
The inverse inequality is obvious.
\end{proof}
Following the same idea above, the optimized divergence for general two states $\rho$ and $\si$ can be defined as follows
\begin{align} \widetilde{Q}_f(\rho\|\si):=\lim_{\varepsilon\to 0^+} \widetilde{Q}_f(\rho\|\si+\varepsilon\rho)=\sup_{\varepsilon>0} \pl \widetilde{Q}_f(\rho\|\si+\varepsilon\rho)\pl. \label{limit}\end{align}
The above limit is increasing as $\varepsilon\to 0^+$ for all $\omega$ because $\Delta(\si+\varepsilon\rho,\omega)=\Delta(\si,\omega)+\varepsilon\Delta(\rho,\omega)$ and $f$ is operator anti-monotone.  For $\rho$ and $\si$ with $s(\rho)\le s(\si)$, this recovers the Definition \ref{def:optimized-f-vNa}
\begin{align}
\lim_{\varepsilon\to 0^+} \nonumber \widetilde{Q}_f(\rho\|\si+\varepsilon\rho)=&\sup_{\varepsilon>0}\sup_{\omega\in D(\M),\, \omega\gg\rho} \bra{h_\rho^{1/2}}f(\Delta(\si+\varepsilon\rho,\omega))\ket{h_\rho^{1/2}} \\=&\sup_{\omega\in D(\M),\, \omega\gg\rho} \sup_{\varepsilon>0}\pl \bra{h_\rho^{1/2}}f(\Delta(\si+\varepsilon\rho,\omega))\ket{h_\rho^{1/2}}
\nonumber \\=&\pl \widetilde{Q}_f(\rho\|\si)
\pl,\label{eq:limit}
\end{align}
For the last step above, by \cite[Proposition 4.9]{petzbook} we have that for each $\omega\gg\rho$ and $t>0$,
\[ \bra{h_\rho^{1/2}}(\Delta(\si+\varepsilon\rho,\omega)+t)^{-1}\ket{h_\rho^{1/2}}\to \bra{h_\rho^{1/2}}(\Delta(\si,\omega)+t)^{-1}\ket{h_\rho^{1/2}}\]
Using integral representation \eqref{om} of $f$ and monotone convergence theorem over $\varepsilon\to 0^+$, we have
\[ \bra{h_\rho^{1/2}}f(\Delta(\si+\varepsilon\rho,\omega))\ket{h_\rho^{1/2}}\to \bra{h_\rho^{1/2}}f(\Delta(\si,\omega))\ket{h_\rho^{1/2}},\]
which verifies \eqref{eq:limit}.
%where we have used the Sion minimax theorem \cite{Sion58}. Indeed, $\bra{h_\rho^{1/2}}f(\Delta(\si,\omega))\ket{h_\rho^{1/2}}$ is a function convex over $\si$ and concave over $\omega$ because $f$ is operator convex and $\Delta(\si,\omega)=L_\si R_\omega^{-1}$ where $L_\si$ (resp. $R_\omega$) is the left (resp. right) multiplication operator.

As the optimized $f$-divergence for general $\rho$ and $\si$ is defined through approximation, for most of the following discussion it suffices to consider $\widetilde{Q}_f(\rho\|\si)$ with support assumption.

%\section{Examples}
%\section{Improved monotonivity and recovery}

%\begin{lemma}
%Let $\pi:\M\to \N$ be a surjective normal $*$-representation and $\rho,\si$ are two normal states of $\N$. Then for $f$ operator convex,
%\begin{align}
%\widetilde{Q}_f(\rho\|\si)= \widetilde{Q}_f(\rho\circ\pi\|\si\circ \pi)
%\end{align}
%\end{lemma}
%\begin{proof}
%Let $\N\subset B(H)$ and $\brho,\brho$ be the vector implementation of $\rho$ and $\si$. By the independence of the lemma above,
%\[\widetilde{Q}_f(\rho\|\si)=\sup_{\bome}\ran \brho| \tilde{f}(\Delta(\bome/\si))| \brho\lan\]
%where $\Delta(\bome/\si)$ is the spatial derivatives of states $\omega'$ and $\si$ for $N$. This also coincides the spatial derivatives of $\Delta(\bome\circ \pi/\si\circ \pi)$ for $M$, which completes the proof

%\end{proof}

\subsection{Comparison to standard $f$-divergence}

In this section, we first review the definition of $f$-divergence introduced by Petz in \cite{petz85,petz86}, which we call the standard $f$-divergence. Let $\M\subset B(H)$ be a von Neumann algebra, and let $\rho,\si$ be two normal states implemented by $\brho,\bsi\in H$, respectively. For an operator convex function $f:(0,\infty)\to \R$, the standard $f$-divergence is defined as follows:
\[Q_f(\rho\|\si) \coloneqq \bra{\brho}f(\Delta(\bsi, \brho))\ket{\brho}\pl, \pl\text{if}\pl s(\rho)\le s(\si)\]
which is also independent of the particular vector representation, as in Lemma~\ref{ind2}. Because the standard $f$-divergence $Q_f$ for general $\rho$ and $\si$ also admits approximation as in \eqref{limit} (see \cite{Hiai18}),
it is clear from definitions that
 \[\widetilde{Q}_f(\rho\|\si)\ge Q_f(\rho\|\si)\pl.\]

\begin{exam}{\rm  The sandwiched R\'enyi relative entropy was defined in \cite{BVT18} as
$\widetilde{D}_\al(\rho\|\si) \coloneqq \al'\log \widetilde{Q}_{\al}(\rho\|\si)$, where $\al'\coloneqq \al / (\al-1)$ and
\[\widetilde{Q}_{\al}(\rho\|\si):=\begin{cases}
                                \displaystyle \sup_{\bome\,:\,\|\bome\|=1}\norm{\Delta(\bome/\si)^{\frac{1}{2\al'}}\ket{\brho}}{H}^2 & \mbox{if } 1<\al\le \infty \\
                                \displaystyle \inf_{\bome\,:\,\|\bome\|=1\pl, \brho\in[\M\brho]}\norm{\Delta(\bome/\si)^{\frac{1}{2\al'}}\ket{\brho}}{H}^2 & \mbox{if } \frac{1}{2}\le\al<1.
                              \end{cases}\]
Note that \begin{align*}\norm{\Delta(\bome/\si)^{\frac{1}{2\al'}}\ket{\brho}}{H}^2& =\lan \brho| \Delta(\bome/\si)^{\frac{1}{\al'}}|\brho\ran
=\lan \brho| \Delta(\bsi,\bome)^{-\frac{1}{\al'}}|\brho\ran\pl.
\end{align*}
Thus we have \[\widetilde{Q}_{\al}(\rho\|\si)=\begin{cases}
                               \widetilde{Q}_{x^{-\frac{1}{\al'}}}(\rho\|\si) & \mbox{if } 1<\al\le \infty \\
                               -\widetilde{Q}_{-x^{-\frac{1}{\al'}}}(\rho\|\si) & \mbox{if } \frac{1}{2}\le \al<1
                             \end{cases}\pl.\]
}\end{exam}

\begin{exam}{\label{5.5}\rm For $f(x)=-\log x$, it was shown in \cite{wilde}, by invoking the Klein inequality, that for  $\M=B(H)$, the following equality holds
\[\widetilde{Q}_{-\log x}(\rho\|\si)= D(\rho\|\si)\pl.\]
For the general case, we immediately have that
\begin{align*}\widetilde{Q}_{-\log x}(\rho\|\si)& \ge Q_{-\log x}(\rho\|\si)
=D(\rho\|\si).
\end{align*}
On the other hand, since $t\mapsto \al'\log t$ is concave for $\al > 1$ (and hence $\al' >1$), we find that
\begin{align*}
\widetilde{Q}_{-\log x}(\rho\|\si)&=\sup_{\omega}\bra{\brho}-\log \Delta(\si,\omega)\ket{\brho}  = \sup_{\omega}\bra{\brho}\al'\log \Delta(\si,\omega)^{-\frac{1}{\al'}}\ket{\brho}
\\ &\le \al'\log \sup_{\omega}\bra{\brho} \Delta(\si,\omega)^{-\frac{1}{\al'}}\ket{\brho}
\le \widetilde{D}_\al(\rho\|\si)\pl.
\end{align*}
Moreover, it was proved in \cite[Theorem 13]{BVT18} that if $\rho\le c\si$ for some $c>0$, then
\[\lim_{\al\to 1^+}\widetilde{D}_{\al}(\rho\|\si)=D(\rho\|\si)\pl.\]
For general case, we have
\begin{align*}\widetilde{Q}_{-\log x}(\rho\|\si)=&\lim_{\eps\to 0^+} \widetilde{Q}_{-\log x}(\rho\|\si+\eps\rho)= \lim_{\eps\to 0^+}D(\rho\|\si+\eps\rho)=D(\rho\|\si)\pl.\end{align*}
 Here the second limit follows from the fact that $D(\rho\|\si+\eps\rho)$ is monotone non-decreasing and lower semi-continuity of $D$.
}
\end{exam}
Recall that we denote by $D_\al$ the Petz-R\'enyi relative entropy. For $\al=1$, we write $\widetilde{D}_{1}(\rho\Vert\si)=D_{1}(\rho\Vert\si):=D(\rho\Vert\si)$ as the standard relative entropy. The following lemma enables us to approximate relative entropy $D_\al(\rho\Vert\si)$ and $\widetilde{D}_\al(\rho\Vert\si)$ by $\rho,\si$ with $s(\rho)=s(\si)$.
\begin{lemma}\label{approximation}Let $\rho,\si\in D(\M)$ be two normal states with $s(\rho)\le s(\si)$. For $0<\varepsilon<1$, denote $\rho_\varepsilon=(1-\varepsilon)\rho+\varepsilon \si $.
Then \begin{enumerate}\item[i)]for any $0< \al<2$, $\displaystyle\lim_{\varepsilon\to 0}D_\al(\rho_\varepsilon\Vert\si)=D_\al(\rho\Vert\si)$; \item[ii)] for any $1/2\le \al\le \infty$,
$\displaystyle \lim_{\varepsilon\to 0 }\widetilde{D}_\al(\rho_\varepsilon\Vert\si)=\widetilde{D}_\al(\rho\Vert\si)$.
\end{enumerate}
\end{lemma}
\begin{proof}
Write $\id:\M\to\M$ as the identity map and define the normal UCP map
\[\Psi_\si:\M\to \M, \Psi_\si(x)=\si(x)1\pl.\]
It is clear that the adjoint $\Psi_\si^\dag(\rho)=\si$ for any state $\rho\in D(\M)$. Take the normal UCP map $\Psi_\varepsilon=(1-\varepsilon) \id+\varepsilon\Psi$. Then $\Psi_\varepsilon^\dag(\rho)=\rho\circ \Psi_\varepsilon=\rho_\varepsilon$ and $\Psi_\varepsilon^\dag(\si)=\si$. For i), using data processing inequality of $D_\al$
\begin{align*}\limsup_{\varepsilon\to 0}D_\al(\rho_\varepsilon\Vert\si)=&\limsup_{\varepsilon\to 0}D_\al(\Psi_\varepsilon^\dag(\rho)\Vert\Psi_\varepsilon^\dag(\si))\le
\limsup_{\varepsilon\to 0}D_\al(\rho_\varepsilon\Vert\si)\\ \le &D_\al(\rho\Vert\si)\le\liminf_{\varepsilon\to 0}D_\al(\rho_\varepsilon\Vert\si)\pl, \end{align*}
where the last inequality uses the lower semi-continuity \cite[Theorem 4.1]{Hiai18}. The argument for ii) and $\al>1$ is similar by using the data processing inequality and the lower semi-continuity of $\widetilde{D}_\al$ \cite[Proposition 3.7 \& Theorem 3.11]{Jenvcova18}. For $1/2\le\al<1$,
the lower semi-continuity can be replaced by \begin{align*}&\liminf_{\varepsilon\to \infty}\widetilde{D}_\al(\rho_\varepsilon\Vert\si)\ge \liminf_{\varepsilon\to \infty} \widetilde{D}_\al((1-\varepsilon)\rho\Vert\si)= \liminf_{\varepsilon\to \infty} \widetilde{D}_\al(\rho\Vert\si)+\al' \log(1-\varepsilon)=\widetilde{D}_\al(\rho\Vert\si)\pl. \qedhere\end{align*}
\end{proof}

\subsection{Data-processing inequality for optimized $f$-divergence}

We  now establish the data-processing inequality for the optimized $f$-divergence $\widetilde{Q}_f$. We start with the key case of restricting to a subalgebra.

\begin{lemma}\label{incl2}
Let $\M\subset B(H)$ be a von Neumann algebra, and let $\N\subset \M$ be a subalgebra. Let $\brho,\bsi\in H$ be two unit vectors, and let $\rho_\M,\si_\M$ (resp.~$\rho_\N,\si_\N$) be the corresponding normal states on $\M$ (resp.~$\N$). Then for an operator anti-monotone function $f:(0,\infty)\to \mathbb{R}$, the following inequality holds
\begin{align}
\widetilde{Q}_f(\rho_\M\|\si_\M)\ge \widetilde{Q}_f(\rho_\N\|\si_\N).
\end{align}
\end{lemma}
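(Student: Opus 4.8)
The plan is to reduce to the key case and then mimic the finite-dimensional argument of Lemma~\ref{omega}, with the relative modular operator identity there replaced by its spatial-derivative analog. First I would reduce to the situation where $\rho$ and $\si$ are faithful normal states with $s(\rho_\M)\le s(\si_\M)$: the general case follows by the approximation $\widetilde{Q}_f(\rho\|\si)=\lim_{\varepsilon\to 0^+}\widetilde{Q}_f(\rho\|\si+\varepsilon\rho)$ from \eqref{limit}, so that monotonicity established in the faithful case passes to the limit. By Proposition~\ref{ind2} the divergences may be computed in any convenient representation, so I would pass to the standard form and introduce the isometry $V_\rho$ that intertwines the $\N$-actions and sends the purification $\ket{h_{\rho_\N}^{1/2}}$ to $\ket{h_\rho^{1/2}}$, exactly as $V_\rho(a\ket{\rho_\N^{1/2}})=a\ket{\rho^{1/2}}$ did in finite dimensions.

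The heart of the argument is a von Neumann algebraic analog of the modular-operator identity in Lemma~\ref{omega}. For a state $\omega_\N$ on $\N$, let $\omega_\M=R_\rho(\omega_\N)$ be its image under the generalized Petz recovery map, i.e.\ the predual adjoint of the $\rho$-preserving Accardi--Cecchini conditional expectation $E_\rho:\M\to\N$. I would establish the identity
\[
V_\rho^*\,\Delta_\M(\si_\M,\omega_\M)\,V_\rho=\Delta_\N(\si_\N,\omega_\N),
\]
working through the spatial-derivative formulation \eqref{eqde}: since $\Delta(\si/\bome)$ depends only on the state that the implementing vector induces on the commutant, one checks that the vector built from $\omega_\N$ via $V_\rho$ induces precisely the commutant state corresponding to $R_\rho(\omega_\N)$, so the two spatial derivatives agree after compression by $V_\rho$. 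This is the step carrying all the infinite-dimensional technicalities (existence and support properties of $E_\rho$ and $R_\rho$, matching of the support projections $s_\M(\si)\,s_{\M'}(\omega')$, and the resolvent-sense convergence already used in Proposition~\ref{dense}), and it is where I expect the main difficulty to lie.

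Granting the identity, the conclusion follows from the operator Jensen inequality applied to the operator convex function $f$ (an operator anti-monotone $f$ is operator convex, as recorded in the preliminaries): for the unit vector $\ket{h_{\rho_\N}^{1/2}}$,
\begin{align*}
\bra{h_{\rho_\N}^{1/2}}f(\Delta_\N(\si_\N,\omega_\N))\ket{h_{\rho_\N}^{1/2}}
&=\bra{h_{\rho_\N}^{1/2}}f\!\big(V_\rho^*\Delta_\M(\si_\M,\omega_\M)V_\rho\big)\ket{h_{\rho_\N}^{1/2}}\\
&\le \bra{h_{\rho_\N}^{1/2}}V_\rho^* f(\Delta_\M(\si_\M,\omega_\M))V_\rho\ket{h_{\rho_\N}^{1/2}}\\
&=\bra{h_{\rho}^{1/2}}f(\Delta_\M(\si_\M,\omega_\M))\ket{h_{\rho}^{1/2}},
\end{align*}
where the last equality uses $V_\rho\ket{h_{\rho_\N}^{1/2}}=\ket{h_\rho^{1/2}}$, and the final quantity is at most $\widetilde{Q}_f(\rho_\M\|\si_\M)$ by definition. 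Taking the supremum over all admissible $\omega_\N$ (which by Proposition~\ref{dense} may be restricted to those with $\omega_\N\gg\rho_\N$, exactly the domain on which $R_\rho$ is well behaved) then yields $\widetilde{Q}_f(\rho_\N\|\si_\N)\le\widetilde{Q}_f(\rho_\M\|\si_\M)$.
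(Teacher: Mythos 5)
Your proposal is essentially correct in outline, but it takes a genuinely different (and heavier) route than the paper's proof. The paper does \emph{not} pass through the Petz recovery map at all: it keeps the \emph{same} vector $\bome$ for both divergences and exploits the operator inequality $\Delta^{\N}(\bsi,\bome)\ge s_{\N'}(\bome)\,\Delta^{\M}(\bsi,\bome)\,s_{\N'}(\bome)$, which follows directly from the relation $S^{\M}_{\bsi,\bome}s_{\N'}(\bome)=s_{\M}(\bome)S^{\N}_{\bsi,\bome}$ between the two $S$-operators. Operator anti-monotonicity plus the operator Jensen inequality (with the projection $s_{\N'}(\bome)$ viewed as an isometry onto its range, and using $\brho\in[\N\bome]$) then give $\bra{\brho}f(\Delta^{\N}(\bsi,\bome))\ket{\brho}\le\bra{\brho}f(\Delta^{\M}(\bsi,\bome))\ket{\brho}$, and since $[\N\bome]\subset[\M\bome]$ the set of admissible $\bome$ only grows, so the suprema compare. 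This argument needs no faithfulness, no standard form, and no conditional expectation. Your route instead constructs, for each witness $\omega_\N$, a new witness $R_\rho(\omega_\N)$ on $\M$ via the intertwining identity $V_\rho^*\Delta_\M(\si_\M,R_\rho(\omega_\N))V_\rho=\Delta_\N(\si_\N,\omega_\N)$; this is exactly the paper's Lemma~\ref{opin}, which the paper proves separately (and later) because it is needed for the \emph{recoverability} bounds, not for data processing. What your approach buys is the stronger statement that the Petz-recovered state is an explicit near-optimizer; what it costs is all the technical overhead you correctly identify.

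Two of those technical points deserve more care than you give them. First, your reduction to faithful states via the perturbation $\si+\varepsilon\rho$ only repairs the support condition $s(\rho)\le s(\si)$; to make $\rho$ itself faithful you would have to compress to a corner $e\M e$, i.e.\ invoke Lemma~\ref{rest2}, whose proof in the paper uses the present lemma for one of its two inequalities — so you must check you are not running in a circle (you are not, if you only use the direction of \ref{rest2} proved independently, but this needs saying). Second, the clean identity in Lemma~\ref{opin} is established for $h_{\omega_\N}^{1/2}=b\,h_{\rho_\N}^{1/2}$ with $b\in\N$ bounded, i.e.\ for $\omega_\N\le\la\rho_\N$; this is the \emph{opposite} domination to the condition $\omega_\N\gg\rho_\N$ that Proposition~\ref{dense} lets you impose, so an extra approximation (in the resolvent sense, as in Proposition~\ref{dense}) is needed to cover a class of $\omega_\N$ whose supremum still attains $\widetilde{Q}_f(\rho_\N\|\si_\N)$. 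Neither point is fatal, but both must be addressed for your proof to be complete.
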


\begin{proof}
For two vectors $\bsi,\bome\in H$, we write $\Delta^{\M}(\bsi,\bome)$ (resp. $\Delta^{\N}(\bsi,\bome)$) as the relative modular operator with respect to $\M$ (resp. $\N$). Let  $S_{\bsi,\bome}^\M$ and $S_{\bsi,\bome}^\N$ be the corresponding anti-linear operators such that
\[ (S_{\bsi,\bome}^\M)^*\bar{S}_{\bsi,\bome}^\M=\Delta^\M(\bsi,\bome)\pl,
\qquad
(S_{\bsi,\bome}^\N)^*\bar{S}_{\bsi,\bome}^\N=\Delta^\N(\bsi,\bome)\pl.\]
Recall the support projections are given by
\begin{align*} s_{\M}(\bome)=[\M ' \bome]\pl,
\qquad s_{\N'}(\bome)=[\N \bome]\pl,
\end{align*}
By the definition of the $S$ operators, we find that
\begin{align*}
&S_{\bsi,\bome}^\M s_{\N'}(\bome)=s_{\M}(\bome)S_{\bsi,\bome}^\N\pl , \qquad  \Delta^{\N}(\bsi,\bome)\ge s_{\N'}(\bome)\Delta^{\M}(\bsi,\bome)s_{\N'}(\bome).
\end{align*}Then for
all $\bome$ such that $\norm{\bome}{2}=1$ and $\brho\in [\N \bome]=s_{\N'}(\bome)$, we find that
\begin{align*}
\bra{\brho}f(\Delta^{\N}(\bsi,\bome))\ket{\brho}
&\le \bra{\brho}f(s_{\N'}(\bome)\Delta^{\M}(\bsi,\bome)s_{\N'}(\bome))\ket{\brho}
\\&\le \bra{\brho}s_{\N'}(\bome)f(\Delta^{\M}(\bsi,\bome))s_{\N'}(\bome)\ket{\brho}
\\&= \bra{\brho}f(\Delta^{\M}(\bsi,\bome))\ket{\brho}.
\end{align*}
Here we view the projection $s_{\N'}(\bome)$ as an isometry on its support.
Noting that $\brho\in [\N \bome]\subset [\M \bome]$, then
\begin{align*}
\widetilde{Q}_f(\rho_\N\|\si_\N)& =\sup_{\bome \, :\, \norm{\bome}{2}=1, \brho\in[\N\bome]}\lan \brho| f(\Delta^\N(\si,\bome))| \brho\ran
\\ & \le\sup_{\bome \, :\, \norm{\bome}{2}=1, \brho\in[\N\bome]}\lan \brho| f(\Delta^\M(\si,\bome))| \brho\ran
\\ & \le\sup_{\bome \, :\, \norm{\bome}{2}=1, \brho\in[\M\bome]}\lan \brho| f(\Delta^\M(\si,\bome))| \brho\ran= \widetilde{Q}_f(\rho_\M\|\si_\M).
\end{align*}
This concludes the proof.
\end{proof}

\begin{lemma}\label{rest2}
Let $\M$ be a von Neumann algebra, and let $e\in \M$ be a projection.
Let $\rho,\si\in D(\M)$ be two normal states with support $s(\rho)\le s(\si)\le e$. Let $\si_e,\rho_e$ denote the corresponding normal states on $e\M e$. Then for all operator anti-monotone functions $f:(0,\infty)\to \R$, the following equality holds
\begin{align}
\widetilde{Q}_f(\rho\|\si)=\widetilde{Q}_f(\rho_e\|\si_e).
\end{align}

\end{lemma}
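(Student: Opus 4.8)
The plan is to prove the two inequalities separately, working with the spatial-derivative form \eqref{eqde} of the optimized $f$-divergence, which is the most convenient for tracking the commutant. Throughout I use that $s(\rho),s(\si)\le e$ forces the vector representatives $\brho,\bsi$ to lie in $eH$; that the corner algebra $e\M e$ acts on $eH$ with commutant $(e\M e)'=e\M'e$ (the standard induction of von Neumann algebras, valid because $e\in\M$); and that $\M'\bsi\subseteq eH$, since $e\in\M$ commutes with $\M'$ and $\bsi\in eH$, whence $s_\M(\si)=[\M'\bsi]\le e$.

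The technical core, which I expect to be the main obstacle, is a compatibility statement between the ambient and the corner spatial derivatives: for every $\si$-bounded vector $\bxi\in eH$, the operator $R_\si(\bxi)\in\M'$ implementing $\bxi=R_\si(\bxi)\bsi$ in fact satisfies $R_\si(\bxi)=eR_\si(\bxi)e$. This holds because its right support is $[\M'\bsi]\le e$ and $e$ commutes with $R_\si(\bxi)\in\M'$, so that $eR_\si(\bxi)=R_\si(\bxi)e=R_\si(\bxi)$. Feeding this into the quadratic-form description of the spatial derivative used in the proof of Proposition~\ref{ind2}, namely $\bra{\bxi}\Delta(\omega'/\si)\ket{\bxi}=\bra{\bome}R_\si(\bxi)R_\si(\bxi)^*\ket{\bome}$ with $\omega'$ the state of $\M'$ implemented by $\bome$, I get $\bra{\bome}R_\si(\bxi)R_\si(\bxi)^*\ket{\bome}=\norm{R_\si(\bxi)^*\bome}{2}^2=\norm{R_\si(\bxi)^* e\bome}{2}^2$, so the right-hand side depends on $\bome$ only through $e\bome$. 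Concretely this yields the identity $\Delta(\omega'/\si)\big|_{eH}=\Delta(\psi_0/\si_e)$, where $\psi_0$ is the (unnormalised) state of $e\M'e$ implemented by $e\bome$ and the right-hand side is the corner spatial derivative on $eH$; since $\Delta(\omega'/\si)$ annihilates $(1-e)H$, the same identity persists after applying the operator-monotone function $\tilde f(x)=f(x^{-1})$ and sandwiching against $\brho\in eH$.

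With this identity in hand both inequalities are short. For $\widetilde{Q}_f(\rho_e\|\si_e)\le\widetilde{Q}_f(\rho\|\si)$ I take any admissible unit vector $\bome\in eH$ for the corner problem; then $e\bome=\bome$, the normalisation constant $c:=\norm{e\bome}{2}^2$ equals $1$, and the compatibility identity gives equality of the two integrands, so the corner supremum is a supremum of ambient integrands over the smaller set $\{\bome\in eH\}$ and is therefore $\le\widetilde{Q}_f(\rho\|\si)$. For the reverse inequality I take an admissible unit $\bome\in H$ and pass to the unit corner vector $\widehat{w}:=e\bome/\sqrt{c}$; admissibility $\brho\in[\M\bome]$ together with $\brho=s_\M(\si)\brho\in eH$ forces $c>0$, since otherwise $\Delta(\omega'/\si)$ would vanish on the direction of $\brho$, which it cannot because $\brho$ lies in its support $s_\M(\si)s_{\M'}(\omega')H$. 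The state implemented by $\widehat{w}$ is $\psi:=\tfrac1c\psi_0$, so the corner spatial derivative equals $\tfrac1c\,\Delta(\omega'/\si)\big|_{eH}\ge\Delta(\omega'/\si)\big|_{eH}$ because $1/c\ge1$; operator monotonicity of $\tilde f$ then gives $\bra{\brho}\tilde f(\Delta(\psi/\si_e))\ket{\brho}\ge\bra{\brho}\tilde f(\Delta(\omega'/\si))\ket{\brho}$, and taking suprema yields $\widetilde{Q}_f(\rho_e\|\si_e)\ge\widetilde{Q}_f(\rho\|\si)$.

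The remaining point is to ensure that $\widehat{w}$ (resp.\ $\bome$) is genuinely admissible for the corner (resp.\ ambient) supremum, i.e.\ that the cyclicity constraint $\brho\in[e\M e\,\widehat{w}]$ holds. Rather than verify this at the level of vectors, I would run the whole argument in the equivalent density-operator formulation of Proposition~\ref{dense}, taking suprema over $\omega\in D(\M)$ with $\omega\gg\rho$ and the analogous $\omega_e\gg\rho_e$ on $e\M e$: there the domination $\rho\le\la\omega$ makes all support conditions automatic, the correspondence $\omega\leftrightarrow\omega_e=\omega(e)^{-1}\,\omega|_{e\M e}$ replaces $e\bome/\sqrt{c}$, and the scaling factor $\omega(e)^{-1}\ge1$ plays the role of $1/c$. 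The one step that genuinely requires care — and which I would write out in full — is the compatibility identity $\Delta(\omega'/\si)\big|_{eH}=\Delta(\psi_0/\si_e)$, as it is precisely what makes the two optimisation problems coincide on $eH$; everything else is bookkeeping with normalisations and the monotonicity of $\tilde f$.
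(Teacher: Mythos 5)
Your easy direction ($\widetilde{Q}_f(\rho_e\|\si_e)\le\widetilde{Q}_f(\rho\|\si)$) is fine — the paper obtains it even more directly from Lemma~\ref{incl2}, viewing $e\M e\subset\M$ as a non-unital subalgebra — but the hard direction rests entirely on the compatibility identity $\Delta(\omega'/\si)\big|_{eH}=\Delta(\psi_0/\si_e)$, and that identity is false whenever $e\bome\ne\bome$. Your justification already contains the error: the intertwiner $T$ with $T\bsi=\bxi$ has range $[\M\bxi]$ and right support inside $[\M\bsi]=s_{\M'}(\si)$, which are projections in $\M'$; the projection you invoke, $[\M'\bsi]=s_{\M}(\si)\le e$, is a different object, and commutation $eT=Te$ gives $eTe=eT$ but neither equals $T$. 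Concretely, in the Haagerup picture the commutant state implemented by $\bome$ has density $\bome^*\bome$, while the one implemented by $e\bome$ has density $\bome^*e\bome$; these coincide only when $\bome=e\bome$. In general one only gets $\bome^* e\bome\le\bome^*\bome$, hence $\Delta(\psi_0/\si_e)\le\Delta(\omega'/\si)\big|_{eH}$, which after applying the operator monotone $\tilde{f}$ points in the \emph{wrong} direction; and the rescaling by $1/c$ with $c=\tr(\bome^*e\bome)$ does not repair it, since $\bome^*e\bome\ge c\,\bome^*\bome$ fails already for $\M=M_2$, $e=e_{11}$, $\bome=\tfrac{1}{\sqrt2}\mathbf{1}$, where $\tfrac1c\bome^*e\bome=\diag(1,0)\not\ge\diag(1/2,1/2)=\bome^*\bome$. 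So the chain $\Delta(\psi/\si_e)=\tfrac1c\Delta(\omega'/\si)|_{eH}\ge\Delta(\omega'/\si)|_{eH}$ breaks at the first equality, and the inequality $\widetilde{Q}_f(\rho\|\si)\le\widetilde{Q}_f(\rho_e\|\si_e)$ remains unproven.

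The missing idea is that passing from $\M$ to $e\M e$ is a genuine \emph{two-sided} compression of the standard form, not a restriction to an invariant subspace: on $L_2(\M)$ the map $P(x)=exe$ does not commute with $\Delta_\M(\si,\omega)^{-1}$, because the right factor $h_\omega$ is not supported in $e$. The paper proves the exact identity $P\Delta_\M(\si,\omega)^{-1}P=\Delta_{e\M e}(\si,\omega_e)^{-1}$ for the restricted, unnormalized state $\omega_e=\omega|_{e\M e}$ (density $eh_\omega e$, which is \emph{not} the same as your $h_\omega^{1/2}eh_\omega^{1/2}$), and then supplies the needed inequality via the operator Jensen inequality $P\tilde{f}(A)P\le\tilde{f}(PAP)$ for the operator concave $\tilde{f}$; the normalization step $\omega_e\le\overline{\omega}_e$ and operator monotonicity then finish. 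Your closing density-operator reformulation selects essentially the same candidate $\omega_e$, so what you are missing is precisely this Jensen step: your proof asserts as an identity what is in fact a strict compression inequality, and without it the hard direction does not go through.
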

\begin{proof}We use the standard form $(\M,L_2(\M),J,L_2(\M)_+)$ from Appendix~\ref{a3}.
The standard form of $e\M e$ is $(e\M e,eL_2(\M)e,J,eL_2(\M)_+e)$. Let $V: eL_2(\M)e \hookrightarrow L_2(\M)$ be the isometry that is the adjoint of the projection $P:L_2(\M)\to eL_2(\M)e$ with $P(x)=exe$.
Let $h_\rho^{1/2}$ and $h_\si^{1/2}$ be the vectors in $L_2(\M)_+$ corresponding to $\rho$ and $\si$, respectively. Since $s(\rho)\le s(\si)\le e$, we have that $h_\rho^{1/2}=eh_\rho^{1/2}=eh_\rho^{1/2} e$ and similarly for $h_\si^{1/2}$.
Let $\omega\in D(\M)$ be a normal state, and let $h_\omega^{1/2}\in L_2(\M)_+$ be the corresponding unit vector. Let $\omega_e\in (e\M e)_+$ be the restriction of $\omega$ on $e\M e$. Note that $\omega_e$ is a sub-state corresponding to $eh_\omega e\in eL_1(\M )e\cong L_1(e\M e)$. By Proposition~\ref{dense}, it suffices to consider $\omega$ such that $\omega_e\neq 0$. Otherwise we can always replace $\omega$ by $\omega_\varepsilon=(1-\varepsilon)\omega+\varepsilon \rho$.

Recall that $\Delta_\M(\si,\omega)^{-1}=J\Delta_\M(\omega, \si)J$ and for $x\in \M$,
$\Delta(\omega, \si)^{1/2}JP\ket{ h_\si^{1/2} x}=\ket{h_\omega^{1/2}e x e}$. Then we find that
\begin{align*}
\bra{ h_\si^{1/2} x} P\Delta_\M(\si,\omega)^{-1}P \ket{ h_\si^{1/2} x}
& =\lan h_\omega^{1/2}e x e \ket{ h_\omega^{1/2}e x e}\\
& = \tr( ex^*eh_\omega e xe)=
\bra{ h_\si^{1/2} exe}\Delta_{e\M e}(\si,\omega_e)^{-1}\ket{ h_\si^{1/2} exe}.
\end{align*}
This implies that
\[
P\Delta_\M(\si,\omega)^{-1}P= \Delta_{e\M e}(\si,\omega_e)^{-1}.
\]
For $f:(0,\infty)\to \R$ operator anti-monotone, $\tilde{f}(x)=f(x^{-1})$ is operator monotone and operator concave. Since $h_\rho^{1/2}\in eL_2(\M)e=PL_2(\M)$,
\begin{align} \bra{h_\rho^{1/2}}f(\Delta_\M(\si,\omega))\ket{h_\rho^{1/2}}
& =\bra{h_\rho^{1/2}}P\tilde{f}(\Delta_\M(\si,\omega)^{-1})P\ket{h_\rho^{1/2}}
\nonumber\\ & \le  \bra{h_\rho^{1/2}}\tilde{f}(P\Delta_\M(\si,\omega)^{-1}P)\ket{h_\rho^{1/2}}
\nonumber\\ & =  \bra{h_\rho^{1/2}}\tilde{f}(\Delta_{e\M e}(\si,\omega_e)^{-1})\ket{h_\rho^{1/2}}
\nonumber\\ & \le  \bra{h_\rho^{1/2}}\tilde{f}(\Delta_{e\M e}(\si,\overline{\omega}_e)^{-1})\ket{h_\rho^{1/2}} \notag \\
& \le\widetilde{Q}_f(\rho_e\|\si_e)\label{ineq},
\end{align}
where $\overline{\omega}_e=\frac{\omega_e}{\omega_e(1)}$ is the normalized state of $\omega_e$.
By taking all $\omega\gg \rho$,
\[ \widetilde{Q}_f(\rho\|\si)\le \widetilde{Q}_f(\rho_e\|\si_e)\pl.\]
The reverse inequality follows from Lemma \ref{incl2} because $e\M e\subset \M$ as a (non-unital) subalgebra.
\end{proof}
\begin{rem}{\label{4.7} \rm The lemma above is an extension of isometric invariance \cite[Proposition 4]{wilde} in finite-dimensional case. It implies that it suffices to consider optimized $f$-divergence on $\si$-finite von Neumann algebras. Indeed, we can always restrict to $e\M e$ for $e=s(\rho+\si)$ because  $\widetilde{Q}_f(\rho\|\si)=\widetilde{Q}_f(\rho_e\|\si_e)$. Based on that, one can further deduce the following variant of Proposition~\ref{dense}:
 \[\widetilde{Q}_f(\rho\|\si)=\sup_{\omega\in D_+(\M)} \bra{h_\rho^{1/2}}f(\Delta(\si,\omega))\ket{h_\rho^{1/2}},\]
where $D_+(\M)$ is the set of all faithful normal states.
}
\end{rem}

\begin{theorem}[Data-processing inequality]
\label{data}
Let $\Phi:\N\to\M$ be a normal completely positive unital map, and let $\rho,\si\in D(\M)$ be two normal states. For $f:(0,\infty)\to \R$ operator anti-monotone, the following data-processing inequality holds
\[
\widetilde{Q}_f(\rho\|\si)\ge \widetilde{Q}_f(\rho\circ\Phi\|\si\circ \Phi).
\]
\end{theorem}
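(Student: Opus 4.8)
The plan is to reduce the general data-processing inequality to the two elementary cases already in hand: restriction to a subalgebra (Lemma~\ref{incl2}) and compression to a corner (Lemma~\ref{rest2}), glued together by representation independence (Proposition~\ref{ind2}). First I would dispose of the support assumption. Since the optimized $f$-divergence for general states is defined through the limit \eqref{limit}, and since $(\si+\varepsilon\rho)\circ\Phi=\si\circ\Phi+\varepsilon\,\rho\circ\Phi$ with $s(\rho)\le s(\si+\varepsilon\rho)$, it suffices to establish the inequality under the assumption $s(\rho)\le s(\si)$ and then let $\varepsilon\to 0^+$ on both sides. So assume $s(\rho)\le s(\si)$, work in the standard form $\M\subseteq B(H)$ with $H=L_2(\M)$, and let $\brho,\bsi\in H$ be the unit vectors implementing $\rho$ and $\si$; by Remark~\ref{4.7} we may also assume $\M$ is $\sigma$-finite.

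Next I would invoke the Stinespring dilation theorem for the normal unital completely positive map $\Phi:\N\to\M$: there exist a Hilbert space $K$, a normal unital $*$-representation $\pi:\N\to B(K)$, and an isometry $W:H\to K$ with $\Phi(x)=W^*\pi(x)W$ for all $x\in\N$. A direct computation shows that $W\brho,W\bsi\in K$ implement $\rho\circ\Phi$ and $\si\circ\Phi$ through $\pi$, since $\bra{W\brho}\pi(x)\ket{W\brho}=\bra{\brho}W^*\pi(x)W\ket{\brho}=\rho(\Phi(x))$, and that they implement $\rho$ and $\si$ through the representation $x\mapsto WxW^*$ of $\M$ on $eK$, where $e:=WW^*$. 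The central object is the von Neumann algebra
\[
\mathcal{B}:=\big(W\M W^*\cup\pi(\N)\big)''\subseteq B(K),
\]
which contains $\pi(\N)$ as a subalgebra and contains $e=W\,1_\M\,W^*$ as a projection. The key structural claim is that compression to the corner recovers exactly the embedded copy of $\M$, namely $e\mathcal{B}e=W\M W^*$. The inclusion $W\M W^*\subseteq e\mathcal{B}e$ is immediate; for the reverse inclusion I would work on a weakly dense $*$-subalgebra of words alternating between $W\M W^*$ and $\pi(\N)$, using that each factor satisfies $WmW^*=e\,WmW^*\,e$ and that $e\pi(x)e=W\Phi(x)W^*\in W\M W^*$, so that any such word, once flanked by $e$, collapses to a single element of $W\M W^*$; weak continuity of $y\mapsto eye$ then upgrades this to all of $\mathcal{B}$.

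With the claim in hand, and noting that $W\brho,W\bsi\in eK$ so the induced vector states on $\mathcal{B}$ have support under $e$ (with $s(\rho)\le s(\si)$ carrying over), I would assemble the chain
\begin{align*}
\widetilde{Q}_f(\rho\|\si)
&=\widetilde{Q}_f(\rho\|\si)_{W\M W^*}
=\widetilde{Q}_f(\rho\|\si)_{e\mathcal{B}e}
=\widetilde{Q}_f(\rho\|\si)_{\mathcal{B}}\\
&\ge \widetilde{Q}_f(\rho\circ\Phi\|\si\circ\Phi)_{\pi(\N)}
=\widetilde{Q}_f(\rho\circ\Phi\|\si\circ\Phi),
\end{align*}
all divergences being computed with the vectors $W\brho,W\bsi$. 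The first equality is representation independence (Proposition~\ref{ind2}) for the representation $x\mapsto WxW^*$ of $\M$; the third equality is corner invariance (Lemma~\ref{rest2}) via $e\mathcal{B}e=W\M W^*$; the inequality is subalgebra monotonicity (Lemma~\ref{incl2}) for $\pi(\N)\subseteq\mathcal{B}$; and the final equality is again Proposition~\ref{ind2}, now for the (possibly non-faithful) representation $\pi$ of $\N$, whose existence as a tool is exactly why non-faithfulness is permitted there.

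I expect the main obstacle to be the structural identity $e\mathcal{B}e=W\M W^*$, which is the precise place where the hypothesis that $\Phi$ maps into $\M$ (rather than into all of $B(H)$) is used, and which is what forces the intermediate algebra $\mathcal{B}$ instead of the naive choice $B(K)$: with $B(K)$ the corner would only be isomorphic to $B(H)\supseteq\M$ and the bound would overshoot, yielding a vacuous comparison. A secondary technical point is ensuring the dilation can be taken with $\pi$ normal and carefully verifying the support bookkeeping required by Lemma~\ref{rest2}; both are routine once one has reduced to the $\sigma$-finite standard-form setting.
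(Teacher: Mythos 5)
Your proposal is correct and follows essentially the same route as the paper: Stinespring dilation, the intermediate algebra generated by $W\M W^*$ and $\pi(\N)$, then the chain combining representation independence (Proposition~\ref{ind2}), subalgebra monotonicity (Lemma~\ref{incl2}), and corner invariance (Lemma~\ref{rest2}). The only additions are welcome elaborations the paper leaves implicit, namely the word-collapsing argument for $e\mathcal{B}e=W\M W^*$ and the reduction of the general support case via the limit definition.
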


\begin{proof}Let $\M\subset B(H)$, and let $\brho,\bsi\in H$ be the vectors implementing $\rho,\si$, respectively.
Let $\Phi(\cdot)=V^*\pi(\cdot) V$ be the Stinespring dilation of $\Phi$, where $\pi:\N\to B(K)$ is a normal $*$-homomorphism and $V:H\to K$ is an isometry \cite{Stinespring55}. Let $\rho_1=\rho\circ\Phi$ and $\si_1=\si\circ\Phi$ denote states on $\N$. Then $\brho_1=V\brho$ and $\bsi_1=V\bsi$ are vector representations of $\rho_1$ and $\si_1$, respectively, via $\pi$ because
\begin{align*}&\rho\circ\Phi(x)=\rho(V^*\pi(x)V)=\bra{\brho}V^*\pi(x)V\ket{\brho}\pl, \\
&\si\circ\Phi(x)=\si(V^*\pi(x)V)=\bra{\bsi}V^*\pi(x)V\ket{\bsi}\pl.\end{align*}
Take the projection $e=VV^*\in B(H)$. Let  $\mathcal{L}\subset B(K)$ denote the von Neumann subalgebra in $B(K)$ generated by $V\M V^*$ and $\pi(\N)$. Note that $V:H\to eK$ is a surjective isometry and define the map $T:B(eK)\to B(H)$ as
\[ x\mapsto V^*xV\pl.\]
The map $T$ is a $*$-isomorphism that sends $e\mathcal{L} e$ to $\M$. Thus
we have the following factorization of $\Phi$:
\begin{align}\label{chain}
 \N\overset{\pi}{\longrightarrow} \pi(\N)\hookrightarrow \mathcal{L}\to e\mathcal{L} e \overset{T}{\longrightarrow}\M\pl.\end{align}
Let us introduce the shorthand $\widetilde{Q}^\M_f(\brho \| \bsi)\equiv\widetilde{Q}_f(\rho_\M\|\si_\M)$, where $\rho_\M,\si_\M$ are the states on~$\M$ implemented by the vectors $\brho,\bsi$.
Using this notation, we have
\begin{align*} \widetilde{Q}_f(\rho\circ\Phi\|\si\circ\Phi)
&=\widetilde{Q}_f^{\pi(\N)}(\brho_1\|\bsi_1)
\le  \widetilde{Q}_f^{\mathcal{L}}(\brho_1\|\bsi_1)
=  \widetilde{Q}_f^{e\mathcal{L} e}(\brho_1\|\bsi_1)
\\ &=  \widetilde{Q}_f^{\M}(\brho\|\bsi) =\widetilde{Q}_f(\rho\|\si).
\end{align*}
Here the first equality follows from the independence in Lemma \ref{ind2}. The inequality follows from the inclusion $\pi(\N)\subset \mathcal{L}$ and Lemma \ref{incl2}. The second equality follows because $\brho_1,\bsi_1\in eK$ and by applying Lemma \ref{rest2}. The last step is a $*$-isomorphism.
\end{proof}
It is clear from the argument above that the actual inequality in data processing is the inclusion $\pi(\N)\subset \mathcal{L}$.

\subsection{Recoverability results}

\label{section5}

In this section, we discuss recoverability results in the setting of general von Neumann algebras. We first review the generalized conditional expectation introduced in \cite{AC}, which is the (dual of) Petz map of the inclusion $\N\subset \M$ in the Heisenberg picture.

Let $\M$ be a von Neumann algebra, and let $\N\subset \M$ be a subalgebra.
We denote by $(\M,L_2(\M),J,L_2(\M)^+)$ (resp. $(\N,L_2(\N),J_0,L_2(\N)^+)$) the standard form of $\M$ (resp. $\N$) using Haagerup $L_2$-spaces.
Given a normal state $\rho\in D(\M)$ and its restriction $\rho_\N$ in $D(\N)$, we denote by $h_\rho$ (resp. $h_{\rho_\N}$)  the density operator of $\rho$ (resp. $\rho_\N$) in $L_1(\M)$ (resp. $L_1(\N)$). Thus $h_\rho^{1/2}\in L_2(\M)$  (resp. $h_{\rho_\N}^{1/2}\in L_2(\N)$) is a vector representation of $\rho$ (resp. $\rho_\N$).
Define the partial isometry $V_\rho: L_2(\N)\to L_2(\M)$ as
\[ V_\rho(a h_{\rho_\N}^{1/2}+\xi)=a h_{\rho}^{1/2}\pl, \forall a\in \N, \xi \in [\N h_{\rho_\N}^{1/2}]^\perp\pl. \]
Indeed,
\[ \norm{V_\rho(a h_{\rho_\N}^{1/2})}{L_2(\M)}^2=\norm{a h_{\rho}^{1/2}}{2}^2 = \tr(a^*a h_\rho)=\rho(a^*a)=\norm{a h_{\rho_\N}^{1/2}}{L_2(\M)}^2 . \]
The $\rho$-preserving generalized conditional expectation $E_\rho:\M\to \N$ is defined as follows:
\[ E_\rho(x)\coloneqq J_0V_\rho Jx JV_\rho J_0\pl.\]
Observe that $E_\rho:\M\to \N $ is a normal completely positive sub-unital map. Moreover $E_\rho(s(\rho))=s_\N(\rho)$  and $E_\rho(1-s(\rho))=0$ where $s(\rho)$ (resp. $s_\N(\rho)$) is the support of $\rho$ (resp. $\rho_\N$).
It was proved by Petz \cite{petz88} that if $D(\rho\|\si) < \infty$, then the equality $D(\rho\|\si)=D(\rho_\N\|\si_\N)$ is equivalent to the following conditions:
\begin{enumerate}
\item[i)]$E_\rho=E_\si$;
\item[ii)]$\rho_\N\circ E_\si=\rho$;
\item[iii)]$\si_\N\circ E_\rho=\si$.
\end{enumerate}
In this sense $E_\rho$ (or equivalently $E_\si$) is a recovery for the inclusion $\N\subset\M$.

In general, consider a normal completely positive unital map $\Phi:\N\to \M$. Let $\rho\in D(\M)$ be  a state, and set $\rho_0=\rho\circ \Phi\in D(\N)$. The Petz map $R:=R_{\Phi,\rho}:\M\to \N$ is the unique normal completely positive sub-unital map such that
\[ R(s(\rho))=s(\rho_0)\pl, \qquad R(1-s(\rho))=0\pl,\]
and $\forall\pl x\in \N , y\in\M\pl$,
\begin{align}
\label{recovery}
\lan Jy h_\rho^{1/2}, J_0\Phi(x) h_{\rho_{0}}^{1/2}\ran =\lan J R(y) h_\rho^{1/2}, J_0x h_{\rho_0}^{1/2}\ran\pl. \pl\end{align}
In particular, if $\rho_0=\rho\circ \Phi$ is faithful, then $R$ is unital.

Recall that the modular automorphism group $\al_{t}^{\rho}:\M \to \M$ for a state $\rho$ is given by
\[
\al_t^\rho(x)= \Delta(\rho,\rho)^{-it} x\Delta(\rho,\rho)^{it}\pl.
\]
The rotated Petz map is defined as follows:
\begin{align}
&E_\rho^t(x) \coloneqq \al_{t}^{\rho_\N}\circ E_\rho\circ \al_{-t}^\rho\pl, \qquad R_{\Phi,\rho}^t(x)=\al_{t}^{\rho_0}\circ R_{\Phi,\rho}\circ \al_{-t}^\rho\pl. \label{rotated}
\end{align}

Recall that in the Stinespring dilation $\Phi(\cdot)= V^*\pi(\cdot) V$, $\pi$ can be faithful (c.f. \cite[Theorem 1.41]{pisier20}). By the same argument in the proof of Theorem \ref{data},
it suffices to consider two cases:
\begin{enumerate}
\item[i)] For an inclusion $\iota:\N \to \M$, $R_{\iota,\rho}=E_\rho$ is the generalized conditional expectation
\item[ii)] Consider the projection map $$P:\M\to e\M e\pl,\pl P(x)=exe $$
    for a projection $e\in \M$ and let $\rho$ be a state with $s(\rho)\le e$.
     The recovery map $R_{P,\rho}=\iota_\rho: {s(\rho)\M s(\rho)}\to \M$ is the embedding
      and so is the rotated Petz map $R_{P,\rho}^t=\al^{\rho}_{t}\circ\iota_\rho\circ \al^{\rho}_{-t}=\iota_\rho$.
\end{enumerate}
Let $\Phi:\N\to \M$ be a general normal UCP map given by the composition $\Phi=P\circ \iota$.
Note that by the symmetric role of $\Phi$ and $R_{\Phi,\rho}$ in \eqref{recovery}, the Petz map $R_{\Phi,\rho}=R_{\iota,\rho}\circ R_{P,\rho}=E_\rho\circ \iota_\rho$ is a composition of the Petz map of the above two cases. Similarly for a rotated Petz map,
\[ R_{\Phi,\rho}^t=\al^{\rho_0}_t\circ R_{\Phi,\rho}\circ \al^{\rho}_{-t} =(\al^{\rho_0}_t\circ E_\rho\circ \al^{\rho}_{-t})\circ ( \al^{\rho}_{t}\circ \iota_\rho \circ \al^{\rho}_{-t} )=E_\rho^t\circ \iota_\rho\pl. \]
Since the embedding $\iota_\rho:s(\rho)\M s(\rho)\hookrightarrow \M$ always preserves the $L_1$-norm and (optimized) $f$-divergence on its support (Lemma \ref{rest2}), it suffices to consider the recovery result for $E_\rho^t$.

We now extend the recovery results in Section~\ref{sec:finite} to the general setting. For simplicity, we will mainly focus on faithful cases. The main steps that need adaptation are Lemmas~\ref{3.2}, \ref{3.10}, and \ref{3.19}, which we reproduce here using standard form on Haagerup $L_2$-spaces.

\begin{lemma}
Let $\rho$, $\si$, and $\omega$ be normal states, and let $\ket{\brho}=h_\rho^{1/2}\in L_2(\M)$ be the vector representation of $\rho$. Suppose $\ket{\brho}\in \text{supp}(\Delta(\si,\omega))=s(\si)s(\omega')$. Then for all $t \in \mathbb{R}$,
\[ \bra{\brho}\Delta(\si,\omega)^{-it}x\Delta(\si,\omega)^{it}\ket{\brho}=\rho\circ\al^\si_{t}(x).
\]
Thus $\Delta(\si,\omega)^{-it}x\Delta(\si,\omega)^{it}=\al^\si_{t}(x)$.
\end{lemma}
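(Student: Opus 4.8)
The plan is to establish the operator identity $\Delta(\si,\omega)^{-it}x\Delta(\si,\omega)^{it}=\al^\si_{t}(x)$ first, since the matrix-element formula is then immediate; the essential input is the Tomita--Takesaki/Araki fact that the imaginary powers of the relative modular operator implement the modular automorphism group of $\si$ on $\M$, and that this action does \emph{not} depend on the second argument $\omega$. Thus although the lemma lists the matrix element first, the substantive content lies in the operator identity, from which the scalar formula is merely a diagonal evaluation.

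Concretely, first I would compare $\Delta(\si,\omega)^{it}$ with the modular operator $\Delta(\si,\si)^{it}$ of $\si$ alone. The key claim is that the partial isometry $v_t \coloneqq \Delta(\si,\omega)^{it}\Delta(\si,\si)^{-it}$ (on the support $s(\si)s(\omega')$) lies in the commutant $\M'$. This is the standard cocycle statement: the dependence of the relative modular operator on $\omega$ enters only through the state $\omega'$ that $\bome$ induces on $\M'$, since $\Delta(\si,\omega)=\Delta(\si/\omega')$ as a spatial derivative, so $\Delta(\si,\omega)^{it}$ and $\Delta(\si,\si)^{it}$ differ by a unitary in $\M'$. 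In the finite-dimensional model $\Delta(\si,\omega)^{it}\ket{x}=\ket{\si^{it}x\omega^{-it}}$ one checks directly that $v_t$ is right multiplication by $\si^{it}\omega^{-it}$, which indeed belongs to $\M'$; in general this is the content of Connes' cocycle and spatial derivative theory.

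Granting $v_t\in\M'$, for $x\in\M$ I would write $\Delta(\si,\omega)^{it}x\Delta(\si,\omega)^{-it}=v_t\,\Delta(\si,\si)^{it}x\Delta(\si,\si)^{-it}\,v_t^{*}=v_t\,\al^\si_{-t}(x)\,v_t^{*}$, and since $\al^\si_{-t}(x)\in\M$ commutes with $v_t\in\M'$, the $v_t$ factors cancel, leaving $\Delta(\si,\omega)^{it}x\Delta(\si,\omega)^{-it}=\al^\si_{-t}(x)$; replacing $t$ by $-t$ gives the desired identity $\Delta(\si,\omega)^{-it}x\Delta(\si,\omega)^{it}=\al^\si_{t}(x)$, which in particular exhibits this operator as the element $\al^\si_t(x)\in\M$, well defined on the support containing $\ket{\brho}$. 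The matrix-element formula then follows at once, as $\ket{\brho}$ implements $\rho$: one has $\bra{\brho}\al^\si_t(x)\ket{\brho}=\rho(\al^\si_t(x))=(\rho\circ\al^\si_t)(x)$.

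The hard part will be the rigorous justification of $v_t\in\M'$, equivalently the $\omega$-independence of the induced automorphism, in the general von Neumann algebraic setting rather than in the finite-dimensional computation; this requires the chain rule for Connes cocycles together with care about the support projection $s(\si)s(\omega')$, on which $\Delta(\si,\omega)^{it}$ is only a partial isometry, so that the cancellation of the $v_t$ factors must be read inside this support, and the hypothesis $\ket{\brho}\in s(\si)s(\omega')$ is exactly what keeps every expression meaningful.
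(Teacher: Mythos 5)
Your proposal is correct, but it takes a genuinely different route from the paper. The paper works entirely inside the Haagerup $L_2$-space: it uses the explicit action $\Delta(\si,\omega)^{it}\ket{h_\rho^{1/2}}=\ket{h_\si^{it}h_\rho^{1/2}h_\omega^{-it}}$ and then computes the matrix element $\tr\big((h_\si^{it}h_\rho^{1/2}h_\omega^{-it})^*\,x\,h_\si^{it}h_\rho^{1/2}h_\omega^{-it}\big)$, where the $h_\omega^{\pm it}$ factors cancel by cyclicity of the trace, leaving $\tr(h_\rho\, h_\si^{-it}xh_\si^{it})=\rho\circ\al_t^\si(x)$; the operator identity is then read off afterwards. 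You instead prove the operator identity first, via the structural fact that $v_t=\Delta(\si,\omega)^{it}\Delta(\si,\si)^{-it}$ lies in $\M'$ (the Connes cocycle in the commutant, reflecting that the second argument of the relative modular operator only enters through the state it induces on $\M'$), so that conjugation of $x\in\M$ by $\Delta(\si,\omega)^{it}$ is $\omega$-independent and equals $\al^\si_{-t}(x)$. Both arguments are sound. What each buys: your route yields the operator identity directly (the paper's passage from a single diagonal matrix element to the operator identity is, strictly speaking, the quicker step of the two, and your argument makes it unnecessary), and it is representation-independent; the paper's route is more elementary given the Haagerup $L_p$ machinery already set up in its appendix, since the cancellation is just cyclicity of $\tr$ and one never needs to invoke the cocycle theorem or worry about the partial-isometry nature of $v_t$ off the support $s(\si)s(\omega')$. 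The one substantive burden in your version is exactly the point you flag: a rigorous citation or proof of $v_t\in\M'$ on the relevant support, which is standard in Connes' spatial-derivative theory but is the nontrivial input your proof rests on.
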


\begin{proof}
Let $h_\rho$, $h_\si$, and $h_\omega$ be the density operators of $\rho$, $\si$, and $\omega$, respectively. We have
\[\ket{\brho}=\ket{h_\rho^{1/2}}\pl, \qquad \Delta(\si,\omega)^{it}\ket{h_\rho^{1/2}}=\ket{h_\si^{it}h_\rho^{1/2}h_\omega^{-it}}\pl.\]
Then for $x\in \M$,
\begin{align*} \bra{\brho}\Delta(\si,\omega)^{-it}x\Delta(\si,\omega)^{it}\ket{\brho}&= \tr((h_\si^{it}h_\rho^{1/2}h_\omega^{-it})^*xh_\si^{it}h_\rho^{1/2}h_\omega^{-it})
\\ &= \tr((h_\omega^{it}h_\rho^{1/2}h_{\si}^{-it}xh_\si^{it}h_\rho^{1/2}h_\omega^{-it})
\\ &= \tr(h_\rho h_{\si}^{-it}xh_\si^{it})
\\ &= \tr(h_\rho \al_\si^t(x))
\\ &=\rho\circ \al_\si^t(x).\qedhere
\end{align*}
\end{proof}
\medskip

\begin{lemma}
\label{opin}
Let $\rho\in D_+(\M)$ and $\omega_\N\in D_+(\N)$ be faithful. Then \[V_\rho^*\Delta_\M(\si, E_\rho^\dag(\omega_\N))V_\rho= \Delta_\N(\si_\N,\omega_\N)\pl.\]
As a consequence, for all operator anti-monotone functions $f:(0,\infty)\to \R$,
\[\bra{h_{\brho_\N}^{1/2}} f(\Delta_\N(\si_\N,\omega_\N))\ket{h_{\brho_\N}^{1/2}} \le \bra{h_{\brho}^{1/2}} f(\Delta_\M(\si,R_\rho(\omega_\N)))\ket{h_{\brho}^{1/2}}.\]
\end{lemma}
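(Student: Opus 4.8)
The plan is to transcribe the finite-dimensional argument of Lemma~\ref{omega} into the standard form on Haagerup $L_2$-spaces, replacing $\rho^{1/2},\si^{1/2},\omega_\N^{1/2}$ by the density-operator vectors $h_\rho^{1/2},h_\si^{1/2},h_{\omega_\N}^{1/2}$ and keeping track of the fact that the relative modular operators are now generally unbounded. Since $\rho$ and $\omega_\N$ are faithful, $V_\rho:L_2(\N)\to L_2(\M)$ is a genuine isometry that intertwines the left $\N$-actions and satisfies $V_\rho\ket{\brho_\N}=\ket{\brho}$; I also write $V_\si$ for the isometry defined in the same way from $\si$.

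First I would prove the modular identity $V_\rho^*\Delta_\M(\si,E_\rho^\dag(\omega_\N))V_\rho=\Delta_\N(\si_\N,\omega_\N)$. Put $\ket{\bome}:=V_\rho\ket{h_{\omega_\N}^{1/2}}$, the analogue of the vector $\ket{\omega_\N^{1/2}\rho_\N^{-1/2}\rho^{1/2}}$ appearing in Lemma~\ref{omega}, and let $S^\M_{\si,\bome}$ and $S^\N_{\si_\N,\omega_\N}$ be the closable antilinear operators with $(S^\M_{\si,\bome})^*\overline{S^\M_{\si,\bome}}=\Delta_\M(\si/\bome)$ and $(S^\N_{\si_\N,\omega_\N})^*\overline{S^\N_{\si_\N,\omega_\N}}=\Delta_\N(\si_\N,\omega_\N)$, as in Lemma~\ref{incl2}. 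On the core $\N\ket{h_{\omega_\N}^{1/2}}$ we have $V_\rho(a\ket{h_{\omega_\N}^{1/2}})=a\ket{\bome}$, and I would verify the intertwining relation
\[
S^\M_{\si,\bome}V_\rho(a\ket{h_{\omega_\N}^{1/2}})=a^*\ket{h_\si^{1/2}}=V_\si S^\N_{\si_\N,\omega_\N}(a\ket{h_{\omega_\N}^{1/2}})\pl,\qquad a\in\N\pl,
\]
so that $S^\M_{\si,\bome}V_\rho=V_\si S^\N_{\si_\N,\omega_\N}$. Because $V_\si$ and $V_\rho$ are isometries, forming $S^*\overline{S}$ on both sides then gives $V_\rho^*\Delta_\M(\si/\bome)V_\rho=\Delta_\N(\si_\N,\omega_\N)$, exactly as in finite dimensions.

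It remains to identify the spatial derivative $\Delta_\M(\si/\bome)$ with $\Delta_\M(\si,E_\rho^\dag(\omega_\N))$. Here I would use that $\Delta_\M(\si/\bome)$ depends on $\bome$ only through the normal state it induces on the commutant $\M'$, i.e.\ through $x\mapsto\bra{\bome}JxJ\ket{\bome}$ for $x\in\M$. Recalling $E_\rho(x)=J_0V_\rho^*JxJV_\rho J_0$ and $J_0\ket{h_{\omega_\N}^{1/2}}=\ket{h_{\omega_\N}^{1/2}}$, a direct computation gives
\[
\bra{\bome}JxJ\ket{\bome}=\omega_\N(E_\rho(x))=E_\rho^\dag(\omega_\N)(x)\pl,
\]
so that the $\M$-state dual to the commutant state of $\bome$ is $\omega:=E_\rho^\dag(\omega_\N)=R_\rho(\omega_\N)$, whence $\Delta_\M(\si/\bome)=\Delta_\M(\si,E_\rho^\dag(\omega_\N))$; combined with the previous step this proves the first assertion. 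The stated inequality then follows from the operator Jensen inequality, since $f$ is operator anti-monotone and hence operator convex:
\[
\bra{\brho_\N}f(\Delta_\N(\si_\N,\omega_\N))\ket{\brho_\N}=\bra{\brho_\N}f(V_\rho^*\Delta_\M(\si,R_\rho(\omega_\N))V_\rho)\ket{\brho_\N}\le\bra{\brho_\N}V_\rho^*f(\Delta_\M(\si,R_\rho(\omega_\N)))V_\rho\ket{\brho_\N}=\bra{\brho}f(\Delta_\M(\si,R_\rho(\omega_\N)))\ket{\brho}\pl,
\]
where the final equality uses $V_\rho\ket{\brho_\N}=\ket{\brho}$.

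The main obstacle, relative to the finite-dimensional Lemma~\ref{omega}, is analytic rigor in the unbounded regime. The operators $\Delta_\M(\si,\omega)$ and $\Delta_\N(\si_\N,\omega_\N)$ are generally unbounded, so the intertwining $S^\M_{\si,\bome}V_\rho=V_\si S^\N_{\si_\N,\omega_\N}$ must be established on a common core and extended by closability, and $V_\rho^*\Delta_\M(\si/\bome)V_\rho=\Delta_\N(\si_\N,\omega_\N)$ must be read as an equality of positive self-adjoint operators with coinciding domains; likewise the operator Jensen step must be used in its form valid for unbounded operators. I would supply these justifications with the spatial-derivative machinery already employed for Proposition~\ref{ind2} and Lemma~\ref{rest2}, which keeps the algebraic skeleton of the proof identical to the finite-dimensional one.
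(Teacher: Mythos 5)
Your proposal is correct and follows essentially the same route as the paper's proof: you establish the intertwining relation $S^\M_{\si,\bome}V_\rho=V_\si S^\N_{\si_\N,\omega_\N}$ on the core $\N\ket{h_{\omega_\N}^{1/2}}$, identify the commutant state of $\bome=V_\rho\ket{h_{\omega_\N}^{1/2}}$ as $\omega_\N\circ E_\rho=E_\rho^\dag(\omega_\N)$ so that $\Delta_\M(\si/\bome)=\Delta_\M(\si,R_\rho(\omega_\N))$, and conclude with the operator Jensen inequality. The only cosmetic difference is that you define $\bome$ directly as $V_\rho\ket{h_{\omega_\N}^{1/2}}$ rather than writing $h_{\omega_\N}^{1/2}=bh_{\rho_\N}^{1/2}$ for some $b\in\N$ as the paper does, which is if anything slightly cleaner.
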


\begin{proof}
 Let $h_{\rho_\N},h_{\omega_\N},h_{\si_\N}$ and $h_{\rho},h_{\omega},h_{\si}$ be the corresponding density operators.
Let $S_{\si_\N,\omega_\N}:L_2(\N)\to L_2(\N)$ (resp. $S_{\si,\omega}$) be the anti-linear operator for the standard form of~$\N$ (resp. $\M$).
We have for $a\in \N$,
\[V_\si S_{\si_\N,\omega_\N} (a h_{\omega_\N}^{1/2})=V_\si(a^* h_{\si_\N}^{1/2})=a^*h_{\si}^{1/2}\pl.\]
On the other hand, for any $a,b\in \N$, %and $\bxi\in [\N h_{\rho_\N}^{1/2}]^{\perp}$.
\[ V_\rho(a b h_{\rho}^{1/2})=ab h_{\rho}^{1/2}=aV_\rho( b h_{\rho}^{1/2})\pl.\]
By the density of $\N h_{\rho_\N}^{1/2}$ in $L_2(\N)$, this implies $aV_\rho=V_\rho a$. %We used the relation $V_\rho (ah_{\omega_\N}^{1/2})= a V_\rho h_{\omega_\N}^{1/2}$.
%Indeed, $h_{\omega_\N}^{1/2}=b h_{\rho_\N}^{1/2}+\xi$
%\[ a V_\rho (h_\omega^{1/2})=ab h_\rho^{1/2}= V_\rho(ab h_{\rho_\N}^{1/2}+a\xi) \pl.\]
%where $a\xi\in [ \N h_\rho^{1/2}]^\perp$ because
%\[   \lan xh_\rho^{1/2},a\xi\ran =\lan ax h_\rho^{1/2},\xi \ran=0\pl. \]
%Basically, we have shown the intertwining relation $V_\rho a= aV_\rho$, for $a\in \N$.
Then if we choose the $L_2$ vector $\bome=V_\rho h_{\omega_\N}^{1/2}$,
\[S_{\si,\bome} V_\rho(ah_{\omega_\N}^{1/2})=S_{\si,\bome} (a\bome) =a^*h_\si^{1/2}\pl.\]
 Note that $\Delta_\M(\si, \bome)$ only depends on $\omega'\in \M'$ induced by $\bome$. Indeed, for $x\in \M$,
\begin{align*}
\bra{\bome}JxJ\ket{\bome}
&=\bra{h_{\omega_\N}^{1/2}} V_\rho^* JxJ V_\rho \ket{h_{\omega_\N}^{1/2}}
\\ &=\bra{h_{\omega_\N}^{1/2}}J_0 V_\rho^* JxJ V_\rho  J_0\ket{h_{\omega_\N}^{1/2}}
\\ &=\omega_\N\circ E_\rho( x)
\\ &= \tr( xh_{\omega_\N\circ E_\rho}).
\end{align*}
Thus $S_{\si,\bome}^*\bar{S}_{\si,\bome}=\Delta_\M(\si, \omega)$ for $\omega=\omega_\N\circ E_\rho$. Thus for this choice $\bome=V_\rho h_{\omega_\N}^{1/2}$,
\[ S_{\si,\bome} V_\rho=V_\si S_{\si_\N,\omega_\N}\pl, \pl V_\rho^*\Delta_\M(\si,\omega)V_\rho=\Delta_\N(\si_\N, \omega_\N)\pl.\]
The other assertion follows from operator convexity and operator monotonicity of $f$.
\end{proof}

\begin{lemma}\label{4.11}Let $\rho,\si\in D_+(\M)$ and $\omega_\N\in D_+(\N)$ be faithful.
Define the vectors
\begin{align*}
\ket{a_t} & :=J\Delta(\si,\rho)^{-it}V_\rho\Delta(\si_\N,\rho_\N)^{\frac{1}{2}+it}\ket{h_{\rho_\N}^{1/2}},
\\
\ket{b_t} & := \Delta(\si,\rho)^{\frac{1}{2}+it}V_\rho\Delta(\si_\N,\rho_\N)^{-\frac{1}{2}-it}\ket{h_{\rho_\N}^{1/2}},
\\ \ket{c_t} & :=\Delta(\si,E_\rho(\omega_\N))^{1/2+it}V_\rho \Delta(\si_\N,\omega_\N)^{-1/2-it}\ket{h_{\rho_\N}^{1/2}}.
\end{align*}
The following equalities hold for $x\in \M$:
\begin{align*}
\bra{a_t}x\ket{a_t}=\si_\N\circ E_\rho^{t} ( x)\pl, \quad \bra{b_t}x\ket{b_t}=\rho_\N\circ E_\si^{-t} ( x)\pl, \quad
\bra{c_t}x\ket{c_t}=\rho_\N\circ E_\si^{-t} ( x)\pl.
\end{align*}
\end{lemma}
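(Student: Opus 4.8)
The plan is to evaluate each of the three vectors explicitly inside the Haagerup space $L_2(\M)$ and then to read off the positive functional $x\mapsto\bra{\cdot}x\ket{\cdot}$ on $\M$ by rewriting it as a trace against a density operator in $L_1(\M)$. The only inputs needed are: the action of complex powers of the relative modular operator, $\Delta(\si,\omega)^{z}\ket{h_\eta^{1/2}}=\ket{h_\si^{z}h_\eta^{1/2}h_\omega^{-z}}$ for $z=\tfrac12+it$ and $z=-it$ (which follows from the preceding lemma together with the functional calculus on $L_2(\M)$); the right-multiplication description of the partial isometry, $V_\rho\ket{k}=\ket{k\,h_{\rho_\N}^{-1/2}h_\rho^{1/2}}$ on the dense domain $\N h_{\rho_\N}^{1/2}$, which is the von Neumann analogue of the finite-dimensional $V_\rho$ and agrees with $V_\rho(ah_{\rho_\N}^{1/2})=ah_\rho^{1/2}$; and the fact that $J$ implements the $*$-operation, $J\ket{k}=\ket{k^*}$. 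With these, the computation is the exact algebraic counterpart of the finite-dimensional Lemmas~\ref{3.2}, \ref{3.10}, and \ref{3.19}.

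For $\ket{a_t}$ I would first apply $\Delta(\si_\N,\rho_\N)^{1/2+it}$ to $\ket{h_{\rho_\N}^{1/2}}$ to get $\ket{h_{\si_\N}^{1/2+it}h_{\rho_\N}^{-it}}$, push this through $V_\rho$ to obtain $\ket{h_{\si_\N}^{1/2+it}h_{\rho_\N}^{-1/2-it}h_\rho^{1/2}}$, apply $\Delta(\si,\rho)^{-it}$, and finally apply $J$; this yields $\ket{a_t}=\ket{z^*}$ with $z=h_\si^{-it}h_{\si_\N}^{1/2+it}h_{\rho_\N}^{-1/2-it}h_\rho^{1/2+it}$. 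Hence $\bra{a_t}x\ket{a_t}=\tr(zxz^*)=\tr(x\,z^*z)$ with
\[
z^*z=h_\rho^{1/2-it}h_{\rho_\N}^{-1/2+it}h_{\si_\N}h_{\rho_\N}^{-1/2-it}h_\rho^{1/2+it},
\]
which is exactly the density $R_\rho^t(h_{\si_\N})$ of the rotated Petz recovery map from \eqref{eq:rotated-petz-map-def}. The claimed identity $\bra{a_t}x\ket{a_t}=\si_\N\circ E_\rho^t(x)$ then follows from the duality $\tr(R_\rho^t(h_{\si_\N})x)=\tr(h_{\si_\N}E_\rho^t(x))$ between $R_\rho^t$ and the rotated conditional expectation $E_\rho^t=\al_t^{\rho_\N}\circ E_\rho\circ\al_{-t}^\rho$ from \eqref{rotated}. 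The vector $\ket{b_t}$ is handled in the same way but without the final $J$: the analogous three steps give $\ket{b_t}=\ket{w}$ with $w=h_\si^{1/2+it}h_{\si_\N}^{-1/2-it}h_{\rho_\N}^{1/2+it}h_\rho^{-it}$, so that $ww^*=h_\si^{1/2+it}h_{\si_\N}^{-1/2-it}h_{\rho_\N}h_{\si_\N}^{-1/2+it}h_\si^{1/2-it}=R_\si^{-t}(h_{\rho_\N})$ and therefore $\bra{b_t}x\ket{b_t}=\rho_\N\circ E_\si^{-t}(x)$.

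For $\ket{c_t}$ the same reduction produces a density $vv^*$ that still carries a factor $h_\omega^{-1}$ with $\omega=R_\rho(\omega_\N)$. The crucial step, mirroring the collapse in the proof of Lemma~\ref{3.19}, is the identity
\[
h_\rho^{1/2}h_\omega^{-1}h_\rho^{1/2}=h_{\rho_\N}^{1/2}h_{\omega_\N}^{-1}h_{\rho_\N}^{1/2},
\]
which holds because $h_\omega=h_\rho^{1/2}h_{\rho_\N}^{-1/2}h_{\omega_\N}h_{\rho_\N}^{-1/2}h_\rho^{1/2}$ is precisely the density of $R_\rho(\omega_\N)=\omega_\N\circ E_\rho$ (faithfulness of $\rho$ and $\omega_\N$ guarantees the inverses exist). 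Substituting this and cancelling the surviving $h_{\omega_\N}^{\pm(1/2\mp it)}$ factors collapses $vv^*$ to the same expression $h_\si^{1/2+it}h_{\si_\N}^{-1/2-it}h_{\rho_\N}h_{\si_\N}^{-1/2+it}h_\si^{1/2-it}=R_\si^{-t}(h_{\rho_\N})$ found for $\ket{b_t}$, giving $\bra{c_t}x\ket{c_t}=\rho_\N\circ E_\si^{-t}(x)$.

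The main obstacle is rigor rather than bookkeeping: every step above manipulates products of unbounded complex powers of the Haagerup densities $h_\rho,h_{\rho_\N},h_\si,h_{\si_\N},h_{\omega_\N},h_\omega$. I would therefore have to justify, at each stage, that the intermediate expression is a genuine element of $L_2(\M)$ (splitting each factor into an $L_2$ density power times a bounded modular unitary $h^{it}$), that $V_\rho$ acts as stated on these non-analytic vectors (via the intertwining $V_\rho a=aV_\rho$ for $a\in\N$ together with approximation, as in Lemma~\ref{opin}), and that $J$ indeed implements the adjoint on them. The remaining ingredient, the duality $\eta_\N\circ E_\cdot^{\pm t}(x)=\tr(x\,R_\cdot^{\pm t}(h_{\eta_\N}))$ in the general von Neumann algebraic setting, follows from the defining relation \eqref{recovery} for the Petz map and the definition \eqref{rotated} of the rotated maps via the modular automorphism groups.
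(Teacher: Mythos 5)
Your computation reproduces the finite-dimensional algebra of Lemmas~\ref{3.2}, \ref{3.10}, and \ref{3.19} and arrives at the correct densities, but it is not the route the paper takes, and in the general von Neumann algebraic setting it has a gap that is more than a matter of ``rigor of unbounded operators.'' The expressions you write down, such as $h_{\si_\N}^{1/2+it}h_{\rho_\N}^{-1/2-it}h_\rho^{1/2}$ or $h_\rho^{1/2}h_{\rho_\N}^{-1/2}h_{\omega_\N}h_{\rho_\N}^{-1/2}h_\rho^{1/2}$, multiply Haagerup densities of $\N$ against Haagerup densities of $\M$. These live in different algebras of $\tau$-measurable operators: $h_{\rho_\N}$ is affiliated with the crossed product $\N\rtimes_{\al}\mathbb{R}$ and $h_\rho$ with $\M\rtimes_{\al}\mathbb{R}$, and $L_p(\N)$ is \emph{not} a subspace of $L_p(\M)$ here (this containment is exactly what fails when no $\rho$-preserving conditional expectation onto $\N$ exists, which is the generic situation this lemma must cover). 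Consequently $V_\rho\ket{k}=\ket{k\,h_{\rho_\N}^{-1/2}h_\rho^{1/2}}$ and the ``collapse'' identity $h_\rho^{1/2}h_\omega^{-1}h_\rho^{1/2}=h_{\rho_\N}^{1/2}h_{\omega_\N}^{-1}h_{\rho_\N}^{1/2}$ have no canonical meaning; your proposed fix of peeling off ``bounded modular unitaries $h^{it}$'' does not help, since $h_{\rho_\N}^{it}$ is likewise not an element of $\M$ or of $B(L_2(\M))$.

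The paper circumvents this entirely by never forming such products: it proves bounded-operator identities in the standard form, namely $\bra{a_t}x\ket{a_t}=\bra{h_{\si_\N}^{1/2}}\Delta(\rho_\N,\si_\N)^{-it}\,E_\rho(\al^{\rho}_{-t}(x))\,\Delta(\rho_\N,\si_\N)^{it}\ket{h_{\si_\N}^{1/2}}$ using the definition $E_\rho(y)=J_0V_\rho^*JyJV_\rho J_0$ together with $J\Delta(\si,\rho)^{-it}J=\Delta(\rho,\si)^{-it}$; the intertwining relation $\Delta(\si,\rho)^{1/2}V_\rho\Delta(\si_\N,\rho_\N)^{-1/2}=JV_\si J_0$, verified on the dense set $\{h_{\si_\N}^{1/2}a\}$; and, for $\ket{c_t}$, the polar decomposition $V_\rho h_{\omega_\N}^{1/2}=u\,h_{\omega_\N\circ E_\rho}^{1/2}$ with $u$ a unitary of $\M$, which yields $Ju^*V_\si J_0=\Delta(\si,\omega_\N\circ E_\rho)^{1/2}V_\rho\Delta(\si_\N,\omega_\N)^{-1/2}$; the factor $JuJ\in\M'$ then commutes past $x$ and cancels. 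Every object in that chain is a well-defined (partial) isometry or a complex power of a relative modular operator acting on its own Hilbert space, so no cross-algebra products arise. To salvage your argument you would either have to restrict to the case where $L_p(\N)$ embeds in $L_p(\M)$ (defeating the purpose of Section~\ref{sec:opt-f-div}) or reformulate each density manipulation as one of these operator identities --- at which point you have reproduced the paper's proof.
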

\begin{proof}
For the first one,
\begin{align*}
\ket{a_t}& =J\Delta(\si,\rho)^{-it}V_\rho\Delta(\si_\N,\rho_\N)^{\frac{1}{2}+it}\ket{h_{\rho_\N}^{1/2}}
\\ & = J\Delta(\si,\rho)^{-it}J J V_\rho J_0 J_0\Delta(\si_\N,\rho_\N)^{\frac{1}{2}+it}J_0\ket{h_{\rho_\N}^{1/2}}
\\ & = \Delta(\rho,\si)^{-it} J V_\rho J_0 \Delta(\rho_\N,\si_\N)^{it} \ket{h_{\si_\N}^{1/2}}.
\end{align*}
Then
\begin{align*}\bra{a_t}x\ket{a_t}&=\bra{h_{\si_\N}^{1/2}} \Delta(\rho_\N,\si_\N)^{-it}J_0 V_\rho^* J \Delta(\rho,\si)^{it}x\Delta(\rho,\si)^{-it} J V_\rho J_0 \Delta(\rho_\N,\si_\N)^{it} \ket{h_{\si_\N}^{1/2}}
\\ &=\bra{h_{\si_\N}^{1/2}} \Delta(\rho_\N,\si_\N)^{-it}J_0 V_\rho^* J \al^\rho_{-t}(x) J V_\rho J_0 \Delta(\rho_\N,\si_\N)^{it} \ket{h_{\si_\N}^{1/2}}
\\ &=\bra{h_{\si_\N}^{1/2}} \Delta(\rho_\N,\si_\N)^{-it} E_\rho\circ\al^\rho_{-t}(x) \Delta(\rho_\N,\si_\N)^{it} \ket{h_{\si_\N}^{1/2}}
\\ &=\bra{h_{\si_\N}^{1/2}} (\al^{\rho}_{t}\circ E_\rho\circ\al^{\rho_\N}_{-t})(x) \ket{h_{\si_\N}^{1/2}}
\\ &=\bra{h_{\si_\N}^{1/2}} E_\rho^{t}(x) \ket{h_{\si_\N}^{1/2}}
\\ &= \si_\N\circ E_\rho^{t}(x).
\end{align*}
For the second one, we first show that
\[ \Delta(\si,\rho)^{\frac{1}{2}}V_\rho\Delta(\si_\N,\rho_\N)^{-\frac{1}{2}}= JV_\si J_0.
\]
Indeed, for $a\in \M$
\begin{align*}  JV_\si J_0\ket{h_{\si_\N}^{1/2} a}& =JV_\si \ket{a^*h_{\si_\N}^{1/2} }=J \ket{a^*h_{\si}^{1/2} }=\ket{h_{\si}^{1/2} a}\\
 \Delta(\si,\rho)^{\frac{1}{2}}V_\rho\Delta(\si_\N,\rho_\N)^{-\frac{1}{2}}\ket{h_{\si_\N}^{1/2} a}
& = \Delta(\si,\rho)^{\frac{1}{2}}V_\rho J\Delta(\rho_\N,\si_\N)^{\frac{1}{2}}J\ket{h_{\si_\N}^{1/2} a}
\\ & = \Delta(\si,\rho)^{\frac{1}{2}}V_\rho J\Delta(\rho_\N,\si_\N)^{\frac{1}{2}}\ket{a^*h_{\si_\N}^{1/2}}
\\ & = \Delta(\si,\rho)^{\frac{1}{2}}V_\rho J\ket{h_{\rho_\N}^{1/2}a^*}
\\ & = \Delta(\si,\rho)^{\frac{1}{2}}V_\rho \ket{a h_{\rho_\N}^{1/2}}
\\ & = \Delta(\si,\rho)^{\frac{1}{2}} \ket{a h_{\rho}^{1/2}}
\\ & = \ket{h_{\si}^{1/2} a }.
\end{align*}
Then for $x\in \M$,
\begin{align*}\bra{b_t}x\ket{b_t}&=\bra{h_{\rho_\N}^{1/2}}\Delta(\si_\N,\rho_\N)^{-\frac{1}{2}+it}V_\rho^*\Delta(\si,\rho)^{\frac{1}{2}-it}x\Delta(\si,\rho)^{\frac{1}{2}+it}V_\rho\Delta(\si_\N,\rho_\N)^{-\frac{1}{2}-it}\ket{h_{\rho_\N}^{1/2}}
\\ &=\bra{h_{\rho_\N}^{1/2}}\Delta(\si_\N,\rho_\N)^{it}J_0V_\si^*J\Delta(\si,\rho)^{-it}x\Delta(\si,\rho)^{it}JV_\si J_0\Delta(\si_\N,\rho_\N)^{-it}\ket{h_{\rho_\N}^{1/2}}
\\ &=\bra{h_{\rho_\N}^{1/2}}\al^{\si_\N}_{-t}\circ E_\si\circ \al^{\si}_t(x)\ket{h_{\rho_\N}^{1/2}}
\\ &=\bra{h_{\rho_\N}^{1/2}}E_\si^{-t}(x)\ket{h_{\rho_\N}^{1/2}}
\\ &=\rho_\N\circ E_\si^{-t}(x).
\end{align*}
For the third assertion, note that we have shown in Lemma~\ref{opin} that
\begin{align*} \bra{h_{\omega_\N}^{1/2}}V_\rho^* Jx J V_\rho \ket{h_{\omega_\N}^{1/2}}
=\tr( x h_{\omega_\N\circ E_\rho})=\bra{h_{\omega_\N\circ E_\rho}^{1/2}}JxJ\ket{ h_{\omega_\N\circ E_\rho}^{1/2}}\pl.
\end{align*}
Then we have
\[ V_\rho h_{\omega_\N}^{1/2}= uh_{\omega_\N\circ E_\rho}^{1/2}\pl.\]
for some unitary $u$ in $\M$. For  ease of notation, we write  $\Delta_\N=\Delta(\si_\N,\omega_\N)$
and $\Delta_\M=\Delta(\si,\omega_\N\circ E_\rho)$.
 Then for $a\in \M$
\begin{align*}
\Delta_{\M}^{\frac{1}{2}}V_\rho\Delta_\N^{-\frac{1}{2}}\ket{h_{\si_\N}^{1/2} a}& =\Delta_{\M}^{\frac{1}{2}}V_\rho J\Delta_\N^{\frac{1}{2}}\ket{a^*h_{\si_\N}^{1/2}}= \Delta_{\M}^{\frac{1}{2}}V_\rho \ket{a h_{\omega_\N}^{1/2}}
\\ & = \Delta_{\M}^{\frac{1}{2}} a V_\rho \ket{ h_{\omega_\N}^{1/2}}
=\Delta_{\M}^{\frac{1}{2}} \ket{a u h_{\omega_\N\circ E_\rho}^{1/2}}
= \ket{ h_{\si}^{1/2}a u }= Ju^*V_\si J_0\ket{ h_{\si_\N}^{1/2}a }.
\end{align*}
Thus we have shown that
\[ Ju^*V_\si J_0=\Delta_{\M}^{\frac{1}{2}}V_\rho\Delta_\N^{-\frac{1}{2}},
\]
where $u$ is the unitary from the polar decomposition of $V_\rho h_{\omega_\N}^{1/2}$.
Then
\begin{align*}\bra{c_t}x\ket{c_t}& =\bra{h_{\rho_\N}^{1/2}} \Delta_{\N}^{-1/2+it}V_\rho^* \Delta_{\M}^{1/2-it} x\Delta_{\M}^{1/2+it}V_\rho \Delta_{\N}^{-1/2-it}\ket{h_{\rho_\N}^{1/2}}
\\ & =\bra{h_{\rho_\N}^{1/2}} \Delta_{\N}^{it}J_0V_\si^* J JuJ \Delta_{\M}^{-it} x\Delta_{\M}^{+it}
Ju^*J JV_\si J_0 \Delta_{\N}^{-it}\ket{h_{\rho_\N}^{1/2}}
\\ & =\bra{h_{\rho_\N}^{1/2}} \Delta_{\N}^{it}J_0V_\si^* J JuJ \al^{\si}_t(x)
Ju^*J JV_\si J_0 \Delta_{\N}^{-it}\ket{h_{\rho_\N}^{1/2}}
\\ & =\bra{h_{\rho_\N}^{1/2}} \Delta_{\N}^{it}J_0V_\si^* J \al^{\si}_t(x)
JV_\si J_0 \Delta_{\N}^{-it}\ket{h_{\rho_\N}^{1/2}}
\\ & =\bra{h_{\rho_\N}^{1/2}} \Delta_{\N}^{it} E_\si\circ\al^{\si}_t(x)
 \Delta_{\N}^{-it}\ket{h_{\rho_\N}^{1/2}}
\\ & =\bra{h_{\rho_\N}^{1/2}}\al_{-t}^{\si_\N} \circ E_\si\circ\al^{\si}_t(x)
 \ket{h_{\rho_\N}^{1/2}}
 \\ & =\bra{h_{\rho_\N}^{1/2}}E_{-\si}^t(x)
 \ket{h_{\rho_\N}^{1/2}}
 \\ & =\rho_\N\circ E_\si^{-t}(x).\qedhere
\end{align*}
\end{proof}
\bigskip
Now we can recover the estimate in Lemma \ref{3.2}, \ref{3.10}, and \ref{3.19}.
\begin{lemma}\label{5.14}
Define
\begin{align*}
\ket{w_\la} & \coloneqq (\Delta(\si,\rho)+\la)^{-1}\ket{h_{\rho}^{1/2}}-V_\rho(\Delta(\si_\N,\rho_\N)+\la)^{-1}\ket{h_{\rho_\N}^{\frac12}}\pl,
\\ \ket{u_\la} & \coloneqq (\Delta_\M(\si,R_\rho(\omega_\N))+\la)^{-1}\ket{h_{\rho}^{1/2}}-V_\rho (\Delta_\N(\si_\N,\omega_\N)+\la)^{-1}\ket{h_{\rho_\N}^{1/2}}\pl,\end{align*}
and
\begin{align*}
\ket{w_t} & \coloneqq -\frac{\cosh(\pi t)}{\pi}\left(\int^{\infty}_{0}\la^{1/2+it}\ket{w_\la}\, d\la\right),
\\ \ket{v_t} & \coloneqq \frac{\cosh(\pi t)}{\pi}\Delta_{\M}^{\frac{1}{2}+it}\int_0^\infty \la^{-\frac{1}{2}-it}\ket{w_\la}\, d\la,
\\ \ket{u_t} & \coloneqq \frac{\cosh(\pi t)}{\pi} \Delta_\M(\si,R_\rho(\omega_\N))^{1/2+it} \int_{0}^\infty \la^{-1/2-it} \ket{u_\la}\, d\la.
\end{align*}
Then the following inequalities hold
\begin{align*}
\norm{\si-\si_\N\circ E_\rho^{t}}{1} & \le 2\norm{\ket{w_t}}{2},
\\
\norm{\rho-\rho_\N\circ E_\si^{-t}}{1} & \le 2\norm{\ket{v_t}}{2},
\\ \norm{\rho-\rho_\N\circ E_\si^{-t}}{1} & \le 2\norm{\ket{u_t}}{2}.
\end{align*}
\end{lemma}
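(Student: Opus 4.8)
The plan is to reduce all three inequalities to the finite-dimensional arguments of Lemmas~\ref{3.2}, \ref{3.10}, and \ref{3.19}, now carried out in Haagerup $L_2$-spaces, with the state identifications supplied by Lemma~\ref{4.11}. The common skeleton has three parts. First, convert each integral definition of $\ket{w_t}$, $\ket{v_t}$, $\ket{u_t}$ into a closed ``difference of fractional powers'' form. Second, recognize the two terms of each difference as vector representatives of the original state and of the corresponding rotated Petz recovered state. Third, invoke the norm inequality \eqref{2}, namely $\norm{x^*x-y^*y}{1}\le 2\norm{x-y}{2}$ for unit-norm $x,y$, which remains valid in Haagerup $L_2$ since $x^*x-y^*y=x^*(x-y)+(x-y)^*y$ and Hölder applies; faithfulness of $\rho,\si,\omega_\N$ guarantees the relevant vectors have $L_2$-norm one.

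For the first part I would apply the Komatsu operator-integral formulas $A^{\frac{1}{2}+it}=\frac{\cosh(\pi t)}{\pi}\int_0^\infty\la^{\frac{1}{2}+it}\big(\tfrac1\la-\tfrac1{\la+A}\big)\,d\la$ and the formula \eqref{eq:integral-rep-x-min-1-2-it} for $A^{-\frac{1}{2}-it}$, which hold by spectral calculus for any positive self-adjoint $A$, to $A=\Delta(\si,\rho)$, $A=\Delta(\si_\N,\rho_\N)$, and for $\ket{u_t}$ to $A=\Delta_\M(\si,R_\rho(\omega_\N))$, $A=\Delta_\N(\si_\N,\omega_\N)$. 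Using $V_\rho\ket{h_{\rho_\N}^{1/2}}=\ket{h_\rho^{1/2}}$, the two $\tfrac1\la$-contributions cancel inside the difference exactly as in Lemma~\ref{3.2}, and after multiplying by $\Delta_\M^{\frac{1}{2}+it}$ for the latter two (as in Lemmas~\ref{3.10} and \ref{3.19}) one obtains
\begin{align*}
\ket{w_t}&=\Delta_\M^{\frac{1}{2}+it}\ket{h_\rho^{1/2}}-V_\rho\Delta_\N^{\frac{1}{2}+it}\ket{h_{\rho_\N}^{1/2}},\\
\ket{v_t}&=\ket{h_\rho^{1/2}}-\Delta_\M^{\frac{1}{2}+it}V_\rho\Delta_\N^{-\frac{1}{2}-it}\ket{h_{\rho_\N}^{1/2}},\\
\ket{u_t}&=\ket{h_\rho^{1/2}}-\Delta_\M(\si,R_\rho(\omega_\N))^{\frac{1}{2}+it}V_\rho\Delta_\N(\si_\N,\omega_\N)^{-\frac{1}{2}-it}\ket{h_{\rho_\N}^{1/2}},
\end{align*}
where the intertwining $V_\rho^*\Delta_\M(\si,R_\rho(\omega_\N))V_\rho=\Delta_\N(\si_\N,\omega_\N)$ from Lemma~\ref{opin} (specialized to $\omega_\N=\rho_\N$ for $\ket{w_t}$ and $\ket{v_t}$) is what aligns the resolvents.

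For the second and third parts, the key observation is that the second terms of $\ket{v_t}$ and $\ket{u_t}$ are precisely the vectors $\ket{b_t}$ and $\ket{c_t}$ of Lemma~\ref{4.11}, while $\ket{h_\rho^{1/2}}$ represents $\rho$. Since Lemma~\ref{4.11} gives $\bra{b_t}x\ket{b_t}=\bra{c_t}x\ket{c_t}=\rho_\N\circ E_\si^{-t}(x)$, the inequality \eqref{2} (in its $xx^*$ form) yields $\norm{\rho-\rho_\N\circ E_\si^{-t}}{1}\le 2\norm{\ket{v_t}}{2}$ and $\le 2\norm{\ket{u_t}}{2}$ directly. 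For $\ket{w_t}$ the second term equals $\ket{h_{\si_\N}^{1/2+it}h_{\rho_\N}^{-1/2-it}h_\rho^{1/2}}$, which differs from $\ket{a_t}$ only by the unitary right-multiplication $h_\rho^{-it}$; writing $x=h_\si^{1/2+it}$ and $y=h_{\si_\N}^{1/2+it}h_{\rho_\N}^{-1/2-it}h_\rho^{1/2+it}$, one has $\ket{w_t}=\ket{(x-y)h_\rho^{-it}}$, so $\norm{\ket{w_t}}{2}=\norm{x-y}{2}$, while $x^*x=h_\si$ and $y^*y=a_ta_t^*=h_{\si_\N\circ E_\rho^t}$ by Lemma~\ref{4.11}; then \eqref{2} gives $\norm{\si-\si_\N\circ E_\rho^t}{1}\le 2\norm{\ket{w_t}}{2}$.

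The main obstacle is purely technical and lives in the infinite-dimensional passage: the modular operators are generically unbounded, so $\Delta_\M^{1/2}\ket{h_\rho^{1/2}}$ and the $\la$-integrals must be controlled on the correct domains. I would handle this by staying within the spectral calculus and reducing, via Lemma~\ref{rest2} and Remark~\ref{4.7}, to $\si$-finite algebras with faithful $\omega$, so that all vectors lie in the domains of the fractional powers and the integrals converge in $L_2(\M)$; when the relevant second moments are infinite the asserted bounds hold trivially. A secondary point needing care is verifying that the manipulations with $J$, $J_0$, and $V_\rho$ in the standard form genuinely produce the claimed state identifications, but this is exactly the content already established in Lemma~\ref{4.11}, so it can be cited rather than repeated.
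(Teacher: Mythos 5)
Your proof is correct and follows essentially the same route as the paper's: convert the integrals to the closed forms (the second terms being, up to unitaries, the vectors $\ket{a_t}$, $\ket{b_t}$, $\ket{c_t}$), invoke the state identifications of Lemma~\ref{4.11}, and apply \eqref{2} in Haagerup $L_2$. The only stylistic difference is in the first inequality, where you factor out right multiplication by $h_\rho^{-it}$ explicitly, whereas the paper writes $\ket{w_t}=\Delta_\M^{it}\big(\ket{h_\si^{1/2}}-J\ket{a_t}\big)$ and uses that $\Delta_\M^{it}$ and $J$ are (anti-)unitary, which sidesteps having to justify that the twisted elements $h_\si^{1/2+it}$ and $h_\rho^{-it}$ (which lie outside $L_2(\M)$ and $\M$, respectively) still satisfy $\norm{(x-y)h_\rho^{-it}}{2}=\norm{x-y}{2}$.
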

\begin{proof}
For ease of notation, we write $\Delta_{\M}:=\Delta(\rho,\si)$ and $\Delta_{\N}:=\Delta(\rho_\N,\si_\N)$.
As in the finite-dimensional case,
\begin{align*}
\ket{w_t}& =\Delta_{\M}^{1/2+it}\ket{h_{\rho}^{1/2}}-V_\rho\Delta_{\N}^{1/2+it}\ket{h_{\rho_\N}^{1/2}}=\Delta_{\M}^{it}\Big( \Delta_{\M}^{1/2}\ket{h_{\rho}^{1/2}}-\Delta_{\M}^{-it}V_\rho\Delta_{\N}^{1/2+it}\ket{h_{\rho_\N}^{1/2}}\Big)\\& =\Delta_{\M}^{it}\Big( \ket{h_{\si}^{1/2}}-J\ket{a_t}\Big),
\\ \ket{v_t}& =\ket{h_{\rho}^{1/2}}-\Delta_{\M}^{\frac{1}{2}+it} V_\rho\Delta_{\N}^{-1/2-it}\ket{h_{\rho_\N}^{1/2}}
=\ket{h_{\rho}^{1/2}}-\ket{b_t},
\\ \ket{u_t}& =\ket{h_{\rho}^{1/2}}-\Delta(\si,\omega_\N\circ E_\rho)^{\frac{1}{2}+it} V_\rho\Delta(\si_\N,\omega_\N )^{-\frac{1}{2}-it} \ket{h_{\rho_\N}^{1/2}}=\ket{h_{\rho}^{1/2}}-\ket{c_t}.
\end{align*}
For the first one, by $J\ket{h_{\si}^{1/2}}=\ket{h_{\si}^{1/2}}$ and Lemma \ref{4.11}
\[2\norm{\ket{w_t}}{2}=2\norm{J\ket{h_{\si}^{1/2}}-J\ket{a_t}}{2}=2\norm{\ket{h_{\si}^{1/2}}-\ket{a_t}}{2}\ge \norm{\si-\si_\N\circ E_\rho^t}{1}\pl,\]
where we have used the inequality in \eqref{2}. This inequality remains valid in Haagerup $L_p$-spaces since its proof in \cite[Lemma 2.2]{CV17} only uses H\"{o}lder's inequality. The other two assertions follow similarly.
\end{proof}

Based on the lemma above,
the rest of the argument is identical to that given for Lemmas~\ref{fd}, \ref{fd2}, and \ref{fd3}, which  estimate the Hilbert-space norm of $\ket{v_t}, \ket{w_t}$, and $\ket{u_t}$, respectively. In particular, the argument of Lemmas~\ref{fd}, \ref{fd2}, and \ref{fd3} implies the integral expression of $\ket{v_t}, \ket{w_t}$, and $\ket{u_t}$ converges absolutely if $Q_f(\rho\Vert\si)$ and $\widetilde{Q}_f(\rho\Vert\si)$ are finite for some regular $f$.

We now state our recovery results for quantum channels on general von Neumann algebras. Recall that we denote $D$ as the standard relative entropy and $\widetilde{D}_\al$ as the $\al$-sandwiched R\'enyi relative entropy. The maps $R_{\Phi,\si}^t$ and $R_{\Phi,\rho}^t$ are the rotated Petz maps defined in~\eqref{rotated}.

\begin{theorem}
\label{thm:recoverability-vNa}
Let $\Phi:\N\to \M$ be a normal unital completely positive map. Let $\rho,\si\in D(\M)$ be two states and denote $\rho_0=\rho\circ \Phi, \si_0=\si\circ \Phi$. Suppose $s(\rho)\le s(\si)$.
For $t\in \R$,
\begin{enumerate}
\item[i)] if $s(\rho)=s(\si)$ and $Q_{x^2}(\rho\|\si)<\infty$,
\[D(\rho\|\si)-D(\rho_0\|\si_0)\ge
\left(\frac{\pi}{8 \cosh(\pi t)} \right)^4 Q_{x^2}(\rho\|\si)^{-1}\norm{\si-\si_0\circ R_{\Phi,\rho}^t}{1}^4\pl.\]
\item[ii)] if $Q_{x^{-1}}(\rho\|\si)<\infty$, then for all $\varepsilon \in (0,1/2)$,
\[D(\rho\|\si)-D(\rho_0\|\si_0)\ge \left(K(Q_{x^{-1}}(\rho\|\si),\varepsilon) \frac{\pi}{2\cosh(\pi t)} \norm{\rho-\rho_0\circ R_{\Phi,\si}^t}{1}\right)^{\frac{1}{1/2-\varepsilon}}.\]
\item[iii)] if $\widetilde{Q}_{\infty}(\rho\|\si)=\inf \{\la| \rho\le \la\si\}<\infty$, then for all $\al\in(1/2,1)$, $\alpha'=\alpha/(\alpha-1)$, and $\varepsilon\in(0,(1-1/|\al'|)/2)$,
\begin{multline*}
\widetilde{D}_\al(\rho\|\si)-\widetilde{D}_\al(\rho_0\|\si_0) \ge \\
|\al'|\log\left(1+\left(K(\al ,\varepsilon ,\widetilde{Q}_{\infty}(\rho\|\si))\frac{\pi}{2\cosh{\pi t}}\norm{\rho-\rho_0\circ R_{\Phi,\si}^t}{1}\right)^{\frac{1}{\frac{1-1/|\al'|}{2}}-\varepsilon}\right).
\end{multline*}
For all $\al>1 $, $\alpha'=\alpha/(\alpha-1)$, and $\varepsilon\in(0,(1-1/|\al'|)/2)$,
\begin{multline*}
\widetilde{D}_\al(\rho\|\si)-\widetilde{D}_\al(\rho_0\|\si_0) \ge \\
\al' \log\left(1+
\frac{1}{\widetilde{Q}_{\infty}(\rho\|\si)^{\frac{1}{\al'}}}
\left(K(\al ,\varepsilon ,\widetilde{Q}_{\infty}(\rho\|\si))\frac{\pi}{2\cosh{\pi t}}\norm{\rho-\rho_0\circ R_{\Phi,\si}^t}{1}\right)^{\frac{1}{\frac{1-1/|\al'|}{2}}-\varepsilon}\right).
\end{multline*}
\end{enumerate}
In the inequalities above, $K(Q_{x^{-1}}(\rho\|\si),\varepsilon)$ and $K(\al ,\varepsilon ,\widetilde{Q}_{\infty}(\rho\|\si))$) are constants defined as in \eqref{eq:K-constant-rel-ent-orig} and \eqref{con}, respectively.
\end{theorem}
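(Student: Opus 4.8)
The plan is to reduce the general-channel statement to the subalgebra case and then transplant the finite-dimensional optimization arguments almost verbatim, now working in Haagerup $L_2$-spaces. First I would invoke the Stinespring factorization from the proof of Theorem~\ref{data}: writing $\Phi = P \circ \iota$ with $\iota:\N \hookrightarrow \mathcal{L}$ an inclusion and $P:\mathcal{L} \to e\mathcal{L}e$ a compression, the associated rotated Petz maps factor as $R_{\Phi,\rho}^t = E_\rho^t \circ \iota_\rho$ and $R_{\Phi,\si}^t = E_\si^t \circ \iota_\rho$. Since the embedding $\iota_\rho$ preserves both the $L_1$-norm and every (optimized) $f$-divergence on its support (Lemma~\ref{rest2}), it suffices to prove each bound for the rotated generalized conditional expectations $E_\rho^t$ and $E_\si^{-t}$ in place of the full recovery maps, invoking the symmetry of $\cosh(\pi t)$ about $t=0$ to pass between $E_\si^{t}$ and $E_\si^{-t}$.

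Next I would pass the recovery errors to Hilbert-space norms. The lemma immediately preceding the theorem already supplies, in the von Neumann algebraic setting, the three estimates $\norm{\si-\si_\N\circ E_\rho^{t}}{1}\le 2\norm{\ket{w_t}}{2}$, $\norm{\rho-\rho_\N\circ E_\si^{-t}}{1}\le 2\norm{\ket{v_t}}{2}$, and $\norm{\rho-\rho_\N\circ E_\si^{-t}}{1}\le 2\norm{\ket{u_t}}{2}$, where $\ket{w_t},\ket{v_t},\ket{u_t}$ are built from the same operator-integral representations as in Lemmas~\ref{3.2}, \ref{3.10}, and \ref{3.19}. This reduces part~(i) to a bound on $\norm{\ket{w_t}}{2}$, part~(ii) to $\norm{\ket{v_t}}{2}$, and part~(iii) to $\norm{\ket{u_t}}{2}$.

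The norm estimates then proceed exactly as in Lemmas~\ref{fd}, \ref{fd2}, and \ref{fd3}: I split each integral $\int_0^\infty = \int_0^S + \int_S^T + \int_T^\infty$, bound the outer two pieces by $S$- and $T$-dependent terms using the spectral theorem and $\arctan(x)\le x$, and control the middle piece by the $f$-divergence difference through the regularity constant $c(S,T)$ and equation~\eqref{dif}. For part~(i) I take $f(x)=-\log x$ (so $c(S,T)=1$), set $S=0$, and optimize $T$ as in Theorem~\ref{re}, using $Q_{x^2}(\rho\|\si)=\bra{\rho^{1/2}}\Delta_\M^2\ket{\rho^{1/2}}<\infty$ to tame the tail. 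For part~(ii) I again use $f=-\log$ but the $\ket{v_t}$-estimate of Lemma~\ref{fd2}, with $S=T^{-1}=\delta$ and the $\varepsilon$-smoothing of $|\ln\delta|$ as in Theorem~\ref{rsi}, now requiring $Q_{x^{-1}}(\rho\|\si)<\infty$. For part~(iii) I use the optimized-$f$-divergence estimate of Lemma~\ref{fd3} with $f(x)=x^{-1/\al'}$ when $\al>1$ and $f(x)=-x^{-1/\al'}$ when $1/2<\al<1$, reproducing the computation of Theorem~\ref{fd4} and then converting quasi-entropies into sandwiched R\'enyi entropies by the logarithm step of Corollary~\ref{renyi}, with $\widetilde{Q}_\infty(\rho\|\si)<\infty$ now playing the role that $Q_{x^{-1}}$ played before.

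The main obstacle will be justifying the operator-integral manipulations when $\Delta_\M$ is unbounded, that is, away from finite dimensions. The three-term splitting, the interchange of integration with functional calculus, and the spectral-measure computations all presuppose that the defining integrals for $\ket{w_t},\ket{v_t},\ket{u_t}$ converge absolutely in $L_2(\M)$. This is exactly where the finiteness hypotheses enter: $Q_{x^2}(\rho\|\si)<\infty$, $Q_{x^{-1}}(\rho\|\si)<\infty$, and $\widetilde{Q}_\infty(\rho\|\si)<\infty$ ensure, respectively, that the spectral integrals $\int_0^\infty s\,d\mu(s)$ and $\int_0^\infty s^{-1}\,d\mu(s)$ against $d\mu(s)=d\bra{\rho^{1/2}}E_{[0,s]}(\Delta_\M)\ket{\rho^{1/2}}$ remain finite, which is precisely the absolute-convergence condition flagged in the remark following the von Neumann algebraic reproduction of Lemmas~\ref{3.2}, \ref{3.10}, and \ref{3.19}. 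Once this convergence is secured, each inequality is a line-by-line repetition of the finite-dimensional argument with the matrix trace replaced by the Haagerup trace.
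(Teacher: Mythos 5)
Your proposal is correct and follows essentially the same route as the paper: reduce to the subalgebra case via the Stinespring factorization $\Phi = P\circ\iota$ and Lemma~\ref{rest2}, pass the recovery errors to the $L_2$-norms of $\ket{w_t},\ket{v_t},\ket{u_t}$ via the von Neumann algebraic reproductions of Lemmas~\ref{3.2}, \ref{3.10}, and \ref{3.19}, and then repeat the three-term splitting and $S,T$-optimization of Lemmas~\ref{fd}--\ref{fd3} and Theorems~\ref{re}, \ref{rsi}, \ref{fd4}, with the finiteness hypotheses guaranteeing absolute convergence of the operator integrals. Your attention to the absolute-convergence point is exactly the remark the paper makes after its adaptation of those lemmas.
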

\begin{proof}
Note that the assumption $s(\rho)=s(\si)$ is equivalent to $\rho$ and $\si$ being faithful because we can always restrict our considerations to $s(\si)\M s(\si)$, as mentioned in Remark~\ref{4.7}. The faithfulness is needed for Lemma~\ref{opin} where we used the identity $V_\rho^* \Delta(\si,\rho )V_\rho=\Delta(\si_\N,\rho_\N)$ for the estimates in i). For ii) and iii),  we first obtain the faithful cases by the Lemma \ref{opin} and \ref{4.11}. For the general case of $s(\rho)\le s(\si)$, we use the approximation in
Lemma \ref{approximation}. Indeed, take $
\rho_\varepsilon=(1-\varepsilon)\rho+\varepsilon\si$ and $\rho_{0,\varepsilon}=(1-\varepsilon)\rho_0+\varepsilon\si_0$.
Then $s(\rho_\varepsilon)=s(\si)$, $\rho_{0,\varepsilon}=\rho_{\varepsilon}\circ \Phi$ and moreover
\[\lim_{\varepsilon\to 0}D(\rho_\varepsilon\Vert\si)=D(\rho\Vert\si)\pl, \lim_{\varepsilon\to 0}D(\rho_{0,\varepsilon}\Vert\si)=D(\rho_{0}\Vert\si_{0})\pl.\]
Then the estimate follows the faithful cases and $\displaystyle \norm{\rho-\rho_{0}\circ R_{\Phi,\si}^t}{1}=\lim_{\varepsilon\to 0}\norm{\rho_\varepsilon-\rho_{0,\varepsilon}\circ R_{\Phi,\si}^t}{1}$. The argument for iii) is similar.
 %For iii), such an assumption is not needed because, for the optimized $f$-divergence, it suffices to use $V_\rho^* \Delta(\si,R_\rho(\omega_\N) )V_\rho=\Delta(\si_\N,\omega_\N)$ for all faithful $\omega\in D_+(s(\si)\M s(\si))$.
\end{proof}

We have the following corollary regarding reversibility:

\begin{cor}
\label{cor:reversibility-vNa}
Let $\Phi:\N\to \M$ be a normal unital completely positive map.  Let $\rho,\si\in D(\M)$ be two states, and let $\rho_0\coloneqq\rho\circ \Phi$ and $ \si_0\coloneqq\si\circ \Phi$. Suppose $s(\rho)\le s(\si)$ and $D(\rho\|\si)<\infty$.
The following are equivalent:
\begin{enumerate}
\item[i)] $D(\rho\|\si)=D(\rho_0\|\si_0)<\infty$.
\item[ii)]$\widetilde{D}_\al(\rho\|\si)=\widetilde{D}_\al(\rho_0\|\si_0)<\infty$ for some $\al\in(1/2,1)\cup (1,\infty)$.
\item[iii)]$\widetilde{Q}_f(\rho\|\si)=\widetilde{Q}_f(\rho_0\|\si_0)$ for some  regular operator anti-monotone function $f$.
\item[iv)] $\widetilde{Q}_f(\rho\|\si)=\widetilde{Q}_f(\rho_0\|\si_0)$ for all operator anti-monotone functions $f$.
\item[v)]there exists a normal UCP map $\Phi:\M\to \N$ such that $\rho_0\circ \Phi=\rho$ and $\si_0\circ \Phi=\si$.
\item[vi)]$\rho_0\circ R_{\Phi,\si}^{t}=\rho$ for all $t$.
    \item[vii)]$\si_0\circ R_{\Phi,\rho}^{t}=\si$ for all $t$.
\end{enumerate}
\end{cor}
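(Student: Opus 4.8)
The plan is to mirror the finite-dimensional argument of Corollary~\ref{cor:many-equivs-f-div}, but using the Haagerup-space lemmas reproduced just before Theorem~\ref{thm:recoverability-vNa} in place of Lemmas~\ref{3.2}, \ref{3.10}, and \ref{3.19}. First I would reduce to the faithful case: since $s(\rho)=s(\si)$, Lemma~\ref{rest2} and Remark~\ref{4.7} let me restrict to $s(\si)\M s(\si)$, where $\rho$ and $\si$ are faithful, and the Stinespring factorization from the proof of Theorem~\ref{data} reduces a general $\Phi$ to an inclusion followed by a corner projection, so that $R_{\Phi,\rho}^t=E_\rho^t\circ\iota_\rho$ and it suffices to treat the conditional-expectation case. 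The soft implications come first: iv)$\Rightarrow$iii) is restriction to a single $f$; for vi)$\Rightarrow$v) and vii)$\Rightarrow$v) I use that at $t=0$ the rotated Petz map is the ordinary Petz map, which recovers the state it is built from ($\si_0\circ R_{\Phi,\si}=\si$ and $\rho_0\circ R_{\Phi,\rho}=\rho$), so vi) (resp.\ vii)) supplies recovery of the remaining state and $R_{\Phi,\si}$ (resp.\ $R_{\Phi,\rho}$) is the UCP map demanded by v); finally v)$\Rightarrow$\{i),ii),iii),iv)\} is the data-processing inequality (Theorem~\ref{data} for $\widetilde{Q}_f$, and standard monotonicity for $D$ and $\widetilde{D}_\al$) applied once for $\Phi$ and once for the recovery map.

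The conceptual core is i)$\Rightarrow$\{vi),vii)\}, which I would prove with no auxiliary finiteness hypothesis. For $f=-\log$ the representing measure is Lebesgue measure, so $D(\rho\|\si)-D(\rho_0\|\si_0)=\int_0^\infty F(\la)\,d\la$ with $F(\la)=\bra{w_\la}(\Delta_\M+\la)\ket{w_\la}\ge 0$ as in \eqref{eq:f_lam_norm_sum}. Equality therefore forces $F\equiv 0$, i.e.\ $\ket{w_\la}=0$ for every $\la$, equivalently $(\la+\Delta_\M)^{-1}\ket{h_\rho^{1/2}}=V_\rho(\la+\Delta_\N)^{-1}\ket{h_{\rho_\N}^{1/2}}$ for all $\la>0$. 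Reading this as an identity of the associated vector-valued spectral measures (via injectivity of the Stieltjes transform), I obtain $\phi(\Delta_\M)\ket{h_\rho^{1/2}}=V_\rho\,\phi(\Delta_\N)\ket{h_{\rho_\N}^{1/2}}$ for every Borel $\phi$ square-integrable against these measures; taking $\phi(x)=x^{1/2+it}$, which is admissible since $\bra{h_\rho^{1/2}}\Delta_\M\ket{h_\rho^{1/2}}=\tau(\si)=1<\infty$, gives $\ket{w_t}=0$, and the analogous computation gives $\ket{v_t}=0$. The recovery bounds $\norm{\si-\si_0\circ R_{\Phi,\rho}^t}{1}\le 2\norm{\ket{w_t}}{2}$ and $\norm{\rho-\rho_0\circ R_{\Phi,\si}^t}{1}\le 2\norm{\ket{v_t}}{2}$ then yield vi) and vii) simultaneously for all $t$; at $t=0$ this recovers Petz's characterization~\cite{petz88}.

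For the remaining entry points I would route ii)$\Rightarrow$iii) by choosing $f=x^{-1/\al'}$ for $\al>1$ or $f=-x^{-1/\al'}$ for $1/2<\al<1$, which are regular and identify $\widetilde{D}_\al$ with $\widetilde{Q}_f$, and then prove iii)$\Rightarrow$vi). Here I exploit \eqref{eq:lower-bnd-optimized-f-diff-big-F} together with Lemma~\ref{opin}: an $\varepsilon$-optimizer $\omega_\N$ for $\widetilde{Q}_f(\rho_0\|\si_0)$ lifts through the Petz map to $\omega=\omega_\N\circ E_\rho$, for which $\bra{h_\rho^{1/2}}f(\Delta_\M(\si,\omega))\ket{h_\rho^{1/2}}$ is squeezed between $\widetilde{Q}_f(\rho_0\|\si_0)$ and $\widetilde{Q}_f(\rho\|\si)$; equality of the two optimized divergences forces $\int_0^\infty F(\la)\,d\nu(\la)\to 0$ along the optimizing sequence, with $F(\la)=\bra{u_\la}(\Delta_\M(\si,\omega)+\la)\ket{u_\la}$. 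Since the target map $R_{\Phi,\si}^{-t}$ in $\norm{\rho-\rho_0\circ R_{\Phi,\si}^{-t}}{1}\le 2\norm{\ket{u_t}}{2}$ does not depend on the auxiliary $\omega_\N$, it then suffices to show $\norm{\ket{u_t}}{2}\to 0$, which is exactly the estimate of Lemma~\ref{fd3} and Theorem~\ref{thm:recoverability-vNa}(iii).

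The main obstacle is that this last step requires $\widetilde{Q}_\infty(\rho\|\si)=\inf\{\la:\rho\le\la\si\}<\infty$, which can genuinely fail in the type~III setting since there is no a priori bound on $\si^{-1}$. I would circumvent this by the perturbation $\si_\varepsilon\coloneqq(1-\varepsilon)\si+\varepsilon\rho$, for which $\rho\le\varepsilon^{-1}\si_\varepsilon$ gives $\widetilde{Q}_\infty(\rho\|\si_\varepsilon)\le\varepsilon^{-1}<\infty$; using joint convexity and lower semicontinuity of $\widetilde{Q}_f$ (as in the minimax argument around \eqref{limit}) to control $\widetilde{Q}_f(\rho\|\si_\varepsilon)-\widetilde{Q}_f(\rho_0\|(\si_\varepsilon)_0)$ together with continuity of the rotated Petz map in the underlying state, one passes to the limit while balancing the spectral cut-offs $S,T$ against $\varepsilon$, exactly as in the $\delta$-optimization in the proof of Theorem~\ref{rsi}. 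Tracking these limits, and relatedly checking that the finiteness qualifiers in i) and ii) propagate correctly around the cycle, is where essentially all of the technical work lies; once recovery vi)/vii) is in hand, v) and the full list of equalities follow from the soft implications above.
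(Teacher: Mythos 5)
The paper states this corollary without writing out a proof; the evidently intended argument is the von Neumann algebraic transcription of the proof of Corollary~\ref{cor:many-equivs-f-div}, citing Theorem~\ref{thm:recoverability-vNa} and Theorem~\ref{data} in place of the finite-dimensional theorems. Your skeleton matches that route exactly (reduction to the subalgebra/corner cases via the Stinespring factorization, soft implications from the defining property $\si_0\circ R_{\Phi,\si}=\si$ of the Petz map, data processing for v)$\Rightarrow$i)--iv)). Your treatment of i)$\Rightarrow$vi),vii) is in fact \emph{cleaner} than a bare citation of Theorem~\ref{thm:recoverability-vNa}: since that theorem carries the hypotheses $Q_{x^2}(\rho\|\si)<\infty$ and $Q_{x^{-1}}(\rho\|\si)<\infty$, which the corollary does not assume, your observation that exact equality forces $F(\la)\equiv 0$, hence $\ket{w_\la}=0$ for all $\la$, hence $\ket{w_t}=\ket{v_t}=0$ by the spectral/Stieltjes argument, closes a hole that the paper's implicit proof leaves open. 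You are also right that the ``$<\infty$'' qualifiers in i) and ii) do not follow from v) alone (both sides could be $+\infty$); that is a wrinkle in the statement itself, and flagging it is appropriate.

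The one place where your proposal has a genuine gap is iii)$\Rightarrow$vi) (and hence ii)$\Rightarrow$vi)) when $\widetilde{Q}_\infty(\rho\|\si)=\infty$. The exact-equality trick does not apply here, because the optimized divergence only supplies an $\varepsilon$-optimizer $\omega_\N$, so one only gets $\int_0^\infty F(\la)\,d\nu(\la)\le\varepsilon$ rather than $\ket{u_\la}\equiv 0$, and the quantitative bound of Lemma~\ref{fd3} degenerates since its term $\operatorname{I}$ is controlled by $\widetilde{Q}_{x^{-1}}(\rho\|\si)^{1/2}$. Your proposed fix --- perturbing to $\si_\varepsilon=(1-\varepsilon)\si+\varepsilon\rho$ --- rests on two claims that are asserted but not established: (a) that $\widetilde{Q}_f(\rho\|\si_\varepsilon)-\widetilde{Q}_f(\rho_0\|(\si_\varepsilon)_0)\to 0$ as $\varepsilon\to 0^+$ given only equality at $\varepsilon=0$ (operator convexity gives a one-sided comparison via $\Delta(\si_\varepsilon,\omega)=(1-\varepsilon)\Delta(\si,\omega)+\varepsilon\Delta(\rho,\omega)$, but the needed two-sided control of the \emph{gap}, uniformly over the supremum in $\omega$, does not follow from convexity and lower semicontinuity alone); and (b) that $\rho_0\circ R_{\Phi,\si_\varepsilon}^{t}\to\rho_0\circ R_{\Phi,\si}^{t}$ in $L_1$, i.e.\ continuity of the rotated Petz map in the reference state, which in the type III setting requires an argument (e.g.\ strong resolvent convergence of $\Delta(\si_\varepsilon,\cdot)$ as in Proposition~\ref{dense}, pushed through the imaginary powers in \eqref{rotated}). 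Until (a) and (b) are supplied, iii)$\Rightarrow$vi) is only proved under the additional hypothesis $\widetilde{Q}_\infty(\rho\|\si)<\infty$ --- which is the same restriction under which the paper's own Theorem~\ref{thm:recoverability-vNa}(iii) operates.
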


\begin{proof}We argue for the subalgebra case $\Phi=\iota:\N\hookrightarrow\M$.
vi), vii)$\Rightarrow$ v) is trivial. Note that by the monotonicity $\al\mapsto \widetilde{D}_\al$, $D(\rho\Vert\si)<
\infty$ implies $\widetilde{D}_\al(\rho\Vert\si)<\infty$. Then v)$\Rightarrow$ i)-iv) by data processing inequality. Also iv) $\Rightarrow$ iii) is trivial and iii) $\Rightarrow$ i) by Example \ref{5.5}. i) implies $\ket{w_\la}=0$ for $\la>0$, which by Lemma \ref{5.14} further implies vi) and vii).  ii)$\Rightarrow $ v) is proved in \cite[Theorem 5.1]{Jenvcova2}.
\end{proof}

\section{Conclusion}

In summary, we have established physically meaningful remainder terms for the data-processing inequality for the optimized $f$-divergence, and we have improved upon prior  results like this for the standard $f$-divergence. As a consequence, we have established the first physically meaningful remainder terms for the data-processing inequality for the sandwiched R\'enyi relative entropy. Finally, we generalized all of our results to the von Neumann algebraic setting of the optimized $f$-divergence, by suitably generalizing its definition, its data-processing inequality, and refinements to this setting.

Going foward from here, we consider it to be a great challenge to establish universal remainder terms for the data-processing inequalities of the standard and optimized $f$-divergences, in the sense of \cite{universal}. Such results would significantly extend the domain of applicability of these refined data-processing inequalities.

\textbf{Acknowledgements.} We thank Anna Vershynina for feedback on our paper. This project was initiated during the workshop ``Algebraic and Statistical Ways into Quantum Resource Theories,'' which took place during the summer of 2019 at the Banff International Research Station (BIRS). The authors thank the Banff International Research Station for Mathematical Innovation and Discovery for the kind hospitality during the workshop. MMW acknowledges support from the National Science Foundation under Grant No.~1714215,
from Stanford QFARM, and from AFOSR under grant number FA9550-19-1-03.

\bibliography{fdiv}
\bibliographystyle{alpha}

\appendix

\section{Preliminaries on von Neumann algebras}

\label{app:vNa-basics}

In this appendix, we briefly review some of the von Neumann algebra theory used in Section~\ref{sec:opt-f-div}. We refer to the classic texts \cite{takesaki,takesaki2} for more information on von Neumann algebras and to \cite{petzbook} %and \cite[Appendix]{hiai}
for a similar introduction related to quantum divergences.

\subsection{Spatial derivative and relative modular operator}

Let $\M\subset B(H)$ be a von Neumann algebra. A linear functional $\phi:\M\to \mathbb{C}$ is
\begin{itemize}\item[i)]{\bf normal} if it is weak$^*$-continuous;
\item[ii)]{\bf positive} if $\phi(x^*x)\ge 0\pl, \pl \forall x\in \M $;
\item[iii)]{\bf unital} if $\phi(1)=1$;
\item[iv)] a {\bf state} if $\phi$ is positive and unital.
\end{itemize}
The predual $\M_*$ of $\M$ is the space of all normal linear functionals. We denote by $\M_*^+$ the set of all normal positive linear functionals and by $D(\M)$ the set of all normal states. A positive normal linear functional $\phi$ is {\bf faithful} if $\phi(x^*x)=0$ implies $x=0$. A von Neumann algebra is $\si$-finite if it admits a normal faithful state. For $\phi\in \M_*^+$, its support  $s(\phi)$ is the smallest projection $e\in \M$ such that $\phi(e)=\phi(1)$.
We say that $\pi:\M\to B(H)$ is a $*$-representation if $\pi$ is a normal $*$-homomorphism (not necessarily unital). We say that the vector $\bphi\in H$ implements $\phi\in \M_*^+$ via $\pi$ if for all $x\in \M$,
\[\phi(x)=\bra{\bphi}\pi(x)\ket{\bphi}\pl.\]
We typically use Greek letters $\rho,\si,\phi,\psi$ to denote states and linear functionals, and boldface letters $\brho,\bsi,\bphi,\bpsi$ to denote vectors implementing the corresponding states. Let $G_\phi$ be the Hilbert space completion of $\M$ with respect to the $\phi$-inner product:
\[\lan x,y \ran_\phi=\phi(x^*y)\pl.\]
Let $\bet_\phi(x)$ (resp.~$\bet_\phi$) be the vector corresponding to $x\in \M$ (resp.~identity $1$).  The GNS representation $\pi_\phi:\M \to B(G_\phi)$ is the normal $*$-homomorphism given by
\[\pi_\si(a)\bet_\phi(x)=\bet_\phi(ax)\pl.\]
In particular, $\bet_\phi$ implements $\phi$ via $\pi_\phi$. Letting $\bphi\in H$ be a vector implementing $\phi$ via $\pi:\M\to B(H)$, we can define the isometry $V: G_\phi \to H$ as follows:
\[ V (\pi_\phi(x)\bet_\phi)=\pi(x)\bphi\pl.\] We denote $[\pi(\M)\bphi]$ as the closure of $\pi(\M)\bphi \subset H$ as a subspace, and with slight abuse of notation, also identify it as the projection onto $[\pi(\M)\bphi]$. Thus $G_\phi \cong [\pi(\M)\bphi]$ for all $\bphi$ implementing $\phi$.

Let $\M\subset B(H)$ be a von Neumann algebra acting on $H$, and let
\[
\M' \coloneqq \{x\in B(H)\pl |\pl xa=ax \quad \forall  a\in \M\}
\]
be its commutant. For a vector $\bphi\in H$, we denote by $\phi\in \M_*$ and $\phi'\in (\M')_*$ the corresponding states implemented on $\M$ and $\M'$. The support projections are given by
\[s_{\M}(\bphi):=s(\phi)=[\M'\bphi]\in \M\pl,
\qquad s_{{\M}'}(\bphi):=s(\phi')=[\M\bphi]\in \M'.
\]
Given two vectors $\bphi,\bpsi\in H$, we define the anti-linear operator $S_{\bpsi,\bphi}$ as follows:
\begin{equation}
\label{eq:S-op-def}
S_{\bpsi,\bphi}(a\bphi+\bet)=s(\bphi)a^*\bpsi\pl. \pl a\in \M\pl,
\end{equation}
where $a\bphi\in [\M\bphi], \bet\in [\M\bphi]^\perp$.
Then $S_{\bpsi,\bphi}$ is a closable operator, and the relative modular operator is the positive self-adjoint operator defined as
\begin{equation}
\label{eq:rel-mod-op-def}
\Delta(\bpsi,\bphi) \coloneqq (S_{\bpsi,\bphi})^*\bar{S}_{\bpsi,\bphi}\pl,
\end{equation}
where $\bar{S}_{\bpsi,\bphi}$ is the closure of $S_{\bpsi,\bphi}$.
For $a\bphi\in \M\bphi$,
\begin{align} \bra{a\bphi}\Delta(\bpsi,\bphi)\ket{a\bphi}=\bra{\bpsi}a s(\bphi)a^*\ket{\bpsi}\pl. \label{modular}\end{align}
and the support $\text{supp}(\Delta(\bpsi,\bphi))=s(\psi)s(\phi')$.

We also recall the spatial derivative. Given $\phi\in \M_*^+$, define the lineal of $\phi$ as the subspace
\[ H_\phi=\{ \bxi\in H\pl|\pl \norm{a\bxi}{H}^2\le C\phi(a^*a)\pl \forall \pl a\in \M\pl, \pl \text{for some } C\ge 0 \}\pl.\]
The closure $\overline{H_\phi}=s(\phi)H$. For $\bxi\in H_\phi$, we define the bounded operator $R_\phi(\bxi): G_\phi\to H$ as follows:
\begin{align}\label{R}R_\phi(\bxi)\bet_\phi(x)=x \bxi\pl.\end{align}
Then $R_\phi(\bxi)\pi_\phi(a)=a R_\phi(\bxi)$, which implies $R_\phi(\bxi)R_\phi(\bxi)^*\in \M'$. For a vector $\bpsi\in H$, the spatial derivative $\Delta(\bpsi/\phi)$ is the positive self-adjoint operator on $H_\phi$ defined by
\[\bra{\bxi}\Delta(\bpsi/\phi)\ket{\bxi} \coloneqq \bra{\bpsi}R_\phi(\bxi)R_\phi(\bxi)^*\ket{\bpsi}\pl.\]
We can write $\Delta(\bpsi/\phi)=\Delta(\psi'/\phi)$ because it only depends on $\psi'\in (\M')_*^+$ implemented by $\bpsi$. The connection to the relative modular operator is given by
\[
\Delta(\bpsi, \bphi)=\Delta(\psi/\phi'),
\] where $\psi\in (\M_*)^+$  is implemented by $\bpsi$ and $\phi'\in (\M_*')^+$ implemented by $\bphi$. Indeed, $R_{\phi'}(a\bphi)=aR_{\phi'}(\bphi)$ for $a\in \M$ and $R_{\phi'}(\bphi)R_{\phi'}(\bphi)^*=[\M'\bphi]=s(\bphi)\in \M$.
Then for $a\bphi\in \M\bphi$,
\begin{align*}\bra{a\bphi}\Delta(\psi/\phi')\ket{a\bphi}=
\psi(R_{\phi'}(a\bphi)R_{\phi'}(a\bphi)^*)=\psi(a s(\bphi)a^*)\pl,\end{align*}
which coincides with \eqref{modular}. Thus we verify that $\Delta(\bpsi,\bphi)=\Delta(\psi/\phi')$ for all $\bpsi,\bphi\in H$.\\

The relative modular operator $\Delta(\bpsi,\bphi)$ is independent of vector representations up to isometry.
Let $\phi$ and $\psi$ be two normal states of $\M$.
Let $\pi_1:\M\to B(H_1)$ (resp. $\pi_2:\M\to B(H_2)$) be a representation, and suppose that $\bphi_1,\bpsi_1\in H_1$ (resp. $\bphi_2,\bpsi_2\in H_2$) implement $\phi$ and $\psi$ via $\pi_1$ (resp. $\pi_2$). Define the  partial isometries $V_\phi: H_1\to H_2$ and $V_{\psi}: H_1\to H_2$ as follows:
\begin{align}
V_\phi(\pi_1(a)\bphi_1+\bet) & =\pi_2(a)\bphi_2\pl, \nonumber\\
 V_{\psi}(\pi_1(a)\bpsi_1+\bzeta) & =\pi_2(a)\bpsi_2\pl, \pl \forall \pl a\in \M, \label{Vop}
\end{align}
where $\bet\in [\pi_1(\M)\bphi_1]^\perp$ and $\bzeta\in [\pi_1(\M)\bpsi_1]^\perp$.
Let $S_{\bpsi_1,\bphi_1}$ and $\Delta(\bpsi_1,\bphi_1)$ (resp. $S_{\bpsi_2,\bphi_2}$ and $\Delta(\bpsi_2,\bphi_2)$) be the operators defined in \eqref{eq:S-op-def} and \eqref{eq:rel-mod-op-def} for $\pi_1(\M)$ (resp. $\pi_2(\M)$). Note that $\pi_1(s(\phi))=s(\bphi_1), \pi_2(s(\phi))=s(\bphi_2)$ and $V_\psi^*V_\psi=s(\bpsi_1)\supset \text{Ran}(S_{\bpsi_1,\bphi_1})$. We have
\begin{align*}
&S_{\bpsi_2,\bphi_2}V_{\phi}=V_{\psi}S_{\bpsi_1,\bphi_1}, \qquad \Delta(\bpsi_1,\bphi_1)=V_{\phi}^*\Delta(\bpsi_2,\bphi_2)V_{\phi}.
\end{align*}
%Note that for the spatial derivatives we have \begin{align}\label{inverse}\Delta(\phi/\ppsi')=\Delta(\ppsi'/\phi)^{-1}
%\end{align} on the support.
 %Thus by $\Delta(\bpsi_1,\bphi_1)=\Delta(\psi_1/\phi_1')$
%\[ V_{\phi}^*\Delta(\phi_2'/\psi_2)V_{\phi}=\Delta(\phi_1'/\psi_1)\pl.\]\\
\subsection{Standard form of von Neumann algebras}

The theory of the standard form of von Neumann algebras was developed by Araki \cite{Araki74}, Connes \cite{Connes76}, and Haagerup \cite{haagerup76}.
Recall that the standard form $(\M,H,J,P)$ of a von Neumann algebra $\M$ is given by an injective $*$-homomorphism~$\pi:\M \to B(H)$,  an anti-linear isometry $J$ on $H$, and a self-dual cone $P$ such that
\begin{itemize}
\item[i)] $J^2=1$, $J\M J=\M'$,
\item[ii)] $JaJ=a^*$ for $a\in \M\cap \M'$,
\item[iii)] $J\bxi=\bxi$ for $\bxi\in P$,
\item[iv)] $aJaJP=P$ for $a \in \M$.
\end{itemize}
The standard form is unique up to unitary equivalence.
For each normal state $\phi\in \M_*^{+}$, there exists a unique unit vector $\bxi_\phi\in P$ implementing $\phi$. We write the standard form of the relative modular operator as
\[
\Delta(\phi,\psi):=\Delta(\bxi_\phi,\bxi_\psi)\pl.
\]
By the symmetric role of $\M$ and $\M'$, we have
\begin{align}
\label{inv}\Delta(\phi,\psi)=J\Delta(\psi,\phi)^{-1}J.
\end{align}
In particular, the modular operator of $\phi$ is $\Delta(\phi,\phi)$
 and the modular automorphism group $\al_t^{\phi}:\M\to \M\pl$ is as follows:
 \[
 \al_t^{\phi}(x)=\Delta(\phi,\phi)^{-it} x \Delta(\phi,\phi)^{it}.
 \]

When $\M$ is semifinite equipped with a normal faithful semi-finite trace $\tau$, the standard form is basically given by the GNS construction. Define the $\tau$-inner product and $L_2$-norm respectively as
\[ \lan a,b\ran=\tau(a^*b)\pl,\qquad \norm{a}{2}^2=\lan a, a\ran\pl. \]
The $L_2$-space $L_2(\M)$ is a Hilbert space as the norm completion of $\left\{ a\in \M \, | \, \tau(s(|a|))<\infty\right\}$, where $s(|a|)$ is the support of $|a|$. The GNS representation $\pi:\M\to L_2(\M)$ has the following action for all $x\in \M$:
\[
\pi(x) a= xa .
\]
This gives a standard form $(\M,L_2(\M),J ,L_2(\M)^+)$,
where the anti-linear isometry is $J a= a^*$ and $L_2(\M)^+$ is the positive cone in $L_2$.

%Moreover, for two $\rho,\si\in \M_*^+$,
%\[\norm{\bxi_\rho-\bxi_\si}{H}^2\le \norm{\rho-\si}{\M_*}\le \norm{\bxi_\rho-\bxi_\si}{H}\norm{\bxi_\rho+\bxi_\si}{H}\pl.\]

\subsection{Haagerup $L_p$-spaces}\label{a3}
In this part, we briefly review Haagerup's $L_p$-space \cite{lp} as our tool to Section \ref{section5}. We refer to \cite{terp} and \cite[Appendix]{Jenvcova18} for more details on this topic.

Let $\M\subset B(H)$ be a von Neumann algebra acting on a Hilbert space $H$. Given a distinguished normal faithful state $\omega\in D(\M)$,
we denote by $\al_t:=\al_t^\omega: \M\to \M, t\in \R$ the one parameter modular automorphism group. The crossed product
\[\mathcal{R}=\M\rtimes_\al \mathbb{R}\]
 is the von Neumann algebra acting on $L_2(\mathbb{R},H)$, generated by the operator $\pi(x), x\in \M$, and the operator $\la(s),s\in \mathbb{R}$, defined as follows: for all $\xi\in L_2(\mathbb{R},H)$ and $t\in \R$
\[
\pi(x)(\xi)(t) \coloneqq \al_{-t}(x)\xi(t)\pl,
\qquad
\la(s)(\xi)(t)=\xi(t-s)\pl.
\]
Note that $\pi$ is a normal faithful representation of $\M$ on $H\ten_2 L_2(\mathbb{R})\cong L_2(\mathbb{R},H)$ and $(\pi,\la(s))$ gives a covariant representation such that $\al_t(x)=\la(t)x\la(t)^*, x\in \M , t\in \mathbb{R}$. The dual action $\hat{\al}_t$ of $\mathbb{R}$ on $\R$ is a one-parameter automorpshim group of $\mathbb{R}$ on $\R$, implemented by the unitary representation $\{W(t)\}_{t\in \mathbb{R}}$ on $L_2(\mathbb{R},H)$,
\[\hat{\al}_t(x)=W(t)x W(t)^* ,\]
where
\[W(t)(\xi)(s)=e^{-its} \xi(s), \quad \xi \in L_2(\mathbb{R},H), \quad t,s\in \mathbb{R} \pl.\]
The dual action $\hat{\al}$ satisfies (and is uniquely determined by)
\[\hat{\al}_t(x)=x, \quad \hat{\al}_t(\la(s))=e^{-ist}\la(s)\pl, \quad x\in \M, \quad s,t\in \mathbb{R}\pl,\]
and $\M=\{x\in \mathcal{R}\pl | \pl \hat{\al}_t(x)=x\pl,  \forall t\in \mathbb{R}\}$.
This cross product algebra $\mathcal{R}$ admits a normal faithful semi-finite trace $\tau$ satisfying
\[ \tau\circ \hat{\al}_t=e^{-t}\tau, \quad \forall t\in \mathbb{R}.\]
For $0<p\le \infty$, the Haagerup noncommutative $L_p$-space is then defined as
\[L_p(\M,\omega)=\{ x\in L_0(\mathcal{R},\tau)\pl: \hat{\al}_t=e^{-t/p}x, \quad \forall t\in \mathbb{R}\}\pl.\]
We will suppress `` $\omega$'' in the notation $L_p(\M)$ since the $L_p$-spaces constructed for different states are isomorphic. The positive part is $L_p(\M)_+=L_p(\M)\cap L_0(\mathcal{R})_+.$ For all $\phi\in \M_*^+$, there exists a Radon-Nikodym derivative $h_\phi\in L_1(\mathcal{R},\tau)$ with respect to $\tau$ such that
\[ \tilde{\phi}(x)=\tau(h_\phi x), \quad x\in \mathcal{R}_+\pl, \quad \hat{\al}_t(h_\phi)=e^{-t}h_\phi\pl.\]
where $\tilde{\phi}$ is the dual weight of $\phi$ on $\mathcal{R}$.
This gives a linear bijection
\[\phi\in \M_*^+ \longleftrightarrow h_\phi\in L_1(\M)^+\]
This bijection further extends an identification
$\phi\in \M_* \leftrightarrow h_\phi\in L_1(\M)$
with the property $h_{x  \phi  y}=xh_{  \phi }y, x,y\in \M$.
Moreover, if $\phi=u|\phi|$ is the polar decomposition, $h_\phi=uh_{|\phi|}$. Using this linear bijection, the trace and $L_1$-norm on $L_1(\M)$ is defined as
\[ \tr(h_\phi):=\phi(1)\pl, \pl
\norm{h_\phi}{1}:=\tr(|h_\phi|)=\tr(h_{|\phi|})=|\phi|(1)=\norm{\phi}{\M_*}.
\]
For $a\in \mathcal{R}$, we have the polar decomposition $a=u|a|$ and for $p\in [1,\infty)$
\[a\in L_p(\M)\Longleftrightarrow |a|\in L_p(\M)\Longleftrightarrow |a|^{p}\in L_1(\M)\pl.\]
which leads to the $L_p$-norm, defined as
\[ \norm{a}{L_p(\M)}=\tr(|a|^p)^{1/p}\pl, \qquad \norm{a}{\infty}=\norm{a}{\M}.
\]
For $a\in L_p(\M),b\in L_q(\M)$ with $1/p+1/q=1$, $ab,ba\in L_1(\M)$ and
the trace ``$\tr$'' has the following tracial property:
\[\tr(ab)=\tr(ba)\pl.\]
In particular, the $L_2$-space $L_2(\M)$ is a Hilbert space with inner product $\lan a,b\ran=\tr(a^*b)$. Define the left regular representation
\[ \pi:\M \to B(L_2(\M)), \quad \pi(x)a=xa\pl.\]
and the anti-linear isometry
\[ J:L_2(\M)\to L_2(\M)\pl, \quad Ja=a^*\pl.\]
Identifying $\pi(\M)\cong \M$, the quadruple $(\M, L_2(\M),J, L_2(\M)^+)$ is a standard form of~$\M$. In particular, $JMJ$ acts on $L_2(\M)$ as the right multiplication $JxJa=ax^*$.

\end{document}